\newcommand{\blind}{1}
\newcommand\myshade{85}
\colorlet{mylinkcolor}{YellowOrange}
\colorlet{mycitecolor}{Aquamarine}
\colorlet{myurlcolor}{violet}
\renewcommand{\hat}{\widehat}
\renewcommand{\tilde}{\widetilde}
\newcommand{\bfm}[1]{\ensuremath{\boldsymbol{#1}}} 
\def\bzero{\bfm 0}
\def\ba{\bfm a}   \def\bA{\bfm A}  
\def\bb{\bfm b}   \def\bB{\bfm B}  
   \def\bC{\bfm C}  
   \def\bD{\bfm D}  
   \def\bE{\bfm E}  \def\EE{\mathbb{E}}
\def\bff{\bfm f}  \def\bF{\bfm F}  \def\FF{\mathbb{F}}
   \def\bG{\bfm G}  
   \def\bH{\bfm H}  
   \def\bI{\bfm I}
   \def\bM{\bfm M}  
   \def\bO{\bfm O}  
   \def\bP{\bfm P}  \def\PP{\mathbb{P}}
   \def\bQ{\bfm Q}  
   \def\bR{\bfm R}  \def\RR{\mathbb{R}}
   \def\bS{\bfm S}  \def\SS{\mathbb{S}}
\def\bu{\bfm u}   \def\bU{\bfm U}  \def\UU{\mathbb{U}}
   \def\bV{\bfm V}  
\def\bw{\bfm w}   \def\bW{\bfm W}  
\def\bx{\bfm x}   \def\bX{\bfm X}  \def\XX{\mathbb{X}}
\def\by{\bfm y}     
   \def\bZ{\bfm Z}  \def\ZZ{\mathbb{Z}}
 \def\cA{{\cal  A}}
 \def\cB{{\cal  B}}
 \def\cM{{\cal  M}}
\def\calN{{\cal  N}} \def\cN{{\cal  N}}
 \def\cT{{\cal  T}}
\newcommand{\bfsym}[1]{\ensuremath{\boldsymbol{#1}}}
 \def\bbeta{\bfsym \beta}
              \def\bGamma{\bfsym \Gamma}
            \def\bDelta {\bfsym {\Delta}}
 \def\btheta{\bfsym {\theta}}           \def\bTheta {\bfsym {\Theta}}
 \def\beps{\bfsym \varepsilon}          
 \def\bepsilon{\bfsym \varepsilon}
              \def\bSigma{\bfsym \Sigma}
         \def\bLambda {\bfsym {\Lambda}}
          \def\bPhi{\bfsym {\Phi}}
\def\bpsi{\bfsym {\psi}}          \def\bPsi{\bfsym {\Psi}}
\def\bLam {\bLambda}
  \def\mF{\mathcal{F}}
\def\mG{\mathcal G}
\def\mU{\mathcal U}
\def\mF{\mathcal F}
\def\mC{\mathcal C}
\providecommand{\abs}[1]{\left\lvert#1\right\rvert}
\providecommand{\norm}[1]{\left\lVert#1\right\rVert}
\providecommand{\angles}[1]{\left\langle #1 \right\rangle}
\providecommand{\paren}[1]{\left( #1 \right)}
\providecommand{\braces}[1]{\left\{ #1 \right\}}
\providecommand{\defeq}{:=}
\DeclarePairedDelimiterX{\infdivx}[2]{(}{)}{%
  #1 \; \delimsize\| \; #2%
}
\DeclareMathOperator*{\argmin}{argmin}
\DeclareMathOperator{\diag}{diag}
\DeclareMathOperator{\rank}{rank}
\DeclareMathOperator{\sign}{sign}
\DeclareMathOperator{\supp}{supp}
\DeclareMathOperator{\var}{var}
\DeclareMathOperator{\Tr}{Tr}
\DeclareMathOperator{\vect}{vec}
\DeclareMathOperator*{\plim}{\mathit{p}lim}
\newcommand*\xbar[1]{%
  \hbox{%
    \vbox{%
      \hrule height 0.4pt 
      \kern0.5ex
      \hbox{%
        \kern-0em
        \ensuremath{#1}%
        \kern-0em
      }%
    }%
  }%
} 
\newtheorem{definition}{Definition}
\newtheorem{assumption}[definition]{Assumption}
\newtheorem{lemma}[definition]{Lemma}
\newtheorem{proposition}[definition]{Proposition}
\newtheorem{theorem}[definition]{Theorem}
\newtheorem{corollary}[definition]{Corollary}
\newtheorem{example}[definition]{Example}
\theoremstyle{definition}
\newtheorem{remark}{Remark}
\newcommand{\bigO}[1]{ \mathcal{O} \left( #1 \right) }
\newcommand{\Op}[1]{{\mathcal{O}_p} \left( #1 \right) }
\newcommand{\op}[1]{{\rm o_p} \left( #1 \right) }
\definecolor{royalpurple}{rgb}{0.47, 0.32, 0.66}
\definecolor{greenfresh}{HTML}{00897B}
\definecolor{bluefresh}{HTML}{1E88E5}
\definecolor{redfresh}{HTML}{E53935}
\definecolor{royalpurple}{rgb}{0.47, 0.32, 0.66}
\def\beq{\begin{equation}}
\def\eeq{\end{equation}}
\def\bet{\begin{theorem}}
\def\eet{\end{theorem}}
\def\bel{\begin{lemma}}
\def\eel{\end{lemma}}
\def\eps{\varepsilon}
\def\lam {\lambda}
\def\bLam {\bLambda}
\def\cond{\;|\;}
\begin{document}
\pagenumbering{arabic}

\def\spacingset#1{\renewcommand{\baselinestretch}%
{#1}\small\normalsize} \spacingset{1}

%
%
\if1\blind
{
\title{\bf Factor Augmented Matrix Regression}
\author{
Elynn Y. Chen$^\flat$\footnote{Co-first author} \hspace{3ex}
Jianqing Fan$^\natural$\footnote{Corresponding author.} \hspace{3ex}
Xiaonan Zhu$^{\natural *}$ \\ \normalsize
\medskip
$^\flat$New York University \hspace{3ex}
$^{\natural}$ Princeton University
}
\maketitle
} \fi

\if0\blind
{
\bigskip
\bigskip
\bigskip
\begin{center}
{\LARGE\bf Title}
\end{center}
\medskip
} \fi

\def\r#1{\textcolor{red}{\bf #1}}
\def\b#1{\textcolor{blue}{\bf #1}}

\bigskip
\begin{abstract}
\spacingset{1.68}
We introduce \underline{F}actor-\underline{A}ugmented \underline{Ma}trix \underline{R}egression (FAMAR) to address the growing applications of matrix-variate data and their associated challenges, particularly with high-dimensionality and covariate correlations. 
FAMAR encompasses two key algorithms. The first is a novel non-iterative approach that efficiently estimates the factors and loadings of the matrix factor model, utilizing techniques of pre-training, diverse projection, and block-wise averaging. The second algorithm offers an accelerated solution for penalized matrix factor regression. Both algorithms are supported by established statistical and numerical convergence properties. Empirical evaluations, conducted on synthetic and real economics datasets, demonstrate FAMAR's superiority in terms of accuracy, interpretability, and computational speed. Our application to economic data showcases how matrix factors can be incorporated to predict the GDPs of the countries of interest, and the influence of these factors on the GDPs.
\end{abstract}

\noindent%
{\it Keywords:}  Matrix factor models; Matrix regression; Factor-augmented regression; Diversified projections; High-dimensionality.
\vfill


\newpage
\spacingset{1.98} 

\addtolength{\textheight}{.1in}%

\section{Introduction}  \label{sec:intro}

Matrix-variate data are commonly encountered in diverse domains nowadays, such as finance, economics, spatio-temporal data, healthcare, images, and social networks \citep{chen2020semiparametric,chen2023community}. In many problems, especially in high-dimensional regimes, it is common that matrix covariates assume low-rank structure or can be well approximated by low-rank matrices as exemplified by factor models \citep{wainwright2019high,chen2021statistical}. The outcomes of interest are directly influenced by latent low-rank matrix factors, not by the observable matrix covariates themselves. For instance, autism diagnosis relies on identifying specific patterns within brain connectivity networks, underscoring a direct link to reduced-dimensionality factors \citep{nogay2020machine}; regional GDP is shaped by the condensed economic indicators found within spatial-temporal housing data \citep{kopoin2013forecasting}; and the GDP growth of countries is determined by underlying economic factors, demonstrating a direct relationship with these distilled elements \citep{banerjee2005leading}.

To incorporate the matrix factor structure in matrix regressions, 
we introduce the \underline{F}actor \underline{A}ugmented \underline{Ma}trix \underline{R}egression (FAMAR) for observed random variables $\braces{\paren{y_i, \bX_i},\;i\in[n]}$, in which $y_i$ is a scalar response of interest and $\bX_i$ is a $p_1\times p_2$ covariate matrix, defined as 
\begin{equation} \label{eqn:famar}
\begin{cases}
& y_i = \angles{\bA^*,\bF_i} + \angles{\bB^*,\bU_i} + \varepsilon_i, \qquad \text{\it Matrix Factor Regression (MFR)} \\
& \bX_i = \bR^*\bF_i\bC^{*\top} + \bU_i, \qquad\qquad\hspace{3.4ex} \text{\it Matrix Factor Model (MFM)}
\end{cases}
\end{equation}
where $\bF_i$ is the ${k_1\times k_2}$ latent factor matrix of lower dimensions $k_j\ll p_j$ for $j=1,2$, $\bU_i$ is the unobserved idiosyncratic matrix, and $\bR^*$ of dimension $p_1\times k_1$ and $\bC^*$ of dimension $p_2\times k_2$ are the loading matrices of the Matrix Factor Model (MFM) \citep{chen2021statistical}.  The impact of covariate matrix $\bX_i$ on outcome $y_i$ is through the latent factors $\bF_i$ and unobserved idiosyncratic matrix $\bU_i$ with
matrices $\bA^*$ of dimension $k_1 \times k_2$ and $\bB^*$ of dimension $p_1\times p_2$ as the regression
coefficient matrices of the Matrix Factor Regression (MFR). As the first component in MFR is the low-dimensional factor effect, the coefficient $\bA^*$ can be estimated without any structures. We consider both sparse and low-rank structure for the regression coefficient matrix $\bB^*$ since it is high-dimensional.

As practitioners harness the power of various data, model \eqref{eqn:famar} has wide applications in economics and finance. For example, $y_i$ represents the monthly GDP of a country of interest, and $\bX_i\in\RR^{p_1\times p_2}$ consists of the $p_2$ economic variables collected from $p_1$ different countries. 
Model \eqref{eqn:famar} is able to separate economic common factor $\bF_i$ and idiosyncratic shocks $\bU_i$ and provide economists with their respective impacts on the GDP of one country of interest. 
A salient feature of model \eqref{eqn:famar} is that the association between the response $y_i$ and the covariate matrix $\bX_i$ is established directly through the factor $\bF_i$ and idiosyncratic matrix $\bU_i$. 
This structure makes covariate inputs much weakly correlated and relates to various traditional models, as illustrated below.

\begin{example}[Matrix regression with MFM covariates]
Matrix regression is written as
\begin{equation} \label{eqn:mr}
\begin{cases}
& y_i = \angles{\bB^*,\bX_i} + \eps_i, \hspace{6ex} \text{\it Matrix Regression (MR)} \\
& \bX_i = \bR^*\bF_i\bC^{*\top} + \bU_i, \hspace{3ex} \text{\it Matrix Factor Model (MFM)}
\end{cases}
\end{equation}
Model \eqref{eqn:mr} is a special case of model \eqref{eqn:famar} with $\bA^* = {\bR^*}^\top\bB^*\bC^*$.   
\end{example}

While we directly observe $\bX_i$, using the matrix regression model \eqref{eqn:mr} might seem straightforward. The influence of the latent factor $\bF_i$ in model \eqref{eqn:mr} is represented as ${\bR^*}^\top\bB^*\bC^*$. However, the general FAMAR model introduces $\bfm\Phi^* := \bA^* - \bR^\top\bB^*\bC$, which quantifies the additional impact of the latent factor $\bF_i$ in model \eqref{eqn:famar} that is not accounted for by the observed predictor $\bX_i$ in model \eqref{eqn:mr}.

\begin{example}[Reduce-rank matrix regression]
Low-rank matrix regression $y_i = \angles{\bB^*,\bX_i} + \eps_i$ corresponds to a special case of \eqref{eqn:famar} with low-rank coefficient matrix $\bB^*$, matrix factor $\bF_i= 0$ and $\bU_i = \bX_i$ when $\bX_i$ does not admit a factor structure, and $\bfm\Phi^* = 0$ when it does. 
\end{example}

\begin{example}[Sparse matrix regression]
Sparse matrix regression $y_i = \angles{\bB^*,\bX_i} + \eps_i$ is a special case of model \eqref{eqn:famar} with sparse coefficient matrix $\bB^*$, and $\bX_i$ has not low-rank component, i.e., $\bF_i = 0$ and $\bU_i = \bX_i$.
\end{example}

There are several advantages of considering the more general model \eqref{eqn:famar}. First, in reality, especially when the columns and rows of $\bX_i$ are highly correlated, the leading factors are likely to make additional contributions to the outcome, beyond a fixed portion represented by $\bA^* = {\bR^*}^\top\bB^*\bC^*$ in model~\eqref{eqn:mr}. 
To this end, model \eqref{eqn:famar} involves incorporating the primary factors into a low-rank regression that extends the linear space defined by $\bX_i$ in useful ways by noting that using $\bF_i$ and $\bU_i$ as covariates are the same as the usage of $\bF_i$ and $\bX_i$. In addition, even when model~\eqref{eqn:mr} is correctly specified, model \eqref{eqn:famar} is helpful in dealing with the colinearity of $\bX_i$ due to the factor structure \citep{fan2020factor}. 
This is also confirmed in our empirical analysis. When covariates $\bX_i$ are highly correlated, direct low-rank regression on $\bX_i$'s often overestimates the rank due to their dependency structure, reducing the effectiveness of regularization and the interpretability of estimation.
FAMAR enhances performance by utilizing the space spanned by $(\bF_i, \bU_i)$, which is the same as that span by $(\bF_i, \bX_i)$, with augmented feature $\bF_i$, capitalizing further on underlying information. Additionally, its factor model approach decorrelates the covariance matrix, leading to coefficient estimators that are both more interpretable and stable.

Factor augmentations in the vector contexts have been studied under various settings and shown to be powerful \citep{fan2020factor, zhou2021measuring, fan2022factor,fan2023latent}.
In contrast, little work has been done on the matrix-variate problems. 
Matrix-variate settings are intrinsically different because of their multidimensional structure, where low rankness comes into play. 
Linear and generalized matrix regression have been studied in \cite{zhou2014regularized,wang2017generalized,fan2019generalized}.
However, model~\eqref{eqn:famar} in the present paper is different in that the association between $y_i$ and $\bX_i$ is not direct as those in the literature but rather through the latent variables $\bF_i$ and $\bU_i$. 
We present in the real data analysis that such an association provides better prediction and interpretation of the effect of common economic factors and idiosyncratic variables.
In addition, existing estimation methods solving model~\eqref{eqn:famar}~MFM entail spectral or iterative optimization methods \citep{chen2019constrained,chen2021statistical,liu2022identification,yu2022projected}. 
We propose a new non-iterative method, consisting of simple procedures of ``Pre-Trained Projection'' and ``Block-Average-an-OLS'', to obtain good estimators of $\bF_i$ and $\bU_i$ from the MFM, which are of independent interest. 

The rest of this paper is organized as follows.
Section \ref{sec:FARMA} develops the FAMAR method in details, including a new method for estimating the MFM parameters in model \eqref{eqn:famar} in Section \ref{sec:div_proj} and an efficient algorithm for solving MFR parameters in model \eqref{eqn:famar} in Section \ref{sec:comp}.
Section \ref{sec:theory} establishes their theoretical properties.
Sections \ref{sec:simul} and \ref{sec:real} present empirical results on synthetic and real datasets.
Section \ref{sec:conc} concludes and discusses future works.
All proofs are relegated to the supplementary material.

\section{Factor-Augmented Matrix Regression}\label{sec:FARMA}

When the factors $\bF_i$ and the residuals $\bU_i$ are directly observable, unknown coefficients in model \eqref{eqn:famar} can be obtained by a general matrix regression \citep{wainwright2019high}. 
In the more challenging setting where $\bF_i$ and $\bU_i$ are not directly observed, we need first to estimate $\bF_i$ from the observed matrix variate $\bX_i$. 
Section \ref{sec:div_proj} introduces a novel non-iterative estimation approach to obtain estimated $\hat{\bF}_i$ and $\hat\bU_i$ in model \eqref{eqn:famar} that achieve optimal convergence rates.
With these, to facilitate the estimation of $\bB^*$ under the high-dimensional setting, we could impose either low-rank or sparse structure on $\bB^*$. 

\paragraph{FAMAR with Low-Rank $\bB^*$.} 
The low-rank structure on $\bB^*$ can be achieved by solving {\em nuclear norm regularized matrix regression} with an accelerated algorithm introduced in Section \ref{sec:comp}:
\begin{equation} \label{eqn:convex}	\big(\hat{\bA},\hat{\bB}\big)\in\argmin_{\bA,\bB} \left\{\frac{1}{2n}\sum_{i=1}^n \left( y_i-\angles{\bA,\hat{\bF}_i}-\angles{\bB,\hat{\bU}_i} \right)^2 + \lambda_n\|\bB\|_*\right\}, 
\end{equation}
where $\|\bB\|_*$ denotes the nuclear norm of the matrix coefficient. 

\paragraph{FAMAR with Sparse $\bB^*$.} 
The overall sparsity structure on $\bB^*$ can be induced by using the $\ell_1$ norm on vectorized $\bB^*$, akin to factor augmented regression in vector form \citep{fan2020factor,fan2023latent}. Specifically, the overall sparsity on $\bB^*$ can be achieved by solving the following optimization problem 
\begin{equation} \label{eqn:sparse_opt}
\big(\hat{\bA}^*,\hat{\bB}^*\big)\in\argmin_{\bA,\bB} \left\{\frac{1}{2n}\sum_{i=1}^n \left( y_i-\angles{\bA,\hat{\bF}_i}-\angles{\bB,\hat{\bU}_i} \right)^2 + \lambda_n\|\vect (\bB)\|_{1}\right\}.
\end{equation}	
We also note that FAMAR provides a broader range of sparsity options compared to its vector-based counterpart, thanks to the multi-dimensional nature of the coefficient matrix $\bB^*$. This allows for the imposition of row-wise or column-wise sparsity in $\bB^*$, an option not available in vector-based models. 
We establish theoretical guarantees for the overall sparsity structure with LASSO in Section \ref{sec:theory:famar:sparse}. Theoretical analysis on FAMAR with row-wise or column-wise sparsity in $\bB^*$ can be carried out by applying, e.g., \cite{yuan2006model, obozinski2011support, hastie2015statistical} in the theoretical frameworks presented therein and are left for future research. 

\subsection{Pre-Trained Projection and Block-Wise Averaging}\label{sec:div_proj}

We present an innovative non-iterative estimation approach that utilizes pre-training and exploits the unique structure inherent in Kronecker products. 
It consists of two major steps, namely ``Pre-Trained Projection'', where we estimate the latent factor matrix $\bF_i$ by a one-time projection, and ``Block-Averaging-an-OLS'', where we exploit the unique Kronecker structure and estimate by simple averaging in a block-wise fashion. 

\medskip
\noindent
\textbf{Estimating matrix factors by ``Pre-Trained Projection''.}
The idea of {\em pre-trained projection} is inspired by the diversified projection proposed by \cite{fan2022learning} for the vector factor models.  
Here for the matrix factor model, we introduce {\em ``row and column diversified projection matrices''} $\bW_1$ and $\bW_2$ in Definition \ref{def:W}.
Accordingly, we can obtain a crude estimator of the matrix factor
\begin{equation}\label{eqn:FX}
\hat{\bF}_i = \frac{1}{p_1 p_2}{\bW}_1^\top\bX_i\bW_2.
\end{equation}
Following from the matrix factor structure of $\bX_i$, we have
\begin{equation}\label{eqn:FFhat}
\hat{\bF}_i = \bH_1 \bF_i \bH_2^\top + \frac{1}{p_1 p_2}{\bW_1}^\top\bU_i\bW_2,
\end{equation}
where $\bH_1 = (p_1)^{-1}{\bW}_1^\top\bR^*$ and $\bH_2 = (p_2)^{-1}{\bW}_2^\top\bC^*$ are transformation matrices. A rotated version of $\bF_i$ (i.e. $\bH_1 \bF_i \bH_2^\top$) can be estimated by $\hat{\bF}_i$ as long as the second idiosyncratic term in \eqref{eqn:FFhat} is averaged away and the first signal term dominates the average of the noise in the second term. Formally, we characterize the desirable projection matrices as follows. 

\begin{definition}[Diversified projection matrices]\label{def:W}
Let $c_1$, $c_2$ be universal positive constants. Matrices $\bW_1\in \RR^{p_1\times k_1}$ and  $\bW_2\in \RR^{p_2\times k_2}$ are said to be diversified projection matrices for MFM in \eqref{eqn:famar} if they satisfy
\begin{enumerate}[label={(\alph*)}]
\item $\|\bW_1\|_{\max} < c_1$ and $ \|\bW_2\|_{\max} <c_2$, where $\|\bW\|_{\max}$ is the element-wise $\ell_\infty$-norm.
\item The matrices $\bH_1 = (p_1)^{-1}{\bW}_1^\top\bR^*\in\RR^{k_1\times k_1}$ and $\bH_2 = (p_2)^{-1}{\bW}_2^\top\bC^*\in\RR^{k_2\times k_2}$ satisfy $\nu_{\min }\big(\bH_1\big) \gg p_1^{-1/2}$ and $\nu_{\min }\big(\bH_2\big) \gg p_2^{-1/2}$, respectively.
\item $\bW_1, \bW_2$ are independent of $\{\bX_i\}_{i=1	}^n$.
\end{enumerate}
\end{definition}

Diversified projection matrices can be achieved in various ways \citep{fan2022learning,fan2022factor}, such as those based on observed characteristics, initial transformation, and the Hadamard projection. 
We consider the idea of {\em ``pre-trainning''}, which is a data-driven method of constructing $\bW_1$ and $\bW_2$ by $\alpha$-PCA \citep{chen2021statistical} through data splitting.
Specifically, with a separate independent set of samples ${\bX}_j^\prime,\  j\in[n^\prime]$, we obtain $\bW_1$ and $\bW_2$ as the estimations of $\bR^*$ and $\bC^*$ utilizing the method of $\alpha$-PCA.   As condition \ref{def:W}(b) does not require a consistent estimation, the pre-trained sample size $n'$ can be a negligible fraction of sample size $n$, as to be formally demonstrated.
Then, with the sample of interest ${\bX}_i,\  i\in[n]$, we obtain an estimator of the matrix factor $\{\hat{\bF}_i\}$, $i\in[n]$ using equation \eqref{eqn:FX} with $\bW_1$ and $\bW_2$ as the diversified projection matrices. 

\medskip
\noindent
\textbf{Estimating the loading matrix by ``Block-Averaging-an-OLS''.}
To estimate $\bR^*$ and $\bC^*$, our proposed method averages over blocks of an OLS estimator of the vectorized MFM \eqref{eqn:famar}. 
The details and rationales are explained as follows. 
The vectorized model of MFM \eqref{eqn:famar} with all samples stacked in rows can be written as $\XX = \FF {\bGamma^*}^\top + \UU$, or equivalently, 
\begin{equation}\label{eqn:fm1}
\XX = \big[ \FF (\bH_2\otimes \bH_1)^\top\big]  \big[ \bGamma^*(\bH_2^{-1}\otimes\bH_1^{-1})\big]^\top + \UU \defeq \mathring{\FF}\;\mathring{\bGamma}^\top + \UU, 
\end{equation}
where $\bH_1$ and $\bH_2$ are the transformation matrices defined in Definition \ref{def:W},
$\FF$ is an $n\times k_1 k_2$ matrix whose $i$-th row is the vectorized $\bF_i$, $\XX$ and $\UU$ are $n\times p_1 p_2$ matrices defined similarly, $\bGamma^*$ is the true $p_1 p_2 \times k_1 k_2$ loading matrix satisfying $\bGamma^* = \bC^*\otimes\bR^*$, 
\begin{equation}
\mathring{\FF}\defeq\FF (\bH_2\otimes \bH_1)^\top
\quad\text{and}\quad
\mathring{\bGamma}\defeq\bGamma^*(\bH_2^{-1}\otimes\bH_1^{-1}) 
\end{equation}
are the rotated truth that our pre-trained projected estimators are targeting at.

Now with the vectorized estimator 
$\hat{\FF}:=\big(\operatorname{vec}(\hat{\bF}_1), \ldots, \operatorname{vec}(\hat{\bF}_n)\big)^\top$ from the {\em ``pre-trained projection''} that well approximates $\mathring{\FF}$, the rotated loading matrix $\mathring{\bGamma}$ can be well-approximated by an OLS: 
\begin{equation}\label{eqn:est-Gamma}
\tilde{\bGamma} = \argmin_{\bGamma} \frac{1}{2n}\big\|\XX - \hat{\FF} \bGamma\big\|_F^2 
= \XX^\top\hat{\FF}(\hat{\FF}^\top\hat{\FF})^{-1}.
\end{equation}
Our estimation target $\mathring{\bGamma} = \mathring{\bC} \otimes\mathring{\bR}$ naturally admits a Kronecker product structure with 
\begin{equation}\label{eqn:rotated-truth}
\mathring{\bR} =\bR^*\bH_1^{-1}, \quad\text{and}\quad
\mathring{\bC} = \bC^*\bH_2^{-1},
\end{equation}
which are the rotated truth.  To take advantage of such a structure, we use block averaging to improve the quality of estimation.
Specifically, $\mathring{\bGamma}$ can be divided into $p_2 k_2$ blocks of $p_1\times k_1$ matrices and the $(i,j)$-th block is matrix $\mathring{c}_{ij}\mathring{\bR}$, where $\mathring{c}_{ij}$ is the $(i,j)$-th element of $\mathring{\bC}$. 
Denote $\bE_{pq}\in\RR^{p\times p q}$ as the matrix consisting of $q$ blocks of $\bI_p$, we have that 
\[
\frac{1}{p_2k_2} \bE_{p_1p_2}\mathring{\bGamma}^{\top} \bE_{k_1k_2}^\top = \left(\frac{1}{p_2k_2} \sum_{i=1}^{p_2} \sum_{j=1}^{k_2} \mathring{c}_{ij}\right) \mathring{\bR}
=: \mathring c_s \mathring \bR
\]
and 
\[
\frac{1}{p_1k_1} \bE_{p_2p_1} \bS_{p_2p_1} \mathring\bGamma^{\top} \bS_{k_2k_1}^\top \bE_{k_2k_1}^\top = \left(\frac{1}{p_1k_1} \sum_{i=1}^{p_1} \sum_{j=1}^{k_1} \mathring{r}_{ij}\right) \mathring{\bC}
=: \mathring r_s \mathring \bC,  
\]
where $\bS_{pq}$ is the shuffle matrix defined by taking slices of the $\bI_{pq}$:
\[
\mathbf{S}_{pq}=\left[\begin{array}{c}
\mathbf{I}_{pq}(1: q: pq,:) \\
\mathbf{I}_{pq}(2: q: pq,:) \\
\vdots \\
\mathbf{I}_{pq}(q: q: pq,:)
\end{array}\right].
\]
MATLAB colon notation is used here to indicate submatrices, where  $j:i:k$ creates a regularly-spaced vector using $i$ as the increment between elements.
This Kronecker structure enables us to refine $\tilde\bGamma$ by block-wise averaging. 
We define the averaging estimator
\vspace{-0.2in}
{
\begin{equation}\label{eqn:blw-avg}
\hat\bR := \frac{1}{p_2k_2} \bE_{p_1p_2}\tilde\bGamma^{\top} \bE_{k_1k_2}^\top ,
\quad\text{and}\quad
\hat\bC = \frac{1}{p_1k_1} \bE_{p_2p_1} \bS_{p_2p_1}\tilde\bGamma^{\top} \bS_{k_2k_1}^\top \bE_{k_2k_1}^\top. 
\end{equation}
Therefore, two constants $\mathring c_s$ and $\mathring r_s$ can be estimated by $\hat c_s:= \frac{1}{p_2k_2} \sum_{i=1}^{p_2} \sum_{j=1}^{k_2} \hat{c}_{ij}$ and $\hat r_s:= \frac{1}{p_1k_1} \sum_{i=1}^{p_1} \sum_{j=1}^{k_1} \hat{r}_{ij}$, respectively, where $\hat c_{ij}$ (resp. $\hat r_{ij}$) is the $(i,j)$-th element of $\hat\bC$ (resp. $\hat\bR$).
Accordingly, the estimator of the idiosyncratic matrix $\bU_i$ is given by
\begin{equation}\label{eqn:idio}
\hat{\bU}_i := \bX_i - \frac{1}{\hat c_s \hat r_s} \hat{\bR}\hat{\bF}_i\hat{\bC}^\top,
\end{equation}
}
where $\hat{\bF}_i$ is obtained by ``pre-trained projection'' and $\hat{\bR}$ and $\hat{\bC}$ are estimated by ``block-averaging-an-OLS''.  Note that $\hat{\bF}_i$, $\hat{\bR}$, and $\hat{\bC}$ estimate the affine transformed version of $\bF_i$, $\bR^*$, and $\bC^*$, while $\hat \bU_i$ directly estimates $\bU_i$. 

\subsection{Accelerated Algorithm for Solving Matrix Factor Regression} \label{sec:comp}

Using $\hat{\bF}_i$ and $\hat{\bU}_i$ estimated through {\em pre-trained projection} and {\em block-averaging-an-OLS}, we estimate the model coefficients $\bB^*$ and $\bA^*$ for the MFR \eqref{eqn:famar} by applying Nesterov's accelerated iterative algorithm \citep{nesterov2013gradient,ji2009accelerated} to solve the optimization program \eqref{eqn:convex}. We explore both low-rank and sparse settings for $\bB^*$ in our subsequent analysis.

\paragraph{Accelerated algorithm for low-rank coefficient matrix $\bB$.}
The process is detailed in Algorithm \ref{alg:accelarated} as follows. We define the objective function from \eqref{eqn:convex} as:
\begin{equation*}
F_n(\bA,\bB) = f_n(\bA,\bB) + \lambda_n\norm{\bB}_*, 
\end{equation*}
where 
\begin{equation*}
f_n(\bA,\bB) := (2n)^{-1}\sum_{i=1}^n\left( y_i-\angles{\bA,\hat{\bF}_i}-\angles{\bB,\hat{\bU}_i} \right)^2.
\end{equation*}
We now drop the $\hat{\cdot}$ in $\hat{\bF}_i$ and $\hat{\bU}_i$ and use ${\bF}_i$ and ${\bU}_i$ instead for clear presentation. 
The accelerated algorithm keeps two sequences $\{(\bB^{(k)}, \bA^{(k)})\}$ and $\{(\bP^{(k)}, \bQ^{(k)})\}$ and updated them iteratively. 
At the beginning of the $k$-th step, $\{(\bA^{(k-1)}, \bB^{(k-1)})\}$ is the approximate solution from the last step, and $\{(\bP^{(k)}, \bQ^{(k)})\}$ is the search point for the current step.
The accelerated algorithm performs the gradient descent update at the search point $\{(\bP^{(k)}, \bQ^{(k)})\}$. 
Specifically, we define the update equations as
\begin{equation*} 
p_{L^{(k)}}(\bA, \bB) \defeq \cT_{\lam_n/L^{(k)}}\paren{\bB-\frac{1}{L^{(k)}}\nabla_B f_n(\bA, \bB)}, \quad 
q_{L^{(k)}}(\bA, \bB) \defeq \bA-\frac{1}{L^{(k)}}\nabla_A f_n(\bA, \bB),
\end{equation*}
where $L^{(k)}$ is an appropriate step size, $\cT_{\lam}(\bC)$ is an operator defined as $\cT_{\lam}(\bC)=\bU\bD_{\lam}\bV^{\top}$ on a matrix $\bC$ with SVD $\bC=\bU\bD\bV^{\top}$ and $\bD_{\lam}$ is diagonal with the soft thresholding $(d_{\lam})_{ii} = \sign(d_{ii})\max\{0,\abs{d_{ii}} - \lam\}$.
The solution for the $k$-the step is updated as 
\begin{equation*}
\bB^{(k)} \leftarrow p_{L^{(k)}}(\bQ^{(k)}, \bP^{(k)}),\quad
\bA^{(k)} \leftarrow q_{L^{(k)}}(\bQ^{(k)}, \bP^{(k)}).
\end{equation*}
At last, the search point for the next step, $(\bP^{(k+1)},\bQ^{(k+1)})$, is constructed as a linear combination of the latest two approximate solutions at step $(k)$ and $(k-1)$. Specifically, 
\begin{equation}
(\bP^{(k+1)},\bQ^{(k+1)}) \leftarrow (\bB^{(k)}, \bA^{(k)}) + \frac{\alpha^{(k)} -1}{\alpha^{(k+1)}}\big((\bB^{(k)}, \bA^{(k)})-(\bB^{(k-1)}, \bA^{(k-1)})\big),
\end{equation}
where we set $\alpha^{(1)}=1$ and update it iteratively by $\alpha^{(k+1)} = 0.5\times(1 +\sqrt{1+4(\alpha^{(k)})^2})$. 
To choose an appropriate step size $L^{(k)}$, we start from an initial estimate $L_0$ and increase this estimate with a multiplicative factor $\gamma>1$ repeatedly until the convergence condition 
\begin{equation}\label{eqn:conv-cond}
\begin{aligned}
F_n\big(q_{L}(\bQ^{(k)}, \bP^{(k)}),p_{L}(\bQ^{(k)},\bP^{(k)})\big) & \geq f_n(\bQ^{(k)},\bP^{(k)}) + H_{L,B}(p_{L}(\bQ^{(k)}, \bP^{(k)});\bQ^{(k)},\bP^{(k)}) \\
& + H_{L,A}(q_{L}(\bQ^{(k)}, \bP^{(k)});\bQ^{(k)},\bP^{(k)}) + \lambda_n\| p_{L}(\bP^{(k)})\|_{*}, 
\end{aligned}
\end{equation}
is satisfied for $L=L^{(k)}$. The $H_{L,A}(\bA';\bA,\bB)$ and $H_{L,B}(\bB';\bA,\bB)$ are defined, respectively, as
\begin{equation}\label{eqn:H_L_A_B}
\begin{aligned}
H_{L,A}(\bA'; \bA,\bB) &\defeq \angles{\bA'-\bA,\nabla_A f_n(\bA,\bB)}+\frac{L}{2}\|\bA'-\bA\|_F^2 \\
H_{L,B}(\bB'; \bA,\bB) &\defeq \angles{\bB'-\bB,\nabla_B f_n(\bA,\bB)}+\frac{L}{2}\|\bB'-\bB\|_F^2. 
\end{aligned}
\end{equation}
Lemma \ref{thm:convergence} in the supplementary shows that Algorithm \ref{alg:accelarated} converges under condition \eqref{eqn:conv-cond}.

An alternative numerical method involves initially estimating $\bA^*$ using linear regression on the vectorized form of MFM \eqref{eqn:famar}, followed by estimating $\bB^*$ as a general low-rank matrix regression problem using the residuals from the first step, leveraging the fact that $\hat\bF_i$'s and $\hat\bU_i$'s are uncorrelated. For the low-rank setting, we maintain the joint update of both parameter matrices in our description to align with the convergence proof provided.
In contrast, for the sparse setting, we adopt this two-step method, enabling the direct application of existing accelerated algorithms for sparse vector regression.

\begin{algorithm}[ht!]
\caption{Accelerated Projected Gradient Descent}
\label{alg:accelarated}
\SetKwInOut{Input}{Input}
\SetKwInOut{Output}{Output}
\Input{Initial values $(\bB^{(0)}, \bA^{(0)})$, step size $L^{(0)}$, and $\gamma>1$ which is the multiplicative factor for searching for $L$ satisfying a convergence condition.}

\tcc{******* Initialization ******* }

Set initial search points $(\bP^{(1)}, \bQ^{(1)}) = (\bB^{(0)}, \bA^{(0)})$, and $\alpha^{(1)}=1$ which is used to generate weights for search points. 

Set $k\leftarrow 1$.

\Repeat{$\frac{\|\bB^{(k)}-\bB^{(k-1)}\|_F}{1+\|\bB^{(k-1)}\|_F} + \frac{\|\bA^{(k)}-\bA^{(k-1)}\|_F}{1+\|\bA^{(k-1)}\|_F}<\epsilon$}{

\tcc{******* Estimate step size $L^{(k)}$ to ensure convergence condition ******* }

Set: $\bar{L}\leftarrow L^{(k-1)}$.

Calculate: 

$p_{\bar{L}}(\bQ^{(k)}, \bP^{(k)}) \leftarrow\cT_{\lambda_n/\bar{L}}\big(\bP^{(k)}-\frac{1}{\bar{L}} \nabla_B f_n(\bQ^{(k)}, \bP^{(k)})\big)$.

$q_{\bar{L}}(\bQ^{(k)}, \bP^{(k)})\leftarrow\bQ^{(k)}-\frac{1}{\bar{L}} \nabla_A f_n(\bQ^{(k)}, \bP^{(k)})$. 

\Repeat{Condition \eqref{eqn:conv-cond} is satisfied}{

Set: $\bar{L}\leftarrow\gamma\bar{L}$. 

Calculate:

$p_{\bar{L}}(\bQ^{(k)}, \bP^{(k)}) \leftarrow\cT_{\lambda_n/\bar{L}}\big(\bP^{(k)}-\frac{1}{\bar{L}} \nabla_B f_n(\bQ^{(k)}, \bP^{(k)})\big)$.

$q_{\bar{L}}(\bQ^{(k)}, \bP^{(k)}) \leftarrow \bQ^{(k)}-\frac{1}{\bar{L}} \nabla_A f_n(\bQ^{(k)}, \bP^{(k)})$.
}

Set: $L^{(k)}\leftarrow\bar{L}$. 

\tcc{******* Update iterates by projected gradient descent ******* }

Update approximates: 

$\bB^{(k)} \leftarrow \cT_{\lambda_n/L^{(k)}}\big(\bP^{(k)}-\frac{1}{L^{(k)}} \nabla_B f_n(\bQ^{(k)}, \bP^{(k)})\big)$.

$\bA^{(k)} \leftarrow \bQ^{(k)}-\frac{1}{L^{(k)}} \nabla_A f_n(\bQ^{(k)}, \bP^{(k)})$.

Update weighting sequence: $\alpha^{(k+1)} = 0.5\times\left(1 +\sqrt{1+4(\alpha^{(k)})^2}\right)$. 

Update search points: $(\bP^{(k+1)},\bQ^{(k+1)}) \leftarrow (\bB^{(k)}, \bA^{(k)}) + \frac{\alpha^{(k)} -1}{\alpha^{(k+1)}}\big((\bB^{(k)}, \bA^{(k)})-(\bB^{(k-1)}, \bA^{(k-1)})\big)$.

$k\leftarrow k+1$.
}

\Output{$\bA^{(k)}$ and $\bB^{(k)}$}
\end{algorithm}

\paragraph{Accelerated algorithm for sparse coefficient matrix $\bB$.}
Define $\hat{\PP}:=\hat{\FF}(\hat{\FF}^\top \hat{\FF})^{-1}\hat{\FF}^\top$ as the projection matrix, and let $\tilde\by := (\bI_n - \hat{\PP})\by$ represent the residuals of the response vector $\by$ after projection onto the column space of $\hat{\FF}$, which contains vectorized factors. Given that $\hat{\UU} = (\bI_n - \hat{\PP})\XX$ results in $\hat{\FF}^\top \hat{\UU} = \bzero$, the solution to \eqref{eqn:sparse_opt} can be straightforwardly determined as:
\begin{align}
& \vect(\hat\bA) =  \big(\hat{\FF}^\top \hat{\FF}\big)^{-1}\hat{\FF}^\top \by, \nonumber \\
&\vect(\hat\bB) \in \argmin_{\bb\in\RR^{p_1 p_2}}  \left\{\frac{1}{2n}\sum_{i=1}^n \left\|\tilde{y}_i - \angles{\bB,\hat{\bU}_i}\right\|_2^2 + \lambda_n\|\vect (\bB)\|_{1}\right\}. \label{eqn:sparse_B_opt}
\end{align}
Nesterov’s accelerated gradient algorithms \citep{simon2013sparse,yang2015fast,yu2015high} can be effectively utilized to solve convex problem \eqref{eqn:sparse_B_opt}. For a comprehensive review of additional computational methods for sparsity, refer to Chapter 3.5 in \cite{fan2020statistical}.

\section{Theoretical Results}\label{sec:theory}

\subsection{Analysis of the Pre-Trained Projection}

We start with some assumptions that are necessary for our theoretical development.

\begin{assumption} \label{asum:FU}
Without loss of generality, we assume that $\bX_i$, $i\in[n]$ are mean zero, and thus $\bF_i$ and $\bU_i$ are mean zero. Throughout this paper, we assume that $\{\bX_i\}_{i=1}^n$ are observed, and latent $\{(\bF_i,\bU_i)\}_{i=1}^n$ are i.i.d. copies of $(\bF,\bU)$. 
\end{assumption}
Additionally, we adhere to Assumptions \ref{asum:FU_subG} and \ref{asu:F} outlined in the appendix. These are standard assumptions in factor models, as discussed in \cite{chen2021statistical}. They are presented in the appendix due to constraints on page length.

Identification issue is inherent with latent factor models: for any invertible matrices $\bO_1\in\RR^{k_1\times k_1}$ and $\bO_2\in\RR^{k_2\times k_2}$, the triples $(\bR^*,\bF_i,\bC^*)$ and $(\bR^*\bO_1, \bO_1^{-1}\bF_i\bO_2^{-1}, \bO_2\bC^*)$ are equivalent under the MFM model \eqref{eqn:famar}. 
The following assumptions are commonly used to separate $\bF_i$ and $\bU_i$ and identify one targeting value of $(\bR^*, \bC^*)$.

\begin{assumption}
\begin{enumerate}[label={(\alph*)}]\label{asm:RC}
\item There exist universal constants $c_1$, $c_2$, $C_1$, and $C_2$ such that 
\[
\begin{aligned}
&c_1\leq \lambda_{min}\big(\frac{1}{p_1}{\bR^*}^\top{\bR^*}\big) \leq \lambda_{max}\big(\frac{1}{p_1}{\bR^*}^\top{\bR^*}\big) \leq C_1,\\
&c_2\leq \lambda_{min}\big(\frac{1}{p_2}{\bC^*}^\top{\bC^*}\big) \leq \lambda_{max}\big(\frac{1}{p_2}{\bC^*}^\top\bC^*\big) \leq C_2.
\end{aligned}
\]
{ \item $\EE\left[\frac{1}{p_1}\bF_i^\top{\bR^*}{\bR^*}^\top\bF_i\right]=\bI_{k_2}$, and $\EE\left[\frac{1}{p_2}\bF_i{\bC^*}^\top{\bC^*}\bF_i^\top\right]=\bI_{k_1}$.}
\end{enumerate}
\end{assumption}

The following proposition shows that the pre-trained projection matrices $\bW_1$ and $\bW_2$ proposed in Section \ref{sec:div_proj} satisfies  Definition \ref{def:W} as diversified projection matrices.
\begin{proposition}\label{prop:H}
Suppose $\bX_i$, $i\in[n]$ are i.i.d.~copies of $\bX$ defined in the MFM \eqref{eqn:famar}. Under Assumption \ref{asum:FU}, \ref{asm:RC}, and \ref{asum:FU_subG} in the appendix, there exist universal constants $c_1, c_2,$ and $c_3$ such that the matrices $\bH_1=\frac{1}{p_1}\bW_1^\top\bR^*$ and $\bH_2=\frac{1}{p_2}\bW_2^\top\bC^*$ satisfy
\[
c_1 -c_2\left(\sqrt{t(k_1\vee k_2)}\sqrt{\frac{\log k_j + t}{n}} + \frac{1}{\sqrt{p_j}} \right) \leq \nu_{min}\big(\bH_j\big) \leq \nu_{max}\big(\bH_j\big) \leq c_3
\]
for both $j=1,2$ with probability at least $1-11e^{-t}$.
\end{proposition}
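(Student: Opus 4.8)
The plan is to benchmark the data-driven projection $\bW_j$ against a population target whose column space is \emph{exactly} that of the corresponding loading, reduce the singular-value bound to a clean population computation, and then absorb the sampling and idiosyncratic errors through a single spectral-perturbation step. I describe the argument for $j=1$ (the row loadings); the case $j=2$ is identical after replacing $\bX_i$ by $\bX_i^\top$ and swapping $(\bR^*,p_1,k_1)$ with $(\bC^*,p_2,k_2)$. By construction of the $\alpha$-PCA pre-training estimator \citep{chen2021statistical}, the columns of $\bW_1$ are $\sqrt{p_1}$ times the leading $k_1$ eigenvectors of an empirical row second-moment matrix $\widehat\bM_1$ built from the independent pre-training data, so that $\bW_1^\top\bW_1=p_1\bI_{k_1}$. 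Under the mean-zero Assumption \ref{asum:FU} and Assumption \ref{asm:RC}(b), namely $\EE[\tfrac{1}{p_2}\bF\bC^{*\top}\bC^*\bF^\top]=\bI_{k_1}$, the signal part of the population limit of $\widehat\bM_1$ collapses to $\bM_1^0:=\tfrac{1}{p_1}\bR^*\bR^{*\top}$, a rank-$k_1$ matrix whose column space is exactly $\operatorname{col}(\bR^*)$; I take $\bW_1^0$ to be the matching $\sqrt{p_1}$-scaled leading eigenvectors of $\bM_1^0$, so that $\bW_1^{0\top}\bW_1^0=p_1\bI_{k_1}$ as well.

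For the population step, since $\operatorname{col}(\bW_1^0)=\operatorname{col}(\bR^*)$ and both have full column rank, I write $\bW_1^0=\bR^*\bA$ with $\bA\in\RR^{k_1\times k_1}$ invertible; the normalization forces $\bA^\top\big(\tfrac1{p_1}\bR^{*\top}\bR^*\big)\bA=\bI_{k_1}$, i.e.\ $\bA=\big(\tfrac1{p_1}\bR^{*\top}\bR^*\big)^{-1/2}\bO$ for an orthogonal $\bO$. The population transformation $\bH_1^0:=\tfrac1{p_1}\bW_1^{0\top}\bR^*$ therefore equals $\bO^\top\big(\tfrac1{p_1}\bR^{*\top}\bR^*\big)^{1/2}$, whose singular values are the square roots of the eigenvalues of $\tfrac1{p_1}\bR^{*\top}\bR^*$ and hence lie in $[\sqrt{c_1},\sqrt{C_1}]$ by Assumption \ref{asm:RC}(a). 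This proves the claim with $\bH_1^0$ in place of $\bH_1$, and in particular shows the eigengap of $\bM_1^0$ between its $k_1$-th and $(k_1{+}1)$-th eigenvalues is at least $c_1=\Theta(1)$.

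It remains to show $\bH_1$ is a small spectral perturbation of a rotation of $\bH_1^0$. Decompose $\bH_1=\tfrac1{p_1}\bW_1^\top\bR^*=\bO_1^\top\bH_1^0+\bDelta_1$, where $\bO_1$ is the Davis--Kahan rotation aligning the empirical and population top-$k_1$ eigenspaces and $\bDelta_1:=\tfrac1{p_1}(\bW_1-\bW_1^0\bO_1)^\top\bR^*$. The first term shares the singular values of $\bH_1^0$, so by Weyl's inequality it suffices to bound $\|\bDelta_1\|_2$; using $\|\bR^*\|_2\le\sqrt{C_1 p_1}$ gives $\|\bDelta_1\|_2\le\sqrt{C_1}\,p_1^{-1/2}\|\bW_1-\bW_1^0\bO_1\|_2$. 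The $\sin\Theta$ theorem bounds the normalized eigenvector perturbation $p_1^{-1/2}\|\bW_1-\bW_1^0\bO_1\|_2$ by $\|\widehat\bM_1-\bM_1^0\|_2$ divided by the $\Theta(1)$ eigengap above; decomposing $\widehat\bM_1-\bM_1^0$ into a purely-signal sampling fluctuation, signal--idiosyncratic cross terms, and idiosyncratic terms and applying the sub-Gaussian tail bounds of Assumption \ref{asum:FU_subG} yields
\[
p_1^{-1/2}\big\|\bW_1-\bW_1^0\bO_1\big\|_2 \;\lesssim\; \sqrt{t(k_1\vee k_2)}\,\sqrt{\tfrac{\log k_1+t}{n}}+\tfrac{1}{\sqrt{p_1}}=:\varepsilon_n
\]
on an event of probability at least $1-11e^{-t}$, the constant $11$ arising from a union bound over the finitely many tail events used across $j=1,2$. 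Combining with Weyl's inequality gives $\nu_{\min}(\bH_1)\ge\sqrt{c_1}-c_2\varepsilon_n$ and $\nu_{\max}(\bH_1)\le\sqrt{C_1}+c_2\varepsilon_n\le c_3$, which after renaming constants is the asserted inequality; the identical argument for $\bH_2$ completes the proof and simultaneously verifies condition \ref{def:W}(b), since the lower bound is $\gg p_j^{-1/2}$ whenever $\varepsilon_n=o(1)$.

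The main obstacle is precisely the concentration estimate feeding the Davis--Kahan step: one must show that $\|\widehat\bM_1-\bM_1^0\|_2$ splits into a term of order $\sqrt{(\log k_1+t)/n}$ from averaging over the pre-training samples and a term of order $p_1^{-1/2}$ from the idiosyncratic and cross contributions, while carrying the correct polynomial dependence on $k_1\vee k_2$ and on the deviation parameter $t$ through Hanson--Wright / sub-Gaussian inequalities. Everything else---the population singular-value computation, the reduction via Weyl's inequality, and the union bound---is routine once this perturbation bound is in hand, and I would lean on the $\alpha$-PCA consistency machinery of \cite{chen2021statistical} for the detailed concentration arithmetic.
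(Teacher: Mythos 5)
Your argument follows essentially the same route as the paper's: reduce to the population second-moment matrix $\frac{1}{p_1}\bR^*\bR^{*\top}$, whose nonzero eigenvalues are pinned down by Assumption \ref{asm:RC}(a), and then transfer to the empirical top-$k_1$ eigenvectors via a Weyl/Davis--Kahan perturbation step driven by a concentration bound on $\|\hat{\bS}_1-\frac{1}{p_1}\bR^*\bR^{*\top}\|_2$ of order $\sqrt{t(k_1\vee k_2)}\sqrt{(\log k_1+t)/n}+p_1^{-1/2}$ (the paper converts the $\sin\Theta$ bound into $\nu_{\min}(\hat\bZ^\top\bZ^*)\ge 1-O(\delta)$ rather than using your explicit rotation-alignment decomposition, but these are interchangeable). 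The concentration estimate you defer is exactly the paper's Lemma \ref{lem:M}, and your anticipated split into signal-sampling, cross, and idiosyncratic contributions with sub-Gaussian/Bernstein tails accumulating to the $1-11e^{-t}$ probability is how that lemma is in fact proved.
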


Proposition \ref{prop:H} indicates that when $n^\prime \gg k_1k_2 +t$ and $p_1, p_2 \gg 1$, the conditions of diversified projection matrices in Definition \ref{def:W} are satisfied by the pre-trained projection matrices we proposed in Section \ref{sec:div_proj} with $\nu_{\min}(\bH_j) \asymp 1$, for $j=1,2$. 
In particular, by taking $t=\log n$, this condition holds with probability at least $1-11/n$.
With that,
the convergence rate of pre-trained estimator $\hat{\bF}_i$ to the true $\bF_i$ up to transformation matrices is given below.
\begin{proposition}\label{lem:Ferr_nuc}
Suppose that assumptions in Proposition \ref{prop:H} and Assumption \ref{asu:F} (a) hold.
Invoking Proposition \ref{prop:H} with $t=\log n$, we have, as $p_1\wedge p_2 \rightarrow \infty$ and $n$ is finite or $n\rightarrow\infty$, 
for any $i \in[n]$,
\[
\|\hat{\bF}_i-\bH_1\bF_i\bH_2^\top\|_2 = \Op{\frac{1}{\sqrt{p_1 p_2}}}.
\]
\end{proposition}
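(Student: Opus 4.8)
The plan is to exploit the exact error identity already recorded in \eqref{eqn:FFhat}. Subtracting $\bH_1\bF_i\bH_2^\top$ from both sides of that display leaves
\[
\hat{\bF}_i - \bH_1\bF_i\bH_2^\top = \frac{1}{p_1 p_2}\bW_1^\top \bU_i \bW_2,
\]
so the whole problem reduces to bounding the spectral norm of the $k_1\times k_2$ matrix $\bW_1^\top \bU_i \bW_2$. Because $k_1$ and $k_2$ are fixed constants, the spectral norm of this matrix is controlled by its Frobenius norm up to an absolute constant, so it suffices to show $\norm{\bW_1^\top\bU_i\bW_2}_F = \Op{\sqrt{p_1 p_2}}$; dividing by $p_1 p_2$ then gives the claimed rate.

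First I would condition on the projection matrices $\bW_1,\bW_2$. By Definition~\ref{def:W}(c) these are independent of $\{\bX_i\}$, hence of $\bU_i$, so after conditioning the weights are fixed and $\bW_1^\top\bU_i\bW_2$ is a bounded-coefficient bilinear form in the entries of $\bU_i$. Vectorizing, $\vect(\bW_1^\top\bU_i\bW_2) = (\bW_2^\top\otimes\bW_1^\top)\vect(\bU_i)$, so writing $\bSigma_U := \Cov(\vect(\bU_i))$ the conditional second moment is
\[
\EE\!\left[\,\norm{\bW_1^\top\bU_i\bW_2}_F^2 \,\middle|\, \bW_1,\bW_2\right] = \Tr\!\left[\big(\bW_2\bW_2^\top\otimes\bW_1\bW_1^\top\big)\bSigma_U\right].
\]

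Next I would bound this trace. Since $\bW_2\bW_2^\top\otimes\bW_1\bW_1^\top\succeq 0$, the inequality $\Tr[\bA\bSigma_U]\le \norm{\bSigma_U}_2\,\Tr[\bA]$ yields the upper bound $\norm{\bSigma_U}_2\,\norm{\bW_1}_F^2\,\norm{\bW_2}_F^2$. The boundedness of the entries of $\bW_1,\bW_2$ in Definition~\ref{def:W}(a) forces $\norm{\bW_j}_F^2 = \bigO{p_j}$ (with $k_j$ fixed), while the standard factor-model requirement that the idiosyncratic covariance have bounded operator norm, $\norm{\bSigma_U}_2 = \bigO{1}$ (supplied by Assumption~\ref{asu:F}(a)), leaves $\EE[\norm{\bW_1^\top\bU_i\bW_2}_F^2\mid\bW_1,\bW_2] = \bigO{p_1 p_2}$. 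Applying Markov's inequality conditionally and then integrating out $\bW_1,\bW_2$ upgrades this second-moment control to $\norm{\bW_1^\top\bU_i\bW_2}_F = \Op{\sqrt{p_1p_2}}$, which completes the argument.

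The main obstacle is the trace bound: everything hinges on $\norm{\bSigma_U}_2 = \bigO{1}$, i.e.\ that the idiosyncratic noise, though allowed to be cross-sectionally and within-matrix dependent, has a covariance with bounded operator norm. If only weaker (say, summable) dependence were available, one would instead expand $\Tr[(\bW_2\bW_2^\top\otimes\bW_1\bW_1^\top)\bSigma_U]$ entrywise and control it using the decay of the cross-covariances together with the $\max$-norm bounds on $\bW_1,\bW_2$; alternatively, the sub-Gaussianity in Assumption~\ref{asum:FU_subG} could be combined with a Hanson--Wright bound to obtain an exponential-tail version of the same rate, but the second-moment route above already suffices for the stated $\Op{\cdot}$ conclusion.
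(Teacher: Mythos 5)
Your argument is correct, and it is essentially the argument the paper leaves implicit: the supplement contains no standalone proof of this proposition, treating it as a consequence of the identity \eqref{eqn:FFhat}, and your write-up supplies exactly the missing details. The reduction to $\norm{\bW_1^\top\bU_i\bW_2}_F$ (legitimate since $k_1,k_2$ are fixed), the conditioning on $(\bW_1,\bW_2)$ via Definition~\ref{def:W}(c), the identity $\EE\big[\norm{\bW_1^\top\bU_i\bW_2}_F^2\mid\bW_1,\bW_2\big]=\Tr\big[(\bW_2\bW_2^\top\otimes\bW_1\bW_1^\top)\bSigma_U\big]$, and the trace bound $\norm{\bSigma_U}_2\norm{\bW_1}_F^2\norm{\bW_2}_F^2=\bigO{p_1p_2}$ are all sound, and this second-moment route is genuinely needed: the crude submultiplicative bound $\norm{\bW_1^\top\bU_i\bW_2}_F\le\norm{\bW_1}_2\norm{\bU_i}_F\norm{\bW_2}_2$ would only give $\Op{p_1p_2}$ before normalization, i.e.\ an $\Op{1}$ error, so your computation is what actually captures the averaging over the $p_1p_2$ idiosyncratic entries. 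The one correction to make is the provenance of $\norm{\bSigma_U}_2=\bigO{1}$: Assumption~\ref{asu:F}(a) is a cross-sample condition on $\Tr(\EE[\bU_j\bU_i^\top\mid\bF])$ weighted by factor norms and does not bound the operator norm of $\Cov(\vect(\bU_i))$. The bound you need instead follows from Assumption~\ref{asum:FU_subG}(b), which is among the hypotheses inherited from Proposition~\ref{prop:H}: writing $\bU_i=\bSigma_{p_1}\bZ_i\bSigma_{p_2}$ with independent sub-Gaussian entries of $\bZ_i$ and bounded $\norm{\bSigma_{p_j}}_2$ gives $\norm{\Cov(\vect(\bU_i))}_2\le\norm{\bSigma_{p_1}}_2^2\norm{\bSigma_{p_2}}_2^2\max_{j,k}\Var(Z_{i,jk})=\bigO{1}$. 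With that citation repaired, the proof is complete.
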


\begin{remark}\label{rem:F_err}
The optimality of the rate presented in Proposition \ref{lem:Ferr_nuc} can be understood by considering an oracle scenario where $\bR^*$ and $\bC^*$ are known.
This scenario reduces the problem to estimating $\bF_i$ in a linear regression framework by vectorizing the matrix factor model:
\[
\vect(\bX_i) = \bGamma^* \vect(\bF_i) + \vect(\bU_i)\in\RR^{p_1p_2}.
\]
Here, $\vect(\bF_i)$, with dimension $k_1k_2$, acts as the unknown ``parameter'' vector to be estimated. Standard OLS analysis indicates that the convergence rate of this estimator is $1/\sqrt{p_1p_2}$, aligning with the rate achieved in Proposition \ref{lem:Ferr_nuc}.
\end{remark}

\subsection{Properties of the Block-Wise Averaged Estimators}

Proposition \ref{thm:ols-pre-trained} in Section \ref{append:ols-after-diversify} of the appendix establishes the asymptotic normality of $\tilde\bGamma$ obtained by least squares in \eqref{eqn:est-Gamma}. 
Next we establish the asymptotic normality of the block-wise averaged estimator \eqref{eqn:blw-avg} of the loading matrices and element-wise convergence rate of the estimator \eqref{eqn:idio} of the idiosyncratic component. 
All proofs are relegated to Section \ref{append:mfm-proofs} in the appendix. 

\begin{theorem}\label{thm:MCMR}
Under the condition of Proposition \ref{thm:ols-pre-trained} and ${n}\ll {p_1p_2}\cdot(p_1\wedge p_2)$, we have
\begin{equation}  
\sqrt{np_2k_2} \left(\hat\bR - \mathring{c}_s \mathring{\bR}\right)
\xrightarrow{\ d\ } \cM\cN \big(\bzero, \bSigma_{R,p}, \bSigma_{R,k}\big)
\end{equation}
and
\begin{equation}
\sqrt{np_1k_1} \left(\hat\bC - \mathring{r}_s \mathring{\bC}\right)
\xrightarrow{\ d\ } \cM\cN \big(\bzero, \bSigma_{C,p}, \bSigma_{C,k}\big).
\end{equation}
where $\mathring{c}:=\sum_{i=1}^{p_2}\sum_{j=1}^{k_2}\mathring{c}_{ij}/( {p_2k_2})$, $\mathring{r}:=\sum_{i=1}^{p_1}\sum_{j=1}^{k_1}\mathring{r}_{ij}/( {p_1k_1})$ are scalars, $\mathring{\bR}$ and $\mathring{\bC}$ are rotated truth defined in \eqref{eqn:rotated-truth}, and $\bSigma_{R,p}$, $\bSigma_{C,q}$, $\bSigma_{R,k}$, $\bSigma_{C,k}$ are covariance matrices of dimension $p_1\times p_1$, $p_2\times p_2$, $k_1\times k_1$, and $k_2\times k_2$, respectively, with finite entries. 
\end{theorem}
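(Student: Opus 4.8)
The plan is to exploit the fact that the block-averaging estimator $\hat\bR$ is a \emph{linear} functional of the OLS estimator $\tilde\bGamma$, whose asymptotic normality is already established in Proposition \ref{thm:ols-pre-trained}, and that this linear functional exactly annihilates the Kronecker signal up to the scalar $\mathring c_s$. Applying the averaging operator $\tilde\bGamma \mapsto \frac{1}{p_2k_2}\bE_{p_1p_2}\tilde\bGamma^\top\bE_{k_1k_2}^\top$ to the rotated truth $\mathring\bGamma = \mathring\bC\otimes\mathring\bR$ returns $\mathring c_s\mathring\bR$ by the displayed Kronecker-block identity. Subtracting, the estimation error collapses to the same linear operator applied to the OLS deviation:
\[
\hat\bR - \mathring c_s\mathring\bR = \frac{1}{p_2k_2}\bE_{p_1p_2}\big(\tilde\bGamma - \mathring\bGamma\big)^\top\bE_{k_1k_2}^\top.
\]

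First I would insert the leading-order expansion of $\tilde\bGamma - \mathring\bGamma$. Substituting $\XX = \mathring\FF\mathring\bGamma^\top + \UU$ into the OLS formula \eqref{eqn:est-Gamma} gives
\[
\tilde\bGamma - \mathring\bGamma = \UU^\top\hat\FF(\hat\FF^\top\hat\FF)^{-1} + \mathring\bGamma\big(\mathring\FF^\top\hat\FF(\hat\FF^\top\hat\FF)^{-1} - \bI\big),
\]
so the error splits into a \emph{noise} term driven by the idiosyncratic matrix $\UU$ and a \emph{bias} term arising from the discrepancy between the pre-trained factors $\hat\FF$ and the rotated truth $\mathring\FF$, which is controlled by Proposition \ref{lem:Ferr_nuc}. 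The dominant contribution to $\hat\bR - \mathring c_s\mathring\bR$ is the block-average of the noise term; because averaging over the $p_2k_2$ blocks pools $p_2k_2$ weakly-dependent copies of the sample average of $\UU$-factor cross-products, it carries standard deviation of order $1/\sqrt{np_2k_2}$, matching the claimed normalization.

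Next I would establish the central limit theorem for $\sqrt{np_2k_2}\,\vect(\hat\bR - \mathring c_s\mathring\bR)$. Since the block-average of the noise term is a double sum over the $n$ samples and the $p_2k_2$ blocks of mean-zero sub-Gaussian summands (Assumption \ref{asum:FU_subG}), I would verify a Lyapunov condition and compute the limiting covariance. The Kronecker structure $\mathring\bGamma = \mathring\bC\otimes\mathring\bR$, together with the row/column moment normalizations of Assumption \ref{asm:RC}, forces the limiting covariance of $\vect(\hat\bR - \mathring c_s\mathring\bR)$ to factor as $\bSigma_{R,k}\otimes\bSigma_{R,p}$, i.e.\ the matrix-normal law $\cM\cN(\bzero, \bSigma_{R,p}, \bSigma_{R,k})$, with finite entries by the appendix moment bounds. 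The statement for $\hat\bC$ follows by the identical argument after applying the shuffle matrices $\bS_{p_2p_1}, \bS_{k_2k_1}$, which merely transpose the roles of $\mathring\bR$ and $\mathring\bC$ in the Kronecker factorization.

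The main obstacle I anticipate is controlling the bias term $\mathring\bGamma(\mathring\FF^\top\hat\FF(\hat\FF^\top\hat\FF)^{-1} - \bI)$ after block-averaging and showing it is $\op{1/\sqrt{np_2k_2}}$. This is precisely where the sample-size restriction $n\ll p_1p_2(p_1\wedge p_2)$ enters: the per-element factor estimation error is $\Op{1/\sqrt{p_1p_2}}$ by Proposition \ref{lem:Ferr_nuc}, and the bias accumulated across the $n$ samples must be dominated by the $\sqrt{np_2k_2}$ scaling, which the stated rate condition guarantees. A secondary technical point is ensuring that cross-block correlations in the noise average do not inflate the variance beyond order $1/(np_2k_2)$; this requires the weak-dependence and sub-Gaussian moment control on $\UU$ from Assumption \ref{asum:FU_subG} to bound the off-diagonal contributions uniformly.
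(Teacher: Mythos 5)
Your decomposition is exactly the paper's: write $\hat\bR-\mathring c_s\mathring\bR$ as the block-averaging operator applied to $\tilde\bGamma-\mathring\bGamma$, note that the Kronecker signal is annihilated exactly up to $\mathring c_s$, split the OLS deviation into an idiosyncratic noise term plus a bias term controlled by the factor-estimation error, and let the rate condition $n\ll p_1p_2(p_1\wedge p_2)$ kill the bias after the $\sqrt{np_2k_2}$ rescaling. The one place you diverge is the CLT step. The paper does not prove a central limit theorem here at all: Proposition \ref{thm:ols-pre-trained} already delivers $\sqrt{n}(\tilde\bM\UU)^\top\xrightarrow{d}\cM\cN(\bzero,\bPsi,\cdot)$ (ultimately because Assumption \ref{asu:infer}(d) \emph{assumes} the matrix-normal limit of $n^{-1/2}\sum_i\tilde\bff_i\tilde\bu_i^\top$), and the theorem then follows by the closure of the matrix-normal family under two-sided linear maps $\bM\mapsto\bE_1\bM\bE_2^\top$, with the extra $1/\sqrt{p_2k_2}$ variance gain coming from $\tfrac{1}{p_2}\bE_{p_1p_2}\bPsi\bE_{p_1p_2}^\top\to\tfrac{1}{p_2}\sum_i\bPsi_{ii}$ under the off-diagonal decay $\bPsi_{ij}=\op{p^{-1}}$. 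Your proposed Lyapunov argument is a harder, more self-contained route that the assumptions do not require you to take; and your claim that the separable covariance $\bSigma_{R,k}\otimes\bSigma_{R,p}$ is ``forced'' by the Kronecker structure of $\mathring\bGamma$ is a misattribution --- that structure only enters the signal part, which cancels exactly, while the matrix-normal form of the limit comes from Assumption \ref{asu:infer}(d) and the linear-transformation property. Neither point breaks the proof, but if you pursue the Lyapunov route you would still need to assume (or separately establish) the separable limiting covariance and the $\op{p^{-1}}$ cross-block decay that the paper builds directly into its assumptions.
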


\begin{remark}
The optimality of the rate for $\hat\bR$ presented in Theorem \ref{thm:MCMR} can be interpreted by considering an oracle scenario where $\bF_i$ and $\bC^*$ are known. This scenario reduces the problem to estimating $\bF_i$ using a linear regression model by stacking the matrix factor model along the second axis:
\[
[\bX_1\; \cdots\;\bX_n] = \bR \; [\bF_1\bC^{*\top}\; \cdots\;\bF_n\bC^{*\top}] + [\bE_1\; \cdots\;\bE_n].
\]
In this model, $\bR$ of dimension $p_1\times k_1$, is treated as the unknown ``parameter'' vector. According to standard OLS analysis, the element-wise convergence rate of this estimator is $1/\sqrt{n p_2}$, which is consistent with the rate determined in Theorem \ref{thm:MCMR}.
\end{remark}

\begin{theorem}\label{thm:U_element}
Under the same condition of Theorem \ref{thm:MCMR}, we have, for each $i\in[n]$, $j\in[p_1]$, and $k\in[p_2]$, that 
\begin{equation}
\left|\hat{u}_{i,jk} - {u}_{i,jk}\right| = \Op{\frac{1}{\sqrt{np_1}}+ \frac{1}{\sqrt{np_2}} + \frac{1}{\sqrt{p_1p_2}} },
\end{equation}
where $\hat{u}_{i,jk}$ and ${u}_{i,jk}$ are the $(j,k)$-th element of $\hat\bU_i$ and $\bU_i$, respectively. 
\end{theorem}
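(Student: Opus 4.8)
The plan is to reduce the claim to an entrywise error bound for the reconstructed signal term. Since $\bX_i$ is observed without error and, by the MFM in \eqref{eqn:famar}, $\bU_i=\bX_i-\bR^*\bF_i\bC^{*\top}$, the definition \eqref{eqn:idio} of $\hat\bU_i$ gives
\[
\hat\bU_i-\bU_i \;=\; \bR^*\bF_i\bC^{*\top} - \frac{1}{\hat c_s\hat r_s}\,\hat\bR\hat\bF_i\hat\bC^\top ,
\]
so it suffices to control each entry of the difference between $\bR^*\bF_i\bC^{*\top}$ and its plug-in estimator. Writing $(\br_j^*)^\top$ and $(\bc_k^*)^\top$ for the $j$-th and $k$-th rows of $\bR^*$ and $\bC^*$, the target signal entry is $(\br_j^*)^\top\bF_i\bc_k^*$; using the rotation identities $\mathring\bR\bH_1=\bR^*$ and $\mathring\bC\bH_2=\bC^*$ from \eqref{eqn:rotated-truth}, this equals $\mathring{\br}_j^\top(\bH_1\bF_i\bH_2^\top)\mathring{\bc}_k$, which is precisely the population target (up to the scalars $\mathring c_s,\mathring r_s$) of the corresponding rows of $\hat\bR$, $\hat\bF_i$, and $\hat\bC$.

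Next I would introduce a telescoping decomposition that swaps $\hat\bR,\hat\bF_i,\hat\bC$ for their population targets $\mathring c_s\mathring\bR$, $\bH_1\bF_i\bH_2^\top$, $\mathring r_s\mathring\bC$ one factor at a time. Writing the estimator's $(j,k)$ entry as $\tfrac{1}{\hat c_s\hat r_s}\hat{\br}_j^\top\hat\bF_i\hat{\bc}_k$ (with $\hat{\br}_j,\hat{\bc}_k$ the relevant rows), I expand its deviation from the target into a sum whose leading terms each contain exactly one of
\[
\Delta_R:=\hat\bR-\mathring c_s\mathring\bR,\qquad
\Delta_F:=\hat\bF_i-\bH_1\bF_i\bH_2^\top,\qquad
\Delta_C:=\hat\bC-\mathring r_s\mathring\bC,
\]
with the remaining factors replaced by their (bounded) population versions, plus a scalar term coming from $\tfrac{1}{\hat c_s\hat r_s}$ and second-order cross terms.

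The bounds then follow from the results already established. A single row of $\Delta_R$ has $\ell_2$-norm $\Op{1/\sqrt{np_2}}$ by Theorem \ref{thm:MCMR} (its entries are asymptotically normal at rate $\sqrt{np_2k_2}$, and $k_1$ is fixed), a single row of $\Delta_C$ is $\Op{1/\sqrt{np_1}}$ by the same theorem, and $\|\Delta_F\|_2=\Op{1/\sqrt{p_1p_2}}$ by Proposition \ref{lem:Ferr_nuc}. Sandwiching $\Delta_F$ between the bounded rows $\mathring{\br}_j,\mathring{\bc}_k$ (bounded because $\bR^*,\bC^*$ have bounded rows and $\bH_1,\bH_2$ are well conditioned by Proposition \ref{prop:H}) converts the operator-norm bound into an entrywise contribution $\Op{1/\sqrt{p_1p_2}}$; the $\Delta_R,\Delta_C$ terms, multiplied by the bounded $\hat\bF_i$ (note $\|\hat\bF_i\|_2\le\|\bH_1\|\,\|\bF_i\|\,\|\bH_2\|+\|\Delta_F\|_2=\Op{1}$ via \eqref{eqn:FFhat} and Proposition \ref{prop:H}) and the bounded rows, contribute $\Op{1/\sqrt{np_2}}$ and $\Op{1/\sqrt{np_1}}$. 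Summing yields exactly the claimed rate. Finally, the scalars $\hat c_s,\hat r_s$ are block averages of the entries of $\hat\bC,\hat\bR$, so they are bounded away from zero with fluctuation at most of the order of the entrywise rates above; hence $\tfrac{1}{\hat c_s\hat r_s}=\Op{1}$ and, after verifying that this normalization undoes the rotation-induced scale so that $\tfrac{1}{\hat c_s\hat r_s}\hat\bR\hat\bF_i\hat\bC^\top$ is consistent for $\bR^*\bF_i\bC^{*\top}$, multiplying the bounded signal $(\br_j^*)^\top\bF_i\bc_k^*$ by the scalar fluctuation contributes a term dominated by the three leading ones. The inherited condition $n\ll p_1p_2\,(p_1\wedge p_2)$ from Theorem \ref{thm:MCMR} guarantees the second-order cross terms are negligible relative to the leading three.

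The main obstacle I expect is bookkeeping across two different norms: Proposition \ref{lem:Ferr_nuc} controls $\hat\bF_i$ in operator norm, whereas Theorem \ref{thm:MCMR} controls $\hat\bR,\hat\bC$ entrywise through matrix-normal limits. I would have to verify that extracting a single $(j,k)$ entry of a triple product never inflates the rate—this is exactly where the finiteness of $k_1,k_2$, the boundedness of individual loading rows, and the well-conditioning of $\bH_1,\bH_2$ are essential—and that the scalar normalization $\hat c_s\hat r_s$ stays bounded away from $0$ and fluctuates at a negligible order, so that the reconstructed signal indeed targets $\bR^*\bF_i\bC^{*\top}$ at the stated element-wise rate.
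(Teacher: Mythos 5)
Your proposal is correct and follows essentially the same route as the paper: reduce to the entrywise error of the reconstructed signal $\bR^*\bF_i\bC^{*\top}$, telescope over $\hat\bR$, $\hat\bF_i$, $\hat\bC$ one factor at a time, and bound each term using the row-wise rates $\Op{1/\sqrt{np_2}}$ and $\Op{1/\sqrt{np_1}}$ from Theorem \ref{thm:MCMR} together with the $\Op{1/\sqrt{p_1p_2}}$ factor bound from Proposition \ref{lem:Ferr_nuc}. The only cosmetic difference is that the paper sets $\mathring{c}_s=1$ to simplify the scalar normalization, whereas you carry $1/(\hat c_s\hat r_s)$ explicitly and absorb its fluctuation into the same rates, exactly as the paper does for $\hat r_s$.
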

\begin{remark}
The three components that shows up in the convergence rate in Theorem \ref{thm:U_element} comes from the errors in estimating $\bC^*$, $\bR^*$, and $\bF_i$, respectively. 
\end{remark}

\subsection{Statistical Convergence of FAMAR Coefficients}

We focus our theoretical analysis on latent FAMAR \eqref{eqn:famar} where $(\bF_i, \bU_i)$ are not directly observable. 
Theoretical results for observable FAMAR are provided in Section \ref{append:famar-observed-factor} for low-rank $\bB^*$ and in Section \ref{append:sparse-B-observed-F} for overall sparse $\bB^*$ in the supplemental material. 

\subsubsection{FAMAR with Low-Rank $\bB^*$} \label{sec:theory:famar:lowrank}

To study the property of $\hat{\bA}$ and $\hat{\bB}$ obtained by solving \eqref{eqn:convex}, we introduce the adjusted-restricted strong convexity condition (adjusted-RSC), which is a modification of the strong convexity condition (RSC) \citep{negahban2011estimation}.

\begin{definition}[Adjusted-restricted strong convexity]\label{def:rsc}
We let $\Pi_{B^{\perp}}$ denote the projection operator onto the matrix subspace $\{ \bM \cond {\bV_1^*}^{\top}\bM = \bzero \text{ or }  {\bM} {\bV_2^*}  = \bzero\}$ where $\bV_1^*$ and $\bV_2^*$ are the left and right singular space of $\bB^*$ respectively.  
Any $\Delta_B\in\RR^{p_1\times p_2}$ can be decomposed as
\begin{equation}\label{eqn:space-decomp}
\bfm\Delta^{\prime \prime}_B=\Pi_{B^{\perp}}(\bfm\Delta),
\quad\text{and}\quad \bfm\Delta^{\prime}_B=\bfm\Delta-\bfm\Delta^{\prime \prime}_B.
\end{equation}
Define a set in $\RR^{k_1\times k_2}\times \RR^{p_1\times p_2}$ as
\begin{equation}
\mathcal{C}:=\left\{\bfm\Delta=  (\bDelta_A, \bDelta_B) \mid
\bfm{\Delta}_A\in \mathbb{R}^{k_1 \times k_2}, 
\bfm\Delta_B\in \mathbb{R}^{p_1 \times p_2} ,\|\bfm\Delta_B^{\prime \prime}\|_* \leq 3\|\bfm\Delta_B^{\prime}\|_* + \|\bfm\Delta_A\|_*\right\}.
\end{equation}
We say that the operator $\mG_n:\RR^{k_1\times k_2}\times \RR^{p_1\times p_2}\rightarrow \RR^{n}$ satisfies adjusted-RSC over the set $\mC$ if there exists some scalar-valued $\kappa(\mG_n)>0$ such that
\[
\frac{1}{2n}\|\mG_n(\bDelta_A, \bDelta_B)\|^2 \geq \kappa(\mG_n)\paren{\|\bDelta_A\|_F^2 + \|\bDelta_B\|_F^2} \ \text{ for all } (\bDelta_A, \bDelta_B)\in\mC.
\]
\end{definition}

\begin{theorem}\label{thm:consistency2}
Suppose that the projection matrices $\bW_1$ and $\bW_2$ are generated by the pre-trained projection in Section \ref{sec:div_proj} with $n^\prime \gg \log(p_1\vee p_2)$, $n\ll p_1^2p_2^2$. Then, under the same condition of Proposition \ref{thm:ols-pre-trained}, the solutions $(\hat\bA, \hat\bB)$ to the optimization problem \eqref{eqn:convex} satisfy 
\begin{equation} \label{eqn:error-A}
\left\|\bH_1^\top\hat\bA\bH_2 - \bA^*\right\|_F =  \Op{ \frac{\rho}{\sqrt{np_1p_2}}},
\end{equation}
where $\rho := \|\bA^*\|_F$ is the Frobenious norm of the true $k_1\times k_2$ coefficient matrix. In addition, suppose that there exists a positive constant $C_0$ such that $\frac{p_1\vee p_2}{n}<C_0$, and $\mG_n$ satisfies adjusted-RSC with constant $\kappa(\mG_n)$ over the set $\mC$, where $\kappa(\mG_n) > 2C_1(C_0+1)$ for some constant $C_1$.
Then we have
\begin{equation} \label{eqn:error-B}
\|\hat{\bB}-\bB^*\|_F \leq (2\kappa(\mG_n)^{-1})\cdot\lambda_n\cdot\big(2\sqrt{k_1\wedge k_2} + 6\sqrt{2\operatorname{rank}(\bB^*)}\big),
\end{equation}
when the tuning parameter $\lambda_n$ is chosen such that 
$\lambda_n\geq 2\|\xbar\mG_n^*(\bar\bepsilon)\|_2 /n$, where  $\xbar{\mG}_n^*({\beps}) := \sum_{i=1}^n \varepsilon_i \cdot \operatorname{diag}\big(\bH_1^{-1}\hat{\bF}_i(\bH_2^\top)^{-1},\; \hat{\bU}_i\big)$ and  $\xbar\varepsilon_i = y_i - \angles{\bH_1^{-1}\hat{\bF}_i(\bH_2^\top)^{-1},\bA^*}-\angles{\hat{\bU}_i,\bB^*}$.
\end{theorem}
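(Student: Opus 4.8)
The plan is to work throughout in rotated coordinates that absorb the identification ambiguity of the factor model. Because $\hat{\bF}_i$ estimates the rotated factor $\bH_1\bF_i\bH_2^\top$ rather than $\bF_i$, I would introduce the reparametrized feature $\tilde{\bF}_i := \bH_1^{-1}\hat{\bF}_i(\bH_2^\top)^{-1}$, which approximates $\bF_i$, and use the algebraic identity $\angles{\hat\bA,\hat{\bF}_i} = \angles{\bH_1^\top\hat\bA\bH_2,\ \tilde{\bF}_i}$. Consequently the factor coefficient, expressed in these coordinates, is exactly $\bH_1^\top\hat\bA\bH_2$ with target $\bA^*$, and the objective \eqref{eqn:convex} is invariant under this reparametrization (the residuals, hence $f_n$ and $\|\bB\|_*$, are unchanged). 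Under this reparametrization the residual at the target $(\bA^*,\bB^*)$ is precisely $\xbar\varepsilon_i$, and the adjoint of the design operator evaluated at the noise is $\xbar{\mG}_n^*(\bar\bepsilon)=\sum_i\xbar\varepsilon_i\operatorname{diag}(\tilde{\bF}_i,\hat{\bU}_i)$, matching the quantities in the statement. A key structural fact I would exploit is that, by the construction of $\hat{\bU}_i$ in \eqref{eqn:idio}, the fitted factor and idiosyncratic features are (near-)orthogonal, $\hat{\FF}^\top\hat{\UU}=\bzero$; this decouples the two estimands and lets me treat $\hat\bA$ and $\hat\bB$ separately.

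For the bound \eqref{eqn:error-A} on $\hat\bA$, orthogonality reduces the joint minimization to an ordinary least squares fit of $\by$ on the factor features, so in rotated coordinates $\vect(\bH_1^\top\hat\bA\bH_2)=(\tilde\FF^\top\tilde\FF)^{-1}\tilde\FF^\top\by$. Substituting $y_i=\angles{\bA^*,\bF_i}+\angles{\bB^*,\bU_i}+\varepsilon_i$ and writing $\bF_i=\tilde{\bF}_i-\tilde\bdelta_i$ with plug-in error $\tilde\bdelta_i=\bH_1^{-1}\tfrac{1}{p_1p_2}\bW_1^\top\bU_i\bW_2(\bH_2^\top)^{-1}$, I would expand $\bH_1^\top\hat\bA\bH_2-\bA^*=(\tfrac1n\tilde\FF^\top\tilde\FF)^{-1}\cdot\tfrac1n\tilde\FF^\top\br$, where the effective residual $\br_i$ collects $\varepsilon_i$, the $\bB^*$ contribution (orthogonal to the factor features, hence negligible up to the $\hat{\bU}_i-\bU_i$ error), and the factor plug-in term $-\angles{\bA^*,\tilde\bdelta_i}$. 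The Gram matrix $\tfrac1n\tilde\FF^\top\tilde\FF$ concentrates to a positive-definite limit by Assumption \ref{asm:RC} and Proposition \ref{prop:H}, so its inverse is $\Op{1}$; the leading, factor-estimation contribution $\tfrac1n\tilde\FF^\top(\angles{\bA^*,\tilde\bdelta_i})_i$ is a mean-zero average of $n$ terms each of size $\Op{\rho/\sqrt{p_1p_2}}$ (using $\|\tilde\bdelta_i\|=\Op{1/\sqrt{p_1p_2}}$ from Proposition \ref{lem:Ferr_nuc} and $\rho=\|\bA^*\|_F$), which yields the stated rate $\Op{\rho/\sqrt{np_1p_2}}$.

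For the bound \eqref{eqn:error-B} on $\hat\bB$ I would run the restricted-strong-convexity argument of \cite{negahban2011estimation}, adapted to the adjusted cone. Starting from optimality, $F_n(\hat\bA,\hat\bB)\le F_n$ evaluated at the rotated target gives the basic inequality $\tfrac{1}{2n}\|\mG_n(\bDelta_A,\bDelta_B)\|^2\le \tfrac1n\angles{\xbar{\mG}_n^*(\bar\bepsilon),\operatorname{diag}(\bDelta_A,\bDelta_B)}+\lambda_n(\|\bB^*\|_*-\|\hat\bB\|_*)$, with $\bDelta_A=\bH_1^\top\hat\bA\bH_2-\bA^*$ and $\bDelta_B=\hat\bB-\bB^*$. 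I would bound the noise term by trace duality, $\angles{\xbar{\mG}_n^*(\bar\bepsilon),\operatorname{diag}(\bDelta_A,\bDelta_B)}\le\|\xbar{\mG}_n^*(\bar\bepsilon)\|_2(\|\bDelta_A\|_*+\|\bDelta_B\|_*)$, so that $\lambda_n\ge 2\|\xbar{\mG}_n^*(\bar\bepsilon)\|_2/n$ makes it at most $\tfrac{\lambda_n}{2}(\|\bDelta_A\|_*+\|\bDelta_B\|_*)$. Decomposing $\bDelta_B=\bDelta_B'+\bDelta_B''$ against the singular subspaces of $\bB^*$ and using decomposability of the nuclear norm then forces $\|\bDelta_B''\|_*\le 3\|\bDelta_B'\|_*+\|\bDelta_A\|_*$, i.e.\ $(\bDelta_A,\bDelta_B)\in\mathcal C$, exactly the set on which adjusted-RSC (Definition \ref{def:rsc}) is assumed. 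Invoking adjusted-RSC on the left and converting nuclear to Frobenius norms via $\|\bDelta_A\|_*\le\sqrt{k_1\wedge k_2}\|\bDelta_A\|_F$ and $\|\bDelta_B'\|_*\le\sqrt{2\operatorname{rank}(\bB^*)}\|\bDelta_B\|_F$ gives $\kappa(\mG_n)(\|\bDelta_A\|_F^2+\|\bDelta_B\|_F^2)\le \lambda_n(\tfrac12\sqrt{k_1\wedge k_2}+\tfrac32\sqrt{2\operatorname{rank}(\bB^*)})\sqrt{\|\bDelta_A\|_F^2+\|\bDelta_B\|_F^2}$; cancelling one power of the joint Frobenius norm produces \eqref{eqn:error-B}.

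The main obstacle, in my view, is not the RSC mechanics (which are deterministic once the hypotheses are granted) but the bookkeeping that makes them applicable: correctly identifying the rotated targets so the basic inequality can be applied with $(\bA^*,\bB^*)$ and residual $\xbar\varepsilon_i$, verifying that the objective is genuinely invariant under the $(\bH_1,\bH_2)$ reparametrization, and justifying the (near-)orthogonality $\hat{\FF}^\top\hat{\UU}=\bzero$ together with the control of the $\hat{\bU}_i-\bU_i$ and $\hat{\bF}_i-\bH_1\bF_i\bH_2^\top$ plug-in errors through Theorem \ref{thm:U_element} and Proposition \ref{lem:Ferr_nuc}. The genuinely probabilistic burden---showing that $\|\xbar{\mG}_n^*(\bar\bepsilon)\|_2/n$ is small enough that a useful $\lambda_n$ exists, and that adjusted-RSC holds with the required $\kappa(\mG_n)$---is deferred to the stated assumptions, so within this theorem it suffices to carry the deterministic argument through.
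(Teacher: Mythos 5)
Your overall route matches the paper's: decouple $\hat\bA$ from $\hat\bB$ via $\hat\FF^\top\hat\UU=\bzero$, treat $\hat\bA$ as an OLS fit in rotated coordinates, and run a decomposability-plus-restricted-strong-convexity argument over the cone $\mC$ for $\hat\bB$. The genuine gap is in the second part. The theorem assumes adjusted-RSC for the operator $\mG_n$ built from the \emph{true} features $\bG_i=\diag(\bF_i,\bU_i)$, whereas the basic inequality you obtain from optimality of \eqref{eqn:convex} involves the operator $\xbar\mG_n$ built from the \emph{estimated} features $\xbar\bG_i=\diag\big(\bH_1^{-1}\hat\bF_i(\bH_2^{-1})^\top,\hat\bU_i\big)$. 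You invoke ``adjusted-RSC on the left'' for this empirical operator and declare its validity ``deferred to the stated assumptions,'' but the stated assumption does not cover $\xbar\mG_n$. The paper's proof spends most of its length on precisely this transfer: it bounds $n^{-1}\sum_{i=1}^n\|\xbar\bG_i-\bG_i\|_F^2\le C_1\big(\tfrac{p_1\vee p_2}{n}+1\big)$ using Proposition \ref{lem:Ferr_nuc}, the loading error rates behind Theorem \ref{thm:U_element}, and a Cauchy--Schwarz expansion of $\hat\bU_i-\bU_i$, and then uses $\tfrac{p_1\vee p_2}{n}<C_0$ together with $\kappa(\mG_n)>2C_1(C_0+1)$ to conclude $\kappa(\xbar\mG_n)\ge\kappa(\mG_n)/2$. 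That step is where those two hypotheses are consumed and where the factor of $2$ in \eqref{eqn:error-B} originates; your sketch, taken literally, leaves both hypotheses unused and would output $\kappa(\mG_n)^{-1}$ rather than $2\kappa(\mG_n)^{-1}$ (your intermediate constants $\tfrac12$ and $\tfrac32$ also stop short of the stated $2$ and $6$, because you have not yet folded in $\|\bDelta_B\|_*\le 4\|\bDelta_B'\|_*+\|\bDelta_A\|_*$). This is not merely terseness: without the transfer, the RSC lower bound needed for the chain of inequalities is simply not available.

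A smaller issue in the first part: after your expansion, the pure-noise term $(\tilde\FF^\top\tilde\FF)^{-1}\tilde\FF^\top\bepsilon$ is $\Op{n^{-1/2}}$, and the $\bB^*$-induced term is governed by $\|\vect(\bB^*)\|_1/\sqrt{p_1p_2}$; neither is $\Op{\rho/\sqrt{np_1p_2}}$ in general. Calling the factor plug-in term ``leading'' is only correct under the scaling $\rho=\|\bA^*\|_F\asymp p_1p_2$ (and $\|\bB^*\|_*$ bounded), which the paper invokes explicitly at the end of its Part I proof and justifies in the remark following the theorem. You should state this scaling, since without it the claimed rate \eqref{eqn:error-A} does not follow from your decomposition.
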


\begin{remark}
In FAMAR model \eqref{eqn:famar}, the signal strength from the factors $\langle \bA^*, \bF_i \rangle$ should be comparable to the signal strength from the idiosyncratic component $\langle \bB^*, \bU_i \rangle$. 
Since the predictors $\bF_i\in\RR^{k_1\times k_2}$ and $\bU_i\in\RR^{p_1\times p_2}$ both have bounded entries, but differ in dimensions, it is reasonable to expect $\|\bA^*\|_F \asymp \bigO{p_1p_2}$ in the low-rank setting when $\|\bB^*\|_F \asymp \bigO{p_1p_2}$.
This aligns with the matrix regression model \eqref{eqn:mr} in Example 2, where the size of each element in $\bA^*$ inherently has order $\bigO{p_1p_2}$, given the dimensions of $\bR^* \in \RR^{p_1 \times k_1}$, $\bB^* \in \RR^{p_1 \times p_2}$, and $\bC^* \in \RR^{p_2 \times k_2}$. 
Consequently, the convergence rate of $\hat\bA$ in \eqref{eqn:error-A} is interpreted for the relative error 
$$\frac{\left\|\bH_1^\top\hat\bA\bH_2 - \bA^*\right\|_F}{\|\bA^*\|_F} = \Op{ \frac{1}{\sqrt{np_1p_2}}}.$$
This contrasts sharply with factor-augmented sparse vector or matrix regression models where $\bB^*$ is sparsely supported on a set $S$, $\norm{\bB^*}_F\asymp \bigO{\abs{S}}$, and $\|\bA^*\|_F \asymp \bigO{k_1 k_2}$.
\end{remark}

The convergence rate of $\hat{\bB}$ described in Theorem \ref{thm:consistency2} can be further clarified when FAMAR reduces to matrix regression with MFM covariates \eqref{eqn:mr}, or equivalently $\bA^* = \bR^\top\bB^*\bC$, and $\beps$ is a sub-Gaussian random vector. This scenario is elaborated in the subsequent proposition.

\begin{proposition}\label{prop:err_order}
Under FAMAR model \eqref{eqn:famar} with $\bA^* = \bR^\top\bB^*\bC$ and the conditions of Theorem \ref{thm:consistency2}, if $\EE[\varepsilon_i [\bF_i\; \bU_i]] = \bfm{0}$, and $\varepsilon_i$ is $\sigma$-sub-Gaussian, 
then when the tuning parameter $\lambda_n$ is chosen such that $\lambda_n\asymp\sqrt{\log(p_1+p_2)/n} + (\log n/n)^{1/4}\sqrt{p_1p_2/n}$, we have
\[
\begin{aligned}
\|\hat{\bB}-\bB^*\|_F = \mathcal O_p\big[\big(2\sqrt{k_1\wedge k_2} &+ 6\sqrt{2\operatorname{rank}(\bB^*)}\big)\big(\sqrt{\frac{\log(k_1+k_2+p_1+p_2)}{n}}\\
&+ \paren{\frac{\log n}{n}}^{1/4} \sqrt{(k_1\wedge k_2)k_1k_2} + \frac{(\log n)^{1/4}}{n^{3/4}}\sqrt{p_1p_2} \big)\big].
\end{aligned}
\]
In particular, since $k_1\asymp k_2 \asymp \rank(\bB^*)\asymp 1$, the result is simplified to
\[
\|\hat{\bB}-\bB^*\|_F = \Op{\sqrt{\frac{\log(p_1+p_2)}{n}}  + \frac{(\log n)^{1/4}\sqrt{p_1p_2}}{n^{3/4}}}.
\]
\end{proposition}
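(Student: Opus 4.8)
The plan is to obtain Proposition \ref{prop:err_order} as a direct corollary of the master bound \eqref{eqn:error-B} in Theorem \ref{thm:consistency2}: once we certify that the prescribed $\lambda_n$ dominates $2\|\xbar{\mG}_n^*(\bar\beps)\|_2/n$ with probability tending to one, the stated rate for $\|\hat\bB-\bB^*\|_F$ follows verbatim, and the simplified display is recovered by setting $k_1\asymp k_2\asymp\rank(\bB^*)\asymp 1$ so the prefactor $2\sqrt{k_1\wedge k_2}+6\sqrt{2\rank(\bB^*)}$ is $\bigO{1}$. Thus essentially all the work is bounding the spectral norm of the score $\xbar{\mG}_n^*(\bar\beps)=\sum_{i=1}^n\bar\varepsilon_i\,\diag\big(\bH_1^{-1}\hat\bF_i(\bH_2^\top)^{-1},\hat\bU_i\big)$. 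Since this matrix is block diagonal with a $k_1\times k_2$ factor block $T_A=\sum_i\bar\varepsilon_i\bH_1^{-1}\hat\bF_i(\bH_2^\top)^{-1}$ and a $p_1\times p_2$ idiosyncratic block $T_B=\sum_i\bar\varepsilon_i\hat\bU_i$, its operator norm equals $\max\{\|T_A\|_2,\|T_B\|_2\}$; I would bound each block separately, after which the two logarithmic arguments $\log(k_1+k_2)$ and $\log(p_1+p_2)$ merge into the $\log(k_1+k_2+p_1+p_2)$ of the general statement.

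The first step is to linearize the surrogate noise. Using $y_i=\angles{\bA^*,\bF_i}+\angles{\bB^*,\bU_i}+\varepsilon_i$ together with the specialization $\bA^*=\bR^\top\bB^*\bC$ (so the population model is the pure matrix regression $y_i=\angles{\bB^*,\bX_i}+\varepsilon_i$ of \eqref{eqn:mr}), I would write $\bar\varepsilon_i=\varepsilon_i+\delta_i$ with $\delta_i:=\angles{\bA^*,\bF_i-\bH_1^{-1}\hat\bF_i(\bH_2^\top)^{-1}}+\angles{\bB^*,\bU_i-\hat\bU_i}$ collecting the plug-in errors. Substituting this splits each block into a pure-noise part (the $\varepsilon_i$ terms) and a plug-in part (the $\delta_i$ terms). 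For the pure-noise part $\tfrac1n\|\sum_i\varepsilon_i\hat\bU_i\|_2$, I would first replace $\hat\bU_i$ by $\bU_i$, charging the difference to Theorem \ref{thm:U_element}, and then exploit the orthogonality hypothesis $\EE[\varepsilon_i[\bF_i\;\bU_i]]=\bzero$ and the $\sigma$-sub-Gaussianity of $\varepsilon_i$ to apply a matrix Bernstein inequality (after truncation) to the mean-zero independent sum $\tfrac1n\sum_i\varepsilon_i\bU_i$. Under the appendix factor-model assumptions (Assumption \ref{asum:FU_subG}), which keep the idiosyncratic spectrum bounded, the variance proxy is $\bigO{1}$ and this yields spectral-norm bounds of order $\sqrt{\log(p_1+p_2)/n}$ for $T_B$ and $\sqrt{\log(k_1+k_2)/n}$ for $T_A$, i.e.\ the first term of the asserted rate.

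The plug-in part is where the remaining two terms originate. Here I would combine the uniform factor bound $\|\hat\bF_i-\bH_1\bF_i\bH_2^\top\|_2=\Op{(p_1p_2)^{-1/2}}$ of Proposition \ref{lem:Ferr_nuc} with the element-wise bound of Theorem \ref{thm:U_element} on $\hat\bU_i-\bU_i$, controlling $\max_i|\delta_i|$ and $\max_i\|\hat\bU_i\|_2\lesssim\sqrt{p_1p_2}$ uniformly over $i\in[n]$; the maximal inequalities over the $n$ samples are what inject the residual $(\log n)^{1/4}$ factors, producing the $T_A$-contribution $(\log n/n)^{1/4}\sqrt{(k_1\wedge k_2)k_1k_2}$ and the $T_B$-contribution $(\log n)^{1/4}n^{-3/4}\sqrt{p_1p_2}$. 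Collecting the three pieces and feeding the resulting bound into Theorem \ref{thm:consistency2} gives the full rate, and the specialization of the dimension and rank parameters gives the simplified display.

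The main obstacle is the plug-in part, for two reasons. First, $\delta_i$ and $\hat\bU_i$ both depend on the \emph{entire} sample through the block-averaged loadings $\hat\bR,\hat\bC$ and the normalizers $\hat c_s,\hat r_s$ entering \eqref{eqn:idio}, so $\{\bar\varepsilon_i\hat\bU_i\}_i$ is not an independent sequence and matrix concentration cannot be invoked verbatim; I would instead pass to a deterministic worst-case bound over the relevant singular directions and insert the already-established stochastic rates of Theorem \ref{thm:MCMR} and Theorem \ref{thm:U_element}, using the identity $\bX_i=\hat\bU_i+(\hat c_s\hat r_s)^{-1}\hat\bR\hat\bF_i\hat\bC^\top$ so that the factor and idiosyncratic plug-in errors partially cancel under $\bA^*=\bR^\top\bB^*\bC$. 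Second, the delicate bookkeeping is to track the three estimation-error sources (in $\bR^*$, $\bC^*$, and $\bF_i$) so that their aggregate collapses to the single factor $(\log n)^{1/4}\sqrt{p_1p_2}\,n^{-3/4}$ rather than a larger polynomial in $p_1,p_2$; it is precisely here that the regime restriction $n\ll p_1^2p_2^2$, equivalent to $(p_1p_2)^{-1/2}\ll n^{-1/4}$, is used to ensure the factor-estimation error entering $\delta_i$ is of smaller order than $n^{-1/4}$ and hence controllable at the claimed rate.
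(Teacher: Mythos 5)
Your overall strategy coincides with the paper's: certify that $\lambda_n$ dominates twice the normalized score norm, then read the rate off the master bound of Theorem \ref{thm:consistency2}, splitting the score into a clean term $\|\sum_i\varepsilon_i\bG_i\|_2$ (matrix Hoeffding/Bernstein, giving $\sqrt{\log(k_1+k_2+p_1+p_2)/n}$) and an estimation-error term handled by Cauchy--Schwarz together with the rates of Corollary \ref{cor:F} and Theorem \ref{thm:U_element}. The substantive divergence is in how the working residual is defined, and this is exactly the delicate point you flag. You write $\bar\varepsilon_i=\varepsilon_i+\delta_i$ and must then control $\sum_i\delta_i\xbar\bG_i$, hoping the factor and idiosyncratic plug-in errors ``partially cancel.'' The paper instead makes the cancellation \emph{exact}: in Corollary \ref{cor:consistency_B} it re-centers the nuisance block of the target to $\xbar\bTheta^*=\operatorname{diag}\big(\bH_1^{-1}\hat\bR^\top\bB^*\hat\bC(\bH_2^{-1})^\top,\ \bB^*\big)$, so that by the identity $\bX_i=\hat\bR\hat\bF_i\hat\bC^\top+\hat\bU_i$ and $\bA^*=\bR^{*\top}\bB^*\bC^*$ one has $y_i-\angles{\xbar\bG_i,\xbar\bTheta^*}=y_i-\angles{\bX_i,\bB^*}=\varepsilon_i$ exactly, with no $\delta_i$ at all. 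Since only the $\bB$-block of the target is changed in a way that leaves $\bB^*$ fixed, the conclusion for $\|\hat\bB-\bB^*\|_F$ is unaffected. Without this re-centering your $\delta_i=\angles{\bB^*,\ (\hat c_s\hat r_s)^{-1}\hat\bR\hat\bF_i\hat\bC^\top-\bR^*\bH_1^{-1}\hat\bF_i(\bH_2^\top)^{-1}\bC^{*\top}}$ is a loading-estimation error that is nonzero and whose contribution to the score you have not actually bounded; this is the one genuine gap in the proposal, and the paper's device is the cleanest way to close it.

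A second, smaller point: you attribute the terms $(\log n/n)^{1/4}\sqrt{(k_1\wedge k_2)k_1k_2}$ and $(\log n)^{1/4}n^{-3/4}\sqrt{p_1p_2}$ to the plug-in residual $\delta_i$, whereas in the paper both arise from the interaction of the \emph{true} noise with the covariate estimation error, i.e.\ from
\[
\Big\|\sum_{i=1}^n\varepsilon_i(\xbar\bG_i-\bG_i)\Big\|_2\le\Big(\sum_{i=1}^n\varepsilon_i^2\Big)^{1/2}\Big(\sum_{i=1}^n\|\xbar\bG_i-\bG_i\|_F^2\Big)^{1/2},
\]
with $\sum_i\|\xbar\bG_i-\bG_i\|_F^2=\Op{p_1p_2\big(\sqrt{n(k_1\wedge k_2)k_1k_2/(p_1p_2)}+1\big)^2}$; the $(\log n)^{1/4}$ factor comes from the sub-exponential tail of $\sum_i\varepsilon_i^2$, not from a maximal inequality over $\max_i\|\hat\bU_i\|_2$. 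In your accounting this term sits inside the ``pure-noise part'' where you claim it is absorbed into $\sqrt{\log(p_1+p_2)/n}$ --- it is not, in general, and it is precisely why the final rate retains the separate $n^{-3/4}\sqrt{p_1p_2}$ term. Once the residual is re-centered as above and this term is bookkept correctly, the rest of your argument goes through as in the paper.
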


\begin{remark}
The two primary components contributing to the error bound rate for $\hat \bB$ are: (i) the interaction between regression noise $\bepsilon$ and the true covariates $\bF_i$'s and $\bU_i$'s, quantified as $\|n^{-1}\sum_{i=1}^n \varepsilon_i\;{\rm blockdiag}[\bF_i\;\bU_i]\|_2$, and (ii) the interaction of regression noise $\bepsilon$ with the estimation errors of $\bU_i$'s, quantified as $\|n^{-1}\sum_{i=1}^n \varepsilon_i(\hat\bU_i - \bU_i)\|_2$.
\end{remark}

\subsubsection{FAMAR with Sparse $\bB^*$} \label{sec:theory:famar:sparse}

When the true $\bB^*$ exhibits certain sparse structures, such as global, row- or column-wise sparsity, the methodologies and theoretical frameworks developed for high-dimensional vector regression with sparse or group sparse structures can be effectively adapted to the FAMAR model with sparse $\bB^*$. We have established theoretical results for overall sparsity using LASSO. Analogous results could be developed for overall sparsity employing non-concave penalties like SCAD and MCP, or for group sparsity, leveraging the relevant literature, e.g. \cite{yuan2006model, obozinski2011support, hastie2015statistical}. 
We leave these developments for future research to maintain focus and adhere to page limits.

To study the property of $\hat{\bA}$ and $\hat{\bB}$ obtained by solving \eqref{eqn:sparse_opt}, we consider the vectorized version with the restricted strong convexity condition (RSC) in Assumption \ref{asu:Bsparse_Fknown} and adapt the proofs in \cite{fan2020factor}. 
The analysis on $\bA$ is exactly the same as Theorem \ref{thm:consistency2} based on the fact that $\hat{\FF}^\top \hat{\UU} = \bzero$, we include it in the following theorem for completeness.
The consistency of $\hat\bB$ is established in Theorem \ref{thm:Bsparse_Funknown}. 

\begin{assumption}\label{asu:Bsparse_Fknown}
We denote $\btheta^*:= (\vect (\bB^*)^\top, \vect (\bA^*)^\top)^\top\in\RR^{p_1p_2+K_1K_2}$, and $\bbeta^*:= \btheta_{[p_1p_2]} = \vect (\bB^*)$ and denote $S:= \supp (\btheta^*)$, $S_1 = \supp (\bbeta^*)$, and $S_2 = [p_1p_2+K_1K_2]\setminus S$.
Let $\bW := [\UU \;\; \FF]$ represent the $n\times (p_1 p_2 + k_1k_2)$ true covariate matrix, we assume that 
\begin{itemize}
\item[(a)] (Restricted strong convexity) There exist $\kappa_2>\kappa_\infty>0$ such that $\|n(\bW_S^\top \bW_S)^{-1}\|_\infty \leq \frac{1}{2\kappa_\infty}$ and $\|n(\bW_S^\top \bW_S)^{-1}\|_2 \leq \frac{1}{2\kappa_2}$, where $\bW_S$ is the matrix of columns of $\bW$ supported on $S$;
\item[(b)] (Irrepresentable condition) $\|\bW_{S_2}^\top \bW_S (\bW_S^\top\bW_S)^{-1}\|_\infty \leq 1-\tau$ for some $\tau\in(0,1)$.
\end{itemize}
\end{assumption}

\begin{theorem}\label{thm:Bsparse_Funknown}
The solutions $(\hat\bA, \hat\bB)$ to the optimization problem \eqref{eqn:sparse_opt} satisfy 
\[
\left\|\bH_1^\top\hat\bA\bH_2 - \bA^*\right\|_F =  \Op{ \frac{\rho}{\sqrt{np_1p_2}}},
\]
where $\rho := \|\bA^*\|_F$, if the same conditions hold for the factors as in Theorem \ref{thm:consistency2}. In addition, 
suppose Assumption \ref{asu:Bsparse_Fknown} and \ref{asu:Baparse_W} hold. 
Define 
\[
\xbar\delta = \norm{n^{-1}\xbar{\bW}_S^{\top}\paren{\xbar{\bW}\btheta^* - \by}}_{\infty},
\]
let $\xbar{\bW} := [\xbar{\UU} \;\; \xbar{\FF}]$ represent the $n\times (p_1 p_2 + k_1k_2)$ rotated estimated covariate matrix and $\xbar{\bW}_S$ represent sub-matrix of $\xbar\bW$ supported on $S$.
If $\lambda>\frac{14\xbar\delta}{\tau}$, then we have $\supp(\hat\bB) \subseteq \supp(\bB^*)$ and 
\[
\big\|\widehat{\bB}-{\bB}^*\big\|_{\infty} \leq \frac{12\lambda}{5 \kappa_{\infty}}, \\
\quad \big\|\widehat{\bB}-{\bB}^*\big\|_2 \leq \frac{4\lambda\sqrt{|S|}}{\kappa_2}, \\
\quad \big\|\widehat{\bB}-{\bB}^*\big\|_1 \leq  \frac{12\lambda|S|}{5 \kappa_{\infty}}.
\]
\end{theorem}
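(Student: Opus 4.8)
The plan is to treat the two coefficient blocks separately, exploiting the exact orthogonality $\hat{\FF}^\top\hat{\UU}=\bzero$ noted before \eqref{eqn:sparse_B_opt}. This makes the stacked Gram matrix block-diagonal, so the unpenalized $\bA$-block and the $\ell_1$-penalized $\bB$-block decouple exactly: $\vect(\hat\bA)=(\hat\FF^\top\hat\FF)^{-1}\hat\FF^\top\by$ and $\vect(\hat\bB)$ solves \eqref{eqn:sparse_B_opt}. For the rotated $\bA$-bound I would reproduce the argument of Theorem \ref{thm:consistency2} verbatim: since $\hat\FF$ approximates the rotated truth $\mathring\FF$ at the rate of Proposition \ref{lem:Ferr_nuc}, undoing the rotation by $\bH_1,\bH_2$ yields $\|\bH_1^\top\hat\bA\bH_2-\bA^*\|_F=\Op{\rho/\sqrt{np_1p_2}}$ with $\rho=\|\bA^*\|_F$. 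The substance of the theorem is therefore the support-recovery and the three norm bounds for $\hat\bB$.

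For these I would run the primal--dual witness (PDW) construction on the stacked parameter $\btheta=(\vect(\bB)^\top,\vect(\bA)^\top)^\top$ with the estimated design $\xbar\bW=[\xbar\UU\;\xbar\FF]$, penalizing only the $\bB$-coordinates. Writing $\by=\bW\btheta^*+\beps$ with $\bW=[\UU\;\FF]$, note that $\xbar\bW\btheta^*-\by=(\xbar\bW-\bW)\btheta^*-\beps$, so all the stochastic and design-estimation fluctuation is collected into the single quantity $\xbar\delta=\norm{n^{-1}\xbar\bW_S^\top(\xbar\bW\btheta^*-\by)}_\infty$, which the statement takes as given. First I would solve the oracle problem restricted to $S$, set $\hat\btheta_{S_2}=\bzero$, and choose the subgradient $\hat z_S$ (with $\hat z=\bzero$ on the unpenalized $\bA$-block) so that stationarity on $S$ holds; this produces the exact error expansion
\begin{equation*}
\hat\btheta_S-\btheta_S^*=-\big(n^{-1}\xbar\bW_S^\top\xbar\bW_S\big)^{-1}\Big[n^{-1}\xbar\bW_S^\top(\xbar\bW\btheta^*-\by)+\lambda\,\hat z_S\Big].
\end{equation*}

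The key step is verifying strict dual feasibility $\norm{\hat z_{S_2}}_\infty<1$, which certifies that the oracle solution is the true minimizer and hence $\supp(\hat\bB)\subseteq\supp(\bB^*)$. Substituting the expansion into the stationarity equation on $S_2$ expresses $\hat z_{S_2}$ as a ``noise'' term of order $\xbar\delta/\lambda$ plus an ``irrepresentable'' term controlled by $\norm{\xbar\bW_{S_2}^\top\xbar\bW_S(\xbar\bW_S^\top\xbar\bW_S)^{-1}}_\infty\le 1-\tau$; the threshold $\lambda>14\xbar\delta/\tau$ is exactly what forces this sum below $1$. Given support recovery, the three error bounds follow from the displayed expansion: taking $\norm{\cdot}_\infty$ and using the RSC bound $\norm{n(\xbar\bW_S^\top\xbar\bW_S)^{-1}}_\infty\le 1/(2\kappa_\infty)$ together with $\norm{\hat z_S}_\infty\le1$ gives the $\ell_\infty$ bound $12\lambda/(5\kappa_\infty)$; using instead $\norm{n(\xbar\bW_S^\top\xbar\bW_S)^{-1}}_2\le 1/(2\kappa_2)$ and $\norm{\hat z_S}_2\le\sqrt{|S|}$ gives the $\ell_2$ bound; and the $\ell_1$ bound is then $|S|$ times the $\ell_\infty$ bound.

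The main obstacle is that Assumption \ref{asu:Bsparse_Fknown} posits the restricted-strong-convexity and irrepresentable conditions for the true (unobserved) design $\bW$, whereas the estimator uses $\xbar\bW$. I would therefore need to transfer both conditions to $\xbar\bW$, showing that the inverse-Gram and cross-product bounds persist with only constant degradation; this is where Assumption \ref{asu:Baparse_W} enters, and where the element-wise consistency of $\hat\bU_i$ (Theorem \ref{thm:U_element}) and of $\hat\bF_i$ (Proposition \ref{lem:Ferr_nuc}) is used to bound $\norm{\xbar\bW-\bW}$ uniformly. The slightly inflated numerical constants ($12/5$ in the $\ell_\infty$ bound and $14$ in the threshold for $\lambda$) are precisely the price of this transfer, absorbing the perturbation of the Gram matrix and of the irrepresentable constant caused by replacing $\bW$ with $\xbar\bW$.
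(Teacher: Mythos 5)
Your proposal is correct and follows essentially the same route as the paper: decouple $\hat\bA$ via $\hat\FF^\top\hat\UU=\bzero$ and reuse the Theorem \ref{thm:consistency2} argument, then establish the $\hat\bB$ bounds by transferring the restricted-strong-convexity and irrepresentable conditions from the true design $\bW$ to the estimated design $\xbar\bW$ (using Lemma \ref{lem:inv_sub}, Proposition \ref{prop:u-element}, and Assumption \ref{asu:Baparse_W}) and then running the LASSO support-recovery machinery with the degraded constants. The only cosmetic difference is that you spell out the primal--dual witness construction directly, whereas the paper packages it as Proposition \ref{prop:Bsparse_Fknown} via Theorem 4.1 of \cite{fan2020factor}; the substance and the identification of the transfer step as the crux are the same.
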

The proof is presented in Section \ref{append:sparse-B-estimated-F} in the appendix. 
By taking $\lambda \asymp \xbar\delta$, we have 
\begin{equation*}
\|\hat\bB-\bB^*\|_\infty/\xbar\delta = \Op{1}, \quad
\|\hat\bB-\bB^*\|_2/\xbar\delta = \Op{\sqrt{|S|}}, \quad\text{and}\quad 
\|\hat\bB-\bB^*\|_1/\xbar\delta = \Op{|S|}.
\end{equation*}

The convergence rate of $\hat{\bB}$ described from Theorem \ref{thm:Bsparse_Funknown} can be further clarified when FAMAR reduces to matrix regression with MFM covariates \eqref{eqn:mr}, that is $\bA^* = \bR^\top\bB^*\bC$, and $\beps$ is a sub-Gaussian random vector. This scenario is elaborated in the subsequent corollary.

\begin{corollary}\label{thm:Bsparse_Funknown2}
Suppose the conditions for Theorem \ref{thm:Bsparse_Funknown} hold. In addition, we assume that the true matrix regression model is $y_i = \angles{\bX_i,\bB^*}+\varepsilon_i$, i.e. $\bA^* = \bR^{*\top}\bB^*\bC^*$, and that $\EE[\varepsilon_i(\bF_i\ \bU_i)]=0$ and $\varepsilon_i$ is $\sigma$-sub-Gaussian. Then 
\begin{equation}\label{eqn:err-B-sparse-1}
\|\hat\bB-\bB^*\|_\infty
= \Op{\frac{|S|\log(p_1p_2)}{n} + |S|\sqrt{\frac{\log p_1p_2}{n}} \left(\frac{p_1}{p_2} \vee \frac{p_2}{p_1} \right) ^{1/2}}.
\end{equation} 
\end{corollary}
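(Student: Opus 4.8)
The plan is to reduce the corollary to a single concentration bound on $\xbar{\delta}$ and then quote Theorem \ref{thm:Bsparse_Funknown}. That theorem already delivers $\|\hat{\bB}-\bB^*\|_\infty \le 12\lambda/(5\kappa_\infty)$ as soon as $\lambda > 14\xbar{\delta}/\tau$, so taking $\lambda \asymp \xbar{\delta}$ makes the advertised rate identical to the order of $\xbar{\delta} = \|n^{-1}\xbar{\bW}_S^\top(\xbar{\bW}\btheta^* - \by)\|_\infty$. Hence the entire task is to pin down the order of $\xbar{\delta}$ under the additional hypotheses $\bA^* = \bR^{*\top}\bB^*\bC^*$, $\EE[\varepsilon_i(\bF_i\ \bU_i)] = \bzero$, and $\sigma$-sub-Gaussian $\varepsilon_i$.

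First I would use the true model $\by = \bW\btheta^* + \beps$, with $\bW = [\UU\ \FF]$ the (unobserved) true covariate matrix of Assumption \ref{asu:Bsparse_Fknown} and $\btheta^*$ supported on $S$, to rewrite the residual as $\xbar{\bW}\btheta^* - \by = (\xbar{\bW} - \bW)\btheta^* - \beps$. The triangle inequality then splits $\xbar{\delta}$ into a noise-interaction piece $\|n^{-1}\xbar{\bW}_S^\top\beps\|_\infty$ and an estimation-error piece $\|n^{-1}\xbar{\bW}_S^\top(\xbar{\bW} - \bW)\btheta^*\|_\infty$. In each piece I would further substitute $\xbar{\bW}_S = \bW_S + (\xbar{\bW}_S - \bW_S)$, so that every term is a clean part built from the true covariates plus a correction driven by the covariate-estimation error $\xbar{\bW} - \bW$.

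The clean part of the noise piece, $n^{-1}\bW_S^\top\beps$, is a mean-zero sub-Gaussian average by $\EE[\varepsilon_i(\bF_i\ \bU_i)]=\bzero$ and the boundedness of the covariate entries, so a sub-Gaussian maximal inequality with a union bound over the $|S|\le p_1p_2$ support coordinates yields the $\sqrt{\log(p_1p_2)/n}$ factor. For the correction terms I would feed in the element-wise rate $\hat{u}_{i,jk}-u_{i,jk} = \Op{(np_1)^{-1/2}+(np_2)^{-1/2}+(p_1p_2)^{-1/2}}$ of Theorem \ref{thm:U_element}, together with the loading rates of Theorem \ref{thm:MCMR} and the factor rate of Proposition \ref{lem:Ferr_nuc}. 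The prefactor $|S|$ emerges because $\btheta^*$ carries $|S|$ nonzero entries; under $\bA^* = \bR^{*\top}\bB^*\bC^*$ and the normalizations $\|\bR^*\|\asymp\sqrt{p_1}$, $\|\bC^*\|\asymp\sqrt{p_2}$ implied by Assumption \ref{asm:RC}, the factor block of $\btheta^*$ inherits an extra $\sqrt{p_1p_2}$ scale, and weighting the asymmetric $(np_1)^{-1/2}$ and $(np_2)^{-1/2}$ pieces of the $\hat{\bU}$-error against this scale is what produces the $(p_1/p_2\vee p_2/p_1)^{1/2}$ factor. Collecting terms, the product of two estimation-type errors contributes $|S|\log(p_1p_2)/n$, while the noise$\times$estimation-error cross term contributes $|S|\sqrt{\log(p_1p_2)/n}\,(p_1/p_2\vee p_2/p_1)^{1/2}$, giving the claimed rate.

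The main obstacle will be controlling the correction terms, because the estimated covariates $\xbar{\bW}_S$ inject dependence across the samples $i$: the errors in $\hat{\bU}_i$ share the common loading errors $\hat{\bR}-\mathring{c}_s\mathring{\bR}$ and $\hat{\bC}-\mathring{r}_s\mathring{\bC}$, so they do not average out like independent noise, and the orthogonality $\EE[\varepsilon_i(\bF_i\ \bU_i)]=\bzero$ fails for the estimated analogues. The key leverage is that the projection matrices $\bW_1,\bW_2$ come from the independent pre-training sample (Definition \ref{def:W}(c)), so conditioning on them decouples the factor-estimation error from the regression noise $\beps$. I would therefore isolate the shared loading-error component and bound it via Theorem \ref{thm:MCMR}, treat the idiosyncratic factor-error component via Proposition \ref{lem:Ferr_nuc}, and recombine; the careful bookkeeping of which of the three $\hat{\bU}$-error sources dominates after multiplication by $\beps$ or by $\btheta^*$ is precisely what pins down the asymmetric $(p_1/p_2\vee p_2/p_1)^{1/2}$ dependence.
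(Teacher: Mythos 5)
Your overall shell is right---reduce everything to the order of $\xbar\delta$, take $\lambda\asymp\xbar\delta$, and invoke Theorem \ref{thm:Bsparse_Funknown}---but the decomposition you propose for $\xbar\delta$ does not deliver the claimed rate, and the step that fails is precisely where the paper's proof does something you have not anticipated. You write $\xbar{\bW}\btheta^*-\by=(\xbar{\bW}-\bW)\btheta^*-\beps$ with the \emph{original} target $\btheta^*=(\vect(\bB^*)^\top,\vect(\bA^*)^\top)^\top$ and plan to bound the approximation piece $n^{-1}\xbar{\bW}_S^\top(\xbar{\bW}-\bW)\btheta^*$ term by term. But the $i$-th entry of $(\xbar{\bW}-\bW)\btheta^*$ is $\angles{\hat\bU_i-\bU_i,\bB^*}+\angles{\bH_1^{-1}\hat\bF_i(\bH_2^\top)^{-1}-\bF_i,\bA^*}$, and these two contributions nearly cancel: using $\bX_i=\hat\bR\hat\bF_i\hat\bC^\top/(\hat c_s\hat r_s)+\hat\bU_i$ and $\bA^*=\bR^{*\top}\bB^*\bC^*$, their sum collapses to $\angles{\hat\bF_i,\;\mathring\bR^\top\bB^*\mathring\bC-\hat\bR^\top\bB^*\hat\bC/(\hat c_s\hat r_s)}$ up to rotations, whereas each summand separately is of order $\|\bB^*\|_2\sqrt{(p_1\vee p_2)/n}$, because the loading errors $\hat\bR-\mathring c_s\mathring\bR$ and $\hat\bC-\mathring r_s\mathring\bC$ are common to all samples and do not average out against $n^{-1}\sum_i w_{ij}$. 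Bounding the two pieces by the triangle inequality, as you propose, therefore leaves a term of order $\sqrt{(p_1\vee p_2)/n}$, which exceeds the target rate by roughly a factor of $\sqrt{(p_1\wedge p_2)/\log(p_1p_2)}$. Relatedly, your mechanism for producing the $(p_1/p_2\vee p_2/p_1)^{1/2}$ factor (weighting the $\hat\bU$-error against the scale of $\bA^*$) is not where that factor actually originates.

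The paper sidesteps the whole issue by changing the target: it sets $\xbar\btheta^*$ to have the same $\bB^*$-block but replaces the factor block $\vect(\bA^*)$ by the plug-in value $(\bH_2\otimes\bH_1)^\top(\hat\bC\otimes\hat\bR)^\top\vect(\bB^*)$ built from the \emph{estimated} loadings. Since the $\bB$-block is unchanged, the bounds of Theorem \ref{thm:Bsparse_Funknown} still control $\hat\bB-\bB^*$; and since $\hat\bR\hat\bF_i\hat\bC^\top/(\hat c_s\hat r_s)+\hat\bU_i=\bX_i$ exactly, one has the algebraic identity $\xbar\bw_i^\top\xbar\btheta^*=\angles{\bX_i,\bB^*}=\bw_i^\top\btheta^*=y_i-\varepsilon_i$, so the residual is \emph{exactly} $-\varepsilon_i$ and the approximation term you are trying to bound never appears. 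What remains is the single quantity $\delta_j=|n^{-1}\sum_i\xbar w_{ij}\varepsilon_i|\le|n^{-1}\sum_i w_{ij}\varepsilon_i|+(n^{-1}\sum_i(\xbar w_{ij}-w_{ij})^2)^{1/2}(n^{-1}\sum_i\varepsilon_i^2)^{1/2}$: the first term gives $\sqrt{\log(p_1p_2)/n}$ by sub-Gaussianity, and the second gives $\log(p_1p_2)/n+\sqrt{\log(p_1p_2)/n}\,(p_1/p_2\vee p_2/p_1)^{1/2}$ via the \emph{uniform} element-wise bound of Proposition \ref{prop:u-element}(b) (not the pointwise Theorem \ref{thm:U_element} you cite), with the $|S|$ prefactor coming from the union bound over $j\in S$. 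Without this change of target---or an equivalent argument that explicitly exploits the cancellation between the $\hat\bU$-error and the $\hat\bF$-error---your plan cannot reach the stated rate.
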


\begin{remark}
When $\bB^*$ is a balanced matrix, that is, $p_1 \asymp p_2$, error bound \eqref{eqn:err-B-sparse-1} can be further reduce to 
\[
\|\hat\bB-\bB^*\|_\infty
= \Op{|S|\sqrt{\frac{\log p_1p_2}{n}}},
\]
which is optimal for vector sparse regression for vectorized $\bB^*$. 
\end{remark}

\section{Simulations} \label{sec:simul}

We study the finite-sample performance of the proposed FAMAR through simulations.

\medskip
\noindent
\textbf{FAMAR Matrix Loading's Asymptotically Normality.}
We test the asymptotic normality for the estimated loading matrices $\hat\bR$ and $\hat\bC$ according to Theorem \ref{thm:MCMR}.
To that end, we conduct experiments as follows:  $n=1000,\ \ n^\prime=500,\ k_1=3,\ k_2=2,\ \rank(\bB^*)=2,\ p_1=20,\ p_2=30$. The true loadings $\bR^*$ and $\bC^*$ are generated such that their elements are i.i.d. from $\cN(2,4)$. The true factors $\bF_i$ and residuals $\bU_i$ have their elements i.i.d. from $\cN(0,1)$ and $\cN(0,0.04)$, respectively. The experiments are repeated 10,000 times, and in each repetition, we compute $\sqrt{np_2k_2} \big(\hat{\bR} - \mathring{c}_s \mathring{\bR}\big)$ and $\sqrt{np_1k_1} \big(\hat{\bC} - \mathring{r}_s \mathring{\bC}\big)$. 
For each elements $(j,k)$, $j\in[p_1]$, $k\in[k_1]$ (resp. $(j^\prime,k^\prime)$, $j^\prime\in[p_2]$, $k^\prime\in[k_2]$), we get 10,000 realizations of $\sqrt{np_2k_2} \big(\hat{\bR} - \mathring{c}_s \mathring{\bR}\big)_{jk}$ (resp. $\sqrt{np_1k_1} \big(\hat{\bC} - \mathring{r}_s \mathring{\bC}\big)_{j^\prime k^\prime}$). Let $\hat s_{jk}$ (resp. $\hat s_{j^\prime k^\prime}^\prime$) be the sample standard deviation of the sequence. 
According to Theorem \ref{thm:MCMR}, $\sqrt{np_2k_2} \big(\hat{\bR} - \mathring{c}_s \mathring{\bR}\big)_{jk}/\hat s_{jk}$ and $\sqrt{np_1k_1} \big(\hat{\bC} - \mathring{r}_s \mathring{\bC}\big)_{j^\prime k^\prime}/\hat s_{j^\prime k^\prime}^\prime$ should distribute closely to standard Gaussian. 
This result is validated empirically by histograms for all the elements of the matrices in Appendix \ref{app:RC_normality}. 
Figure \ref{RC_normality_show} showcases histograms of four randomly picked elements. 
\begin{figure}[ht!]
\centering
\begin{subfigure}[b]{0.45\textwidth}
\centering
\includegraphics[width=\textwidth]{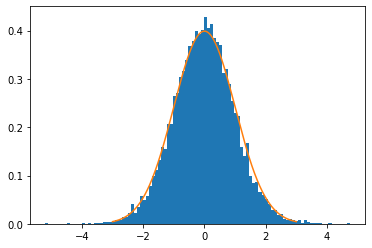}
\caption{$\sqrt{np_2k_2} \big(\hat{\bR} - c_s \mathring{\bR}\big)_{10,1}/\hat s_{10,1}$}
\end{subfigure}
\begin{subfigure}[b]{0.45\textwidth}
\centering
\includegraphics[width=\textwidth]{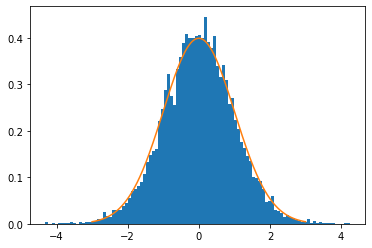}
\caption{$\sqrt{np_2k_2} \big(\hat{\bR} - c_s \mathring{\bR}\big)_{19,3}/\hat s_{19,3}$}
\end{subfigure}
\hfill
\begin{subfigure}[b]{0.45\textwidth}
\centering
\includegraphics[width=\textwidth]{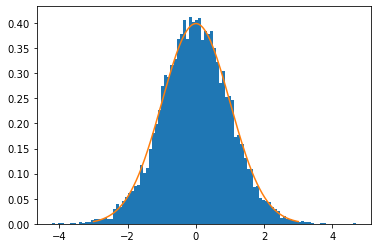}
\caption{$\sqrt{np_1k_1} \big(\hat{\bC} - r_s \mathring{\bC}\big)_{3,1}/\hat s_{3,1}^\prime$}
\end{subfigure}
\begin{subfigure}[b]{0.45\textwidth}
\centering
\includegraphics[width=\textwidth]{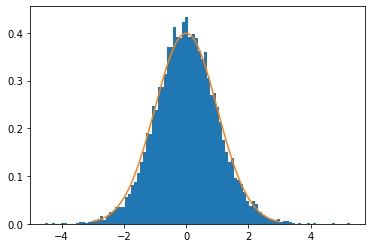}
\caption{$\sqrt{np_1k_1} \big(\hat{\bC} - r_s \mathring{\bC}\big)_{13,2}/\hat s_{13,2}^\prime$}
\end{subfigure}
\caption{Elementwise normality check for the estimated loadings $\hat\bR$ and $\hat\bC$ according to Theorem \ref{thm:MCMR}. The blue histograms are empirical densities, and the orange lines are the standard Gaussian density functions.}
\label{RC_normality_show}
\end{figure}

Next, we investigate the accuracy of the estimated factors, idiosyncratic variable, and FAMAR model coefficients. All experiments are repeated 100 times to reduce the effect of randomness.
The simulations are carried out under two settings of parameters:
\begin{enumerate}[label={{\sc Setting} \Roman*.}, wide, labelwidth=!, labelindent=0pt]
\item Fix $n=1000,\ n_{new}=1000,\ n^\prime=500,\ k_1=k_2=2,\ \rank(\bB^*)=2,$ and change $\ p_1=p_2$ in $\{20,30,40,50,60,70,80,90,100\}$.
\item Fix $p_1=80,\  p_2=50, \ n_{new}=1000, \ n^\prime=500,\ k_1=2,\ k_2=4,\ \operatorname{rank}(\bB^*)=2,$ and change $n$ in $\{500,1000,1500,2000,2500,3000,3500,4000,4500,5000\}$.
\end{enumerate}
More settings of simulation exploration are available in Appendix \ref{append:extra-experiments}.
For all cases, the true loading matrices $\bR^*$ and $\bC^*$ are generated element-wise from $\calN(2,4)$. The true factors $\bF_i$ and residuals $\bU_i$ are generated such that each element is drawn from $\calN(0,1)$ and $\calN(0,0.04)$, respectively. Thereafter, the $\bX_i$ are generated by the matrix factor model $\bX_i =  \bR^*\bF_i(\bC^*)^\top + \bU_i$. 
For $\bB^*$, we first generate a $p_1\times p_2$ matrix, with its individual elements sampled from $\calN(0.5,0.25)$, and then we take its best low-rank ($\bB^*$) approximation by SVD. The true response variables are defined by $y_i = \angles{\bA^*, \bF_i} + \angles{\bB^*, \bU_i} + \eps_i$, where $\bA^*:= 0.5\cdot (\bR^*)^\top \bB^*\bC^*$ and $\eps_i$ are drawn from $\calN(0,1)$.
By setting the variance of loading matrices much higher than that of the residual, we generate $\bX_i$ with highly correlated rows and columns.
The experiments are repeated 100 times to reduce the impact of randomness. The tuning parameter $\lambda$ for the nuclear norm penalty is chosen by cross-validation.

\medskip
\noindent
\textbf{FAMAR Matrix Factor and Idiosyncratic Estimation.}

Figures \ref{fig:pFU_f2r2} and \ref{fig:nFU_f2r2} plot the performance of the proposed algorithm in estimating $\bF$ and $\bU$ under {\sc Setting I.}~and {\sc Setting II.}, respectively. 
The estimation errors are all very small and decrease as either $n$ or $p_1$, $p_2$ increase, except that the error for $\hat{\bF}$ is not affected by increasing number of $n$ as it uses $n'$ pre-training sample size.
Those observations are consistent with our theoretical results. The comparison of the estimation on $\hat\bU$ with and without the block-averaging method showcases the privilege of the averaging algorithm. 

\begin{figure}[ht!]
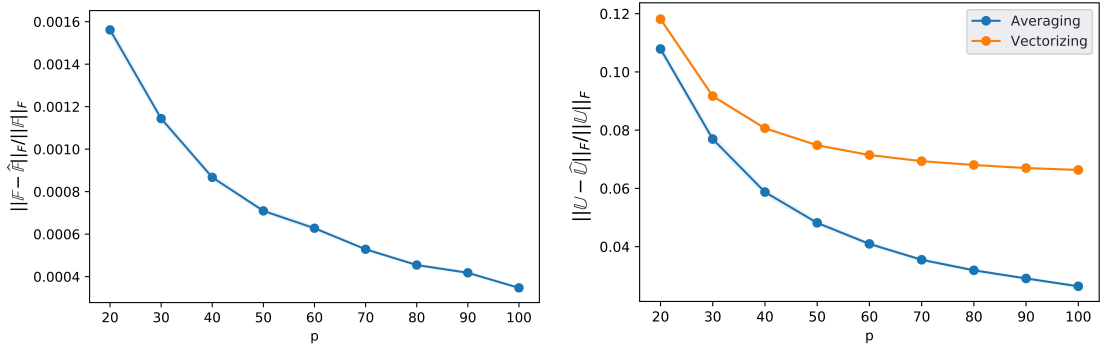

\centering
\begin{subfigure}[b]{0.45\textwidth}
\centering
\includegraphics[width=\textwidth]{figs/coefA7all_F_err_p_k22.pdf}
\end{subfigure}
\begin{subfigure}[b]{0.45\textwidth}
\centering
\includegraphics[width=\textwidth]{figs/coefA7all_U_err_p_k22.pdf}
\end{subfigure}
\caption{Performance of estimated $(\widehat{\bF},\widehat{\bU})$ under fixed $n=1000$ and increasing $p_1=p_2$ from $20$ to $100$. For factors, we report the relative error ${\|\FF(\bH_2\otimes\bH_1)^\top - \widehat\FF\|_F} /{\|\FF(\bH_2\otimes\bH_1)^\top\|_F}$, where $\FF\in\RR^{n\times k_1 k_2}$ is the matrix of vectorized true factors, and $\widehat\FF$ is the estimation. For simplicity of notation, we omit the rotation matrix $\bH_2\otimes\bH_1$ in the figure label. Similar relative errors are reported for the estimation of idiosyncratic components, also comparing the performance with and without the block-average technique. The solid lines are the means among the 100 repetitions, the shadows show the standard deviations, and the vertical lines present the quartiles (which are too small to be visible on the graphs).}
\label{fig:pFU_f2r2}
\end{figure}

\begin{figure}[ht!]
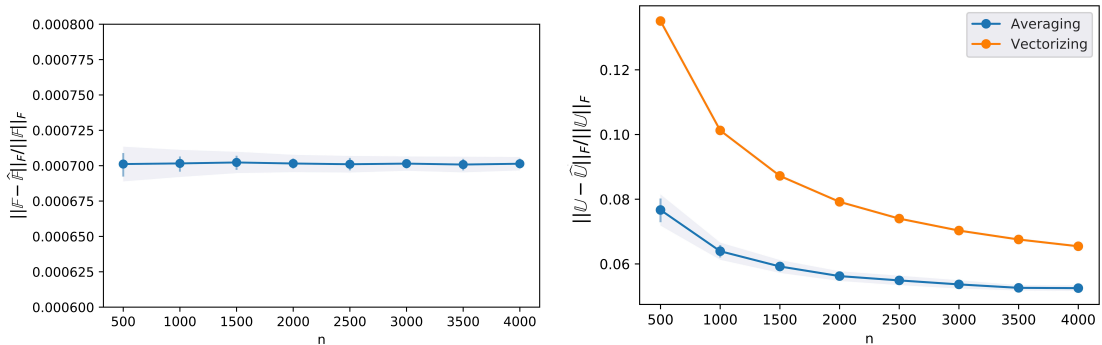

\centering
\begin{subfigure}[b]{0.45\textwidth}
\centering
\includegraphics[width=\textwidth]{figs/coefA7all_F_err_n_k24.pdf}
\end{subfigure}
\begin{subfigure}[b]{0.45\textwidth}
\centering
\includegraphics[width=\textwidth]{figs/coefA7all_U_err_n_k24.pdf}
\end{subfigure}
\caption{Performance of estimated $(\widehat{\FF},\widehat{\UU})$ when $n$ increases from $500$ to $4000$ with fixed $(p_1,p_2)=(70,50)$,  The experiments are repeated 100 times. Labels are the same as that for Figure \ref{fig:pFU_f2r2}. The solid lines are the means, the shadows show the standard deviation, and the vertical lines present the quartiles.
$\hat{\FF}$ is not affected by increasing number of $n$ as it uses $n'$ pre-training sample size, which is consistent with Theorem \ref{lem:Ferr_nuc}.}
\label{fig:nFU_f2r2}
\end{figure}

\medskip
\noindent
\textbf{FAMAR Coefficient Estimation and Prediction.}
Figures~\ref{fig:pFAMAR_f2r2} and \ref{fig:nFAMAR_f2r2} compare our proposed FAMAR with two current benchmarks, namely, the vanilla matrix regression model \eqref{eqn:mr} with nuclear penalty on $\bB$, and the oracle factor augmented regression model \eqref{eqn:famar} with observed $\bF$ and $\bU$, under {\sc Setting I.} and {\sc Setting II.}, respectively. 
We could clearly see that our proposed FAMAR estimation tracks the performances of the oracle factor augmented regression closely in almost all tasks. 
Our proposed method outperforms the existing matrix regression with nuclear norm penalty in both estimation of coefficient $\bB^*$ and prediction of new response $y$.
Additionally, by generating a low-rank $\hat\bB$, as illustrated in the lower-right plots of Figures~\ref{fig:pFAMAR_f2r2} and \ref{fig:nFAMAR_f2r2}, FAMAR offers improved interpretability and stability compared to direct regression on $\bX_i$'s. When the covariates $\bX_i$ are highly correlated, applying low-rank regression directly to $\bX_i$'s tends to overestimate the rank due to their dependency structure, which hampers the effectiveness of regularization and the interpretability of the model.

The superior performance of FAMAR stems from two key advantages. Firstly, since the space spanned by $(\bF_i, \bU_i)$ is identical to that spanned by $(\bF_i, \bX_i)$, FAMAR facilitates feature augmentation with $\bF_i$, effectively leveraging latent information. Secondly, the factor model decorrelates the covariance matrix, resulting in more stable and interpretable coefficient estimators.

\begin{figure}[ht!]
\centering
\begin{subfigure}[b]{0.45\textwidth}
\centering
\includegraphics[width=\textwidth]{figs/coefA7all_A_err_p_k22.pdf}
\end{subfigure}
\begin{subfigure}[b]{0.45\textwidth}
\centering
\includegraphics[width=\textwidth]{figs/coefA7all_B_err_p_k22.pdf}
\end{subfigure}
\hfill
\begin{subfigure}[b]{0.45\textwidth}
\centering
\includegraphics[width=\textwidth]{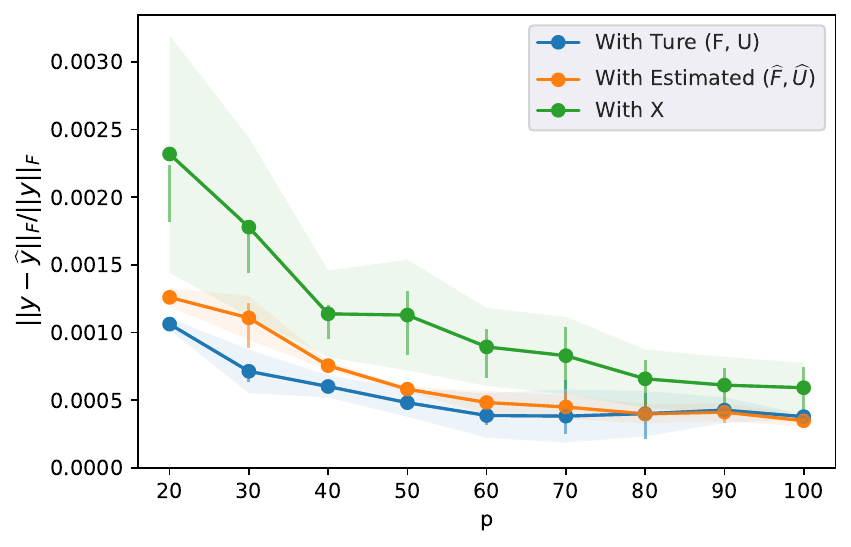}
\end{subfigure}
\begin{subfigure}[b]{0.45\textwidth}
\centering
\includegraphics[width=\textwidth]{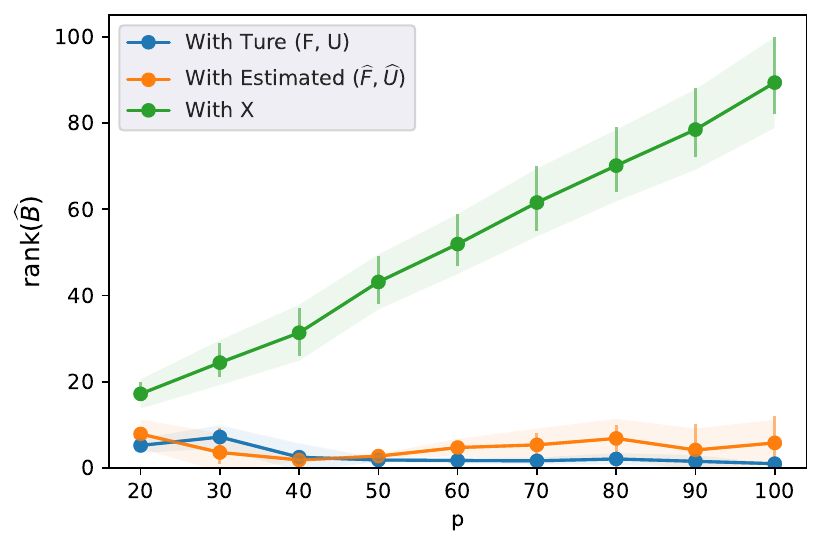}
\end{subfigure}
\hfill
\caption{
Fix $n=1000$, increase $p_1=p_2$ from $20$ to $100$. The blue, orange, and green lines correspond to running the regression with true $(\bF,\bU)$, estimated $(\widehat{\bF},\widehat{\bU})$, and $\bX$, respectively. 
The solid lines are the means over 100 repetitions, the shadows show the standard deviation, and the vertical lines present the quartiles. The lower-right figure shows the rank of $\hat\bB$ with $\lambda$ chosen by cross-validation under different models.}
\label{fig:pFAMAR_f2r2}
\end{figure}

\begin{figure}[htpb!]
\centering
\begin{subfigure}[b]{0.45\textwidth}
\centering
\includegraphics[width=\textwidth]{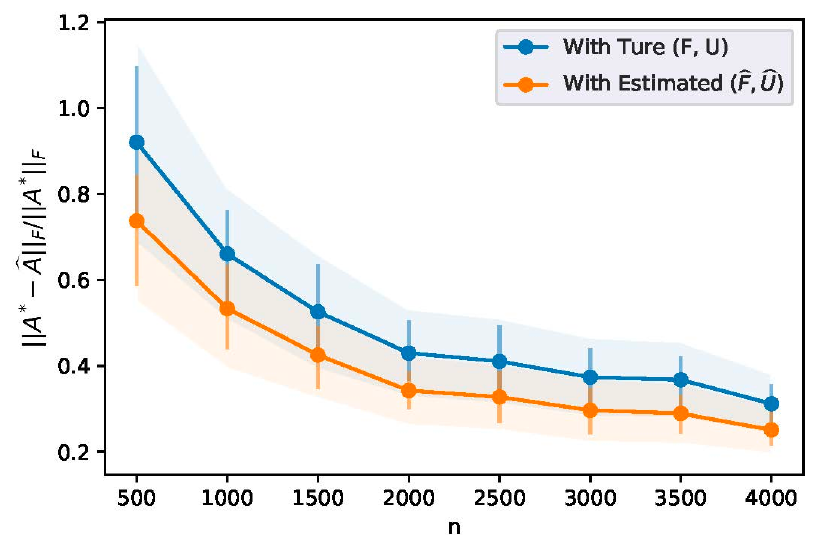}
\end{subfigure}
\begin{subfigure}[b]{0.45\textwidth}
\centering
\includegraphics[width=\textwidth]{figs/coefA7all_B_err_n_k24.pdf}
\end{subfigure}
\hfill
\begin{subfigure}[b]{0.45\textwidth}
\centering
\includegraphics[width=\textwidth]{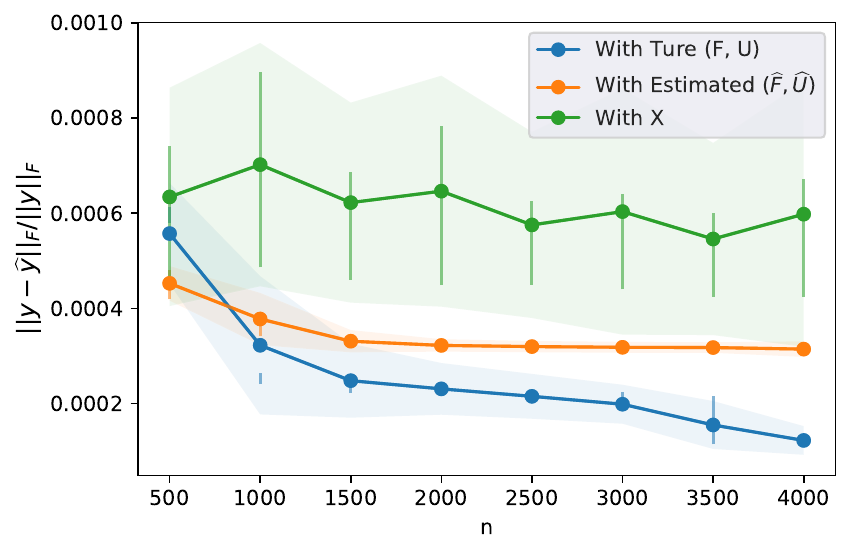}
\end{subfigure}
\begin{subfigure}[b]{0.45\textwidth}
\centering
\includegraphics[width=\textwidth]{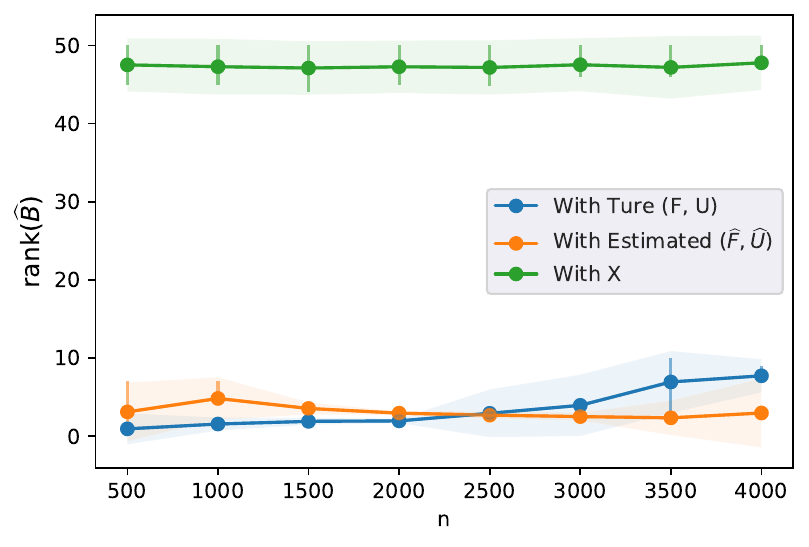}
\end{subfigure}
\hfill
\caption{Fix $(p_1,p_2)=(70,50)$, increase $n$ from $500$ to $4000$. The blue, orange, and green lines correspond to doing the regression with true $(\bF,\bU)$, estimated $(\widehat{\bF},\widehat{\bU})$, and $\bX$, respectively. The solid lines are the means over 100 repetitions, the shadows show the standard deviation, and the vertical lines present the quartiles.}
\label{fig:nFAMAR_f2r2}
\end{figure}

\section{Real Data Analysis} \label{sec:real}

We demonstrate the advantages of FAMAR with an real application to the multinational macroeconomic indices collected from OECD \citep{chen2019constrained}. 
The dataset contains 10 quarterly macroeconomic indexes of 14 countries from 1990.Q4 to 2016.Q4 for 105 quarters. 
By implementing the proposed FAMAR, we accomplish various empirical objectives, including the prediction of economic indicators, extraction of economic factors, and estimation of factor impacts. 
Subsequent sections showcase the results.

\medskip
\noindent
\textbf{Economic prediction.}
Here we showcase one-step ahead prediction of individual countries GDP using all the economic indicators from all countries in the last period. 
The covariate $\bX_i$ is the $10$ (index) $\times 14$ (country) matrix in one quarter and the response $y_i$ is the GDP of a country of interest in the next quarter. 
We take the estimated factor dimension as $(k_1,k_2)=(4,4)$ according to the analysis in \cite{chen2021statistical}.

We use a rolling window to do a one-step ahead prediction on the GDP of the United States (USA), United Kingdom (GBR), France (FRA), and Germany (DEU). 
We set the window width as 68, the first $n^\prime=32$ samples of which are set to pre-train the projection matrices $\bW_1$ and $\bW_2$, while the following $n=36$ samples are used to estimate the loading matrices $\bR^*$, $\bC^*$ and the regression coefficient matrix $\bB^*$. In each window, the covariate matrices $\bX_i$ are normalized by subtracting the mean and dividing the standard deviation of the training set. The response variable $y_i$ is also demeaned by subtracting the sample mean of each training set.

The out-of-sample $R^2$ of GDP prediction on some representative countries are reported in Table \ref{tab:oecd}.  
It is clear that FAMAR outperforms the vanilla matrix regression with a nuclear-norm penalty with much higher out-of-sample $R^2$. Besides the comparison with the baseline, we also present predictions based only on the estimated factors $\hat\bF_i$ or the estimated idiosyncratic components $\hat\bU_i$, respectively, to showcase their individual contributions to the process. As expected, it appears that factors play a predominant role in prediction, whereas the idiosyncratic components capture comparatively less valuable information in the context. This aligns with our expectation, for factors should capture most of the main and common features, and the residuals are regarded as idiosyncratic components corresponding to each feature, which are less representative and may vary a lot across time. 
\begin{table}[ht!]
\centering
\caption{Out-of-sample $R^2$ of predicting GDP}
\label{tab:oecd}
\begin{tabular}{l|c|c|c|c}
\hline
& FAMAR & $\hat\bF$ & $\hat\bU+\|\bB\|_*$  & $\bX+\|\bB\|_*$ \\ \hline
USA & 0.1540 & 0.1751 & -0.0512 &  -0.1066           \\ \hline
GBR & 0.2057 & 0.1727 & 0.0287 & -0.0403           \\ \hline
FRA & 0.1961 & 0.1540 & 0.0789 & -0.0411           \\ \hline
DEU & 0.1814 & 0.1460 & 0.1197 & 0.0050           \\ \hline
\end{tabular}
\end{table}

\medskip
\noindent
\textbf{Economic and Country Factors and Loadings.}
Figure \ref{RC_heatmap} shows the heatmaps for estimates of the row and column loading matrices for the $4\times 4$ factor model. 
The loading matrices are normalized so that the $\ell_2$ norm of each column is one. They are also varimax-rotated to reveal a clear structure.
From Figure \ref{RC_heatmap}(a) of the row factor loading matrix $\hat\bR$, we observed clearly that the rotated row factor 1-4 loaded heavily on central/western European, north American, Northern European, and Oceania countries, respectively.  
Accordingly, the four rotated row factors can be interpreted as the Central/Western European (C/W-EU), North American (NA), Northern European (N-EU), and Oceanian (OC) factors, mostly following geographical partitions.
In a similar fashion, the four rotated column factors are (i) Economic output and industrial production (GDP\&P); (ii) Consumer Price Index (CPI); (iii) International Trade (IT); (iv) Interest Rates (IR). Again, the factors agree with common economic knowledge.

\begin{figure}[ht!]
\centering
\begin{subfigure}[b]{0.48\textwidth}
\centering
\includegraphics[width=\textwidth]{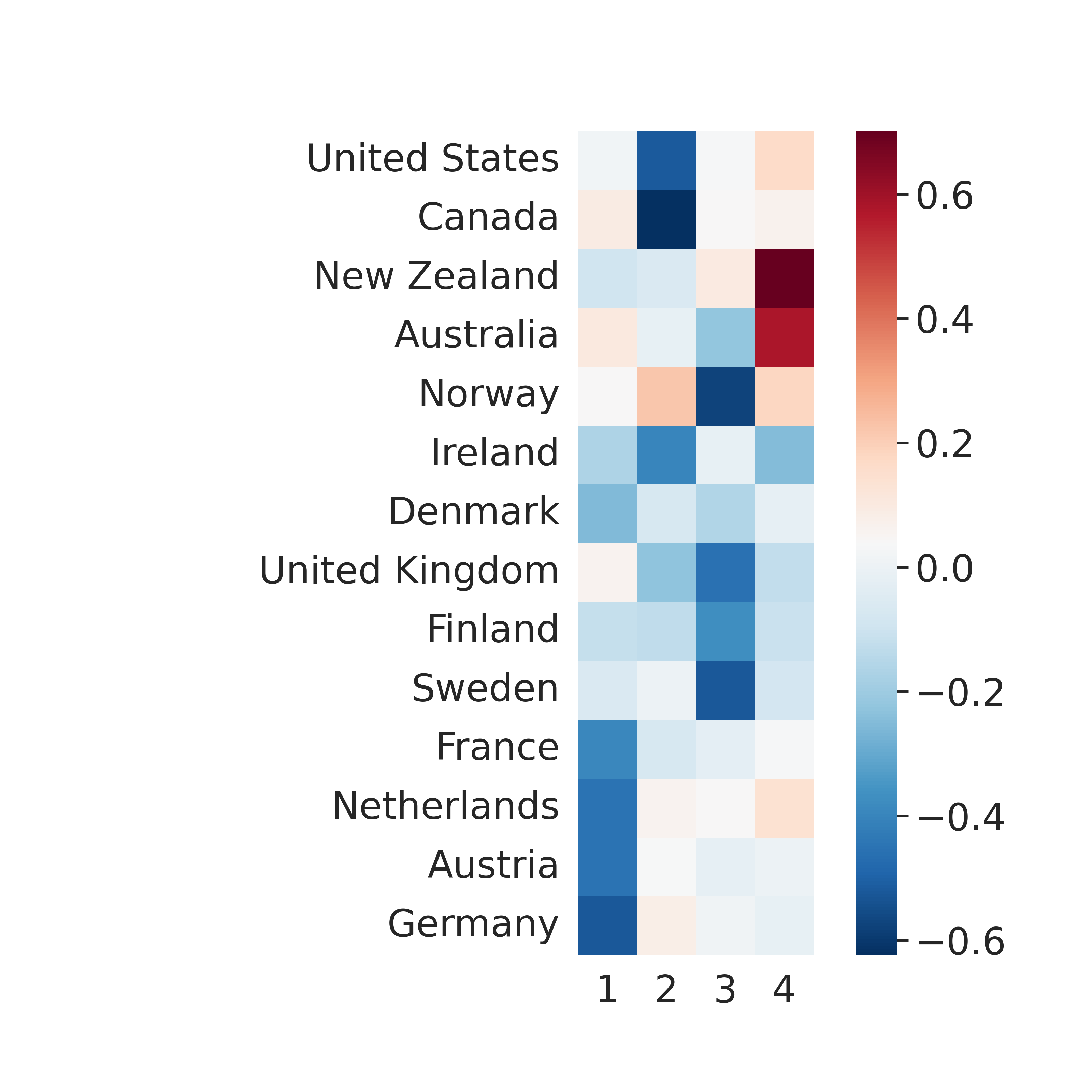}
\caption{$\hat\bR$ after varimax rotation}
\end{subfigure}
\begin{subfigure}[b]{0.48\textwidth}
\centering
\includegraphics[width=\textwidth]{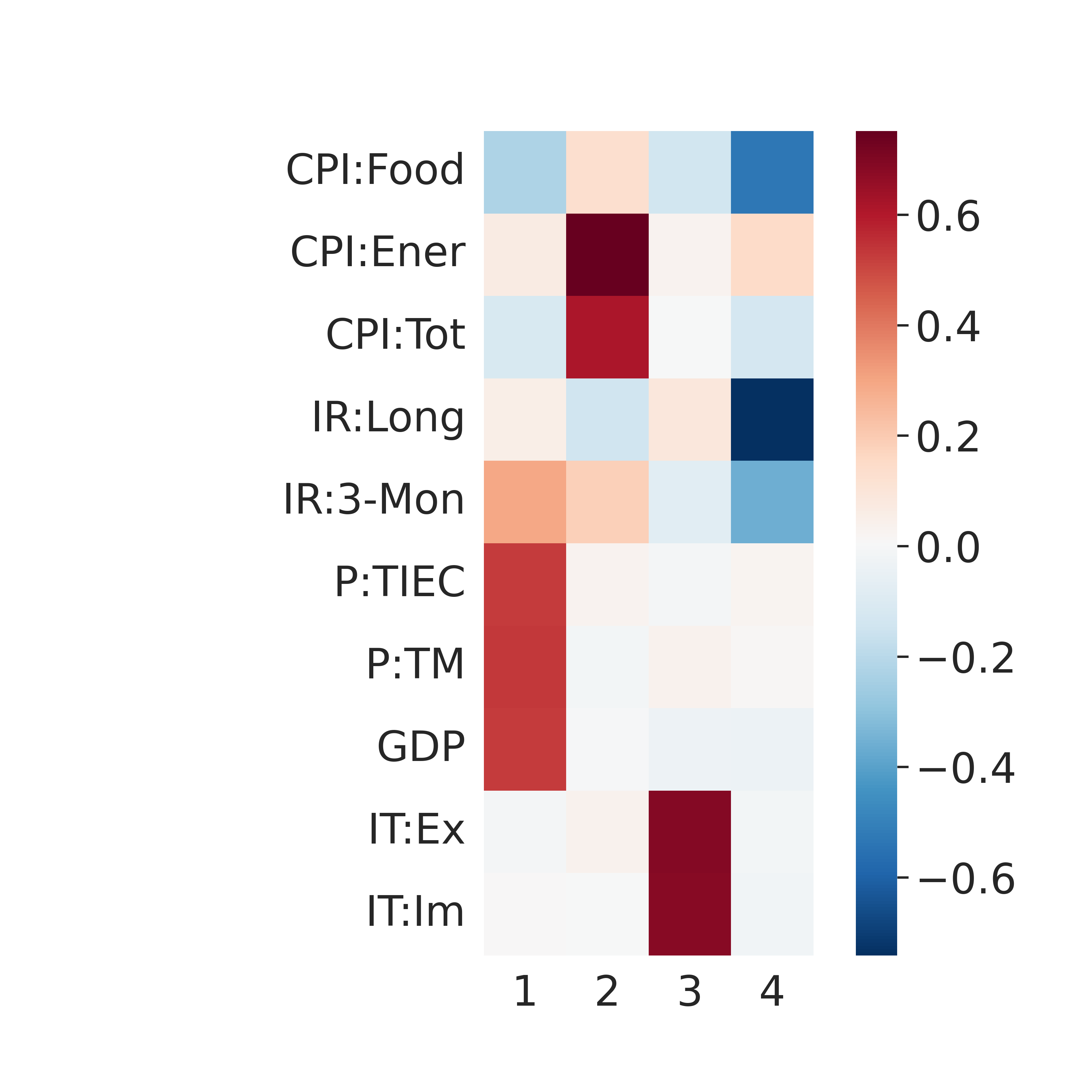}
\caption{$\hat\bC$ after varimax rotation}
\end{subfigure}
\hfill
\caption{Estimations of row and column loading matrices (varimax rotated) of matrix factor model for multinational macroeconomic indices.}
\label{RC_heatmap}
\end{figure}

\begin{figure}[ht!]
\centering
\begin{subfigure}[b]{0.45\textwidth}
\centering
\includegraphics[width=\textwidth]{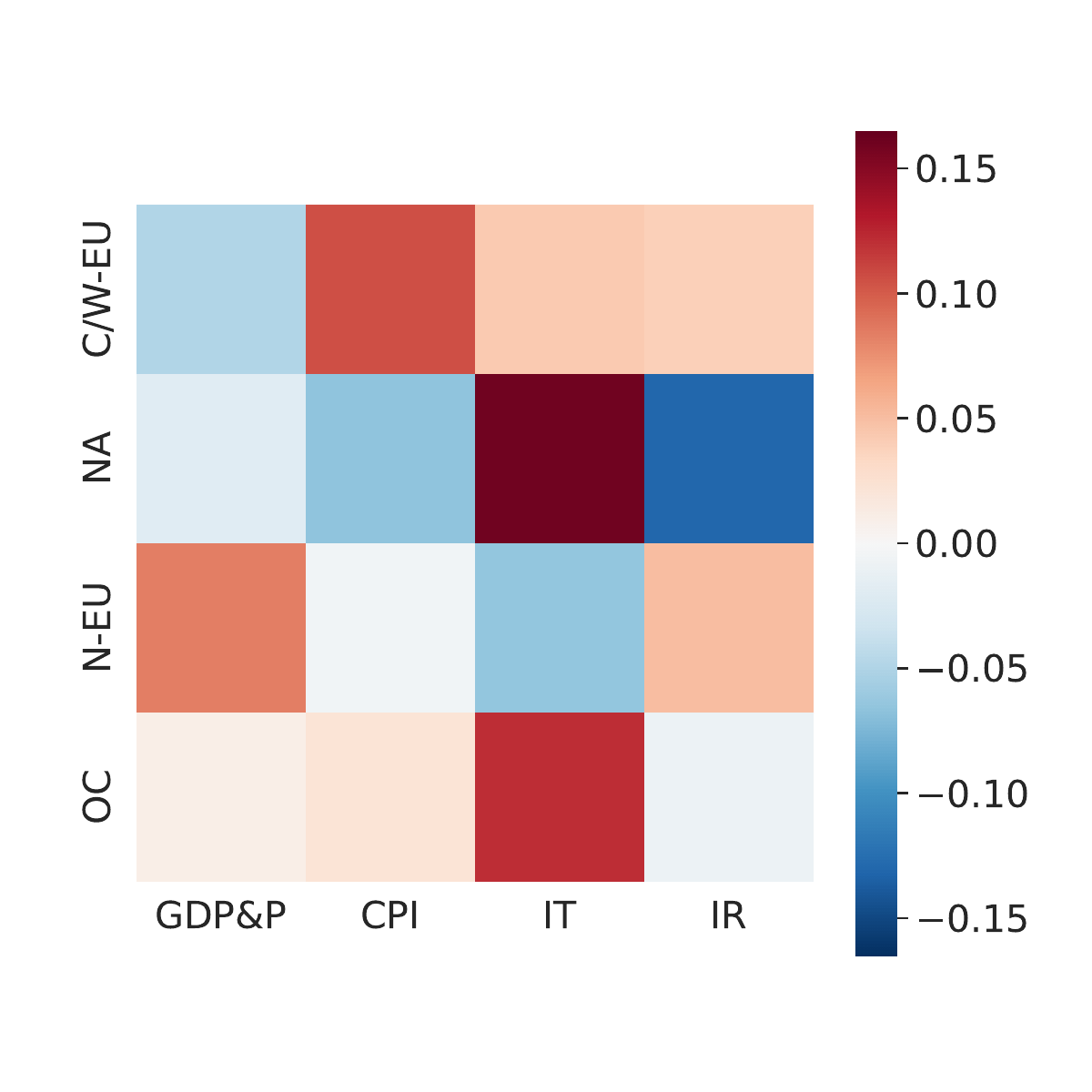}
\caption{USA}
\end{subfigure}
\begin{subfigure}[b]{0.45\textwidth}
\centering
\includegraphics[width=\textwidth]{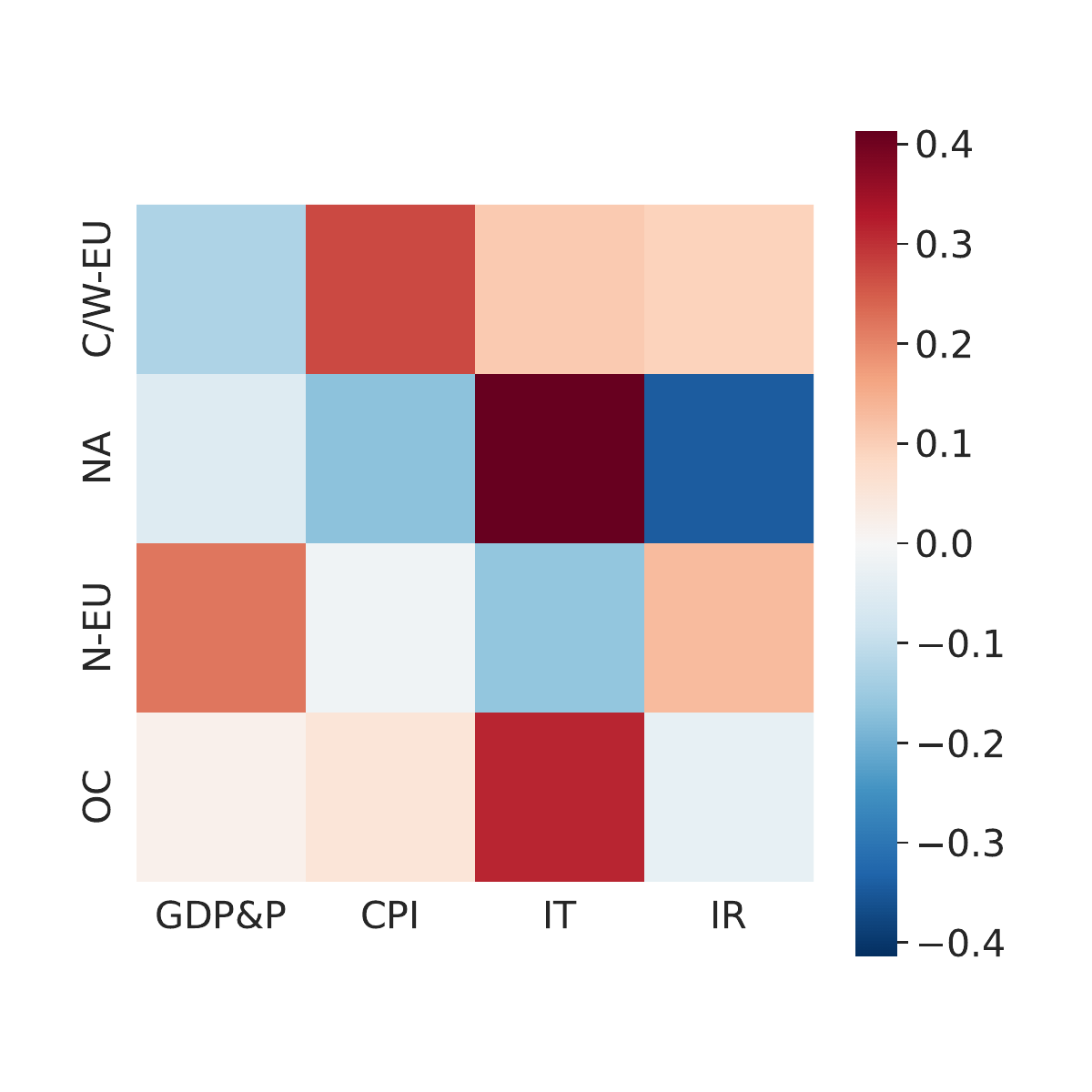}
\caption{GBR}
\end{subfigure}
\hfill
\begin{subfigure}[b]{0.45\textwidth}
\centering
\includegraphics[width=\textwidth]{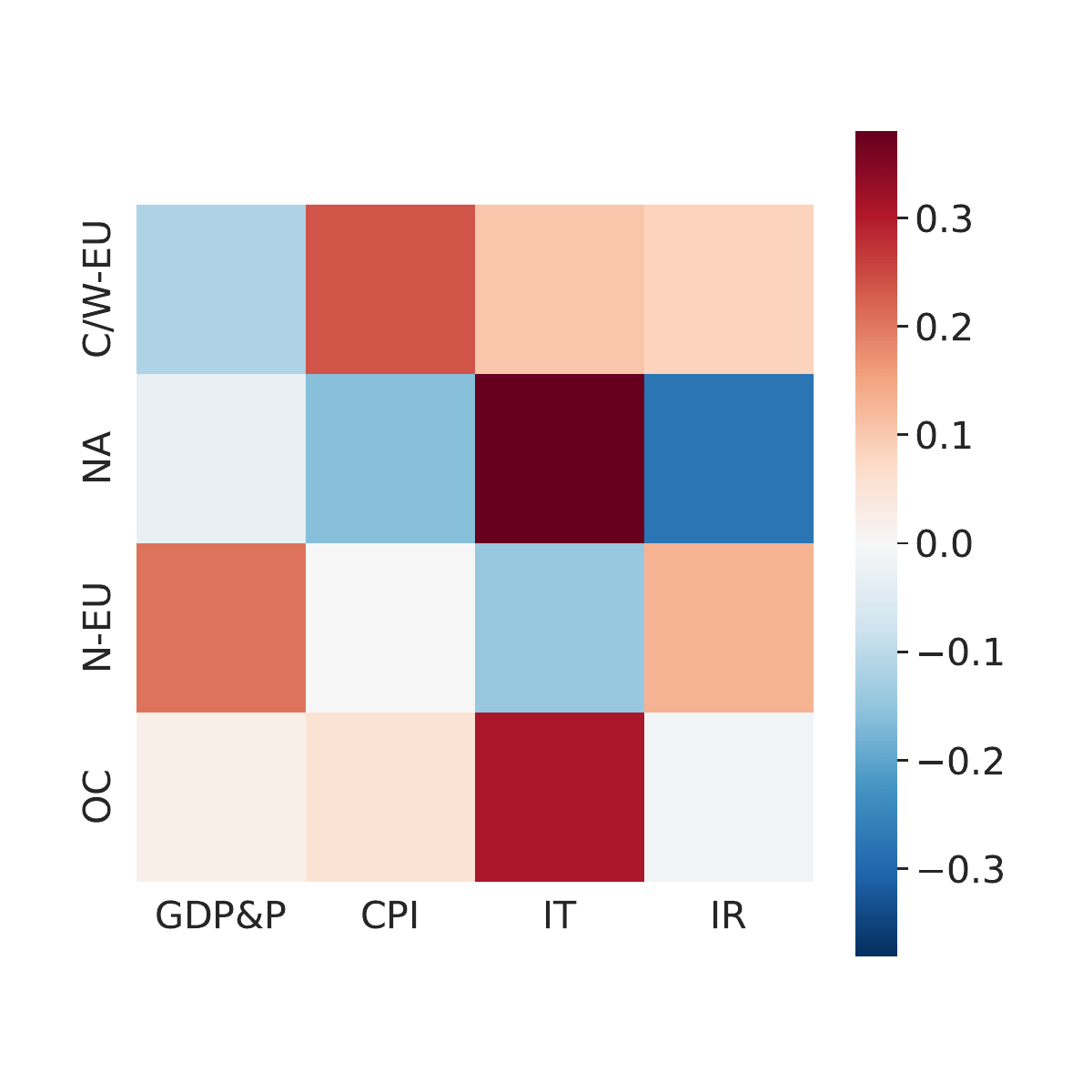}
\caption{FRA}
\end{subfigure}
\begin{subfigure}[b]{0.45\textwidth}
\centering
\includegraphics[width=\textwidth]{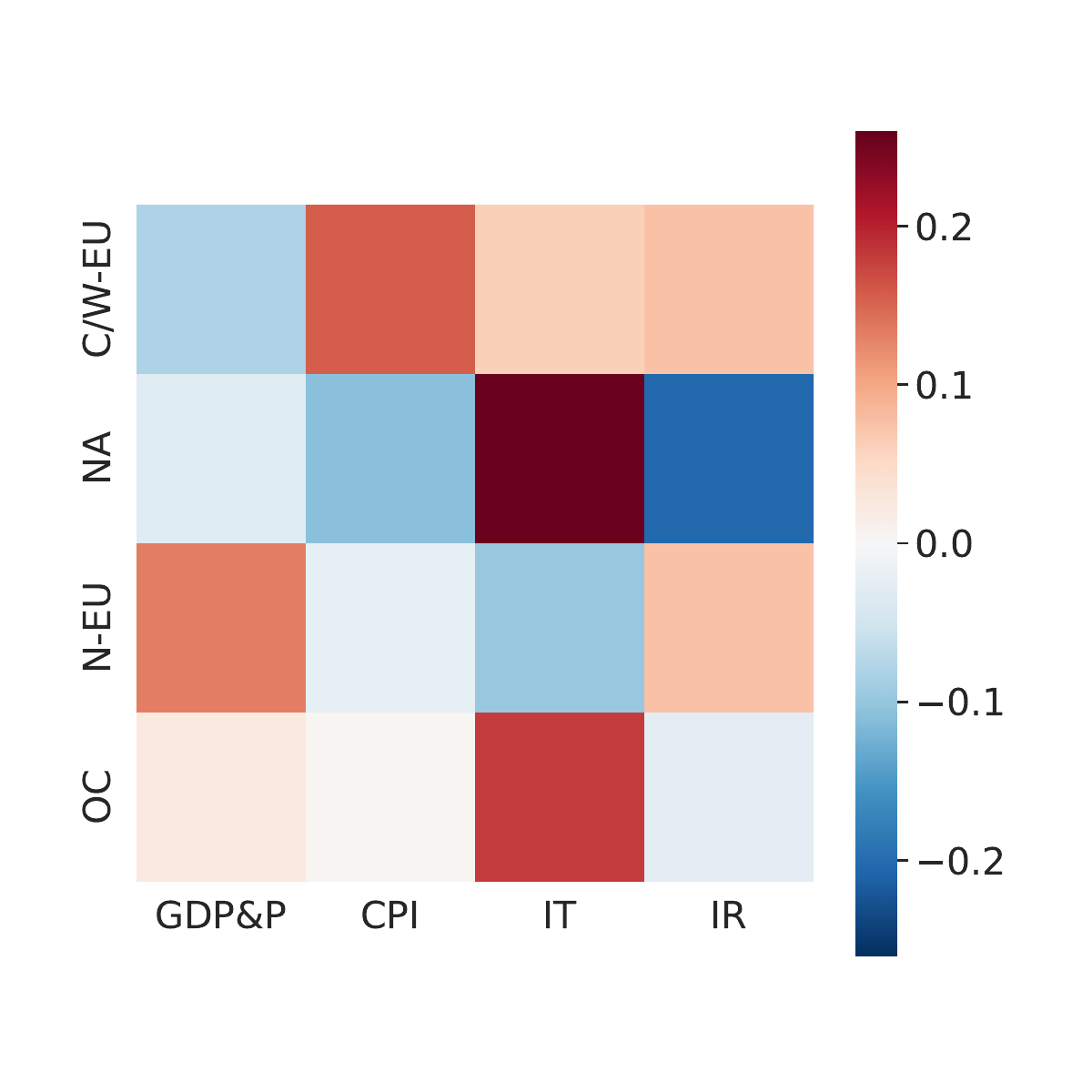}
\caption{DEU}
\end{subfigure}
\hfill
\caption{Heatmap of $\hat\bA$ corresponding to the  $\hat\bF_i$'s after rotation in Figure \ref{RC_heatmap}}
\label{A_heatmap}
\end{figure}

\medskip
\noindent
\textbf{Factors Impacts.}
To explore the impact of latent factors on the GDP of different countries, Figure \ref{A_heatmap} presents the heat map of $\hat\bA$ over the period. The heatmaps of $\hat\bA$ exhibit notable similarities across the four examined countries. A discernible pattern reveals several coordinates characterized by substantial magnitudes. Specifically, the four most significant coefficients, in terms of magnitude, correspond to the pairs: (IT, NA), (IR, NA), (IT, OC), and (CPI, C/W-EU). This implies that international trade and interest rates in North America, international trade in Oceania, and the CPI of Central and Western Europe play pivotal roles in forecasting the GDP of the countries.

\section{Conclusion}
\label{sec:conc}

We introduce Factor-Augmented Matrix Regression (FAMAR) as a powerful framework to tackle the growing complexity of matrix-variate data and its inherent high-dimensionality challenges. FAMAR comprises two pivotal algorithms, each designed to address distinct aspects of the problem. The first algorithm presents an innovative, non-iterative approach that efficiently estimates the factors and loadings within the matrix factor model. Leveraging advanced techniques such as pre-training, diverse projection, and block-wise averaging, it streamlines the estimation process. The second algorithm, equally crucial, offers an accelerated solution for nuclear-norm penalized matrix factor regression. Both algorithms are underpinned by robust statistical and numerical convergence guarantees, ensuring their reliability in practical applications.

Empirical assessments across a range of synthetic and real datasets substantiate the efficacy of FAMAR. Notably, it outperforms existing methods in terms of prediction accuracy, factor interpretability, and computational efficiency. FAMAR's versatility and superior performance make it a valuable addition to the toolkit of researchers and practitioners working with matrix-variate data. Its potential applications extend across diverse domains, from economics and finance to machine learning and data mining, where handling high-dimensional matrix data efficiently and effectively is paramount. As we move forward in the era of big data, FAMAR stands as a testament to the ongoing innovation in statistical modeling and data analysis, promising new avenues for extracting valuable insights from complex datasets.

\spacingset{1.18}
\bibliographystyle{agsm}
\bibliography{bib/main,bib/matreg,bib/optimization}

\newpage
\setcounter{page}{1}
\begin{appendices}

\begin{center}
{\large\bf SUPPLEMENTARY MATERIAL of \\
``Factor Augmented Matrix Regression''}

\medskip

Elynn Y. Chen$^{\flat}$ \hspace{3ex}
Jianqing Fan$^\natural$ \hspace{3ex}
Xiaonan Zhu$^{\natural}$ \\ \normalsize

\smallskip
$^\flat$New York University \hspace{3ex}
$^{\natural}$ Princeton University
\end{center}

\section{Notations}

We use the notation $[N]$ to refer to the positive integer set $\braces{1, \ldots, N}$ for $N \in \ZZ_+$.
For any matrix $\bX$,  we use $\bx_{i \cdot}$, $\bx_{j}$, and $x_{ij}$ to refer to its $i$-th row, $j$-th column, and $ij$-th entry, respectively. 
All vectors are column vectors and row vectors are written as $\bx^\top$ for any vector $\bx$. 
For any two scalers $p$, $q$, we denote $p\wedge q := \min\{p,q\}$ and $p\vee q := \max\{p,q\}$.

We use $\vect(\bX)$ as the vectorization of $\bX$ and use $\XX$ to denote a matrix whose rows are specifically vectorized matrices. That is, the $i$-th row of $\XX$ is $\vect(\bX_i)^\top$. For two matrices $\bX_1\in\RR^{m\times n}$ and $\bX_2\in\RR^{p\times q}$, $\bX_1\otimes\bX_2\in\RR^{pm\,\times\,qn}$ is the Kronecker product. Note that $\vect(\bA\bX\bB)=(\bB^\top\otimes\bA)\vect(\bX)$. 
We denote $\diag(\bX_1,\bX_2)\in\RR^{(m+p)\times(n+q)}$ as the block-diagonal matrix with $\bX_1$ and $\bX_2$ being its diagonal blocks.

For any vector $\bx=(x_1,\ldots,x_p)^\top$, we let $\|\bx\|:= \|\bx\|_2=(\sum_{i=1}^p x_i^2)^{1/2}$ be the $\ell_2$-norm, and let $\|\bx\|_1=\sum_{i=1}^p |x_i|$ be the $\ell_1$-norm. 
Besides, we use the following matrix norms: 
$\ell_2$-norm $\norm{\bX}_2 := \sigma_{max}(\bX)$; $(2,1)$-norm $\norm{\bX}_{2,1} := \underset{\norm{\ba}_1=1}{\max} \norm{\bX\ba}_2 =\max_i \norm{\bx_{i}}_2$;
Frobenius norm $\|\bX\|_F = (\sum_{i,j}x_{ij}^2)^{1/2}$;
nuclear norm $\|\bX\|_*=\sum_{i=1}^n {\sigma_{i}(\bX)}$.
When $\bX$ is a square matrix, we denote by $\Tr \paren{\bX}$, $\lambda_{max} \paren{\bX}$, and $\lambda_{min} \paren{\bX}$ the trace, maximum and minimum singular value of $\bX$, respectively.
For two matrices of the same dimension, define the inner product $\angles{\bX_1,\bX_2} = \Tr(\bX_1^\top \bX_2)$.

\section{Computation}

\subsection{Explanation of the Accelerated Algorithm}
To elucidate the accelerated algorithm, we first consider the vanilla gradient algorithm with the gradient step performed at the approximate solution $\{(\bA^{(k)}, \bB^{(k)})\}$, which consists of a singular value thresholding (SVT) on the gradient step of estimating $\bB^*$ and a regular update of $\bA^*$. 
The partial gradient of $f_n$ with respect to $\bB$ is
\[
\nabla_B f_n(\bA, \bB) = -\frac{1}{n}\sum_{i=1}^n \paren{y_i-\angles{\bA,\bF_i}-\angles{\bB,\bU_i}} \bU_i.
\]
Both $\nabla_A f_n(\bA, \bB)$ and $\nabla_B f_n(\bA, \bB)$ are $L$-Lipschitz with given $\{(\bF_i,\bU_i)\}_{i=1}^n$.
Given a prescribed precision level $\epsilon$ and some initialization $\bA^{(0)}$ and $\bB^{(0)}$, we can estimate $\bA^*$ and $\bB^*$ alternatively. 
Given $\bA^{(k-1)}$, $\bB^{(k-1)}$ and the step size $L^{(k)}$, $\bB$ can be updated by:
\begin{equation}\label{equ_grad}
\bB^{(k)} = \argmin_{\bB} \braces{\frac{L^{(k)}}{2}\norm{\bB - \paren{\bB^{(k-1)}-\frac{1}{L^{(k)}}\nabla_B f_n(\bA^{(k-1)}, \bB^{(k-1)})}}_F^2 + \lambda_n\norm{\bB}_*},
\end{equation}    
which admits an analytic solution
\begin{equation}\label{eqn:updateB}
\bB^{(k)} = \cT_{\lam_n/L^{(k)}}\paren{\bB^{(k-1)}-\frac{1}{L^{(k)}}\nabla_B f_n(\bA^{(k-1)}, \bB^{(k-1)})}
\defeq p_{L^{(k)}}(\bA^{(k-1)}, \bB^{(k-1)}),
\end{equation}
where $\cT_{\lam}(\bC)$ is an operator defined as $\cT_{\lam}(\bC)=\bU\bD_{\lam}\bV^{\top}$ on a matrix $\bC$ with SVD $\bC=\bU\bD\bV^{\top}$ and $\bD_{\lam}$ is diagonal with the soft thresholding $(d_{\lam})_{ii} = \sign(d_{ii})\max\{0,\abs{d_{ii}} - \lam\}$.
Since $\bA$ is of low dimension, it can be updated directly by gradient descent:
\begin{equation}\label{eqn:updateA}
\bA^{(k)} = q_{L^{(k)}}(\bA^{(k-1)}, \bB^{(k-1)}) \defeq \bA^{(k-1)}-\frac{1}{L^{(k)}}\nabla_A f_n(\bA^{(k-1)}, \bB^{(k-1)}).
\end{equation}
The accelerated algorithm performs the gradient step at the search point $\{(\bQ^{(k)}, \bP^{(k)})\}$. 
So $\bA^{(k)}$ and $\bB^{(k)}$ are calculated according to equation \eqref{eqn:updateB} and \eqref{eqn:updateA} with $(\bA^{(k-1)}, \bB^{(k-1)})$ replaced by the search point $(\bQ^{(k)},\bP^{(k)})$.

\subsection{Numerical Convergence}

\begin{lemma} \label{thm:update-B}
Given $\bA^{(k-1)}$, $\bB^{(k-1)}$ and the step size $L^{(k)}$, $\bB$ can be updated by:
\begin{equation}
\bB^{(k)} = \argmin_{\bB} \braces{\frac{L^{(k)}}{2}\norm{\bB - \paren{\bB^{(k-1)}-\frac{1}{L^{(k)}}\nabla_B f_n(\bA^{(k-1)}, \bB^{(k-1)})}}_F^2 + \lambda_n\norm{\bB}_*},
\end{equation}    
or, equivalently, 
{
\begin{equation}
\bB^{(k)} = p_{L^{(k)}}(\bA^{(k-1)}, \bB^{(k-1)}) \defeq \cT_{\lam_n/L^{(k)}}\paren{\bB^{(k-1)}-\frac{1}{L^{(k)}}\nabla_B f_n(\bA^{(k-1)}, \bB^{(k-1)})},
\end{equation}
}
where $\cT_{\lam}(\bC)$ is an operator defined as $\cT_{\lam}(\bC)=\bU\bD_{\lam}\bV^{\top}$ on a matrix $\bC$ with SVD $\bC=\bU\bD\bV^{\top}$ and $\bD_{\lam}$ is diagonal with $(d_{\lam})_{ii} = \sign(d_{ii})\max\{0,\abs{d_{ii}} - \lam\}$.
\end{lemma}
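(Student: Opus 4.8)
The plan is to recognize the update as the proximal operator of the nuclear norm and to verify that singular-value soft-thresholding $\cT_{\lam_n/L^{(k)}}$ is its unique closed-form minimizer via the first-order optimality condition. First I would divide the objective by $L^{(k)}>0$ and abbreviate $\bG := \bB^{(k-1)} - (L^{(k)})^{-1}\nabla_B f_n(\bA^{(k-1)},\bB^{(k-1)})$ and $\tau := \lam_n/L^{(k)}$, reducing the problem to
\[
\bB^{(k)} = \argmin_{\bB}\Big\{\tfrac{1}{2}\|\bB-\bG\|_F^2 + \tau\|\bB\|_*\Big\}.
\]
Because the quadratic term is strongly convex and the nuclear norm is convex, the objective is strictly convex, so the minimizer is unique; it is characterized by the stationarity condition $\bzero \in (\bB - \bG) + \tau\,\partial\|\bB\|_*$, equivalently $\tau^{-1}(\bG-\bB)\in\partial\|\bB\|_*$.

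The key ingredient I would invoke is the standard characterization of the subdifferential of the nuclear norm: for a matrix $\bM$ with thin SVD $\bM=\bU_1\bSigma_1\bV_1^\top$,
\[
\partial\|\bM\|_* = \big\{\bU_1\bV_1^\top + \bW : \bU_1^\top\bW=\bzero,\ \bW\bV_1=\bzero,\ \|\bW\|_2\le 1\big\}.
\]
With this in hand the verification is explicit. Writing the full SVD $\bG=\bU\bD\bV^\top$ and splitting the singular triples according to whether $d_{ii}>\tau$ (collected in $\bU_1,\bD_1,\bV_1$) or $d_{ii}\le\tau$ (collected in $\bU_0,\bD_0,\bV_0$), the candidate is $\hat\bB := \cT_\tau(\bG) = \bU_1(\bD_1-\tau\bI)\bV_1^\top$. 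A direct computation then gives
\[
\bG - \hat\bB = \tau\,\bU_1\bV_1^\top + \bU_0\bD_0\bV_0^\top,
\]
so that $\tau^{-1}(\bG-\hat\bB) = \bU_1\bV_1^\top + \bW$ with $\bW := \tau^{-1}\bU_0\bD_0\bV_0^\top$.

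It remains to check that this $\bW$ lies in the admissible set, which I would verify term by term: $\bU_1^\top\bW=\bzero$ and $\bW\bV_1=\bzero$ follow from the mutual orthogonality $\bU_1^\top\bU_0=\bzero$ and $\bV_0^\top\bV_1=\bzero$ of the singular subspaces, while $\|\bW\|_2 = \tau^{-1}\|\bD_0\|_2\le 1$ because every retained singular value in $\bD_0$ is at most $\tau$. Since $\hat\bB$ has exactly $\bU_1,\bV_1$ as its singular subspaces (the diagonal entries of $\bD_1-\tau\bI$ being strictly positive), this shows $\tau^{-1}(\bG-\hat\bB)\in\partial\|\hat\bB\|_*$, i.e. $\hat\bB$ satisfies stationarity. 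By uniqueness of the minimizer, $\bB^{(k)}=\hat\bB=\cT_\tau(\bG)$, which is the claimed formula. The only nonroutine step is establishing the subdifferential characterization of the nuclear norm; I would either cite it or derive it from the dual representation $\|\bM\|_* = \max_{\|\bZ\|_2\le 1}\angles{\bZ,\bM}$, for which $\partial\|\bM\|_*$ is precisely the set of maximizers $\bZ$.
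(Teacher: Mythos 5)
Your argument is correct. The paper's own proof of this lemma is a one-line citation: it invokes Theorem~2.1 of \citet{cai2010singular}, which states exactly that singular-value soft-thresholding is the proximal operator of the nuclear norm. What you have written is, in effect, a self-contained proof of that cited theorem: the reduction to $\argmin_{\bB}\{\frac12\|\bB-\bG\|_F^2+\tau\|\bB\|_*\}$, the uniqueness from strong convexity, the Watson characterization of $\partial\|\cdot\|_*$ via the reduced SVD, and the explicit verification that $\tau^{-1}(\bG-\cT_\tau(\bG))=\bU_1\bV_1^\top+\bW$ with $\bW$ in the admissible set is precisely the standard argument behind that result. The trade-off is the obvious one: the paper's route is shorter and delegates the only nonroutine ingredient (the subdifferential formula) to the literature, while yours makes the lemma independent of the external reference at the cost of having to state or prove that formula; your suggestion to derive it from the dual representation $\|\bM\|_*=\max_{\|\bZ\|_2\le 1}\angles{\bZ,\bM}$ is a legitimate way to close that last gap. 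No errors or missing steps beyond that.
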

\begin{proof}[Proof of Lemma \ref{thm:update-B}]
Lemma \ref{thm:update-B} can be prove by invoking Theorem 2.1 in \citep{cai2010singular}. 
\end{proof}

\begin{lemma}\label{thm:convergence}
Let
\[
p_{L}(\bA, \bB) = \argmin_{\bB^\prime} \big\{H_{L,B}(\bB^\prime;\bA,\bB) + \lambda_n\|\bB^\prime\|_{*} \big\},
\quad
q_{L}(\bA, \bB) = \argmin_{\bA^\prime} H_{L,A}(\bA^\prime;\bA,\bB), 
\]
where 
\begin{align}
H_{L,A}(\bA'; \bA,\bB) &\defeq \angles{\bA'-\bA,\nabla_A f_n(\bA,\bB)}+\frac{L}{2}\|\bA'-\bA\|_F^2, \\
H_{L,B}(\bB'; \bA,\bB) &\defeq \angles{\bB'-\bB,\nabla_B f_n(\bA,\bB)}+\frac{L}{2}\|\bB'-\bB\|_F^2. 
\end{align}
Assume the following inequality holds:
\vspace{0.18in}
\begin{equation}
\begin{aligned}
F_n\big(p_{L}(\bA,\bB), p_{L}(\bA,\bB)\big) \geq & f_n(\bA,\bB) + H_{L,B}(p_{L}(\bA,\bB);\bA,\bB) \\
&+ H_{L,A}(q_{L}(\bA,\bB);\bA,\bB) + \lambda_n\| p_{L}(\bA,\bB)\|_{*}.
\end{aligned}
\end{equation}
Then for any $\bB^\prime\in\RR^{p_1\times p_2}$ and $\bA^\prime\in\RR^{k_1\times k_2}$, we have
\begin{equation}
\begin{aligned}
F_n\big(\bA^\prime,\bB^\prime\big)  - F_n\big(q_{L}(\bA,\bB), & \  p_{L}(\bA,\bB) \big) \geq  \frac{L}{2} \big( \|p_{L}(\bA,\bB)-\bB\|_F^2 +  \|q_{L}(\bA,\bB)-\bA\|_F^2\big)\\
&+ L\big(\angles{\bB-\bB^\prime,p_{L}(\bA,\bB)-\bB} + \angles{\bA-\bA^\prime,q_{L}(\bA,\bB)-\bA}\big).
\end{aligned}
\end{equation}
Further, let $\{(\bA^{(k)},\bB^{(k)})\}_{k\geq 1}$ be the sequences generated by Algorithm \ref{alg:accelarated}. Then for any $k\geq 1$, we have
\begin{equation}
F_n(\bA^{(k)},\bB^{(k)}) - F_n(\bA^*,\bB^*) \leq \frac{2\gamma L \big(\|\bA^{(0)}-\bA^*\|_F^2 +\|\bB^{(0)}-\bB^*\|_F^2\big)}{(k+1)^2}.
\end{equation}
\end{lemma}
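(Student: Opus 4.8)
\noindent
The plan is to run the standard FISTA-type analysis for composite convex minimization (in the spirit of \citep{ji2009accelerated,nesterov2013gradient}), viewing the pair $(\bA,\bB)$ as a single block variable on which $f_n$ is jointly convex with Lipschitz gradient for fixed $\{\hat\bF_i,\hat\bU_i\}$, and on which the only nonsmooth term is $\lambda_n\|\bB\|_*$. The argument splits into two stages. The first stage proves the ``three-point'' descent inequality stated in the lemma; the second stage feeds it into a telescoping recursion driven by the momentum weights $\alpha^{(k)}$ to extract the $\calO(1/k^2)$ rate. Throughout I would equip the product space $\RR^{k_1\times k_2}\times\RR^{p_1\times p_2}$ with the norm whose square is $\|\bA'-\bA\|_F^2+\|\bB'-\bB\|_F^2$, so that the two blocks can be manipulated simultaneously and the analysis mirrors the single-variable case.

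For the three-point inequality I would lower-bound $F_n(\bA',\bB')=f_n(\bA',\bB')+\lambda_n\|\bB'\|_*$ using three ingredients. First, joint convexity of $f_n$ gives $f_n(\bA',\bB')\ge f_n(\bA,\bB)+\angles{\nabla_A f_n(\bA,\bB),\bA'-\bA}+\angles{\nabla_B f_n(\bA,\bB),\bB'-\bB}$. Second, the first-order optimality of the two subproblems (the $\bA$-step is the minimizer of $H_{L,A}(\cdot;\bA,\bB)$ and the $\bB$-step is the proximal minimizer of $H_{L,B}(\cdot;\bA,\bB)+\lambda_n\|\cdot\|_*$, as in Lemma \ref{thm:update-B}) yields $\nabla_A f_n(\bA,\bB)=-L\,(q_L(\bA,\bB)-\bA)$ and the existence of a subgradient $\bxi\in\partial\|p_L(\bA,\bB)\|_*$ with $\nabla_B f_n(\bA,\bB)+L\,(p_L(\bA,\bB)-\bB)+\lambda_n\bxi=\bzero$. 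Third, convexity of the nuclear norm gives $\|\bB'\|_*\ge\|p_L(\bA,\bB)\|_*+\angles{\bxi,\bB'-p_L(\bA,\bB)}$. Substituting the optimality relations to eliminate $\nabla_A f_n$, $\nabla_B f_n$ and $\bxi$, and then subtracting the assumed sufficient-decrease bound \eqref{eqn:conv-cond} on $F_n\big(q_L(\bA,\bB),p_L(\bA,\bB)\big)$, the $H$-terms of \eqref{eqn:H_L_A_B} collapse exactly into the squared Frobenius terms and the two inner products displayed in the statement. The nonsmoothness of $\|\cdot\|_*$ enters only through the single subgradient $\bxi$, so this stage is essentially careful bookkeeping rather than a genuine obstacle.

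In the second stage I would instantiate the three-point inequality at the search point $(\bQ^{(k)},\bP^{(k)})$, so that its output is exactly $(\bA^{(k)},\bB^{(k)})=\big(q_{L^{(k)}},p_{L^{(k)}}\big)$, and apply it twice: once with $(\bA',\bB')=(\bA^{(k-1)},\bB^{(k-1)})$ and once with $(\bA',\bB')=(\bA^*,\bB^*)$. Writing $v_k:=F_n(\bA^{(k)},\bB^{(k)})-F_n(\bA^*,\bB^*)$, I would combine the two inequalities with weights $\alpha^{(k+1)}-1$ and $1$, multiply through by $\alpha^{(k+1)}$, and invoke the identity $(\alpha^{(k)})^2=(\alpha^{(k+1)})^2-\alpha^{(k+1)}$ produced by the update $\alpha^{(k+1)}=0.5\,(1+\sqrt{1+4(\alpha^{(k)})^2})$. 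The momentum update $(\bP^{(k+1)},\bQ^{(k+1)})=(\bB^{(k)},\bA^{(k)})+\frac{\alpha^{(k)}-1}{\alpha^{(k+1)}}\big((\bB^{(k)},\bA^{(k)})-(\bB^{(k-1)},\bA^{(k-1)})\big)$ is chosen precisely so that, after the identity $\|a\|^2+2\angles{a,b}=\|a+b\|^2-\|b\|^2$ is applied with $a=\alpha^{(k+1)}\big((\bA^{(k)},\bB^{(k)})-(\bQ^{(k)},\bP^{(k)})\big)$, the cross terms telescope: defining the stacked momentum variable $\bu_k:=\alpha^{(k)}(\bA^{(k)},\bB^{(k)})-(\alpha^{(k)}-1)(\bA^{(k-1)},\bB^{(k-1)})-(\bA^*,\bB^*)$, one obtains $(\alpha^{(k)})^2 v_k-(\alpha^{(k+1)})^2 v_{k+1}\ge \tfrac{L}{2}\big(\|\bu_{k+1}\|^2-\|\bu_k\|^2\big)$. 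Summing over steps and using the elementary bound $\alpha^{(k)}\ge (k+1)/2$ (shown by induction from $\alpha^{(1)}=1$ and the recursion) then gives the stated rate.

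The main obstacle is reconciling this telescoping with the \emph{varying} backtracking step sizes $L^{(k)}$: the three-point inequality holds with the actual $L^{(k)}$ at each step, but the clean recursion above is cleanest when a single constant appears. I would resolve this via the uniform backtracking guarantee $L^{(k)}\le \gamma L$, where $L$ is the joint Lipschitz constant of $\nabla f_n$, together with the fact that the weights $\alpha^{(k)}$ do not depend on $L^{(k)}$; this worst-case inflation of the step size is exactly what produces the factor $\gamma$ in the numerator of the final bound. Verifying that the per-step inequality remains valid, and that the momentum identity still closes, when $L$ is replaced by the step-dependent $L^{(k)}\le\gamma L$ is the one place where the argument requires genuine care rather than routine algebra.
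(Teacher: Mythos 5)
Your proposal is correct and follows essentially the same route as the paper, which simply declares the result a direct extension of Theorem 4.1 of \citet{ji2009accelerated} and omits the details; your two-stage argument (the three-point inequality from joint convexity of $f_n$ plus the prox/gradient optimality conditions and a nuclear-norm subgradient, followed by the $\alpha^{(k)}$ telescoping with the backtracking bound $L^{(k)}\le\gamma L$) is exactly that standard FISTA analysis carried out on the stacked block variable. One minor remark: you correctly treat condition \eqref{eqn:conv-cond} as a sufficient-decrease \emph{upper} bound on $F_n$ at the new iterate $\big(q_L,p_L\big)$, which is the direction the argument actually requires; the ``$\geq$'' printed in the paper's condition and in the lemma's hypothesis (and the $p_L$ appearing in the first slot of $F_n$) appear to be typos rather than a substantive difference.
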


\begin{proof}
{
The $p_{L}(\bA,\bB)$ and $q_{L}(\bA,\bB)$ defined in Lemma \ref{thm:convergence} is equivalent to that defined in Section \ref{sec:comp}, which can be shown using Lemma \ref{thm:update-B} and \cite{bertsekas1999nonlinear}, see also equation (5)-(7) in \cite{ji2009accelerated}.
}
The global convergence rate of Algorithm \ref{alg:accelarated} is established by Lemma \ref{thm:convergence}, whose proves are direct extensions of Theorem 4.1 in \citet{ji2009accelerated} and are therefore omitted.
\end{proof}

\section{Technical Assumptions} \label{append:assump}

The following assumptions are needed for technical purposes. 
{Assumption \ref{asum:FU_subG} and \ref{asu:F} imposes conditions on factors and idiosyncratic components, still allowing weak correlations.} 
Assumption \ref{asu:H} is already satisfied by the data-driven method we have studied by its construction and by Proposition \ref{prop:H}. Nevertheless, since some other reasonable projection matrices can also be used, we impose Assumption \ref{asu:H} for genreality.
Assumption \ref{asu:infer} is needed for asymptotic normality. 

\begin{assumption}\label{asum:FU_subG} 
For MFM \eqref{eqn:famar}, there exists universal constants $c_0$ and $c_1$ such that
\begin{enumerate}[label={(\alph*)}]
\item The elements of $\bF_i\in \RR^{k_1\times k_2}$ are mean zero, sub-Gaussian, and weakly correlated in the sense that: $\bF_i = \bSigma_{k_1} \bZ_i \bSigma_{k_2}$, where $\bZ_i \in \RR^{k_1\times k_2}$ has independent mean zero sub-Gaussian entries, and for $\bSigma_{k_1}\in\RR^{k_1\times k_1}$, $\bSigma_{k_2}\in \RR^{k_2\times k_2}$, assume that $0 < \|\bSigma_{k_j}\|_2 <c_0$ for $j=1,2$.
\item The elements of $\bU_i\in \RR^{p_1\times p_2}$ are mean zero, sub-Gaussian, and weakly correlated in the same sense as above. Furthermore, assume that 
\[
\begin{aligned}
\text{for }  i,k\in[p_1], i\neq k, \EE\left[\sum_{j=1}^{p_2}u_{ij}u_{kj}\right]\leq c_1\sqrt{\frac{p_2}{p_1}};\\
\text{and for }  i,k\in[p_2], i\neq k, \EE\left[\sum_{j=1}^{p_1}u_{ji}u_{jk}\right]\leq c_1\sqrt{\frac{p_1}{p_2}}.
\end{aligned}
\]
\end{enumerate}
\end{assumption}

\begin{assumption} \label{asu:F}
There exist universal constants $c_1, c_2$, and $C_2$ such that
\begin{enumerate}[label={(\alph*)}]
\item  $\frac{1}{n} \sum_{i=1}^n \sum_{j=1}^n \mathbb{E}\|\bF_j\|_F\|\bF_i\|_F \Tr\left(\mathbb{E}\left[\bU_j \bU_i^{\top} \mid \bF\right]\right)<c_1.$
\item $c_2 < \lambda_{\min}(\FF^\top\FF/n) < \lambda_{\max}(\FF^\top\FF/n) <C_2$, and $\sum_{i=1}^n \tilde\bff_i \tilde\bff_i^\top \xrightarrow{P} \bSigma_F$ for some $k_1\!k_2\times k_1\!k_2$ positive definite matrix $\bSigma_F$, where $\tilde{\bff}_i := \vect(\bF_i)$ and $\FF=\big(\tilde{\bff}_1\cdots\tilde{\bff}_n\big)^\top$.
\end{enumerate}
\end{assumption}

\begin{assumption}\label{asu:H}
There exist universal constants $c_1, c_2$, $C_1$, and $C_2$ such that
\begin{enumerate}[label={(\alph*)}]
\item $\nu_{\max}(\bH_i)\leq C_i\nu_{\min}(\bH_i)$, for $i=1,2$.
\item $\lambda_{\min}(\bW_i^\top\bW_i/p_i)\geq c_i$, for $i=1,2$.
\end{enumerate}
\end{assumption}

\begin{assumption}\label{asu:infer}
\begin{enumerate}[label={(\alph*)}]
\item There exists a constant $M<\infty$ such that the $k_1\!k_2\times k_1\!k_2$ matrix satisfies
\[
\EE\left\| \frac{1}{np_1\!p_2}\sum_{i=1}^n \sum_{j=1}^{p_1\!p_2} [\bW_2^\top\otimes\bW_1^\top]_j \tilde{\bff}_i^\top \tilde{u}_{ij}\right\|^2\leq M,
\]
where $[\bW_2^\top\otimes\bW_1^\top]_j$ is the $j$-th column of $(\bW_2^\top\otimes\bW_1^\top)$, $\tilde\bff_i^\top$ is the $i$-th row of $\FF$, and $ \tilde{u}_{ij}$ is the $(i,j)$-th element of $\UU$. 
\item There exists a full-rank $k_1\!k_2\times k_1k_2$ matrix $\bLambda_H$ such that
\[
\plim_{p_1,p_2\rightarrow \infty}  (\bH_2\otimes\bH_1) = \bLambda_H.
\]
This also implies $\nu_{\min}(\bH_i), \nu_{\max}(\bH_i)=\Omega_p(1)$, for $i=1,2$.
\item There exists an $k_1\!k_2\times k_1\!k_2$ matrix $\bSigma_U$ such that
\[
\lim_{p_1,p_2\rightarrow \infty} \frac{1}{p_1\!p_2}  \sum_{j=1}^{p_1\!p_2} \sum_{k=1}^{p_1\!p_2} [\bW_2^\top\otimes\bW_1^\top]_j [\bW_2^\top\otimes\bW_1^\top]_k^\top \EE[\tilde{u}_{ij}\tilde{u}_{ik}] = \bSigma_U, \text{ for } i\in [n].
\]
\item Assume that there exist a $k_1\!k_2\times k_1\!k_2$ matrix $\bPhi$ and a $p_1\!p_2\times p_1\!p_2$ matrix $\bPsi$ such that
\[
\frac{1}{\sqrt{n}} \sum_{i=1}^n \tilde{\bff}_i \tilde{\bu}_i^\top \xrightarrow[]{ d } \cM\cN(\bfm 0, \bPhi, \bPsi),
\]
where $\tilde{\bu}_i=\vect(\bU_i)$, $\bPsi$ is divided into $p_2\times p_2$ blocks, denoted as $\bPsi_{ij}\in\RR^{p_1\times p_1}$, for $i,j\in[p_2]$, and that $\bPsi_{ij}=\op{p^{-1}}$ for all $i\neq j$.
\item There exists $M\leq \infty$ such that for all $n$, $p_1,$ and $p_2$, $\EE \tilde{u}_{ij}=0$, $\EE |\tilde{u}_{ij}|^8 \leq M$.
\end{enumerate}
\end{assumption}

\section{Theoretical Analysis of ``Pre-trained Projection'' and ``Block-wise Averaging''} \label{append:mfm-proofs}

\subsection{Proof of Proposition \ref{prop:H}}
We denote $\bS^*_1=\frac{1}{p_1}\bR^*{\bR^*}^\top$, $\bS^*_2=\frac{1}{p_2}\bC^*{\bC^*}^\top$, $\hat{\bS}_1=\frac{1}{p_1p_2n}\sum_{i=1}^n{\bX_i\bX_i^\top}$, and $\hat\bS_2=\frac{1}{p_1p_2n}\sum_{i=1}^n{\bX_i^\top\bX_i}$.
We first present a technical lemma that bounds the $\ell_2$ norm between $\hat{\bS}_1$ (resp. $\hat\bS_2$) and $\bS^*_1$ (resp. $\bS^*_2$). Its proof is provided at the end of this section. 
\begin{lemma}\label{lem:M}
Under Assumption \ref{asum:FU} and \ref{asum:FU_subG}, there exists a universal constant $c$ such that
\[
\left\|\hat{\bS}_1 - \bS^*_1\right\|_{2} \leq c\left(\sqrt{t(k_1\vee k_2)}\sqrt{\frac{\log k_1 + t}{n}}  + \frac{1}{\sqrt{p_1}} \right),
\]
and 
\[
\left\|\hat\bS_2 - \bS^*_2\right\|_{2} \leq c\left(\sqrt{t(k_1\vee k_2)}\sqrt{\frac{\log k_2 + t}{n}} + \frac{1}{\sqrt{p_2}} \right)
\]
with probability at least $1-11e^{-t}$
\end{lemma}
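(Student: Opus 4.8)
The plan is to separate, inside $\hat\bS_1$, the contribution of the common component $\bR^*\bF_i\bC^{*\top}$ from that of the idiosyncratic component $\bU_i$. Substituting $\bX_i=\bR^*\bF_i\bC^{*\top}+\bU_i$ and expanding the outer product, I would write
\[
\begin{aligned}
\hat\bS_1 - \bS^*_1
&= \underbrace{\tfrac{1}{p_1}\bR^*\Big[\tfrac{1}{np_2}\sum_{i=1}^n \bF_i\bC^{*\top}\bC^*\bF_i^\top - \bI_{k_1}\Big]\bR^{*\top}}_{\text{(i) signal deviation}} \\
&\quad + \underbrace{\tfrac{1}{np_1p_2}\sum_{i=1}^n\big(\bR^*\bF_i\bC^{*\top}\bU_i^\top + \bU_i\bC^*\bF_i^\top\bR^{*\top}\big)}_{\text{(ii) cross terms}}
+ \underbrace{\tfrac{1}{np_1p_2}\sum_{i=1}^n \bU_i\bU_i^\top}_{\text{(iii) idiosyncratic}},
\end{aligned}
\]
where Assumption \ref{asm:RC}(b), $\EE[\tfrac{1}{p_2}\bF_i\bC^{*\top}\bC^*\bF_i^\top]=\bI_{k_1}$, identifies $\bS^*_1=\tfrac1{p_1}\bR^*\bR^{*\top}$ as the population signal. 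I then bound the three pieces separately and recombine by the triangle inequality; the bound for $\hat\bS_2$ follows verbatim after transposing the roles of $(\bR^*,p_1,k_1)$ and $(\bC^*,p_2,k_2)$.

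For (i), Assumption \ref{asm:RC}(a) gives $\|\tfrac1{p_1}\bR^*\bM\bR^{*\top}\|_2 \le \lambda_{max}(\tfrac1{p_1}\bR^{*\top}\bR^*)\|\bM\|_2 \le C_1\|\bM\|_2$, so it suffices to control the $k_1\times k_1$ average $\bM := \tfrac1{np_2}\sum_i\bF_i\bC^{*\top}\bC^*\bF_i^\top - \bI_{k_1}$ in spectral norm. Its summands are i.i.d., mean zero, and---being quadratic forms in the sub-Gaussian $\bF_i$ (Assumption \ref{asum:FU_subG}(a)) composed with the bounded matrix $\tfrac1{p_2}\bC^{*\top}\bC^*$ (Assumption \ref{asm:RC}(a))---are sub-exponential random matrices whose variance proxy and Orlicz norm scale like $k_1\vee k_2$. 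A Bernstein-type concentration inequality for sums of independent sub-exponential matrices then delivers $\|\bM\|_2 = \bigO{\sqrt{t(k_1\vee k_2)}\sqrt{(\log k_1 + t)/n}}$ with probability at least $1-c\,e^{-t}$, which is exactly the first term of the claimed bound (the $\log k_1$ coming from the dimension factor in the matrix union bound).

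Piece (iii) produces the $1/\sqrt{p_1}$ term. Its mean $\tfrac{1}{p_1p_2}\EE[\bU_i\bU_i^\top]$ has diagonal entries of order $1/p_1$ and, by the weak-correlation bound $\EE[\sum_j u_{ij}u_{kj}]\le c_1\sqrt{p_2/p_1}$ of Assumption \ref{asum:FU_subG}(b), off-diagonal row sums of order $1/\sqrt{p_1p_2}$; a Gershgorin argument hence bounds its spectral norm by $\bigO{1/\sqrt{p_1}}$. The fluctuation $\tfrac1{np_1p_2}\sum_i(\bU_i\bU_i^\top-\EE[\bU_i\bU_i^\top])$ is mean zero and, via matrix concentration on the sub-exponential summands, is of smaller order and absorbed under the stated scaling. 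For the cross terms (ii), factoring out $\tfrac1{p_1}\bR^*$ costs $\|\bR^*\|_2=\bigO{\sqrt{p_1}}$ and, using that $\bF_i$ and $\bU_i$ are uncorrelated together with the moment/weak-correlation assumptions, leaves a mean-zero average of spectral norm $\Op{1/\sqrt{np_2}}$, which is dominated by the term from (i) since $1/\sqrt{np_2}\le\sqrt{(k_1\vee k_2)/n}$. Collecting (i)--(iii) and taking a union bound over the finitely many concentration events (and over the symmetric analysis for $\hat\bS_2$) yields the stated bounds with probability at least $1-11e^{-t}$.

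The main obstacle is the concentration step in piece (i): producing the sharp dimensional factor $\sqrt{t(k_1\vee k_2)}$ requires carefully computing the matrix variance proxy $\|\EE[(\bF_i\bC^{*\top}\bC^*\bF_i^\top-p_2\bI_{k_1})^2]\|_2$ and the per-summand sub-exponential Orlicz norm, and then reconciling the two regimes of the Bernstein inequality so that the $\sqrt{\cdot/n}$ regime governs under the stated scaling. Once this is in place, pieces (ii) and (iii) are routine consequences of the weak-correlation conditions on $\bU_i$.
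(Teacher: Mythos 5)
Your proposal is correct and follows essentially the same route as the paper's proof: the identical decomposition into signal deviation, cross terms, and idiosyncratic mean plus fluctuation, with the loading eigenvalue bounds used to reduce the signal part to a $k_1\times k_1$ matrix Bernstein bound, and Assumption \ref{asum:FU_subG}(b) supplying the $1/\sqrt{p_1}$ term from $\EE[\bU_i\bU_i^\top]$. The only cosmetic differences are that you bound the spectral norm of the idiosyncratic mean via Gershgorin where the paper uses a Frobenius-norm computation, and you treat the cross term as exactly mean zero where the paper allows a small nonzero mean of order $1/\sqrt{p_1p_2}$; neither affects the result.
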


\begin{proof}[Proof of Proposition \ref{prop:H}]
Since the proof for $\bH_1$ and $\bH_2$ are the same, we only provide that for $\bH_1$ as follows.  The argument below is deterministic, conditioned on the event in Lemma~\ref{lem:M}.

Let $\hat{\bZ}\in\RR^{p_1\times k_1}$ be the matrix consists of the top $k_1$ top eigenvectors of $\hat{\bS}_1$. Then $\bW_1=\sqrt{p_1} \hat{\bZ}$. Since $\bS^*_1$ is symmetric, it has the eigen decomposition $\bS^*_1 = \bZ^*\bLam^*{\bZ^*}^\top$, where $\bLam^*$ is a $k_1\times k_1$ diagonal matrix, and $\bZ^*\in\RR^{p_1\times k_1}$ satisfies ${\bZ^*}^\top\bZ^*=\bI_{k_1}$. Hence, $\bR^*=\bZ^*{\bLam^*}^{1/2}$.
The upper bound is derived from
\[
\begin{aligned}
\nu_{\max}\left(\frac{1}{p_1}\bW_1^\top \bR^*\right)& = \sup_{\bu\in\SS^{k_1-1}}\left\|\frac{1}{p_1}{\bR^*}^\top \bW_1\bu\right\|_2
\leq \frac{1}{\sqrt{p_1}} \left\|{\bLam^*}^{1/2}\right\|\sup_{\bu\in\SS^{k_1-1}} \left\|{\bZ^*}^\top\hat{\bZ}\bu\right\|_2 \\
&\leq C_1 \sup_{\bu\in\SS^{k_1-1}}\left\|{\bZ^*}^\top\hat{\bZ}\bu\right\|_2
\leq C_1 \sup_{\bu\in\SS^{k_1-1}}\left\|\bu\right\|_2 =C_1
\end{aligned}
\]
for some universal positive constant $C_1$, where the second inequality is because of Assumption \ref{asm:RC}.

As for the lower bound, we have
\begin{equation}\label{eqn:H_lower}
\nu_{\min}\left(\frac{1}{p_1}\bW_1^\top\bR^*\right) \geq \frac{1}{\sqrt{p_1}} \nu_{\min} \left(\hat{\bZ}\bZ^*\right) \nu_{\min}\left({\bLam^*}^{1/2}\right) \geq c_1\nu_{\min} \left(\hat{\bZ}\bZ^*\right),
\end{equation}
where the last inequality follows from Assumption \ref{asm:RC}.
Let $\lambda_1,\ldots,\lambda_{p_1}$ be the eigenvalues of $\bS^*_1$ in the non-decreasing order, and let $\hat\lambda_1,\ldots,\hat\lambda_{p_1}$ be the eigenvalues of $\hat{\bS}_1$ in the non-decreasing order. Then by Assumption \ref{asm:RC} and that $\rank(\bS)=k_1$, we have \
\begin{equation}\label{eqn:lambda_temp2}
\lambda_{k_1}\geq c_1 \text{ for some constant } c_1, \text{ and } \lambda_{k_1+1} =0.
\end{equation}
Let $\delta := \sqrt{t(k_1\vee k_2)}\sqrt{\frac{\log k_1 + t}{n}} + \frac{1}{\sqrt{p_1}}$.
By Weyl's Theorem and Lemma \ref{lem:M}, there exists a universal constant $c_2$ such that
\begin{equation}\label{eqn:lambda_temp3}
\left|\hat{\lambda}_i - \lambda_i\right| \leq \left\|\hat{\bS}_1 - \bS^*_1\right\|_2 \leq c_2\delta, \text{ for all } i\in[p_1].
\end{equation}
with probability at least $1-11e^{-t}$.

We argue that
\[
\nu_{\min} \left(\hat{\bZ}\bZ^*\right) \geq \sqrt{1-\left(\frac{2c_2}{c_1}\delta\right)^2} \geq 1-\frac{2c_2}{c_1}\delta.
\]
It suffices to prove the above inequality for $\delta<c_1/(2c_2)$; otherwise, the inequality is trivial.
In this case, combining \eqref{eqn:lambda_temp2} and  \eqref{eqn:lambda_temp3}, we bound the eigen-gap between $\hat{\bS}_1$ and $\bS^*_1$ by
\[
\begin{aligned}
\tilde{\delta} :=& \inf\left\{|\lambda-\hat\lambda| \mid \lambda\in[\lambda_{k_1},\lambda_{1}], \hat\lambda\in(-\infty,\hat\lambda_{k_1+1}]\right\} 
={ { \lambda_{k_1}- \hat\lambda_{k_1+1} }}\\
=& {\lambda_{k_1} -\lambda_{k_1+1} + \lambda_{k_1+1}  -\hat\lambda_{k_1+1}}
\geq c_1 - |\lambda_{k_1+1}  - \hat\lambda_{k_1+1}| \geq c_1 -c_2 \delta \geq \frac{c_1}{2}.
\end{aligned}
\]
Davis-Kahan $\sin\Theta$ theorem shows that
\[
\|\sin\Theta(\hat{\bZ}, \bZ^*)\|_2 \leq \frac{\|\hat{\bS}_1-\bS^*_1\|_2}{\tilde\delta} \leq \frac{2c_2}{c_1}\delta.
\]
Together with the fact that
\[
\left[\sin\Theta(\hat{\bZ}, \bZ^*)\right]^2 + \left[\cos\Theta(\hat{\bZ}, \bZ^*)\right]^2 = \bI_{k_1},
\]
we have
\[
\nu_{\min}^2 \left(\hat{\bZ}\bZ^*\right) \geq 1- \|\sin\Theta(\hat{\bZ}, \bZ^*)\|_2^2
\geq 1-\left(\frac{2c_2}{c_1}\delta\right)^2.
\]
Hence,
\[
\nu_{\min} \left(\hat{\bZ}\bZ^*\right) \geq \sqrt{1-\left(\frac{2c_2}{c_1}\delta\right)^2} \geq 1-\frac{2c_2}{c_1}\delta.
\]
Substituting this to \eqref{eqn:H_lower} completes the proof.
\end{proof}
{
To prove Lemma \ref{lem:M}, we first show the following Lemma.
\begin{lemma}\label{lem:F_op}
Under Assumption \ref{asum:FU_subG} (a), there exists a universal constant $C$, such that
\[
\PP\left( \|\bF_i\| \geq C(\sqrt{k_1} + \sqrt{k_2} +\sqrt{t}) \right) \leq 2 \exp(-t).
\]
\end{lemma}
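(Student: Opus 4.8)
The plan is to reduce the bound on $\|\bF_i\|$ to a standard operator-norm tail bound for a random matrix with independent sub-Gaussian entries. By Assumption \ref{asum:FU_subG}(a) we may write $\bF_i = \bSigma_{k_1}\bZ_i\bSigma_{k_2}$, and since the operator norm is submultiplicative,
\[
\|\bF_i\| \leq \|\bSigma_{k_1}\|_2\,\|\bZ_i\|\,\|\bSigma_{k_2}\|_2 \leq c_0^2\,\|\bZ_i\|.
\]
Hence it suffices to show that $\|\bZ_i\| \leq C'(\sqrt{k_1}+\sqrt{k_2}+\sqrt{t})$ with probability at least $1-2e^{-t}$, where $\bZ_i\in\RR^{k_1\times k_2}$ has independent, mean-zero, sub-Gaussian entries with uniformly bounded $\psi_2$-norm; absorbing the constant $c_0^2$ then yields the claim with $C = c_0^2 C'$.

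For the operator-norm bound on $\bZ_i$ I would run the standard $\varepsilon$-net argument. Fix $\varepsilon = 1/4$ and let $\mathcal{N}_1\subset\SS^{k_1-1}$ and $\mathcal{N}_2\subset\SS^{k_2-1}$ be $\varepsilon$-nets of the unit spheres, so that $|\mathcal{N}_1|\leq 9^{k_1}$, $|\mathcal{N}_2|\leq 9^{k_2}$, and the net-approximation inequality gives $\|\bZ_i\| \leq 2\max_{\bu\in\mathcal{N}_1,\,\bv\in\mathcal{N}_2}\bu^\top\bZ_i\bv$. For fixed unit vectors $\bu,\bv$, the bilinear form $\bu^\top\bZ_i\bv = \sum_{j,k}u_j z_{jk}v_k$ is a sum of independent mean-zero sub-Gaussian variables, hence itself sub-Gaussian with $\psi_2$-norm bounded by a multiple of $(\sum_{j,k}u_j^2 v_k^2)^{1/2}=\|\bu\|\|\bv\|=1$; therefore $\PP(\bu^\top\bZ_i\bv \geq s)\leq 2\exp(-c s^2)$ for a universal constant $c$.

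Taking a union bound over the net gives
\[
\PP\Big(\max_{\bu\in\mathcal{N}_1,\,\bv\in\mathcal{N}_2}\bu^\top\bZ_i\bv \geq s\Big) \leq 2\cdot 9^{k_1+k_2}\exp(-cs^2).
\]
Choosing $s = C'(\sqrt{k_1}+\sqrt{k_2}+\sqrt{t})$ with $C'$ large enough makes $cs^2 \geq cC'^2(k_1+k_2+t)$ dominate $(k_1+k_2)\log 9 + t$, so the right-hand side is at most $2e^{-t}$; combined with $\|\bZ_i\|\leq 2s$ and the reduction above, this completes the argument (relabeling $2C'c_0^2$ as $C$).

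The computations here are routine, and there is no substantial obstacle. The only points requiring mild care are (i) verifying that the $\psi_2$-norm of the bilinear form is controlled by $\|\bu\|\|\bv\|$ uniformly, which genuinely uses independence of the \emph{entries} of $\bZ_i$ rather than merely the boundedness imposed directly on $\bF_i$; and (ii) the reparametrization from the usual $e^{-t^2}$ tail to the stated $e^{-t}$ tail, achieved by substituting $\sqrt{t}$ for $t$ so that the exponent is linear in $t$. Alternatively, the entire bound on $\|\bZ_i\|$ can simply be invoked from standard non-asymptotic random matrix theory (the sub-Gaussian operator-norm bound, e.g. in Vershynin), again with $t$ rescaled to $\sqrt{t}$.
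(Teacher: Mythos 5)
Your proof is correct and follows essentially the same route as the paper's: an $\varepsilon$-net over both unit spheres, a sub-Gaussian tail bound for the bilinear form $\bu^\top\bZ_i\bv$ that genuinely uses independence of the entries of $\bZ_i$, and a union bound with $s \asymp \sqrt{k_1}+\sqrt{k_2}+\sqrt{t}$. The only cosmetic difference is that you factor out $\|\bSigma_{k_1}\|_2\,\|\bSigma_{k_2}\|_2$ by submultiplicativity before running the net argument on $\bZ_i$, whereas the paper runs the net directly on $\bF_i$ and absorbs these factors into the $\psi_2$-norm of the bilinear form; both give the stated bound.
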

\begin{proof}[Proof of Lemma \ref{lem:F_op}]
We can find an $\varepsilon$-net $\mathcal{M}$ of the sphere $\mathcal{S}^{k_1-1}$ and $\mathcal{N}$ of the sphere $\mathcal{S}^{k_2-1}$ with cardinalities:
$$
|\mathcal{M}| \leq\left(\frac{2}{\varepsilon}+1\right)^{k_1} \quad \text { and } \quad|\mathcal{N}| \leq\left(\frac{2}{\varepsilon}+1\right)^{k_2}.
$$
Then the spectral norm of $\bF_i$ can be bounded using these nets as follows:
\begin{equation}\label{eqn:epsilon_net}
\|\bF_i\| \leq \frac{1}{1-2 \varepsilon} \cdot \max _{\bx \in \mathcal{N}, \by \in \mathcal{M}}\langle \bF_i \bx, \by\rangle, \quad \varepsilon \in\left[0, \frac{1}{2}\right)
\end{equation}

Fix $\bx \in \mathcal{N}, \by \in \mathcal{M}$. By the decomposition of $\bF_i$ in Assumption \ref{asum:FU_subG} (a) $\bF_i=\bSigma_{k_1}^{1 / 2} \bZ \bSigma_{k_2}^{1 / 2}$, we have
$$
\langle \bF_i \bx, \by\rangle=\left\langle \bZ\left( \bSigma_{k_2}^{1 / 2} \cdot \bx\right), \bSigma_{k_1}^{1 / 2} \by\right\rangle .
$$

Denote $\widetilde{\bx}=\bSigma_{k_2}^{1 / 2}  \bx$ and $\widetilde{\by}=\bSigma_{k_1}^{1 / 2}  \by$, we have
$$
\begin{aligned}
\|\langle \bF_i \bx, \by\rangle\|_{\psi_2}^2 &\leq\|\langle\boldsymbol{Z} \widetilde{\bx}, \widetilde{\by}\rangle\|_{\psi_2}^2 \leq C\sum_{i=1}^{k_1} \sum_{j=1}^{k_2} \|Z_{ij} \widetilde{\bx}_i \widetilde{\by}_j\|_{\psi_2}^2\\
&\leq CK^2 \sum_{i=1}^{k_1} \sum_{j=1}^{k_2}{\widetilde{x_j}}^2 \widetilde{y}_i^2 
\leq CK^2\|\widetilde{\bx}\|^2\|\widetilde{\by}\|^2 \leq C \left\|\bSigma_{k_1}\right\|\left\|\bSigma_{k_2}\right\|,
\end{aligned}
$$
where $K:= \max_{i,j} \|Z_{ij}\|_{\psi_2}$.
That leads to the tail bound:
$$
\mathbb{P}(\langle \bF_i \bx, \by\rangle \geq u) \leq 2 \exp \left(-\frac{c u^2} {\left\|\bSigma_{k_1}\right\|\left\|\bSigma_{k_2}\right\|}\right).
$$

By $\varepsilon$-net argument, we have
$$
\mathbb{P}\left(\max _{\bx \in \mathcal{N}, \by \in \mathcal{M}}\langle \bF_i \bx, \by\rangle \geq u\right) \leq \sum_{\bx \in \mathcal{N}, \by \in \mathcal{M}} \mathbb{P}(\langle \bF_i \bx, \by\rangle \geq u),
$$
and setting $\varepsilon=\frac{1}{4}$, we bound the probability above by
$$
9^{\left(k_1+k_2\right)} \cdot 2 \exp \left(-\frac{c u^2 }{\left\|\bSigma_{k_1}\right\|\left\|\bSigma_{k_2}\right\|}\right) .
$$

Choose $u=C\left\|\bSigma_{k_1}^{1 / 2}\right\|\left\|\bSigma_{k_2}^{1 / 2}\right\|\left(\sqrt{k_2}+\sqrt{k_1}+\sqrt{t} \right)$. If $C$ is chosen sufficiently large, such that
$\frac{c u^2 }{\left\|\bSigma_{k_1}\right\|\left\|\bSigma_{k_2}\right\|} \geq 3\left(k_1+k_2\right)+t$, we have 
\[
\mathbb{P}\left(\max _{\bx \in \mathcal{N}, \by \in \mathcal{M}}\langle \bF_i \bx, \by\rangle \geq u\right) \leq 9^{\left(k_1+k_2\right)} \cdot 2 \exp \left(-3\left(k_1+k_2\right)-t\right) \leq 2 \exp \left(-t\right).
\]

Finally, combining the above inequality with \eqref{eqn:epsilon_net} and by the assumption that $\|\bSigma_{k_1}\|$ and $\|\bSigma_{k_2}$ are bounded, we conclude that
$$
\PP\left( \|\bF_i\| \geq C(\sqrt{k_1} + \sqrt{k_2} +\sqrt{t}) \right) \leq 2 \exp(-t).
$$
\end{proof}
}

\begin{proof}[Proof of Lemma \ref{lem:M}]
We provide in details the proof for $\hat{\bS}_1$ as follows. The proof for $\hat{\bS}_2$ is similar and thus omitted here.
By definition, we can decompose $\hat{\bS}_1-\bS^*_1$ by
\begin{equation}\label{eqn:MS}
\begin{aligned}
\hat{\bS}_1-\bS^*_1 &= \frac{1}{p_1p_2n} \sum_{i=1}^n{\bX_i\bX_i^\top} - \frac{1}{p_1} \bR^*{\bR^*}^\top\\
& = \frac{1}{p_1p_2n} \sum_{i=1}^n\left(\bR^*\bF_i{\bC^*}^\top+\bU_i\right) \left(\bR^*\bF_i{\bC^*}^\top+\bU_i\right) ^\top - \frac{1}{p_1} \bR^*{\bR^*}^\top\\
&= \underbrace{\frac{1}{p_1} \bR^*\left(\frac{1}{p_2n}\sum_{i=1}^n \bF_i{\bC^*}^\top\bC^*\bF_i^\top - \bI_{k_1}\right){\bR^*}^\top}_{\bS^{(1)}} +  \underbrace{\left(\frac{1}{p_1p_2n} \sum_{i=1}^n \bU_i\bU_i^\top -\frac{1}{p_1p_2} \bfm\Sigma_U^{(1)}\right)}_{\bS^{(2)}}  \\
&\ \ +  \underbrace{\frac{1}{p_1p_2n} \bR^*\bF_i{\bC^*}^\top\bU_i^\top}_{\bS^{(3)}} + \underbrace{\frac{1}{p_1p_2n} \bU_i(\bR^*\bF_i{\bC^*}^\top)^\top}_{(\bS^{(3)})^\top} + \underbrace{\frac{1}{p_1p_2} \bfm\Sigma_U^{(1)}}_{\bS^{(4)}},
\end{aligned}
\end{equation}
where $\bfm\Sigma_U^{(1)}:=\EE{\bU\bU^\top} $. Now we establish the upper bounds of the operator norm of each terms above.

For $\bS^{(1)}$,  by Lemma \ref{lem:F_op}, there exist a constant $b$ such that 
\[
\|\bF_i\|_2\leq b\left(\sqrt{k_1}+\sqrt{k_2} + \sqrt{t}\right)
\]
with probability at lease $1-2e^{-t}$. In what follows, we analyze under this event. 
We have,
\[
\|\bF_i{\bC^*}^\top\|_2   \leq \|\bC^*\|_2 \|\bF_i\|_2\leq b^\prime \sqrt{p_2}\left(\sqrt{k_1}+\sqrt{k_2} + \sqrt{t}\right)
\]
for some universal constant $b^\prime$. Accordingly, $\frac{1}{p_2} \|\bF_i{\bC^*}^\top\bC^*\bF_i^\top\|_2\leq b^{\prime\prime}t(k_1\vee k_2)$ for some universal constant $b^{\prime\prime}$. This indicates that $\frac{1}{p_2}\bF_i{\bC^*}^\top\bC^*\bF_i^\top$  satisfies Bernstein condition with parameter $b^{\prime\prime}t(k_1\vee k_2)$. Together with { Assumption \ref{asum:FU_subG} (a)} and by Bernstein inequality (Theorem 6.17 in \citet{wainwright2019high}), we have
\[
\begin{aligned}
&\PP\left(\left\|\frac{1}{np_2}\sum_{i=1}^n \bF_i{\bC^*}^\top\bC^*\bF_i^\top -\bI_k\right\|_2\geq \epsilon \right)\\
\leq & \PP\left(\left\|\frac{1}{np_2}\sum_{i=1}^n \bF_i{\bC^*}^\top\bC^*\bF_i^\top -\bI_k\right\|_2\geq \epsilon \cond \frac{1}{p_2} \|\bF_i{\bC^*}^\top\bC^*\bF_i^\top\|_2\leq b^{\prime\prime}t(k_1\vee k_2) \right) \\
&\ +  \PP\left( \frac{1}{p_2} \|\bF_i{\bC^*}^\top\bC^*\bF_i^\top\|_2 > b^{\prime\prime}t(k_1\vee k_2) \right) \\
\leq & 2k_1\exp\left(-\frac{n\epsilon^2}{2 b^{\prime\prime}t(k_1\vee k_2)(1+\epsilon)}\right) + 2e^{-t},
\end{aligned}
\]
where we use that fact that 
\[
\begin{aligned}
\var\left(\frac{1}{p_2}\bF_i{\bC^*}^\top\bC^*\bF_i^\top -\bI_{k_1} \right) & = \EE\left[\left(\frac{1}{p_2}\bF_i{\bC^*}^\top\bC^*\bF_i^\top\right)^2\right] - \bI_{k_1}\\
&\preceq \EE\left[\left\|\frac{1}{p_2}\bF_i{\bC^*}^\top\bC^*\bF_i^\top\right\|_2 \frac{1}{p_2}\bF_i{\bC^*}^\top\bC^*\bF_i^\top \right] - \bI_{k_i} \\
&\preceq  b^{\prime\prime}t(k_1\vee k_2)\EE\left[\frac{1}{p_2}\bF_i{\bC^*}^\top\bC^*\bF_i^\top\right] - \bI_{k_1} = [ b^{\prime\prime}t(k_1\vee k_2) -1]\bI_{k_1}.
\end{aligned}
\]
Then we have 
\[
\left\|\frac{1}{np_2}\sum_{i=1}^n \bF_i{\bC^*}^\top\bC^*\bF_i^\top -\bI_k\right\|_2 \leq C_1\sqrt{t(k_1\vee k_2)}\sqrt{\frac{\log k_1 + t}{n}}
\]
with probability at lease $1-3e^{-t}$, and thus
\[
\|\bS^{(1)}\|_2 \leq C_1^\prime \sqrt{t(k_1\vee k_2)}\sqrt{\frac{\log k_1 + t}{n}}
\]
with probability at lease $1-3e^{-t}$.

For $\bS^{(2)}$, similarly to Lemma \ref{lem:F_op}, there exist a constant $b$ such that $\|\bU_i\bU_i^\top\| /(p_1p_2)\leq bt(p_1\vee p_2)/(p_1p_2)$ with probability at least $1-2e^{-t}$. Similar to the above part for $\bS^{(1)}$, we have by Bernstein inequality,
\[
\|\bS^{(2)}\|_2 \leq C_2 \sqrt{\frac{t(p_1\vee p_2)}{p_1p_2}} \sqrt{\frac{\log p_1 +t}{n}} \leq C_2\sqrt{\frac{t}{n}}
\]
with probability at least $1-3e^{-t}$.

Moreover, $\bS^{(3)}$ can be decomposed to
\[
\|\bS^{(3)}\|_2 \leq \left\|\frac{1}{p_1p_2} \EE[\bR^*\bF{\bC^*}^\top\bU^\top]\right\|_2 + \left\|\frac{1}{p_1p_2n}\sum_{i=1}^n \left(\bR^*\bF_i{\bC^*}^\top\bU_i^\top -  \EE[\bR^*\bF{\bC^*}^\top\bU^\top]\right)\right\|_2,
\]
and the first term can be bounded by $C_3/\sqrt{p_1p_2} $ by Assumption \ref{asum:FU}. The second term is bounded using the same technique as above. First, we have
\[
\|\bR^*\bF_i{\bC^*}^\top\bU_i^\top\|_2 
\leq \|\bR^*\|_2\|\bF_i\|_2\|\bC^*\|_2\|\bU_i\|_2\leq c_3 \sqrt{p_1p_2}(\sqrt{k_1}+\sqrt{k_2} + \sqrt{p_1}+\sqrt{p_2}+t)
\] 
with probability at least $1-4e^{-t}$. 
Hence $\frac{1}{p_1p_2}\bR^*\bF_i{\bC^*}^\top\bU_i$ satisfies Bernstein condition with parameter $c_3^\prime {\frac{\sqrt{k_1}+\sqrt{k_2} + \sqrt{p_1}+\sqrt{p_2}+t}{\sqrt{p_1p_2}}}$ with probability at least $1-4e^{-t}$. Therefore, by Bernstein inequality, we have 
\[
\left\|\frac{1}{p_1p_2n}\sum_{i=1}^n \left(\bR^*\bF_i{\bC^*}^\top\bU_i^\top - \EE[\bR^*\bF_i{\bC^*}^\top\bU_i^\top]\right)\right\|_2 \leq C_3^\prime \sqrt{\frac{\sqrt{k_1}+\sqrt{k_2} + \sqrt{p_1}+\sqrt{p_2}+t}{p_1p_2}} \sqrt{\frac{\log p_1 +t}{n}}
\]
with probability at least $1-5e^{-t}$. Thus, $\|\bS^{(3)} \|_2 \leq \frac{C_3}{\sqrt{p_1p_2}} + C_3^\prime  \sqrt{\frac{\log p_1 +t}{np_1p_2}}$ with probability at least $1-5e^{-t}$.

Finally, for $\bS^{(4)}$, we take use of Assumption \ref{asum:FU_subG} (b), which leads to
\[
\left\|\bfm\Sigma_U^{(1)}\right\|_F^2 = \sum_{i=1}^{p_1}\left(\EE\sum_{j=1}^{p_2} u_{ij}^2\right)^2 + \sum_{i,k\in[p_1]\\ i\neq k}\left(\EE\sum_{j=1}^{p_2} u_{ij} u_{kj}\right)^2 \leq p_1p_2^2b^4 + p_1p_2 c_4.
\]
Thus, we have $\|\bS^{(4)}\|_2\leq \frac{C_4}{\sqrt{p_1}}$.

All in all, substituting all the above results into \eqref{eqn:MS} completes the proof.

\end{proof}

A straightforward corollary of Proposition \ref{lem:Ferr_nuc} is stated as follows.
\begin{corollary}\label{cor:F}
Under the same conditions as Proposition \ref{lem:Ferr_nuc}, we have
$$\|\hat{\FF}-\FF(\bH_2\otimes\bH_1)^\top\|_2 = \mathcal \mathcal O_p\paren{\sqrt{\frac{n(k_1\wedge k_2)}{p_1 p_2}}}.$$
\end{corollary}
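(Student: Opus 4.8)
The plan is to deduce the corollary directly from the per-sample spectral bound in Proposition~\ref{lem:Ferr_nuc}, by controlling the spectral norm of the stacked residual matrix through its Frobenius norm and exploiting the low rank of each individual residual. Write $\bD := \hat{\FF} - \FF(\bH_2\otimes\bH_1)^\top$, an $n\times k_1k_2$ matrix whose $i$-th row is $\vect\big(\hat{\bF}_i - \bH_1\bF_i\bH_2^\top\big)^\top$, since the $i$-th row of $\FF(\bH_2\otimes\bH_1)^\top$ is $\vect(\bH_1\bF_i\bH_2^\top)^\top$. First I would use the elementary bound $\|\bD\|_2 \le \|\bD\|_F$ together with the isometry of vectorization, which gives $\|\bD\|_F^2 = \sum_{i=1}^n \|\hat{\bF}_i - \bH_1\bF_i\bH_2^\top\|_F^2$. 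The essential gain comes next: each residual $\hat{\bF}_i - \bH_1\bF_i\bH_2^\top$ is a $k_1\times k_2$ matrix, hence of rank at most $k_1\wedge k_2$, so $\|\hat{\bF}_i - \bH_1\bF_i\bH_2^\top\|_F^2 \le (k_1\wedge k_2)\,\|\hat{\bF}_i - \bH_1\bF_i\bH_2^\top\|_2^2$. Substituting the spectral rate $\|\hat{\bF}_i - \bH_1\bF_i\bH_2^\top\|_2 = \Op{1/\sqrt{p_1p_2}}$ from Proposition~\ref{lem:Ferr_nuc} and summing over $i$ yields $\|\bD\|_F^2 = \Op{n(k_1\wedge k_2)/(p_1p_2)}$, whence the claim follows by taking square roots.

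The one step needing care — and the main obstacle — is passing from the pointwise-in-$i$ $\mathcal O_p$ statement to a bound on the sum of $n$ such terms, since naively adding $n$ separate $\Op{\cdot}$ bounds is not automatically legitimate. To make this rigorous I would condition on the pre-training sample, i.e.\ on $\bW_1,\bW_2$, which are independent of $\{\bX_i\}_{i=1}^n$ by Definition~\ref{def:W}(c). Using the exact representation $\hat{\bF}_i - \bH_1\bF_i\bH_2^\top = (p_1p_2)^{-1}\bW_1^\top\bU_i\bW_2$ from \eqref{eqn:FFhat}, the summands $\|\hat{\bF}_i - \bH_1\bF_i\bH_2^\top\|_2^2$ are, conditionally on $(\bW_1,\bW_2)$, identically distributed across $i$ because $\{(\bF_i,\bU_i)\}$ are i.i.d.\ by Assumption~\ref{asum:FU}. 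Hence $\EE\big[\|\bD\|_F^2 \mid \bW_1,\bW_2\big] = n\,\EE\big[\|\hat{\bF}_1 - \bH_1\bF_1\bH_2^\top\|_F^2 \mid \bW_1,\bW_2\big] \le n(k_1\wedge k_2)\,\EE\big[\|\hat{\bF}_1 - \bH_1\bF_1\bH_2^\top\|_2^2 \mid \bW_1,\bW_2\big]$, and the argument underlying Proposition~\ref{lem:Ferr_nuc} controls this conditional second moment at order $1/(p_1p_2)$ on the high-probability event of Proposition~\ref{prop:H}. A single application of Markov's inequality then converts the conditional moment bound into $\|\bD\|_F = \Op{\sqrt{n(k_1\wedge k_2)/(p_1p_2)}}$, uniformly over the conditioning event.

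Finally, I would emphasize why the factor is $k_1\wedge k_2$ rather than $k_1k_2$: it enters solely through the rank inequality $\|\cdot\|_F \le \sqrt{k_1\wedge k_2}\,\|\cdot\|_2$ applied to each low-dimensional residual, and crucially because Proposition~\ref{lem:Ferr_nuc} delivers a \emph{spectral}-norm rate (with a constant not growing in $k_1,k_2$) rather than a Frobenius-norm rate. Routing the per-sample error through its spectral norm is exactly what sharpens $k_1k_2$ to $k_1\wedge k_2$; a direct entrywise Frobenius computation of $\EE\|\bW_1^\top\bU_i\bW_2\|_F^2$ would instead produce the looser $k_1k_2$ factor, so the order in which the two norm conversions are applied is the only non-routine aspect of the argument.
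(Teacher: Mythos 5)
Your derivation is correct, and it is precisely the ``straightforward'' argument the paper has in mind (the paper states Corollary~\ref{cor:F} without proof, so there is nothing more elaborate to compare against): bound $\|\bD\|_2$ by $\|\bD\|_F$, identify $\|\bD\|_F^2=\sum_{i=1}^n\|\hat\bF_i-\bH_1\bF_i\bH_2^\top\|_F^2$, convert each summand via the rank inequality, and legitimize the sum of $n$ per-sample $\mathcal O_p$ bounds through the exact representation $\hat\bF_i-\bH_1\bF_i\bH_2^\top=(p_1p_2)^{-1}\bW_1^\top\bU_i\bW_2$, conditioning on $(\bW_1,\bW_2)$, i.i.d.-ness of the $\bU_i$, and Markov's inequality --- the last step being the only place where care is genuinely needed, and you handle it properly. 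One minor caveat on your closing remark: the claim that the spectral route sharpens $k_1k_2$ to $k_1\wedge k_2$ presumes the $\Op{p_1^{-1/2}p_2^{-1/2}}$ constant in Proposition~\ref{lem:Ferr_nuc} is free of $k_1,k_2$, whereas the spectral norm of $(p_1p_2)^{-1}\bW_1^\top\bU_i\bW_2$ generically carries a $\sqrt{k_1\vee k_2}$ factor, so that $(k_1\wedge k_2)(k_1\vee k_2)=k_1k_2$ would reappear if one tracked $k$-dependence exactly; since the paper treats $k_1,k_2$ as fixed, this does not affect the validity of the stated rate.
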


\subsection{Analysis of OLS after pre-trained diversification} \label{append:ols-after-diversify}

\begin{proposition}[OLS after pre-trained diversification]\label{thm:ols-pre-trained}
Suppose that the projection matrices $\bW_1$ and $\bW_2$ satisfy Definition \ref{def:W}, and Assumption \ref{asu:H}, \ref{asu:F}, and \ref{asu:infer} in Section \ref{append:assump} in the supplementary hold. Furthermore, assume that $\sqrt{n}\ll p_1p_2$. Then we have
\[
\begin{aligned}
\sqrt{n}\left(\tilde\bGamma-\mathring{\bGamma}\right) &= 
\sqrt{n}\big(\hat{\FF}^\top\hat{\FF}\big)^{-1}(\bH_2\otimes\bH_1)\FF^\top \UU + {\mathcal \mathcal O_p(	\frac{1}{\sqrt{p_1p_2}} + \frac{\sqrt{n}}{p_1p_2})} \\
&\xrightarrow{d} \cM\cN \big(\bfm 0, (\bLambda_H\bSigma_F\bLambda_H^\top)^{-1}\bLambda_H \bPhi\bLambda_H^\top (\bLambda_H\bSigma_F\bLambda_H^\top)^{-1}, \bPsi \big)
\end{aligned}
\]
where $\bSigma_F$, $\bLambda_H$, $\bPhi$, and $\bpsi$ are defined in Assumption \ref{asu:H}, \ref{asu:F}, and \ref{asu:infer} in Section \ref{append:assump} in the supplementary. 
\end{proposition}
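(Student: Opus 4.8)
The plan is to carry out the usual ordinary-least-squares asymptotic-normality analysis for the closed form $\tilde\bGamma = \XX^\top\hat\FF(\hat\FF^\top\hat\FF)^{-1}$ of \eqref{eqn:est-Gamma}, the one novelty being that the regressor $\hat\FF$ is only a noisy proxy for the rotated truth $\mathring\FF = \FF(\bH_2\otimes\bH_1)^\top$, so the plug-in error must be shown to be asymptotically negligible. Set $\bDelta \defeq \hat\FF - \mathring\FF$; by \eqref{eqn:FFhat} its $i$-th row is $\vect\big(\tfrac{1}{p_1p_2}\bW_1^\top\bU_i\bW_2\big)^\top$, so $\bDelta = \tfrac{1}{p_1p_2}\UU(\bW_2\otimes\bW_1)$, and Corollary \ref{cor:F} gives $\|\bDelta\|_2 = \Op{\sqrt{n(k_1\wedge k_2)/(p_1p_2)}}$. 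Substituting $\XX = \mathring\FF\mathring\bGamma^\top + \UU$ into the normal equations and using $\mathring\FF^\top\hat\FF - \hat\FF^\top\hat\FF = -\bDelta^\top\hat\FF$ yields the exact identity $\tilde\bGamma - \mathring\bGamma = \UU^\top\hat\FF(\hat\FF^\top\hat\FF)^{-1} - \mathring\bGamma\,\bDelta^\top\hat\FF(\hat\FF^\top\hat\FF)^{-1}$.

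The crucial algebraic observation is that the plug-in bias and the noise share a common factor. Writing $\mathring\bGamma = \mathring\bC\otimes\mathring\bR$ and $\bDelta^\top = \tfrac{1}{p_1p_2}(\bW_2^\top\otimes\bW_1^\top)\UU^\top$, and using $\mathring\bR\bW_1^\top = p_1\bP_1$, $\mathring\bC\bW_2^\top = p_2\bP_2$ with the oblique projectors $\bP_1\defeq\bR^*(\bW_1^\top\bR^*)^{-1}\bW_1^\top$ and $\bP_2\defeq\bC^*(\bW_2^\top\bC^*)^{-1}\bW_2^\top$, one obtains the clean identity $\mathring\bGamma\bDelta^\top = (\bP_2\otimes\bP_1)\UU^\top$, so that the two terms combine into $\tilde\bGamma - \mathring\bGamma = (\bI - \bP_2\otimes\bP_1)\,\UU^\top\hat\FF(\hat\FF^\top\hat\FF)^{-1}$. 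Replacing $\UU^\top\hat\FF = \UU^\top\mathring\FF + \UU^\top\bDelta$ splits this into the leading term $\UU^\top\mathring\FF(\hat\FF^\top\hat\FF)^{-1}$ and two corrections, $-(\bP_2\otimes\bP_1)\UU^\top\mathring\FF(\hat\FF^\top\hat\FF)^{-1}$ and $(\bI-\bP_2\otimes\bP_1)\UU^\top\bDelta(\hat\FF^\top\hat\FF)^{-1}$. Showing that, after multiplying by $\sqrt n$, these are $\Op{1/\sqrt{p_1p_2}}$ and $\Op{\sqrt n/(p_1p_2)}$ respectively is the main obstacle. For the first I would use $(\bP_2\otimes\bP_1)\UU^\top\mathring\FF = \sum_i \vect(\mathring\bR\,\bdelta_i\,\mathring\bC^\top)\,\tilde\bff_i^\top(\bH_2\otimes\bH_1)^\top$ with $\bdelta_i \defeq \hat\bF_i - \bH_1\bF_i\bH_2^\top$ of Frobenius size $\Op{1/\sqrt{p_1p_2}}$ by Proposition \ref{lem:Ferr_nuc} (equivalently, the cross-moment bound of Assumption \ref{asu:infer}(a)); for the second I would bound $\UU^\top\bDelta = \tfrac{1}{p_1p_2}\UU^\top\UU(\bW_2\otimes\bW_1)$ through the $\Op{n}$ size of $\UU^\top\UU$. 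Naive operator-norm bounds are too lossy, so the care lies in exploiting the projection structure together with the weak-correlation and eighth-moment controls of Assumptions \ref{asum:FU_subG} and \ref{asu:infer}(e); the hypothesis $\sqrt n \ll p_1p_2$ is precisely what forces both corrections to vanish.

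For the leading term I would first establish the Gram-matrix limit $\tfrac1n\hat\FF^\top\hat\FF \to \bLambda_H\bSigma_F\bLambda_H^\top$, which follows from $\tfrac1n\mathring\FF^\top\mathring\FF = (\bH_2\otimes\bH_1)\big(\tfrac1n\FF^\top\FF\big)(\bH_2\otimes\bH_1)^\top$ combined with Assumption \ref{asu:F}(b) ($\tfrac1n\FF^\top\FF\to\bSigma_F$) and Assumption \ref{asu:infer}(b) ($\bH_2\otimes\bH_1\to\bLambda_H$), the $\bDelta$ cross-terms being negligible by the same estimates. Writing the leading term in the transpose convention of the statement as $\big[\tfrac1n\hat\FF^\top\hat\FF\big]^{-1}(\bH_2\otimes\bH_1)\cdot\tfrac{1}{\sqrt n}\FF^\top\UU$ and invoking the matrix-normal central limit theorem of Assumption \ref{asu:infer}(d), $\tfrac{1}{\sqrt n}\FF^\top\UU \xrightarrow{d} \cM\cN(\bzero,\bPhi,\bPsi)$, I would then apply Slutsky's theorem together with the affine-transformation rule for matrix-normal laws: left-multiplication by $\bM\defeq(\bLambda_H\bSigma_F\bLambda_H^\top)^{-1}\bLambda_H$ sends $\cM\cN(\bzero,\bPhi,\bPsi)$ to $\cM\cN(\bzero,\bM\bPhi\bM^\top,\bPsi)$, which is exactly the stated limit with row-covariance $(\bLambda_H\bSigma_F\bLambda_H^\top)^{-1}\bLambda_H\bPhi\bLambda_H^\top(\bLambda_H\bSigma_F\bLambda_H^\top)^{-1}$ and column-covariance $\bPsi$. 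In short, the whole argument is the standard OLS-plus-Slutsky pattern; the only genuinely delicate part is the second paragraph, and once the projection identity $\mathring\bGamma\bDelta^\top = (\bP_2\otimes\bP_1)\UU^\top$ is in hand the error control reduces to the two rate bounds above.
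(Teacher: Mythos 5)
Your proposal is correct and follows essentially the same route as the paper's proof: the identical exact OLS identity, the same three-way split into the leading term $(\hat{\FF}^\top\hat{\FF})^{-1}(\bH_2\otimes\bH_1)\FF^\top\UU$ plus a plug-in-bias term and a $\UU^\top\bDelta$ term, the same rate bounds via Assumption \ref{asu:infer}(a) and (c) (the paper packages these as Lemmas \ref{lem:FF_lim} and \ref{lem: F_I}), and the same Slutsky-plus-matrix-normal-affine-transformation step at the end. Your oblique-projector identity $\mathring{\bGamma}\bDelta^\top=(\bP_2\otimes\bP_1)\UU^\top$ is a correct and tidy repackaging of the paper's terms \Romannum{1} and \Romannum{2}, but it does not change which quantities must be controlled or which assumptions do the work.
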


Before providing the proof of Proposition \ref{thm:ols-pre-trained}, we first present two lemmas whose proofs will be given shortly after.

\begin{lemma}\label{lem:FF_lim}
Let $\hat{\FF}:=\big(\operatorname{vec}(\hat{\bF}_1), \ldots, \operatorname{vec}(\hat{\bF}_n)\big)^\top$ from the {\em ``pre-trained projection''}.
Suppose Assumption \ref{asm:RC} (b), \ref{asu:F} (b), and \ref{asu:infer} (a) - (c) hold, then we have
\[
\plim_{n,p_1,p_2\rightarrow \infty} \frac{1}{n} \hat{\FF}^\top\hat{\FF} = \bLambda_H^\top\bSigma_F\bLambda_H,
\]
where $ \bSigma_F := \plim_{n\rightarrow\infty}\sum_{i=1}^n \tilde\bff_i \tilde\bff_i^\top $ is positive definite, $\tilde{\bff}_i := \vect(\bF_i)$, and $ \bLambda_H := \plim_{p_1,p_2\rightarrow \infty}  (\bH_2\otimes\bH_1) $ is full rank as been assumed in Assumption \ref{asu:infer} (b) and Assumption \ref{asu:F} (b), respectively.
\end{lemma}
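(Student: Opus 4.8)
The plan is to reduce the statement to a law-of-large-numbers computation on the vectorized one-step factor estimates, after which the diversified-projection scaling kills every term that involves the idiosyncratic noise. First I would vectorize the identity \eqref{eqn:FFhat}. Writing $\tilde\bff_i:=\vect(\bF_i)$, $\tilde\bu_i:=\vect(\bU_i)$, $\bG:=\bH_2\otimes\bH_1$, and $\bN:=(p_1p_2)^{-1}(\bW_2^\top\otimes\bW_1^\top)$, the rule $\vect(\bA\bX\bB)=(\bB^\top\otimes\bA)\vect(\bX)$ gives $\vect(\hat\bF_i)=\bG\tilde\bff_i+\bN\tilde\bu_i$, so the $i$-th row of $\hat\FF$ is $\bG\tilde\bff_i+\bN\tilde\bu_i$ and
\[
\frac1n\hat\FF^\top\hat\FF=\bG\,\hat\bSigma_F\,\bG^\top+\bG\,\hat\bM\,\bN^\top+\bN\,\hat\bM^\top\,\bG^\top+\bN\,\hat\bSigma_U\,\bN^\top,
\]
where $\hat\bSigma_F:=\tfrac1n\sum_i\tilde\bff_i\tilde\bff_i^\top$, $\hat\bM:=\tfrac1n\sum_i\tilde\bff_i\tilde\bu_i^\top$, and $\hat\bSigma_U:=\tfrac1n\sum_i\tilde\bu_i\tilde\bu_i^\top$. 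It then suffices to prove that the signal block converges to the asserted limit and that the three blocks carrying $\bN$ are $\op{1}$.

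For the signal block, Assumption \ref{asu:F}(b) gives $\hat\bSigma_F\xrightarrow{P}\bSigma_F$ (positive definite, with the scale fixed by the normalization in Assumption \ref{asm:RC}(b)) and Assumption \ref{asu:infer}(b) gives $\bG=\bH_2\otimes\bH_1\xrightarrow{P}\bLambda_H$ of full rank. Slutsky and the continuous mapping theorem then yield $\bG\hat\bSigma_F\bG^\top\xrightarrow{P}\bLambda_H\bSigma_F\bLambda_H^\top$, matching the limiting matrix in the statement and in Proposition \ref{thm:ols-pre-trained}; it is positive definite, as needed later for inverting $\tfrac1n\hat\FF^\top\hat\FF$.

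The heart of the argument is showing the $\bN$-blocks vanish, and this is exactly where the $(p_1p_2)^{-1}$ factor hidden in $\bN$ does the work. The cleanest is the purely idiosyncratic block: $\bN\hat\bSigma_U\bN^\top=\tfrac1n\sum_i(\bN\tilde\bu_i)(\bN\tilde\bu_i)^\top\succeq\bzero$, so I would bound it through its trace. Its expectation equals $(p_1p_2)^{-1}\big[(p_1p_2)^{-1}(\bW_2^\top\otimes\bW_1^\top)\EE[\tilde\bu_i\tilde\bu_i^\top](\bW_2\otimes\bW_1)\big]$, whose bracketed factor converges to $\bSigma_U$ by Assumption \ref{asu:infer}(c); hence $\EE\,\Tr(\bN\hat\bSigma_U\bN^\top)=(p_1p_2)^{-1}\Tr(\bSigma_U)(1+\smlo{1})\to0$, and Markov applied to the nonnegative trace together with $\|\bN\hat\bSigma_U\bN^\top\|_2\le\Tr(\bN\hat\bSigma_U\bN^\top)$ forces $\bN\hat\bSigma_U\bN^\top=\op{1}$. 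For the two symmetric cross blocks it is enough to treat $\bN\hat\bM^\top=\tfrac1n\sum_i(\bN\tilde\bu_i)\tilde\bff_i^\top$ and then multiply by the bounded $\bG$.

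I expect the cross block to be the main obstacle: unlike the idiosyncratic block it is not sign-definite, so the trace/Markov shortcut is unavailable and one must genuinely quantify the fluctuations of $\tfrac1n\sum_i(\bN\tilde\bu_i)\tilde\bff_i^\top$ rather than merely its mean. The plan is to compute its Frobenius second moment directly, using independence across $i$, the mean-zero factor–idiosyncratic cross structure, and the weak-correlation bound underlying Assumption \ref{asu:infer}(c): each coordinate of $\bN\tilde\bu_i$ is of order $(p_1p_2)^{-1/2}$, so each summand contributes $O(p_1p_2)$ before the $(np_1p_2)^{-2}$ normalization, giving $\EE\|\bN\hat\bM^\top\|_F^2=O((np_1p_2)^{-1})\to0$; Assumption \ref{asu:infer}(a) is the packaged form of exactly this control. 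Combining the four blocks through Slutsky then gives $\tfrac1n\hat\FF^\top\hat\FF\xrightarrow{P}\bLambda_H\bSigma_F\bLambda_H^\top$ in the joint limit $n,p_1,p_2\to\infty$, which is the assertion.
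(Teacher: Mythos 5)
Your proposal is correct and follows essentially the same route as the paper's proof: both substitute the matrix factor model into $\tfrac1n\hat\FF^\top\hat\FF$ and obtain the identical four-block decomposition (signal, two factor--idiosyncratic cross terms, pure idiosyncratic term), handling the signal block via Assumption \ref{asu:F}(b), Assumption \ref{asu:infer}(b) and the continuous mapping theorem, and dispatching the noise blocks via Assumptions \ref{asu:infer}(a) and (c). The only difference is that you spell out the trace/Markov and second-moment arguments for why the $\bN$-blocks vanish, whereas the paper simply asserts these as direct consequences of Assumptions \ref{asu:infer}(a) and (c).
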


\begin{lemma}\label{lem: F_I}
Suppose the same condition of Lemma \ref{lem:FF_lim} holds, we have
{
\[
\big(\hat{\FF}^\top\hat{\FF}\big)^{-1} \hat{\FF}^\top{ \left[{\FF} - \hat{\FF}(\bH_2^{-1}\otimes\bH_1^{-1})^\top\right] = \mathcal \mathcal O_p\left({\frac{1}{\sqrt{n p_1p_2}}} + \frac{1}{p_1p_2}\right).
}\]
}
\end{lemma}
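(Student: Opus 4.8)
The plan is to reduce the left-hand side to a single noise term by exploiting the explicit error representation of the pre-trained projection, and then to bound that term factor-by-factor using Lemma~\ref{lem:FF_lim} together with the moment conditions in Assumption~\ref{asu:infer}. Write $\bH_\otimes := \bH_2\otimes\bH_1$ and $\bW_\otimes := \bW_2\otimes\bW_1$. Vectorizing \eqref{eqn:FFhat} and stacking the rows gives
\[
\hat\FF = \FF\,\bH_\otimes^\top + \tfrac{1}{p_1p_2}\,\UU\,\bW_\otimes ,
\]
and since $(\bH_2^{-1}\otimes\bH_1^{-1})^\top = (\bH_\otimes^\top)^{-1}$, we get $\FF - \hat\FF(\bH_\otimes^\top)^{-1} = -\tfrac{1}{p_1p_2}\,\UU\,\bW_\otimes(\bH_\otimes^\top)^{-1}$. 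Substituting collapses the quantity of interest to
\[
\big(\hat\FF^\top\hat\FF\big)^{-1}\hat\FF^\top\big[\FF-\hat\FF(\bH_\otimes^\top)^{-1}\big]
= -\tfrac{1}{p_1p_2}\big(\hat\FF^\top\hat\FF\big)^{-1}\hat\FF^\top\UU\,\bW_\otimes(\bH_\otimes^\top)^{-1},
\]
so the whole proof reduces to controlling the operator norm of the right-hand side.

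First I would dispatch the two ``deterministic'' factors: by Lemma~\ref{lem:FF_lim}, $\tfrac1n\hat\FF^\top\hat\FF \to \bLambda_H^\top\bSigma_F\bLambda_H\succ\bzero$, hence $\big(\hat\FF^\top\hat\FF\big)^{-1}=\Op{1/n}$, and $(\bH_\otimes^\top)^{-1}=\Op{1}$ by Assumption~\ref{asu:infer}(b). It then remains to bound the random core, which I would split using $\hat\FF^\top = \bH_\otimes\FF^\top + \tfrac{1}{p_1p_2}\bW_\otimes^\top\UU^\top$ as
\[
\hat\FF^\top\UU\,\bW_\otimes
= \bH_\otimes\,\FF^\top\UU\,\bW_\otimes
+ \tfrac{1}{p_1p_2}\,\bW_\otimes^\top\UU^\top\UU\,\bW_\otimes .
\]
For the cross term, writing $\FF^\top\UU\,\bW_\otimes=\sum_{i=1}^n\tilde\bff_i(\bW_\otimes^\top\tilde\bu_i)^\top$ and invoking Assumption~\ref{asu:infer}(d), which gives $\tfrac1{\sqrt n}\FF^\top\UU\xrightarrow{d}\cM\cN(\bzero,\bPhi,\bPsi)$ and hence $\tfrac1{\sqrt n}\FF^\top\UU\,\bW_\otimes\xrightarrow{d}\cM\cN(\bzero,\bPhi,\bW_\otimes^\top\bPsi\bW_\otimes)$, I would show $\FF^\top\UU\,\bW_\otimes=\Op{\sqrt{np_1p_2}}$; the factor $p_1p_2$ inside the square root arises from $\norm{\bW_\otimes^\top\bPsi\bW_\otimes}_2\le\norm{\bW_\otimes}_2^2\norm{\bPsi}_2=\mathcal{O}(p_1p_2)$, where $\norm{\bW_\otimes}_2^2=\norm{\bW_1}_2^2\norm{\bW_2}_2^2=\mathcal{O}(p_1p_2)$ by the bounded-entry condition of Definition~\ref{def:W}(a) and $\norm{\bPsi}_2=\mathcal{O}(1)$ by the weak-dependence (block) structure in Assumption~\ref{asu:infer}(d). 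For the second term, $\bW_\otimes^\top\UU^\top\UU\,\bW_\otimes=\sum_{i=1}^n(\bW_\otimes^\top\tilde\bu_i)(\bW_\otimes^\top\tilde\bu_i)^\top$, each summand has mean $p_1p_2\bSigma_U(1+o(1))$ by Assumption~\ref{asu:infer}(c), and a law-of-large-numbers argument controlled by the eighth-moment bound \ref{asu:infer}(e) yields $\bW_\otimes^\top\UU^\top\UU\,\bW_\otimes=\Op{np_1p_2}$.

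Finally I would assemble the pieces. The cross term contributes $\tfrac{1}{p_1p_2}\cdot\tfrac1n\cdot\sqrt{np_1p_2}=\tfrac{1}{\sqrt{np_1p_2}}$, while the quadratic-noise term contributes $\tfrac{1}{(p_1p_2)^2}\cdot\tfrac1n\cdot np_1p_2=\tfrac{1}{p_1p_2}$, giving the claimed rate $\Op{\tfrac{1}{\sqrt{np_1p_2}}+\tfrac{1}{p_1p_2}}$. The main obstacle is obtaining the \emph{sharp} $\sqrt{np_1p_2}$ rate for the cross term $\FF^\top\UU\,\bW_\otimes$: the crude moment bound of Assumption~\ref{asu:infer}(a) only yields $\Op{np_1p_2}$, which would make the cross term $\Op{1}$ and hence useless. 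The improvement to $\sqrt{np_1p_2}$ must come either from the central-limit control (d) together with the estimate $\norm{\bW_\otimes^\top\bPsi\bW_\otimes}_2=\mathcal{O}(p_1p_2)$, or, more directly and rigorously, from a second-moment computation of $\FF^\top\UU\,\bW_\otimes$ in which the weak-dependence and moment conditions (e) are used to keep each entry's variance at order $np_1p_2$; either route is the crux of the argument.
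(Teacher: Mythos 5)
Your proposal is correct and follows essentially the same route as the paper's proof: the identity $\FF-\hat\FF(\bH_2^{-1}\otimes\bH_1^{-1})^\top=-\tfrac{1}{p_1p_2}\UU(\bW_2\otimes\bW_1)(\bH_2^{-1}\otimes\bH_1^{-1})^\top$, the same split into a cross term $\FF^\top\UU(\bW_2\otimes\bW_1)=\Op{\sqrt{np_1p_2}}$ and a quadratic term $(\bW_2\otimes\bW_1)^\top\UU^\top\UU(\bW_2\otimes\bW_1)=\Op{np_1p_2}$, and the same final assembly with $(\hat\FF^\top\hat\FF)^{-1}=\Op{n^{-1}}$ from Lemma~\ref{lem:FF_lim}. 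The only divergence is at the crux you identify: the paper obtains the $\Op{\sqrt{np_1p_2}}$ cross-term rate by citing Assumption~\ref{asu:infer}(a)--(c) and a law of large numbers, and you are right that this requires reading Assumption~\ref{asu:infer}(a) with the CLT-scale normalization $(np_1p_2)^{-1/2}$ rather than the $(np_1p_2)^{-1}$ as literally printed; your alternative derivation through Assumption~\ref{asu:infer}(d) together with $\|(\bW_2\otimes\bW_1)^\top\bPsi(\bW_2\otimes\bW_1)\|_2=\bigO{p_1p_2}$ is a sound substitute.
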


\begin{proof}[Proof of Proposition  \ref{thm:ols-pre-trained}]
\[
\begin{aligned}
\tilde{\bGamma}^\top  &= 	\big(\hat{\FF}^\top\hat{\FF}\big)^{-1} \hat{\FF}^\top \XX \\
& = \big(\hat{\FF}^\top\hat{\FF}\big)^{-1} \hat{\FF}^\top \left[\mathring{\FF} \mathring{\bGamma}^\top + \UU\right]\\
& =	\big(\hat{\FF}^\top\hat{\FF}\big)^{-1} \hat{\FF}^\top \left[\hat{\FF} - \hat{\FF} + \FF(\bH_2\otimes\bH_1)^\top\right]  \mathring{\bGamma}^\top +  	\big(\hat{\FF}^\top\hat{\FF}\big)^{-1} \hat{\FF}^\top\UU\\
& = \mathring{\bGamma}^\top + \big(\hat{\FF}^\top\hat{\FF}\big)^{-1} \hat{\FF}^\top \left[{\FF} - \hat\FF(\bH_2^{-1}\otimes\bH_1^{-1})^\top \right] {\bGamma}^{*\top}  \\
& \ \ \ \  + \big(\hat{\FF}^\top\hat{\FF}\big)^{-1} \left[\hat{\FF} - \FF(\bH_2\otimes\bH_1)^\top\right]^\top \UU  + \big(\hat{\FF}^\top\hat{\FF}\big)^{-1}(\bH_2\otimes\bH_1)\FF^\top \UU\\
& =: \mathring{\bGamma}^\top +\Romannum{1}+\Romannum{2}+\Romannum{3}.
\end{aligned}
\]
Lemma \ref{lem: F_I} shows that $ \Romannum{1} = \mathcal \mathcal O_p\left({\frac{1}{\sqrt{n p_1p_2}}} + \frac{1}{p_1p_2}\right).$
Also, for part $\Romannum{2},$ for every $r\in[p_1\!p_2]$, let $\UU_{\cdot\,r}$ be the $r$-th column of $\UU$. Then we have
\[
\begin{aligned}
&\big(\hat{\FF}^\top\hat{\FF}\big)^{-1} \left[\hat{\FF} - \FF(\bH_2\otimes\bH_1)^\top\right]^\top \UU_{\cdot\,r}
= \frac{1}{p_1\!p_2} \big(\hat{\FF}^\top\hat{\FF}\big)^{-1} (\bW_2\otimes\bW_1)^\top \UU^\top\UU_{\cdot\,r}\\
=& \big[n \big(\hat{\FF}^\top\hat{\FF}\big)^{-1}\big] \big[ \frac{1}{np_1\!p_2} \sum_{i=1}^n \sum_{j=1}^{p_1\!p_2} (\bW_2^\top \otimes\bW_1^\top)_j \tilde{u}_{ij} \tilde{u}_{ir}  \big]
= \mathcal \mathcal O_p(\frac{1}{ \sqrt{n p_1 p_2}}),
\end{aligned}
\]
where the last identity follows from Lemma \ref{lem:FF_lim} and Assumption \ref{asu:infer} (c). 
As for part $\Romannum{3}$, by Assumption \ref{asu:infer} (b), (d), Lemma \ref{lem:FF_lim}, and Slutsky Theorem, we have 
\[
\sqrt{n}\big(\hat{\FF}^\top\hat{\FF}\big)^{-1}(\bH_2\otimes\bH_1)\FF^\top \UU \xrightarrow{d}  \cM\cN \big(\bfm 0, (\bLambda_H\bSigma_F\bLambda_H^\top)^{-1}\bLambda_H \bPhi\bLambda_H^\top (\bLambda_H\bSigma_F\bLambda_H^\top)^{-1}, \bPsi \big).
\]
Combining all the three parts gives the conclusion.
\end{proof}

\begin{proof}[Proof of Lemma  \ref{lem:FF_lim}]
By equation \eqref{eqn:FX} and \eqref{eqn:fm1}, we have
\[
\begin{aligned}
\frac{1}{n} \hat{\FF}^\top\hat{\FF} &= \frac{1}{np_1^2p_2^2} (\bW_2\otimes \bW_1)^\top \XX^\top \XX (\bW_2\otimes \bW_1)\\
&= \frac{1}{np_1^2p_2^2} (\bW_2\otimes \bW_1)^\top (\bGamma^*\FF^\top+\UU^\top) (\FF{\bGamma^*}^\top+\UU) (\bW_2\otimes \bW_1)\\
&=  \frac{1}{np_1^2p_2^2} (\bW_2\otimes \bW_1)^\top \bGamma^*\FF^\top \FF{\bGamma^*}^\top (\bW_2\otimes \bW_1) +  \frac{1}{np_1^2p_2^2} (\bW_2\otimes \bW_1)^\top \UU^\top \FF{\bGamma^*}^\top (\bW_2\otimes \bW_1) \\
&\ \ \ \  +\frac{1}{np_1^2p_2^2} (\bW_2\otimes \bW_1)^\top \bGamma^*\FF^\top \UU (\bW_2\otimes \bW_1) +  \frac{1}{np_1^2p_2^2} (\bW_2\otimes \bW_1)^\top \UU^\top \UU (\bW_2\otimes \bW_1)\\
&= \frac{1}{n} (\bH_2\otimes\bH_1)\FF^\top \FF(\bH_2\otimes\bH_1)^\top  +  \frac{1}{np_1p_2} (\bW_2\otimes \bW_1)^\top \UU^\top \FF(\bH_2\otimes\bH_1)^\top \\
&\ \ \ \  +\frac{1}{np_1p_2}(\bH_2\otimes\bH_1)\FF^\top \UU (\bW_2\otimes \bW_1) +  \frac{1}{np_1^2p_2^2} (\bW_2\otimes \bW_1)^\top \UU^\top \UU (\bW_2\otimes \bW_1)\\
&=: \Romannum{1} +  \Romannum{2} +  \Romannum{3} +  \Romannum{4}.
\end{aligned}
\]
By Assumption \ref{asu:F} (b), \ref{asu:infer} (b), and the continuous mapping theorem we know $\Romannum{1} \xrightarrow{P} \bLambda_H^\top \bSigma_F \bLambda_H$.
Moreover, Assumption \ref{asu:infer} (a) indicates that $\Romannum{2}, \Romannum{3} = \mathcal \mathcal o_p(1)$, and  Assumption \ref{asu:infer} (c) indicates that $\Romannum{4}= \mathcal \mathcal o_p(1)$. Combining all the four parts gives the conclusion.
\end{proof}

\vspace{0.2cm}
\begin{proof}[Proof of Lemma  \ref{lem: F_I}]
\begin{equation}\label{eqn:part1}
\begin{aligned}
&\left[{\FF} - \hat{\FF}(\bH_2^{-1}\otimes\bH_1^{-1})^\top\right]^\top \hat{\FF}\\
=& \left[{\FF} - \hat{\FF}(\bH_2^{-1}\otimes\bH_1^{-1})^\top\right]^\top \FF(\bH_2\otimes\bH_1)^\top +  \left[{\FF} - \hat{\FF}(\bH_2^{-1}\otimes\bH_1^{-1})^\top\right]^\top \left[\hat{\FF} - \FF(\bH_2\otimes\bH_1)^\top \right]\\
=&  - \frac{1}{p_1\!p_2} (\bH_2^{-1}\otimes\bH_1^{-1}) (\bW_2\otimes\bW_1)^\top \UU^\top\FF(\bH_2\otimes\bH_1)^\top \\
&\quad -\frac{1}{(p_1\!p_2)^2} (\bH_2^{-1}\otimes\bH_1^{-1}) (\bW_2\otimes\bW_1)^\top \UU^\top \UU (\bW_2\otimes\bW_1)\\
= &  - \frac{1}{p_1\!p_2} (\bH_2^{-1}\otimes\bH_1^{-1}) \left[\sum_{i=1}^n \sum_{j=1}^{p_1\!p_2} [\bW_2^\top\otimes\bW_1^\top]_j \tilde{\bff_i}^\top \tilde{u}_{ij}\right](\bH_2\otimes\bH_1)^\top\\
&\quad -\frac{1}{(p_1\!p_2)^2} (\bH_2^{-1}\otimes\bH_1^{-1})  \left[\sum_{i=1}^n \sum_{j=1}^{p_1\!p_2} \sum_{k=1}^{p_1\!p_2} [\bW_2^\top\otimes\bW_1^\top]_j [\bW_2^\top\otimes\bW_1^\top]_k^\top \tilde{u}_{ij}\tilde{u}_{ik}\right]\\
=& - \frac{1}{p_1\!p_2} \mathcal \mathcal O_p(1) \mathcal \mathcal O_p(\sqrt{p_1\!p_2n}) \mathcal \mathcal O_p(1) -\frac{1}{(p_1\!p_2)^2}  \mathcal \mathcal O_p(1) \mathcal \mathcal O_p(np_1\!p_2)\\
=& \mathcal \mathcal O_p\left(\sqrt{\frac{n}{p_1\!p_2}} + \frac{n}{p_1\!p_2}\right),
\end{aligned}
\end{equation}
where the second identity is by substituting all $\hat{\FF}$ by 
\begin{equation}\label{eqn:FFhat_vec}
\hat{\FF} = \FF(\bH_2\otimes \bH_1)^\top + \frac{1}{p_1\!p_2} \UU(\bW_2\otimes\bW_1)
\end{equation}
according to equation \eqref{eqn:FFhat}, and the last identity is by Assumption \ref{asu:infer} (a) - (c) and Law of Large Number.

Moreover, Lemma \ref{lem:FF_lim} indicates that $\big(\hat{\FF}^\top\hat{\FF}\big)^{-1}=\mathcal \mathcal O_p(n^{-1})$. The conclusion follows thereafter.
\end{proof}

\subsection{Proof of Theorem \ref{thm:MCMR}}
\begin{proof}
By Proposition \ref{thm:ols-pre-trained}, we have that 
\[
\tilde\bGamma = \mathring{\bGamma} + \tilde{\bM} \UU + \mathcal \mathcal O_p(\frac{1}{\sqrt{np_1p_2}} + \frac{1}{p_1p_2}),
\]
where $\tilde{\bM}:=\big(\hat{\FF}^\top\hat{\FF}\big)^{-1}(\bH_2\otimes\bH_1)\FF^\top \in \RR^{k_1\!k_2\times n}$. 
Then
\begin{equation}\label{eqn:MR}
\hat{\bR} = \frac{1}{p_2k_2} \bE_{p_1p_2}\tilde{\bGamma}^\top \bE_{k_1k_2}^\top
= \mathring{c}_s \mathring{\bR} + \frac{1}{p_2k_2} \bE_{p_1p_2} (\tilde{\bM}\UU)^\top \bE_{k_1k_2}^\top + \mathcal \mathcal O_p(\frac{1}{\sqrt{np_1p_2}} + \frac{1}{p_1p_2}),
\end{equation}
and by the proof of Proposition \ref{thm:ols-pre-trained}, 
$$\sqrt{n} \left(\tilde{\bM} \UU\right)^\top \xrightarrow{\ d\ } \cM\cN \big(\bfm 0, \bPsi, (\bLambda_H\bSigma_F\bLambda_H^\top)^{-1}\bLambda_H \bPhi\bLambda_H^\top (\bLambda_H\bSigma_F\bLambda_H^\top)^{-1}\big).$$
By linear transformation of matrix Gaussian distribution, we have
\[
\begin{aligned}
&\sqrt{\frac{n}{p_2k_2}} \bE_{p_1p_2} (\tilde{\bM}\UU)^\top \bE_{k_1k_2}^\top\\
& \xrightarrow{\ d\ } \cM\cN \big(\bfm 0, \frac{1}{p_2} \bE_{p_1p_2}\bPsi \bE_{p_1p_2}^\top, \frac{1}{k_2}\bE_{k_1k_2}(\bLambda_H\bSigma_F\bLambda_H^\top)^{-1}\bLambda_H \bPhi\bLambda_H^\top (\bLambda_H\bSigma_F\bLambda_H^\top)^{-1} \bE_{k_1k_2}^\top\big).
\end{aligned}
\]
Since $\frac{1}{p_2} \bE_{p_1p_2}\bPsi \bE_{p_1p_2}^\top = \frac{1}{p_2} \left(\sum_{i=1}^{p_2} \bPsi_{ii} + \sum_{i\neq j} \bPsi_{ij}  \right) \rightarrow \frac{1}{p_2} \sum_{i=1}^{p_2} \bPsi_{ii}$ as $p_2$ goes to infinity, by Assumption \ref{asu:infer} (d), we have
\[
\begin{aligned}
&\sqrt{\frac{n}{p_2k_2}} \bE_{p_1p_2} (\tilde{\bM}\UU)^\top \bE_{k_1k_2}^\top\\
& \xrightarrow{\ d\ } \cM\cN \big(\bfm 0, \frac{1}{p_2} \sum_{i=1}^{p_2} \bPsi_{ii}, \frac{1}{k_2}\bE_{k_1k_2}(\bLambda_H\bSigma_F\bLambda_H^\top)^{-1}\bLambda_H \bPhi\bLambda_H^\top (\bLambda_H\bSigma_F\bLambda_H^\top)^{-1} \bE_{k_1k_2}^\top\big)\\
&= \cM\cN \big(\bfm 0, \mathcal \mathcal O_p(1)_{p_1\times p_1}, \mathcal \mathcal O_p(1)_{k_1\times k_1}\big).
\end{aligned}
\]
Bring back to equation \eqref{eqn:MR}, we have
\[
\begin{aligned}
&\sqrt{np_2k_2} \left(\hat{\bR} - \mathring{c}_s \mathring{\bR}\right)\\
&= \sqrt{\frac{n}{p_2k_2}} \bE_{p_1p_2} (\tilde{\bM}\UU)^\top \bE_{k_1k_2}^\top +  \mathcal \mathcal O_p\left(\sqrt{\frac{k_2}{p_1}} + \frac{\sqrt{nk_2}}{p_1\sqrt{p_2}} \right)
& \xrightarrow{\ d\ } \cM\cN \big(\bfm 0, \mathcal \mathcal O_p(1)_{p_1\times p_1}, \mathcal \mathcal O_p(1)_{k_1\times k_1}\big).
\end{aligned}
\]
With some linear shuffle transformation, we can similarly get that 
\[
\begin{aligned}
\sqrt{np_1k_1} \left(\hat{\bC} - \mathring{r}_s \mathring{\bC}\right)
\xrightarrow{\ d\ } \cM\cN \big(\bfm 0, \mathcal \mathcal O_p(1)_{p_2\times p_2}, \mathcal \mathcal O_p(1)_{k_2\times k_2}\big).
\end{aligned}
\]
\end{proof}

\subsection{Proof of Theorem \ref{thm:U_element}}
\begin{proof}
For clear presentation of the proof, we assume that $\mathring{c}_s=1$ and let $\hat\bU_i = \bX_i - \hat\bR \hat\bF_i\hat\bC / \hat r_s$. The proof is similar under the general setting without this assumption but more tedious.
By Theorem \ref{thm:MCMR}, we obtain that
\[
\sqrt{np_2k_2} \left(\hat{\bR} - \mathring{\bR}\right) \xrightarrow{\ d\ } \cM\cN\left(\bfm 0, \mathcal \mathcal O_p(1)_{p_1\times p_1}, \mathcal \mathcal O_p(1)_{k_1\times k_1}\right),
\]
which implies that $\left\|\hat{\bR}_{i\cdot} - \mathring{\bR}_{i\cdot}\right\|_2 = \mathcal \mathcal O_p\left(\frac{1}{{ \sqrt{np_2}}}\right)$, where $\hat{\bR}_{i\cdot}$ and $\mathring{\bR}_{i\cdot}$ are columns of the $i$-th row of $\hat{\bR}$ and $\mathring{\bR}$, respectively.
Also, it indicates that element-wise, we have
\[
\sqrt{np_2k_2} \left(\hat r_{ij} - \mathring{r}_{ij}\right) \xrightarrow{\ d\ } \cN(0, b_{ij}),\ \text{\ for } i\in[p_1] \text{ and } j\in[k_1],
\]
where $b_{ij}$ are some constants. Thus, $|\hat r_{ij} - \mathring{r}_{ij}| = \mathcal \mathcal O_p(1/\sqrt{np_2})$. 
Besides, Theorem \ref{thm:MCMR} implies that ${\left\|\hat{\bC}_{i\cdot} - \mathring{r}_s \mathring{\bC}_{i\cdot}\right\|_2} = \mathcal \mathcal O_p\left(\frac{1}{\sqrt{np_1}}\right)$. 
Hence, we have
\[
{\left\|\frac{1}{\hat r_s} \hat{\bC}_{i\cdot} - \mathring\bC_{i\cdot}\right\|_2} 
\leq \frac{1}{\mathring r_s}\left(\left\|\hat{\bC} - \mathring{r}_s\mathring{\bC}\right\|_2 + |\hat r_s - r_s|\left\|\hat{\bC}\right\|_2 \right) = \mathcal \mathcal O_p\left( \frac{1}{\sqrt{np_1}} + \frac{1}{\sqrt{np_2}} \right).
\]
Now we can bound $|\hat{u}_{i,jk}-u_{i,jk}|$ as follows.
\[
\begin{aligned}
|\hat{u}_{i,jk}-u_{i,jk}| &= \left|\frac{1}{\hat r_s} (\hat{\bR}_{j\cdot})^\top \hat{\bF}_i \hat{\bC}_{k\cdot} - (\bR^*_{j\cdot})^\top \bF_i \bC^*_{k\cdot}\right|\\
&\leq \left|\frac{1}{\hat r_s} (\hat{\bR}_{j\cdot})^\top \hat{\bF}_i \hat{\bC}_{k\cdot} - (\hat\bR_{j\cdot})^\top\hat\bF_i \mathring{\bC}_{k\cdot}\right| 
+ \left|(\hat{\bR}_{j\cdot})^\top \hat{\bF}_i \mathring{\bC}_{k\cdot} - (\hat\bR_{j\cdot})^\top \bH_1\bF_i\bH_2^\top\mathring{\bC}_{k\cdot}\right| \\
& \ \ \ \  +\left|(\hat{\bR}_{j\cdot})^\top \bH_1\bF_i\bC^*_{k\cdot} - (\mathring{\bR}_{j\cdot})^\top \bH_1\bF_i \bC^*_{k\cdot}\right|\\
&\leq \left\|(\hat{\bR}_{j\cdot})^\top \hat{\bF}_i\right\|_2 \left\|\frac{1}{\hat r_s} \hat\bC_{k\cdot} - \mathring{\bC}_{k\cdot}\right\|_2 + \left\|\hat\bR_{j\cdot}\right\|_2\left\|\hat\bF_i -\bH_1\bF_i\bH_2^\top\right\|_2 \left\|\mathring{\bC}_{k\cdot}\right\|_2 \\
& \ \ \ \ + \left\|\hat{\bR}_{j\cdot} - \mathring{\bR}_{j\cdot} \right\|_2 \left\|\bH_1\right\|_2 \left\|\bF_i\bC_{k\cdot}\right\|_2 \\ 
&= \mathcal \mathcal O_p\left(\frac{1}{\sqrt{np_1}} + \frac{1}{\sqrt{np_2}} + \frac{1}{\sqrt{p_1p_2}}  \right)
\end{aligned}
\]
\end{proof}

\section{FAMAR-LowRank with Observed $\bF_t$} \label{append:famar-observed-factor}

We first establish the results of the general optimization problem \eqref{eqn:convex} with $\bF_i$ and $\bU_i$ known.
We start with introducing some notations. Let $\bG_i:=\operatorname{diag}(\bF_i,\bU_i)\in\RR^{(k_1+p_1)\times (k_2+p_2)}$ be the block-diagonal matrix of observations for $i\in[n]$. Let $\bfm{\Theta}:=\operatorname{diag}(\bA,\bB)\in\RR^{(k_1+p_1)\times (k_2+p_2)}$ be the block-diagonal matrix of matrix parameters. Then, the operator $\mG_n:\RR^{(k_1+p_1)\times (k_2+p_2)}\rightarrow \RR^{n}$ is defined via $[\mG_n(\bfm{\Theta})]_i=\angles{\bG_i,\bfm{\Theta}}$. Similarly, define mapping $\mF:\RR^{k_1\times k_2}\rightarrow\RR^n$ via  $[\mF(\bA)]_i=\angles{\bA,\bF_i}$ and mapping $\mU:\RR^{p_1\times p_2}\rightarrow\RR^n$ via  $[\mU(\bB)]_i=\angles{\bB,\bU_i}$.
With these notations, the MFR model \eqref{eqn:famar} can be re-written as
\[
\by = \mG_n(\bfm{\Theta}^*) + \bfm{\varepsilon} = \mF(\bA^*)+ \mU(\bB^*) + \bfm{\varepsilon},
\]
where $\by=(y_1,\ldots,y_n)^\top$, $\bfm{\varepsilon}=(\varepsilon_1,\ldots,\varepsilon_n)^\top$, and $\bfm{\Theta^*}=\operatorname{diag}(\bA^*, \bB^*)$ is the true parameter matrix. In addition, define the operator $\mG_n^*:\RR^n\rightarrow \RR^{(k_1+p_1)\times (k_2+p_2)}$ by $\mG_n^*(\bfm{\varepsilon}):= \sum_{i=1}^n \varepsilon_i\bG_i$.
Moreover, considering the nuclear norm regularization on $\bB$, the following notations are needed. 
For any true $\bB^*\in\RR^{p_1\times p_2}$, let $r:=\operatorname{rank}(\bB^*)$. Matrix $\bB^*$ has a singular value decomposition of the form $\bB^* = \bV_1\bD\bV_2^\top$, where $\bV_1\in\RR^{p_1\times p_1}$ and $\bV_2\in\RR^{p_2\times p_2}$ are orthonormal matrices. Let $\bV_1^r\in\RR^{p_1\times r}$ and $\bV_2^r\in\RR^{p_2\times r}$ be matrices of singular vectors associated with the top $r$ singular values of $\bB^*$. Two sub-spaces of $\RR^{p_1\times p_2}$ are defined by
\[
\mathcal{A}\left(\bV_1^r, \bV_2^r\right):=\left\{\bfm\Delta_B \in \mathbb{R}^{p_1 \times p_2} \mid \operatorname{row}(\bfm\Delta_B) \subseteq \bV_2^r \text{ and }\operatorname{col}(\bfm\Delta_B) \subseteq \bV_1^r\right\}
\]
and
\[
\mathcal{B}\left(\bV_1^r, \bV_2^r\right):=\left\{\bfm\Delta_B \in \mathbb{R}^{p_1 \times p_2} \mid \operatorname{row}(\bfm\Delta_B) \perp \bV_2^r \text{ and }\operatorname{col}(\bfm\Delta_B) \perp \bV_1^r\right\},
\]
where row $(\bfm\Delta_B) \subseteq \mathbb{R}^{p_2}$ and $\operatorname{col}(\bfm\Delta_B) \subseteq \mathbb{R}^{p_1}$ are the row space and column space, respectively, of the matrix $\bfm\Delta_B$. The shorthand notations $\mathcal{A}^r$ and $\mathcal{B}^r$ are used when $\left(\bV_1^r, \bV_2^r\right)$ are already clear from the context.

With that, we are able to introduce an important technical condition called adjusted-restricted strong convexity condition (adjusted-RSC), which is a modification of the strong convexity condition (RSC) proposed by \citet{negahban2011estimation}.
Letting $\mC\subseteq\RR^{(k_1+p_1)\times (k_2+p_2)}$ denote the restricted set, we say that the operator $\mG_n$ satisfies adjusted-RSC over the set $\mC$ if there exists some $\kappa(\mG_n)>0$ such that
\[
\frac{1}{2n}\|\mG_n(\bDelta)\|^2 \geq \kappa(\mG_n)\|\bDelta\|_F^2 \ \text{ for all } \bDelta\in\mC.
\]
To specify the set of interest, let $\Pi_{\mathcal{B}^r}$ denote the projection operator onto the subspace $\mathcal{B}^r$, and let
\begin{equation}
\bfm\Delta_B^{\prime \prime}=\Pi_{\mathcal{B}^r}(\bfm\Delta),
\quad\text{and}\quad \bfm\Delta_B^{\prime}=\bfm\Delta_B-\bfm\Delta_B^{\prime \prime}.
\end{equation}
The set of low-rank matrices is defined by:
$$
\begin{aligned}
\mathcal{C}:= & \left\{\bfm\Delta=  
\begin{pmatrix}
\bfm\Delta_A &\bf 0 \\ 
\bf 0 & \bfm\Delta_B
\end{pmatrix}\mid
\bfm{\Delta}_A\in \mathbb{R}^{k_1 \times k_2}, 
\bfm\Delta_B\in \mathbb{R}^{p_1 \times p_2} ,\|\bfm\Delta_B^{\prime \prime}\|_* \leq 3\|\bfm\Delta_B^{\prime}\|_* + \|\bfm\Delta_A\|_*\right\}.
\end{aligned}
$$
The following theorem presents the Frobenius norm error bound for the FARMA estimator.

\begin{theorem}\label{thm:consistency1}
Suppose that the operator $\mG_n$ satisfies adjusted-RSC with constant $\kappa(\mG_n)>0$ over the set $\mC$, and that the tuning parameter $\lambda_n$ is chosen such that $\lambda_n\geq 2\|\mG_n^*(\bfm\varepsilon)\|_{2}/n$. Then any solution $\hat{\bfm\Theta}:=\begin{pmatrix}
\hat\bA &\bf 0 \\ 
\bf 0 & \hat\bB
\end{pmatrix}$ to \eqref{eqn:convex} satisfies
\vspace{0.25in}
\[
\|\bfm\Theta^*-\hat{\bfm\Theta}\|_F \leq \frac{\lambda_n}{\kappa(\mG_n)}\big(2\sqrt{k_1\wedge k_2} + 6\sqrt{2\operatorname{rank}(\bB^*)}\big).
\]
\end{theorem}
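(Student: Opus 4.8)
The plan is to run the standard $M$-estimation argument for decomposable regularizers, tailored to the block-diagonal structure in which only the $\bB$-block carries the nuclear-norm penalty. Write $\hat{\bfm\Delta}:=\hat{\bfm\Theta}-\bfm\Theta^*=\operatorname{diag}(\hat{\bfm\Delta}_A,\hat{\bfm\Delta}_B)$, with $\hat{\bfm\Delta}_A=\hat\bA-\bA^*$ and $\hat{\bfm\Delta}_B=\hat\bB-\bB^*$. First I would derive the basic inequality: since $\hat{\bfm\Theta}$ is optimal for \eqref{eqn:convex} and $\by=\mG_n(\bfm\Theta^*)+\bfm\varepsilon$, comparing the objective at $\hat{\bfm\Theta}$ with that at $\bfm\Theta^*$ and expanding $\tfrac{1}{2n}\|\bfm\varepsilon-\mG_n(\hat{\bfm\Delta})\|^2$ gives
\[
\frac{1}{2n}\|\mG_n(\hat{\bfm\Delta})\|^2 \leq \frac{1}{n}\angles{\mG_n^*(\bfm\varepsilon),\hat{\bfm\Delta}} + \lambda_n\big(\|\bB^*\|_*-\|\hat\bB\|_*\big).
\]

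Next I would control the noise term through duality between the nuclear and operator norms. Because $\mG_n^*(\bfm\varepsilon)=\sum_i\varepsilon_i\bG_i$ is block-diagonal and so is $\hat{\bfm\Delta}$, one has $\|\hat{\bfm\Delta}\|_*=\|\hat{\bfm\Delta}_A\|_*+\|\hat{\bfm\Delta}_B\|_*$, and hence
\[
\frac{1}{n}\big|\angles{\mG_n^*(\bfm\varepsilon),\hat{\bfm\Delta}}\big| \leq \frac{1}{n}\|\mG_n^*(\bfm\varepsilon)\|_2\,\|\hat{\bfm\Delta}\|_* \leq \frac{\lambda_n}{2}\big(\|\hat{\bfm\Delta}_A\|_*+\|\hat{\bfm\Delta}_B\|_*\big),
\]
where the last step uses the prescribed choice $\lambda_n\geq 2\|\mG_n^*(\bfm\varepsilon)\|_2/n$.

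The conceptual core is the decomposability step. Decompose $\hat{\bfm\Delta}_B=\hat{\bfm\Delta}_B^{\prime}+\hat{\bfm\Delta}_B^{\prime\prime}$ with $\hat{\bfm\Delta}_B^{\prime\prime}=\Pi_{\mathcal B^r}(\hat{\bfm\Delta}_B)$. Since $\bB^*$ has rank $r$ with row/column spaces spanned by $\bV_2^r,\bV_1^r$, while $\hat{\bfm\Delta}_B^{\prime\prime}\in\mathcal B^r$ has row/column spaces orthogonal to these, nuclear-norm decomposability gives $\|\bB^*+\hat{\bfm\Delta}_B^{\prime\prime}\|_*=\|\bB^*\|_*+\|\hat{\bfm\Delta}_B^{\prime\prime}\|_*$, whence by the triangle inequality $\|\bB^*\|_*-\|\hat\bB\|_*\leq\|\hat{\bfm\Delta}_B^{\prime}\|_*-\|\hat{\bfm\Delta}_B^{\prime\prime}\|_*$. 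Substituting this together with the noise bound into the basic inequality, and using $\|\hat{\bfm\Delta}_B\|_*\leq\|\hat{\bfm\Delta}_B^{\prime}\|_*+\|\hat{\bfm\Delta}_B^{\prime\prime}\|_*$, yields
\[
\frac{1}{2n}\|\mG_n(\hat{\bfm\Delta})\|^2 \leq \frac{\lambda_n}{2}\|\hat{\bfm\Delta}_A\|_* + \frac{3\lambda_n}{2}\|\hat{\bfm\Delta}_B^{\prime}\|_* - \frac{\lambda_n}{2}\|\hat{\bfm\Delta}_B^{\prime\prime}\|_*.
\]
Discarding the nonnegative left side shows $\|\hat{\bfm\Delta}_B^{\prime\prime}\|_*\leq 3\|\hat{\bfm\Delta}_B^{\prime}\|_*+\|\hat{\bfm\Delta}_A\|_*$, i.e. $\hat{\bfm\Delta}\in\mathcal C$.

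Finally, since $\hat{\bfm\Delta}\in\mathcal C$, I would invoke adjusted-RSC to lower-bound the left side by $\kappa(\mG_n)\|\hat{\bfm\Delta}\|_F^2$, and combine with the display above (after dropping the last, negative term) to get $\kappa(\mG_n)\|\hat{\bfm\Delta}\|_F^2\leq \tfrac{\lambda_n}{2}\|\hat{\bfm\Delta}_A\|_*+\tfrac{3\lambda_n}{2}\|\hat{\bfm\Delta}_B^{\prime}\|_*$. The closing quantitative step converts nuclear to Frobenius norms by rank counting: $\operatorname{rank}(\hat{\bfm\Delta}_A)\leq k_1\wedge k_2$ gives $\|\hat{\bfm\Delta}_A\|_*\leq\sqrt{k_1\wedge k_2}\,\|\hat{\bfm\Delta}_A\|_F$, and $\operatorname{rank}(\hat{\bfm\Delta}_B^{\prime})\leq 2\operatorname{rank}(\bB^*)$ gives $\|\hat{\bfm\Delta}_B^{\prime}\|_*\leq\sqrt{2\operatorname{rank}(\bB^*)}\,\|\hat{\bfm\Delta}_B^{\prime}\|_F$; bounding each block's Frobenius norm by $\|\hat{\bfm\Delta}\|_F$ and dividing through by $\|\hat{\bfm\Delta}\|_F$ produces a bound of exactly the stated form, with the constants $2$ and $6$ leaving ample slack. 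The step I expect to be most delicate is the decomposability/cone bookkeeping: because $\bA$ is unpenalized, the $\|\hat{\bfm\Delta}_A\|_*$ term from the noise bound is never cancelled and must be carried onto the right-hand side of the cone inequality (matching the definition of $\mathcal C$), and the rank bound $\operatorname{rank}(\hat{\bfm\Delta}_B^{\prime})\leq 2\operatorname{rank}(\bB^*)$ must be read off from the subspace geometry of $\mathcal A^r,\mathcal B^r$.
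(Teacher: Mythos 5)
Your proposal is correct and follows essentially the same route as the paper: the basic optimality inequality, the Hölder/duality bound on the noise term using $\lambda_n\geq 2\|\mG_n^*(\bfm\varepsilon)\|_2/n$, the decomposability identity $\|\bB^*+\bfm\Delta_B''\|_*=\|\bB^*\|_*+\|\bfm\Delta_B''\|_*$ to establish the cone condition $\bfm\Delta\in\mC$ (the paper isolates this, together with $\operatorname{rank}(\bfm\Delta_B')\leq 2\operatorname{rank}(\bB^*)$, as Lemma \ref{lem:general}), then adjusted-RSC and the $\|\cdot\|_*\leq\sqrt{\operatorname{rank}}\,\|\cdot\|_F$ conversion. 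The only cosmetic difference is that you retain and then drop the negative $-\tfrac{\lambda_n}{2}\|\bfm\Delta_B''\|_*$ term, giving coefficients $\tfrac12$ and $\tfrac32$ where the paper instead re-expands $\|\bfm\Delta_B\|_*\leq 4\|\bfm\Delta_B'\|_*+\|\bfm\Delta_A\|_*$ to arrive at $2$ and $6$; your version is slightly tighter and still implies the stated bound.
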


\begin{remark}
By taking $\lambda_n\asymp \big\|\mG_n^*(\bfm\varepsilon)\big\|_2/n$, Theorem \ref{thm:consistency1} indicates that 
\vspace{0.12in}
\[
\|\bfm\Theta^*-\hat{\bfm\Theta}\|_F = \mathcal \mathcal O_p\left(\frac{\big\|\mG_n^*(\bfm\varepsilon)\big\|_2}{n\kappa(\mG_n)}\right)
\]
with $k_1\asymp k_1 \asymp \rank(\bB^*)\asymp 1$.
The following proposition provides the convergence rate under this case with sub-Gaussian assumptions.
\end{remark}

\begin{assumption} \label{asu:subg}
For $i\in[n]$, assume that
\begin{enumerate}[label=(\roman*)]
\item $\EE \varepsilon_i\bG_i = \bfm 0$;
\item $\varepsilon_i$ is $\sigma$-sub-Gaussian, and there exists a constant $N$ such that $\max_{i=1,\ldots,n} \|\bG_i\|\leq N$.
\end{enumerate}
\end{assumption}

\begin{proposition}\label{prop:err_order0}
Suppose that the operator $\mG_n$ satisfies adjusted-RSC with constant $\kappa(\mG_n)>0$ over the set $\mC$, and that the tuning parameter $\lambda_n$ is chosen such that $\lambda_n\asymp 2\|\mG_n^*(\bfm\varepsilon)\|_{2}/n$. In addition, suppose Assumption \ref{asu:subg} holds and $k_1\asymp k_1 \asymp \rank(\bB^*)\asymp 1$.
Then we have
\[
\|\bfm\Theta^*-\hat{\bfm\Theta}\|_F = \mathcal \mathcal O_p\left(\sqrt{\frac{\log(k_1+k_2+p_1+p_2)}{n}} \right).
\]
\end{proposition}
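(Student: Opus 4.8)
The plan is to combine the deterministic error bound of Theorem~\ref{thm:consistency1} with a sharp probabilistic bound on the stochastic term $\mG_n^*(\bfm\varepsilon)=\sum_{i=1}^n\varepsilon_i\bG_i$. Since the hypotheses here assume adjusted-RSC with constant $\kappa(\mG_n)>0$ and take $\lambda_n\asymp 2\norm{\mG_n^*(\bfm\varepsilon)}_2/n$, Theorem~\ref{thm:consistency1} immediately yields
\[
\norm{\bfm\Theta^*-\hat{\bfm\Theta}}_F \;\lesssim\; \frac{\norm{\mG_n^*(\bfm\varepsilon)}_2}{n\,\kappa(\mG_n)}\big(2\sqrt{k_1\wedge k_2}+6\sqrt{2\rank(\bB^*)}\big).
\]
Under $k_1\asymp k_2\asymp\rank(\bB^*)\asymp 1$ the dimension factor is $O(1)$ and $\kappa(\mG_n)$ is a positive constant, so the entire problem reduces to showing
\[
\frac{\norm{\mG_n^*(\bfm\varepsilon)}_2}{n}=\Op{\sqrt{\frac{\log(k_1+k_2+p_1+p_2)}{n}}}.
\]

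Next I would exploit the block-diagonal structure $\bG_i=\operatorname{diag}(\bF_i,\bU_i)$, which gives $\mG_n^*(\bfm\varepsilon)=\operatorname{diag}\big(\sum_{i}\varepsilon_i\bF_i,\ \sum_{i}\varepsilon_i\bU_i\big)$ and hence $\norm{\mG_n^*(\bfm\varepsilon)}_2=\max\{\norm{\sum_i\varepsilon_i\bF_i}_2,\ \norm{\sum_i\varepsilon_i\bU_i}_2\}$. By Assumption~\ref{asu:subg}, the summands $\varepsilon_i\bG_i$ are i.i.d., mean zero ($\EE[\varepsilon_i\bG_i]=\bfm 0$), with sub-Gaussian $\varepsilon_i$ and $\norm{\bG_i}\le N$ almost surely. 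I would then apply a matrix Bernstein inequality to each block. The key quantities are the matrix variance proxy, $\max\{\norm{\sum_i\EE[\varepsilon_i^2\bU_i\bU_i^\top]},\ \norm{\sum_i\EE[\varepsilon_i^2\bU_i^\top\bU_i]}\}\le C n\sigma^2N^2\asymp n$, controlled using $\EE\varepsilon_i^2\lesssim\sigma^2$ and $\norm{\bU_i}\le N$, and the effective dimension $p_1+p_2$, which enters only logarithmically. Taking the deviation level $t\asymp\sqrt{n\log(p_1+p_2)}$ then gives $\norm{\sum_i\varepsilon_i\bU_i}_2=\Op{\sqrt{n\log(p_1+p_2)}}$, and the identical argument on the smaller block gives $\norm{\sum_i\varepsilon_i\bF_i}_2=\Op{\sqrt{n\log(k_1+k_2)}}$. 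Combining the two blocks and substituting back produces the claimed rate.

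The main obstacle is making the matrix Bernstein step rigorous despite the unboundedness of the sub-Gaussian errors $\varepsilon_i$: the standard form of the inequality requires a uniform almost-sure norm bound on the summands, whereas here $\norm{\varepsilon_i\bG_i}=|\varepsilon_i|\,\norm{\bG_i}$ has only a sub-Gaussian tail. I would handle this by a truncation argument, splitting $\varepsilon_i=\varepsilon_i\one\{|\varepsilon_i|\le T\}+\varepsilon_i\one\{|\varepsilon_i|>T\}$ with $T\asymp\sqrt{\log n}$; the truncated part obeys matrix Bernstein with norm bound $R\asymp N\sqrt{\log n}$, and the tail part contributes negligibly by a union bound over the sub-Gaussian tails. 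One must then verify that the Bernstein exponent is governed by the variance term: since $Rt\asymp N\sqrt{\log n}\cdot\sqrt{n\log(p_1+p_2)}=o(n)$ for $n$ large relative to $\log n\cdot\log(p_1+p_2)$, the variance proxy $\asymp n$ dominates, so the logarithmic---rather than the full polynomial---dependence on dimension is preserved. This is precisely the step that yields $\sqrt{\log(\cdot)/n}$ rather than the cruder $\sqrt{(p_1+p_2)/n}$ that a naive $\varepsilon$-net bound on the operator norm would give.
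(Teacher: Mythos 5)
Your proposal is correct and follows the same overall skeleton as the paper --- reduce via Theorem~\ref{thm:consistency1} to bounding $\norm{\mG_n^*(\bfm\varepsilon)}_2/n$, then apply a matrix concentration inequality --- but the probabilistic step is executed quite differently. The paper forms the Hermitian dilation $\bQ_i=\diag(\bG_i,\bG_i^\top)\in\RR^{d\times d}$ with $d=k_1+k_2+p_1+p_2$, observes that $\varepsilon_i\bQ_i$ is a sub-Gaussian matrix summand with variance proxy $c^2\sigma^2N^2\bI_d$ (using $\norm{\bG_i}\le N$ and the $\sigma$-sub-Gaussianity of $\varepsilon_i$), and invokes the matrix Hoeffding inequality for sub-Gaussian series in one line to get $\norm{\mG_n^*(\bfm\varepsilon)}/n\le C_1\sqrt{\log d/n}$ with probability $1-d^{-2}$; it does not split the two diagonal blocks and does not truncate. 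You instead split $\mG_n^*(\bfm\varepsilon)$ into its $\bF$- and $\bU$-blocks (harmless, and the maximum of the two block bounds recovers the same $\log d$ up to constants) and run matrix Bernstein after truncating $\varepsilon_i$ at level $T\asymp\sqrt{\log n}$. Your route is more laborious but arguably more self-contained: it avoids relying on a sub-Gaussian matrix Hoeffding statement, which for random $\bG_i$ that are merely uncorrelated with (rather than independent of) $\varepsilon_i$ requires some care. The price is twofold: (i) you pick up the mild side condition $\log n\cdot\log(p_1+p_2)=o(n)$ so that the Bernstein deviation term $R\log d$ is dominated by the variance term --- a condition the paper's argument does not need and which is not among the proposition's hypotheses, though it holds in all regimes of interest; and (ii) you should also recenter after truncation, i.e., control the bias $\norm{\EE[\varepsilon_i\bbone\{|\varepsilon_i|>T\}\bG_i]}\le N\,\EE\big[|\varepsilon_i|\bbone\{|\varepsilon_i|>T\}\big]$, which is exponentially small in $T^2$ but is a step your sketch omits. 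With those two points patched, your argument delivers the stated rate.
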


\subsection{Proof of Theorem \ref{thm:consistency1}}
We start with providing a Lemma on the estimation error.
\begin{lemma}\label{lem:general}
Let $\bV_1^r\in\RR^{p_1\times r}$, and $\bV_2^r\in\RR^{p_2\times r}$ be matrices consisting of the top $r$ left and right (respectively) singular vectors of $\bB^*$. Then there exists $ \bfm\Delta_B = \bfm\Delta_B^\prime + \bfm\Delta_B^{\prime\prime}$ of the error $\bfm\Delta_B$ such that:
\begin{enumerate}[label={(\alph*)}]
\item $\operatorname{rank}(\bfm\Delta_B^\prime)\leq 2r$;
\item if $\lambda_n \geq 2\left\|\mG_n^*(\bfm{\varepsilon})\right\|_{2} / n$, then the nuclear norm of $\bfm\Delta_B^{\prime \prime}$ is bounded as
$$
\left\| \bfm\Delta_B^{\prime \prime}\right\|_* \leq 3\left\|\bfm\Delta_B^{\prime}\right\|_*+\left\|\bfm\Delta_A\right\|_*
$$
\end{enumerate}
\end{lemma}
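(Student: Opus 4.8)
The plan is to prove the two assertions by combining a purely linear-algebraic description of the projection onto $\mathcal{B}^r$ (for part (a)) with the standard ``basic inequality'' argument driven by the optimality of $(\hat\bA,\hat\bB)$ together with decomposability of the nuclear norm (for part (b)). First I would fix the decomposition explicitly by setting $\bfm\Delta_B'' := \Pi_{\mathcal{B}^r}(\bfm\Delta_B) = (\bI_{p_1} - \bV_1^r {\bV_1^r}^\top)\,\bfm\Delta_B\,(\bI_{p_2} - \bV_2^r{\bV_2^r}^\top)$ and $\bfm\Delta_B' := \bfm\Delta_B - \bfm\Delta_B''$. Expanding the product shows $\bfm\Delta_B' = \bV_1^r{\bV_1^r}^\top\bfm\Delta_B + (\bI_{p_1}-\bV_1^r{\bV_1^r}^\top)\bfm\Delta_B\,\bV_2^r{\bV_2^r}^\top$, i.e. the sum of a matrix whose column space lies in the range of $\bV_1^r$ and a matrix whose row space lies in the range of $\bV_2^r$; each of these spaces has dimension at most $r$, so $\operatorname{rank}(\bfm\Delta_B') \le 2r$, which is part (a). This uses only that $\bV_1^r,\bV_2^r$ have $r$ columns.

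For part (b) I would start from the optimality of $(\hat\bA,\hat\bB)$ for \eqref{eqn:convex}, giving the basic inequality $f_n(\hat\bA,\hat\bB) + \lambda_n\|\hat\bB\|_* \le f_n(\bA^*,\bB^*) + \lambda_n\|\bB^*\|_*$. Using the re-parametrization $y_i = \langle \bG_i,\bfm\Theta^*\rangle + \varepsilon_i$ and expanding the quadratic $f_n$ about $\bfm\Theta^*$, the left-hand difference equals $\frac{1}{2n}\|\mG_n(\bfm\Delta)\|^2 - \frac{1}{n}\langle \mG_n^*(\bfm{\varepsilon}), \bfm\Delta\rangle$ with $\bfm\Delta = \operatorname{diag}(\bfm\Delta_A,\bfm\Delta_B)$. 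Dropping the nonnegative quadratic term, applying the trace-duality bound $|\langle \mG_n^*(\bfm{\varepsilon}),\bfm\Delta\rangle| \le \|\mG_n^*(\bfm{\varepsilon})\|_2\,\|\bfm\Delta\|_*$, the block-diagonal identity $\|\bfm\Delta\|_* = \|\bfm\Delta_A\|_* + \|\bfm\Delta_B\|_*$, and the hypothesis $\lambda_n \ge 2\|\mG_n^*(\bfm{\varepsilon})\|_2/n$, I obtain $\|\bB^*\|_* - \|\hat\bB\|_* \ge -\tfrac{1}{2}(\|\bfm\Delta_A\|_* + \|\bfm\Delta_B\|_*)$. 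Separately, decomposability of the nuclear norm with respect to $(\mathcal{A}^r,\mathcal{B}^r)$ gives $\|\bB^* + \bfm\Delta_B''\|_* = \|\bB^*\|_* + \|\bfm\Delta_B''\|_*$ (since $\bB^*\in\mathcal{A}^r$ and $\bfm\Delta_B''\in\mathcal{B}^r$), so by the triangle inequality $\|\hat\bB\|_* \ge \|\bB^*\|_* + \|\bfm\Delta_B''\|_* - \|\bfm\Delta_B'\|_*$. Chaining the two displays and using $\|\bfm\Delta_B\|_* \le \|\bfm\Delta_B'\|_* + \|\bfm\Delta_B''\|_*$ yields $\tfrac12\|\bfm\Delta_B''\|_* \le \tfrac32\|\bfm\Delta_B'\|_* + \tfrac12\|\bfm\Delta_A\|_*$, which is exactly the claimed cone inequality after multiplying by $2$.

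The main obstacle I anticipate is correctly propagating the contribution of $\bfm\Delta_A$ through the argument: the regularizer acts only on $\bB$, yet the noise inner product $\langle \mG_n^*(\bfm{\varepsilon}),\bfm\Delta\rangle$ couples both blocks, so I must use the block-diagonal structure of $\bG_i$, $\bfm\Theta$, and $\mG_n^*(\bfm{\varepsilon})$ to split the nuclear norm additively and to carry the $\|\bfm\Delta_A\|_*$ term (rather than discard it) all the way into the final cone condition. A secondary point to verify carefully is that the decomposability identity is applied with $\bB^*$ genuinely lying in $\mathcal{A}^r$; this is precisely where the hypothesis that $\bV_1^r,\bV_2^r$ consist of the top-$r$ singular vectors of $\bB^*$ is essential, and it is what makes the orthogonality between $\mathcal{A}^r$ and $\mathcal{B}^r$ translate into the additive behavior of $\|\cdot\|_*$.
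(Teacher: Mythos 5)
Your proposal is correct and follows essentially the same route as the paper's proof: the same projection-based decomposition $\bfm\Delta_B''=\Pi_{\mathcal{B}^r}(\bfm\Delta_B)$ with the two-term rank-$\le 2r$ splitting for part (a), and the same basic-inequality plus nuclear-norm-decomposability argument (trace duality, the block-diagonal additivity $\|\bfm\Delta\|_*=\|\bfm\Delta_A\|_*+\|\bfm\Delta_B\|_*$, and the choice of $\lambda_n$) for part (b). The only difference is cosmetic — you work with the projectors $\bV_j^r{\bV_j^r}^\top$ directly while the paper writes the same decomposition in block form via $\bH=\bV_1^\top\bfm\Delta_B\bV_2$.
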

The proof of the Lemma will be provided shortly after, and with that, we are ready to prove Theorem \ref{thm:consistency1}.

\begin{proof}[Proof of Theorem \ref{thm:consistency1}]
Denote $\tilde{\operatorname{vec}}(\bG_i) = (\operatorname{vec}(\bF_i)^\top, \operatorname{vec}(\bU_i)^\top)^\top$ and $\tilde{\operatorname{vec}}(\bfm{\Theta}) = (\operatorname{vec}(\bA)^\top, \operatorname{vec}(\bB)^\top)^\top$. Then $[\mG_n(\bfm{\Theta})]_i = \tilde{\operatorname{vec}}(\bG_i)^\top \tilde{\operatorname{vec}}(\bfm{\Theta})$. By optimality, we have
\[
\frac{1}{2n}\big\|\by - \mG_n(\hat{\bfm\Theta})\big\|^2 + \lambda_n\|\hat{\bB}\|_* \leq \frac{1}{2n}\big\|\by - \mG_n({\bfm\Theta}^*)\big\|^2 + \lambda_n\|{\bB}^*\|_*.
\]
Letting $\bfm\Delta_B:=\bB^*-\hat\bB$, $\bfm\Delta_A:=\bA^*-\hat\bA$, and $\bfm\Delta:=\bfm\Theta^*-\hat{\bfm\Theta}$, we have
\begin{equation}\label{eqn:temp1}
\begin{aligned}
\frac{1}{2n}\|\mG_n(\bfm\Delta)\|^2 & \leq - \frac{1}{n} \angles{\bfm\varepsilon, \mG_n(\bfm\Delta)} + \lambda_n \big(\|\hat\bB + \bfm\Delta_B\|_* - \|\hat\bB\|_*\big)\\
& = \frac{1}{n}\big|\angles{\mG_n^*(\bfm\varepsilon),\bfm\Delta}\big| + \lambda_n \big(\|\hat\bB + \bfm\Delta_B\|_* - \|\hat\bB\|_*\big) \\
&\leq \frac{1}{n}\|\mG_n^*(\bfm\varepsilon)\|_2 \|\bfm\Delta\|_* + \lambda_n \big(\|\hat\bB + \bfm\Delta_B\|_* - \|\hat\bB\|_*\big),
\end{aligned}
\end{equation}
where the first identity is by definition, and the second inequality is by Hölder's inequality. Lemma \ref{lem:general} (b) shows that $\bfm\Delta\in\mC$. Thus, by adjusted-RSC, we have 
\begin{equation}\label{eqn:temp2}
\frac{1}{2n}\|\mG_n(\bfm\Delta)\|^2 \ge \kappa(\mG_n)\|\bfm\Delta\|_F^2.
\end{equation}
Combining inequality \eqref{eqn:temp1} and \eqref{eqn:temp2} gives
\[
\begin{aligned}
\kappa(\mG_n)\|\bfm\Delta\|_F^2 & \leq \frac{1}{2n}\|\mG_n(\bfm\Delta)\|^2\\
&\leq \frac{1}{n}\|\mG_n^*(\bfm\varepsilon)\|_2 \|\bfm\Delta\|_* + \lambda_n \big(\|\hat\bB + \bfm\Delta_B\|_* - \|\hat\bB\|_*\big)\\
&\leq \lambda_n \big(\frac{1}{2}\|\bfm\Delta\|_* + \|\bfm\Delta_B\|_* \big)\\
& = \lambda_n \big(\frac{1}{2}\|\bfm\Delta_A\|_* + \frac{3}{2} \|\bfm\Delta_B\|_* \big)
\end{aligned}
\]
where the third inequality is by the condition that $\lambda_n\geq \frac{2}{n}\|\mG_n^*(\bfm\varepsilon)\|_2$ and the triangular inequality. By triangular inequality and Lemma \ref{lem:general}, we have 
\[
\|\bfm\Delta_B\|_* \leq \|\bfm\Delta_B^\prime\|_* +\|\bfm\Delta_B^{\prime\prime}\|_* \leq 4\|\bfm\Delta_B^\prime\|_* + \|\bfm\Delta_A\|_*
\]
which leads to 
\[
\kappa(\mG_n)\|\bfm\Delta\|_F^2 \leq \lambda_n \big(2\|\bfm\Delta_A\|_* + 6 \|\bfm\Delta_B^\prime\|_* \big).
\]
The proof is completed by the fact that for any matrix $\bD$, $\|\bD\|_*\leq \sqrt{\operatorname{rank}(\bD)}\|\bD\|_F$, which leads to 
\[
\begin{aligned}
\|\bfm\Delta\|_F^2 & \leq \frac{\lambda_n}{\kappa(\mG_n)} \big(2\sqrt{k_1\wedge k_2} \|\bfm\Delta_A\|_F + 6\sqrt{2\operatorname{rank}(\bB^*)} \|\bfm\Delta_B^\prime\|_F \big)\\
&{ \leq \frac{\lambda_n}{\kappa(\mG_n)}\big(2\sqrt{k_1\wedge k_2} + 6\sqrt{2\operatorname{rank}(\bB^*)} \big)\|\bfm\Delta\|_F}.
\end{aligned}
\]
\end{proof}

\begin{proof}[Proof of Lemma \ref{lem:general}]
Part (a) of the Lemma was proved in \citet{recht2010guaranteed} and \citet{negahban2011estimation}. The proof is provided here for completeness. Recall that the SVD of $\bB^*$ is $\bB^* = \bV_1\bD\bV_2^\top$, $\bV_1^r\in\RR^{p_1\times r}$ (resp. $\bV_2^r\in\RR^{p_2\times r}$) is the matrix of singular vectors associated with the top $r$ left (resp. right) singular values of $\bB^*$. 
Let $\bH =\bV_1^\top \bfm\Delta_B \bV_2 \in\RR^{p_1\times p_2}$, and write it in block form as
\[
\bH = \begin{pmatrix}
\bH_{11} &\bH_{12} \\ 
\bH_{21} & \bH_{22}
\end{pmatrix}, \text{ where } \bH_{11} \in \RR^{r\times r} \text{ and } \bH_{22} \in \RR^{(p_1-r)\times(p_2-r)}.
\]
Let $\bfm\Delta^{\prime\prime} := \bV_1\begin{pmatrix}
\bf 0 &\bf 0 \\ 
\bf 0 & \bH_{22}
\end{pmatrix} \in \cB(\bV_1^r, \bV_2^r$), and $\bfm\Delta^\prime := \bfm\Delta-\bfm\Delta^{\prime\prime}$.
Note that 
\[
\rank(\bfm\Delta^\prime) = \rank\begin{pmatrix}
\bH_{11} &\bH_{12} \\ 
\bH_{21} & \bf 0
\end{pmatrix} \leq \rank\begin{pmatrix}
\bH_{11} &\bH_{12} \\ 
\bf 0 & \bf 0
\end{pmatrix} + \rank\begin{pmatrix}
\bH_{11} &\bf 0 \\ 
\bH_{21} & \bf 0
\end{pmatrix} \leq 2r.
\]
For part (b), we have derived in \eqref{eqn:temp1} that
\begin{equation}\label{eqn:lemA1}
0\leq \frac{1}{2n}\|\mG_n(\bfm\Delta)\|^2 \leq \frac{1}{n}\|\mG_n^*(\bfm\varepsilon)\|_2 \|\bfm\Delta\|_* + \lambda_n \big(\|\bB^*\|_* - \|\hat\bB\|_*\big).
\end{equation}
By construction of $\cA^r$ and $\cB^r$, we know $\bB^* = \Pi_{\cB^r}(\bB^*)$, and
\[
\left\|\bB^*+\bfm\Delta_B^{\prime\prime}\right\|_* = \|\bB^*\|_*+\|\bfm\Delta_B^{\prime\prime}\|_*.
\]
Hence
\[
\begin{aligned}
\left\|\hat\bB\right\|_* = \left\|\bB^* + \bfm\Delta_B^{\prime\prime} + \bfm\Delta_B^{\prime} \right\|_* 
\geq \left\|\bB^* + \bfm\Delta_B^{\prime\prime} \right\|_* -\left\|\bfm\Delta_B^{\prime}\right\|_*
=\|\bB^*\|_* + \|\bfm\Delta_B^{\prime\prime}\|_* - \|\bfm\Delta_B^{\prime}\|_*.
\end{aligned}
\]
Thus
\[
\|\bB^*\|_*-\|\hat\bB\|_* \leq \|\bfm\Delta_B^{\prime}\|_* - \|\bfm\Delta_B^{\prime\prime}\|_*.
\]
Substituting that to \eqref{eqn:lemA1}, we have 
\[
0 \leq \frac{1}{n}\|\mG_n^*(\bfm\varepsilon)\|_2 \|\bfm\Delta\|_* + \lambda_n \big(\|\bfm\Delta_B^{\prime}\|_* - \|\bfm\Delta_B^{\prime\prime}\|_*\big).
\]
Since $\lambda_n\geq 2\|\mG_n^*(\bfm\varepsilon)\|_2/n$, that leads to
\[
0\leq \|\bfm\Delta_B^{\prime}\|_* - \|\bfm\Delta_B^{\prime\prime}\|_* + \frac{1}{2}\|\bfm\Delta\|_* 
= \|\bfm\Delta_B^{\prime}\|_* - \|\bfm\Delta_B^{\prime\prime}\|_* + \|\bfm\Delta_B\|_* +\|\bfm\Delta_A\|_*
\leq \frac{3}{2}\|\bfm\Delta_B^{\prime}\|_* - \frac{1}{2}\|\bfm\Delta_B^{\prime\prime}\|_* +\|\bfm\Delta_A\|_*,
\]
i.e. 
\[
\|\bfm\Delta_B^{\prime\prime}\|_* \leq 3\|\bfm\Delta_B^{\prime}\|_* + \|\bfm\Delta_A\|_*.
\]
\end{proof}

\subsection{Proof of Proposition \ref{prop:err_order0}}
\begin{proof}
Let $\bQ_i:=\diag(\bG_i,\bG_i^\top)\in\RR^{d\times d}$, where $d:=k_1+k_2+p_1+p_2$. Then $\|\bQ_i\|_2 = \|\bG_i\|_2$, and by the assumptions, $\varepsilon_i\bQ_i$ is sub-Gaussian with parameter $\bV=c^2\sigma^2 N^2 \bI_d$, where $c$ is some universal constant. Hence, by Matrix Hoeffding's inequality, we have
\begin{equation}\label{eqn:errorder_Gepsilon}
\PP\left(\left\|\frac{1}{n}\mG_n^*(\bfm\varepsilon)\right\| \geq C_1\sqrt{\frac{\log d}{n}}\right) \leq d^{-2}
\end{equation}
for $C_1$ large enough. The conclusion follows directly from Theorem \ref{thm:consistency1}.
\end{proof}

\section{FAMAR-LowRank with Estimated $\bF_i$} \label{append:famar-estimated-factor}

Let $\tilde\by := (\bI_n - \hat{\PP})\by$ denote the residuals of the response vector $\by$ after projecting onto the column space of $\hat{\FF}$, where $\hat{\PP}:=\hat{\FF}(\hat{\FF}^\top \hat{\FF})^{-1}\hat{\FF}^\top$ is the corresponding projection matrix. Then $\hat{\UU} = (\bI_n - \hat{\PP})\XX$. Thus, $\hat{\FF}^\top \hat{\UU} = \bzero$, and it is straightforward to verify that the solution of \eqref{eqn:convex} is equivalent to
\[
\begin{aligned}
&\hat\bB \in \argmin_{\bB\in\RR^{p_1\times p_2}}  \left\{\frac{1}{2n}\sum_{i=1}^n \left(\tilde y_i - \angles{\bB,\hat{\bU}_i} \right)^2 + \lambda_n\|\bB\|_{*}\right\}, \\
& \vect(\hat\bA) =  \big(\hat{\FF}^\top \hat{\FF}\big)^{-1}\hat{\FF}^\top \by. 
\end{aligned}
\]

\subsection{Consistency of $\hat\bA$}

\begin{proof}[Proof of Theorem \ref{thm:consistency2} Part I]
Let $\bPhi^*:=\bA^* - {\bR^*}^\top\bB^*\bC^*\in\RR^{k_1\times k_2}$, $\xbar{\bPhi}^*:=(\bH_1^{-1})^\top \bPhi^*\bH_2^{-1} \in\RR^{k_1\times k_2}$. 
Vectorizing them gives 
\[
\begin{aligned}
\vect(\bPhi^*) &= \vect(\bA^*) - (\bC^*\otimes\bR^*)^\top \vect(\bB^*), \\
\vect(\xbar{\bPhi}^*) &= (\bH_2^{-1}\otimes\bH_1^{-1})^{\top}\vect(\bPhi^*).
\end{aligned}
\]
Recall that the true MFR \eqref{eqn:famar} is 
\[
y_i = \angles{\bA^*, \bF_i} + \angles{\bB^*, \bU_i} + \eps_i.
\]
Thus, we have
\[
y_i = \angles{\bF_i,\bPhi^*} + \angles{\bX_i, \bB^*} + \varepsilon_i.
\]
By the definition of $\hat\UU$ and MFM \eqref{eqn:famar}, it can be equivalently rewritten as
\[
\by = {\FF}\vect(\bPhi^*) + \big(\hat{\FF}\hat{\bGamma}^\top + \hat{\UU}\big) \vect(\bB^*) + \bfm\varepsilon,
\]
where $\hat\bGamma := \hat\bC \otimes \hat\bR$.
We are now ready to calculate the error of $\hat\bA$ by
\[
\begin{aligned} 
&\vect(\hat\bA) - (\bH_2^{-1}\otimes\bH_1^{-1})^\top\vect(\bA^*)\\
=& \big(\hat{\FF}^\top\hat{\FF} \big)^{-1}\hat{\FF}^\top \left({\FF}\vect(\bPhi^*) + \hat{\FF}\hat{\bGamma}^\top\vect(\bB^*) + \bfm\varepsilon\right)\\
&~~~ - \left[(\bH_2^{-1}\otimes\bH_1^{-1})^\top\vect(\bPhi^*) -  (\bH_2^{-1}\otimes\bH_1^{-1})^\top(\bC^*\otimes\bR^*)^\top\vect(\bB^*)\right] \\
=& \left[\big(\hat{\FF}^\top\hat{\FF} \big)^{-1}\hat{\FF}^\top{{\FF}} - (\bH_2^{-1}\otimes\bH_1^{-1})^\top\right] \vect(\bPhi^*)\\
&~~~ + \left[\hat{\bGamma} - (\bH_2^{-1}\otimes\bH_1^{-1})^\top(\bC^*\otimes\bR^*)^\top\right]\vect(\bB^*)  + \big(\hat{\FF}^\top\hat{\FF} \big)^{-1}\hat{\FF}^\top\bfm\varepsilon\\
=&\big(\hat{\FF}^\top\hat{\FF} \big)^{-1}\hat{\FF}^\top \left[\FF - \hat{\FF}(\bH_2^{-1}\otimes\bH_1^{-1})^\top\right]\vect(\bPhi^*) + \left(\hat{\bGamma} - \mathring{\bfm\Gamma}\right)\vect(\bB^*) + \big(\hat{\FF}^\top\hat{\FF} \big)^{-1}\hat{\FF}^\top\bfm\varepsilon,
\end{aligned}
\]
where the last identity uses the equality $\FF = \hat\FF(\bH_2^{-1}\otimes\bH_1^{-1})^\top + \left[ \FF- \hat\FF (\bH_2^{-1}\otimes\bH_1^{-1})^\top\right]$.
Applying the $\ell_2$-norm, we have
\[
\begin{aligned}
&\big\|\vect(\hat\bA) - (\bH_2^{-1}\otimes\bH_1^{-1})^\top\vect(\bA^*)\big\| \\ &\leq \left\|\big(\hat{\FF}^\top\hat{\FF} \big)^{-1}\hat{\FF}^\top \left[\FF - \hat{\FF}(\bH_2^{-1}\otimes\bH_1^{-1})^\top\right]\right\| \big\|\vect(\bPhi^*)\big\|  + \big\|(\hat{\bGamma} - \mathring{\bGamma})\vect(\bB^*)\big\| + \big\|\big(\hat{\FF}^\top\hat{\FF} \big)^{-1}\hat{\FF}^\top\bfm\varepsilon\big\|.
\end{aligned}
\]
The first part, by Lemma \ref{lem: F_I}, under Assumption \ref{asm:RC} (b), \ref{asu:F} (b), and \ref{asu:infer} (a) - (c), can be bounded by $\mathcal \mathcal O_p\left( \big\|\bPhi^*\big\|_F/\sqrt{np_1p_2}\right)$. The second part, by the definition of operator norm and Proposition \ref{thm:ols-pre-trained}, can be bounded by
\[ 
\big\|(\hat{\bGamma} - \mathring{\bGamma})\vect(\bB^*)\big\| 
\leq \|\vect(\bB^*)\|_1 \big\|\hat{\bGamma} - \mathring{\bGamma}\big\|_{2,1}
\leq \mathcal \mathcal O_p\left(\|\vect(\bB^*)\|_1 \sqrt{\frac{(k_1\wedge k_2)k_1k_2}{p_1p_2}}\right).
\]
As for the last term, we have shown that under the condition of Lemma \ref{lem:FF_lim}, $\lambda_{\min}\big(\hat{\FF}^\top \hat{\FF}/n\big)^{1/2}\geq \Theta(1)$. Together with the fact that $\big\|\hat{\FF}^\top \bfm\varepsilon\big\| = \mathcal \mathcal O_p(\sqrt{n})$, we have
\[
\left\|\big(\hat{\FF}^\top\hat{\FF} \big)^{-1}\hat{\FF}^\top\bfm\varepsilon\right\|
\leq \frac{1}{n} \frac{1}{\lambda_{\min}\big(\hat{\FF}^\top \hat{\FF}/n)} \big\|\hat{\FF}^\top \bfm\varepsilon\big\|
\leq \mathcal \mathcal O_p\left(\frac{1}{\sqrt{n}}\right).
\]
Combining the above results together, we get
\[
\begin{aligned}
\big\|\vect(\hat\bA) - (\bH_2^{-1}\otimes\bH_1^{-1})^\top\vect(\bA^*)\big\|
\leq \mathcal \mathcal O_p\left(\frac{\big\|\bPhi^*\big\|_F}{\sqrt{np_1p_2}} +\frac{ \|\vect(\bB^*)\|_1}{\sqrt{p_1p_2}} + \frac{1}{\sqrt{n}}\right).
\end{aligned}
\]
Moreover, by the assumption that $\|\bA^*\|_F=O(p_1p_2)$ and $\|\bB^*\|_*$ is bounded (so $\|\vect(\bB^*)\|_1=\mathcal \mathcal O_p(\sqrt{p_1p_2})$), we complete the proof of the first part of the theorem by
\[
\begin{aligned}
\frac{\left\|\bH_1^\top\hat\bA\bH_2 - \bA^*\right\|_F}{\|\bA^*\|_F} 
=& \frac{\big\|\vect(\hat\bA) - (\bH_2^{-1}\otimes\bH_1^{-1})^\top\vect(\bA^*)\big\|}{\|\bA^*\|_F}\\
=& \mathcal \mathcal O_p\left(\frac{\big\|\bPhi^*\big\|_F}{p_1p_2\sqrt{np_1p_2}} +\frac{ \|\vect(\bB^*)\|_1}{p_1p_2\sqrt{p_1p_2}} + \frac{1}{p_1p_2\sqrt{n}}\right)\\
=& \mathcal \mathcal O_p\left(	\frac{1}{\sqrt{np_1p_2}}\right)
\end{aligned}
\]
\end{proof}

\subsection{Consistency of $\hat\bB$} \label{subsec:B}
\begin{proof}[Proof of Theorem \ref{thm:consistency2} Part II]
Replacing the true $(\bF_i,\bU_i)$ in MFM \eqref{eqn:famar} by the estimated $(\hat\bF_i. \hat{\bU}_i)$, we have
\[
\begin{aligned}
y_i &= \angles{\bF_i,\bA^*} +  \angles{\bU_i,\bB^*} + \varepsilon_i\\
& = \angles{(\bH_1^{-1})^\top\hat\bF_i\bH_2^{-1},\bA^*} +  \angles{\hat\bU_i,\bB^*} + \xbar\varepsilon_i 
= \angles{\xbar\bG_i,\bTheta^*} + \xbar\varepsilon_i,
\end{aligned}
\]
where $\xbar\varepsilon_i := y_i - \angles{\xbar\bG_i,\bTheta^*}$ and $\xbar\bG_i = \operatorname{diag}(\bH_1^{-1} \hat\bF_i (\bH_2^{-1})^\top, \hat\bU_i)$.
Consider the optimization problem \eqref{eqn:convex}.
Let
\begin{equation}\label{eqn:opt_estFU_modify}
\big(\xbar{\bA},\hat{\bB}\big)\in\argmin_{\bA\in\RR^{k_1\times k_2},\\ \bB\in\RR^{p_1\times p_2}} \left\{\frac{1}{2n}\sum_{i=1}^n \left( y_i-\angles{\bA,\bH_1^{-1}\hat{\bF}_i(\bH_2^{-1})^\top}-\angles{\bB,\hat{\bU}_i} \right)^2 + \lambda_n\|\bB\|_{*}\right\}.
\end{equation}
Let $\xbar{\bfm\Theta}:= \diag(\xbar{\bA},\hat{\bB})$, and recall that $\bTheta^* = \operatorname{diag}(\bA^*, \bB^*)$ and the operator $\xbar\mG_n: \RR^{(k_1+p_1)\times(k_2+p_2)} \rightarrow \RR$ is defined such that $\xbar\mG_n(\bTheta) = \angles{\xbar\bG_i, \bTheta}$.
Letting $\bfm\Delta:={\bTheta}^*- \xbar{\bfm\Theta}$, $\bfm\Delta_B:=\bB^*-\hat\bB$, we have $\|\bDelta_B\|_F \leq \|\bDelta\|_F$.
Optimality of $\xbar\bTheta$ gives \[
\frac{1}{2n}\big\|\by - \xbar\mG_n(\xbar{\bfm\Theta})\big\|^2 + \lambda_n\|\hat{\bB}\|_* \leq \frac{1}{2n}\big\|\by - \xbar\mG_n({\bTheta}^*)\big\|^2 + \lambda_n\|{\bB}^*\|_* = \frac{1}{2n} \|\xbar\bepsilon\|^2+ \lambda_n\|{\bB}^*\|_*.
\] 
By elementary algebra, we get
\begin{equation}\label{eqn:bound_delta}
\frac{1}{2n}\|\xbar\mG_n(\bfm\Delta)\|^2 \leq \frac{1}{n} \angles{\xbar{\bfm\varepsilon}, \mG_n(\bfm\Delta)} + \lambda_n \big(\|\hat\bB + \bfm\Delta_B\|_* - \|\hat\bB\|_*\big).
\end{equation}
Based on the proof of Theorem \ref{thm:consistency1} and Lemma \ref{lem:general}, to upper bound the error of $\hat\bB$, we only need to show that all the regularization conditions in Theorem \ref{thm:consistency1} hold for optimization problem \eqref{eqn:opt_estFU_modify} with respect to the error $\bfm\Delta$ and the residual ${\bfm\varepsilon}$.

There are two parts to be verified. The first is to show that the error $\bfm\Delta\in\mC$, where the set $\mC$ is defined following Definition \ref{def:rsc}, with the projections of $\bDelta_b$ defined based on the SVD of $\bB^*$, the lower-right part of $\bTheta^*$. 
This is straightforward from Lemma \ref{lem:general} and the condition that $\lambda_n\geq 2\|\xbar\mG_n^*({{\xbar{\bfm\varepsilon}}})\|_2 /n$, where $\xbar\mG_n^*(\xbar\bepsilon)$, as a reminder, is defined as $\sum_{i=1}^n \xbar\varepsilon_i \xbar\bG_i$.  

What left is to show that the adjust-RSC holds for $\xbar\mG_n$ with some constant ${\kappa}(\xbar\mG_n)$ for all matrices in $\mC$ based on the assumption that $\mG_n$ satisfied adjusted-RSC with constant $\kappa(\mG_n)$ over the set $\mC$. That is, to show  for all $ \bfm\Delta\in\mC$,
\[
\|\mG_n(\bfm\Delta)\|_2^2/(2n)\geq \kappa_2(\mG_n)\|\bfm\Delta\|_F^2 \Rightarrow \|\xbar\mG_n(\bfm\Delta)\|_2^2/(2n)\geq \kappa(\xbar\mG_n)\|\bfm\Delta\|_F^2, \text{ for some } \kappa(\xbar\mG_n).
\]
By triangular inequality, 
\[
\|\xbar\mG_n(\bfm\Delta)\|_2 \geq \|\mG_n(\bfm\Delta)\|_2 - \|\xbar\mG_n(\bfm\Delta)-\mG_n(\bfm\Delta)\|_2.
\]
By the definition of $\mG_n$ and Cauchy-Schwartz inequality, we have
\begin{equation}\label{eqn:mG}
\begin{aligned}
\|\xbar\mG_n(\bfm\Delta)-\mG_n(\bfm\Delta)\|_2 &= \left(\sum_{i=1}^n \angles{\xbar\bG_i - \bG_i, \bfm\Delta}^2\right)^{1/2}\\
&\leq \|\bfm\Delta\|_F \left(\sum_{i=1}^n \|\xbar{\bG}_i -\bG_i\|_F^2\right)^{1/2}.
\end{aligned}
\end{equation}
By proof of Theorem \ref{thm:U_element}, we have
\[
\left\|\frac{1}{\hat c_s}{\hat{\bR}} -\mathring{\bR}\right\|_2 = \mathcal \mathcal O_p\left( \sqrt{\frac{p_1}{np_2}}\right),\  \text{ and }\ 
\left\|\frac{1}{\hat r_s} {\hat{\bC}} - \mathring{\bC}\right\|_2= \mathcal \mathcal O_p\left(\sqrt{\frac{p_2}{np_1}} + \frac{1}{\sqrt{n}} \right).
\]
Moreover, by the Assumption \ref{asum:FU_subG}, we know that for any $j\in[k_1]$ and $\ell\in[k_2]$, $F_{1,j\ell},\ldots,F_{n,j\ell}$ are independent sub-Gaussian random variables, where $F_{i,j\ell}$ is the $(j,\ell)$-th element of $\bF_i$. Therefore, $\frac{1}{n} \sum_{i=1}^n F_{i,j\ell}^2$ is sub-Exponential. By Maximal tail inequality, we have 
\[
\PP\left( \max_{j,\ell} \frac{1}{n} \sum_{i=1}^n F_{i,j\ell}^2 \geq C\log k_1k_2 + Cu \right) \leq e^{-u}.
\]
Thus, 
$\frac{1}{n}\sum_{i=1}^n\|\bF_i\|_F^2 = \mathcal \mathcal O_p(1)$. 
With the assumption $\|\bR\|_2 = O(p_1)$ and  $\|\bC\|_2 = O(p_2)$, by Cauchy-Schwartz inequality, we obtain the following bound.
\[
\begin{aligned}
\frac{1}{n}\sum_{i=1}^n\|\hat{\bU}_{i}-\bU_{i}\|_F^2&= \frac{1}{n}\sum_{i=1}^n\left\|\frac{1}{\hat r_s \hat c_s} \hat{\bR} \hat{\bF}_i \hat{\bC}^\top - \bR^{*} \bF_i \bC^{*\top}\right\|_F^2\\
& \leq 7\left\|\frac{1}{\hat c_s}\hat{\bR} - \mathring c_s \mathring{\bR} \right\|_2^2 \|\bH_1\|_2^2 \|\bH_2\|_2^2 \frac{1}{n}\sum_{i=1}^n\|\bF_i\|_F^2 \left\|\frac{1}{\hat c_s}\hat{\bC} -  \mathring r_s\mathring{\bC} \right\|_2^2\\
&\qquad +7\left\|\frac{1}{\hat c_s}\hat{\bR} - \mathring{\bR} \right\|_2^2 \frac{1}{n}\sum_{i=1}^n\|\hat\bF_i-\bH_1\bF_i\bH_2^\top\|_F^2 \left\|\frac{1}{\hat r_s}\hat{\bC} -  \mathring{\bC} \right\|_2^2\\
&\qquad +7\left\|\mathring{\bR} \right\|_2^2 \|\bH_1\|_2^2 \|\bH_2\|_2^2 \frac{1}{n}\sum_{i=1}^n\|\hat\bF_i\|_F^2 \left\|\frac{1}{\hat r_s}\hat{\bC} - \mathring{\bC} \right\|_2^2\\
&\qquad +7\left\|\mathring{\bR} \right\|_2^2 \frac{1}{n}\sum_{i=1}^n\|\hat\bF_i-\bH_1\bF_i\bH_2^\top\|_F^2 \left\|\frac{1}{\hat r_s}\hat{\bC} - \mathring{\bC} \right\|_2^2\\
&\qquad +7\left\|\frac{1}{\hat c_s}\hat{\bR} - \mathring{\bR} \right\|_2^2 \frac{1}{n}\sum_{i=1}^n\|\hat\bF_i-\bH_1\bF_i\bH_2^\top\|_F^2 \left\|\mathring{\bC} \right\|_2^2\\
&\qquad +7\left\|\mathring{\bR} \right\|_2^2 \frac{1}{n}\sum_{i=1}^n\|\hat\bF_i-\bH_1\bF_i\bH_2^\top\|_F^2 \left\|\mathring{\bC} \right\|_2^2\\
&\qquad +7\left\|\frac{1}{\hat c_s}\hat{\bR} - \mathring{\bR} \right\|_2^2 \|\bH_1\|_2^2 \|\bH_2\|_2^2 \frac{1}{n}\sum_{i=1}^n\|\bF_i\|_F^2 \left\|\mathring{\bC} \right\|_2^2\\
& = \mathcal \mathcal O_p\left(\frac{p_1}{n} +\frac{p_2}{n} + 1 \right)
\end{aligned}
\]
Together with inequality \eqref{eqn:mG} and Proposition \ref{lem:Ferr_nuc}, we have
\[
\begin{aligned}
\frac{\|\xbar\mG_n(\bDelta) - \mG_n(\bDelta)\|_2^2}{n\|\bDelta\|_F^2} & \leq \frac{1}{n}\sum_{i=1}^n \|\xbar\bG_i - \bG_i\|_F^2\\
&= \frac{1}{n}\sum_{i=1}^n \left[\|\xbar\bF_i - \bF_i\|_F^2 + \|\hat\bU_i - \bU_i\|_F^2 \right]
\leq C_1 \left(\frac{p_1 \vee p_2}{n} + 1 \right)
\end{aligned}
\]
for some constant $C_1$.
Together with the { assumption $\frac{p_1\vee p_2}{n}<C_0$ and that $\kappa(\mG_n) > 2C_1(C_0+1)$}, we have
\[
\frac{\|\xbar\mG_n(\bfm\Delta)\|_2^2}{n\|\bfm\Delta\|_F^2}\geq \kappa(\mG_n) - C_1(C_0+1) \geq \frac{\kappa(\mG_n)}{2} =: \kappa(\xbar\mG_n).
\]

Finally, we can prove following the idea in the proof of Theorem \ref{thm:consistency1}, as briefly outlined the the follows. From inequality \eqref{eqn:bound_delta}, we have
\[
\begin{aligned}
\frac{1}{2n}\|\xbar\mG_n(\bfm\Delta)\|^2 & \leq \frac{1}{n} \angles{\xbar{\bfm\varepsilon}, \mG_n(\bfm\Delta)} + \lambda_n \big(\|\hat\bB + \bfm\Delta_B\|_* - \|\hat\bB\|_*\big)\\
&  = \frac{1}{n} \angles{\xbar\mG_n^*(\xbar\bepsilon),\bDelta} +  \lambda_n \big(\|\hat\bB + \bfm\Delta_B\|_* - \|\hat\bB\|_*\big)\\
& \leq \frac{1}{n} \|\xbar\mG_n^*(\xbar\bepsilon)\|_2 \|\bDelta\|_* + \lambda_n \big(\|\hat\bB + \bfm\Delta_B\|_* - \|\hat\bB\|_*\big)\\
& \leq \lambda_n \big(\frac{\|\bDelta\|_*}{2} + \|\bDelta_B\|_*\big) = \lambda_n  \big(\frac{1}{2} \|\bDelta_A\|_* + \frac{3}{2} \|\bDelta_B\|_*\big),
\end{aligned}
\]
where the last inequality is by the assumption that $\lambda_n\geq 2\|\xbar\mG_n^*(\xbar\bepsilon)\|_2/n$, and $\bDelta_A$ is defined as $\bA^* - \xbar\bA$.
Combining the above bound with the fact that $\bDelta\in\mC$ and $\|\xbar\mG_n(\bfm\Delta)\|_2^2/(2n)\geq \kappa(\xbar\mG_n)\|\bfm\Delta\|_F^2$, we derive that
\[
\|\bB^* - \hat\bB\|_F \leq \|\xbar{\bfm\Theta} - {\bTheta}^*\|_F \leq \frac{2\lambda_n}{\kappa_2(\mG_n)}\big(2\sqrt{k_1\wedge k_2} + 6\sqrt{2\operatorname{rank}(\bB^*)}\big).
\]
\end{proof}

\subsection{Proof of Proposition \ref{prop:err_order}}

In cases where FAMAR simplifies to matrix regression with MFM covariates \eqref{eqn:mr}, specifically when $\bA^* = \bR^\top\bB^*\bC$, it becomes possible to eliminate $\bar\bepsilon$ and directly utilize the true error $\bepsilon$ in determining the tuning parameter $\lambda_n$.

\begin{corollary}\label{cor:consistency_B}
Suppose the assumptions inTheorem \ref{thm:consistency2} hold.
When $\bA^* = \bR^\top\bB^*\bC$, when $\lambda_n\geq 2\|\xbar\mG_n^*(\bepsilon)\|_2 /n$, the result for $\|\hat{\bB}-\bB^*\|_F$ still holds.
\end{corollary}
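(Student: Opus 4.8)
The plan is to exploit the structural identity $\bA^* = \bR^{*\top}\bB^*\bC^*$ to show that the rotated residual $\bar\varepsilon_i$ that drives the tuning condition of Theorem~\ref{thm:consistency2} differs from the genuine regression noise $\varepsilon_i$ only by a loading/factor estimation error of strictly lower order. First I would substitute the matrix-regression model $y_i = \angles{\bX_i,\bB^*}+\varepsilon_i$ together with the identity $\bX_i = (\hat c_s\hat r_s)^{-1}\hat\bR\hat\bF_i\hat\bC^\top + \hat\bU_i$ from \eqref{eqn:idio} into the definition of $\bar\varepsilon_i$. The two $\angles{\hat\bU_i,\bB^*}$ contributions cancel, and using $\angles{\bH_1^{-1}\hat\bF_i(\bH_2^\top)^{-1},\bR^{*\top}\bB^*\bC^*} = \angles{\mathring{\bR}\hat\bF_i\mathring{\bC}^\top,\bB^*}$ (valid since $\bR^*\bH_1^{-1}=\mathring{\bR}$ and $\bC^*\bH_2^{-1}=\mathring{\bC}$), I obtain the clean identity
\[
\bar\varepsilon_i = \varepsilon_i + \delta_i, \qquad \delta_i := \Big\langle (\tfrac{1}{\hat c_s}\hat\bR)\hat\bF_i(\tfrac{1}{\hat r_s}\hat\bC)^\top - \mathring{\bR}\hat\bF_i\mathring{\bC}^\top,\ \bB^*\Big\rangle.
\]
The point of the structural assumption is that the large contribution $\angles{\bA^*,\cdot}$, whose presence forces a large $\lambda_n$ in the general Theorem~\ref{thm:consistency2}, has been eliminated, leaving $\delta_i$ as a pure loading/factor estimation error rather than a signal-scale term.

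Next I would feed this into the optimality argument of Theorem~\ref{thm:consistency2} Part~II. Writing $\bar\bepsilon = \bepsilon + \delta$, the noise functional splits as $\xbar{\mG}_n^*(\bar\bepsilon) = \xbar{\mG}_n^*(\bepsilon) + \xbar{\mG}_n^*(\delta)$, so in the basic inequality the term $\tfrac1n\angles{\xbar{\mG}_n^*(\bar\bepsilon),\bfm\Delta}$ becomes $\tfrac1n\angles{\xbar{\mG}_n^*(\bepsilon),\bfm\Delta} + \tfrac1n\angles{\xbar{\mG}_n^*(\delta),\bfm\Delta}$. The first piece is controlled exactly as before by the hypothesis $\lambda_n \geq 2\|\xbar{\mG}_n^*(\bepsilon)\|_2/n$, yielding $\tfrac{\lambda_n}{2}\|\bfm\Delta\|_*$; the second is bounded by H\"older as $\tfrac1n\|\xbar{\mG}_n^*(\delta)\|_2\,\|\bfm\Delta\|_*$. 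Hence it suffices to show $\|\xbar{\mG}_n^*(\delta)\|_2 = \op{\|\xbar{\mG}_n^*(\bepsilon)\|_2}$, so that this cross term is absorbed into $\lambda_n\|\bfm\Delta\|_*$ at the cost of an asymptotically negligible inflation of the constant; the remainder of the argument (adjusted-RSC, Lemma~\ref{lem:general}, and the $\sqrt{k_1\wedge k_2}$, $\sqrt{\rank(\bB^*)}$ bookkeeping) then runs verbatim as in Theorem~\ref{thm:consistency2}, delivering the stated Frobenius bound with $\lambda_n$ calibrated to the true noise $\bepsilon$.

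The main obstacle is precisely the order comparison $\|\xbar{\mG}_n^*(\delta)\|_2 = \op{\|\xbar{\mG}_n^*(\bepsilon)\|_2}$. Expanding $\delta_i$ linearly in $\hat\bF_i$ as $\langle(\tfrac1{\hat c_s}\hat\bR - \mathring{\bR})\hat\bF_i(\tfrac1{\hat r_s}\hat\bC)^\top + \mathring{\bR}\hat\bF_i(\tfrac1{\hat r_s}\hat\bC - \mathring{\bC})^\top,\ \bB^*\rangle$, I would insert the loading rates $\|\tfrac1{\hat c_s}\hat\bR - \mathring{\bR}\|_2 = \Op{\sqrt{p_1/(np_2)}}$ and $\|\tfrac1{\hat r_s}\hat\bC-\mathring{\bC}\|_2 = \Op{\sqrt{p_2/(np_1)}+1/\sqrt n}$ established within the proof of Theorem~\ref{thm:U_element}. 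A naive triangle-inequality bound on $\xbar{\mG}_n^*(\delta)=\sum_i\delta_i\xbar\bG_i$ is too lossy; the key is that the dominant block $\sum_i \delta_i\hat\bU_i$ pairs the factor-driven weights $\delta_i$ (linear and, to leading order, mean-zero in $\bF_i$) against $\hat\bU_i\approx\bU_i$, which are uncorrelated with the factors under Assumptions~\ref{asum:FU}--\ref{asum:FU_subG}. This built-in cancellation makes the sum behave like a mean-zero average scaling as $\sqrt n$ times the small loading-error factor, rather than $n$ times it, which is what renders $\|\xbar{\mG}_n^*(\delta)\|_2$ lower order than the genuine noise functional $\|\xbar{\mG}_n^*(\bepsilon)\|_2$. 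Making this rigorous---e.g.\ through a matrix concentration bound for $\sum_i \delta_i\hat\bU_i$ that correctly tracks the factor/idiosyncratic independence and the operator-norm scale of $\hat\bU_i$---is the crux of the argument.
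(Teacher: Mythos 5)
Your algebraic identity $\xbar\varepsilon_i=\varepsilon_i+\delta_i$ with $\delta_i=\angles{(\hat c_s\hat r_s)^{-1}\hat\bR\hat\bF_i\hat\bC^\top-\mathring\bR\hat\bF_i\mathring\bC^\top,\,\bB^*}$ is correct, but the reduction you build on it --- absorbing the cross term by showing $\|\xbar\mG_n^*(\bdelta)\|_2=\op{\|\xbar\mG_n^*(\bepsilon)\|_2}$ --- has a genuine gap, and you have misdiagnosed where the difficulty sits. Write $\delta_i=\angles{\bD,\hat\bF_i}$ with $\bD:=(\hat c_s\hat r_s)^{-1}\hat\bR^\top\bB^*\hat\bC-\mathring\bR^\top\bB^*\mathring\bC$ fixed across $i$. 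The block of $\xbar\mG_n^*(\bdelta)$ you call dominant, $\sum_i\delta_i\hat\bU_i$, is not merely small by cancellation: it is \emph{identically zero}, because $\sum_i\angles{\bD,\hat\bF_i}\vect(\hat\bU_i)^\top=\vect(\bD)^\top\hat\FF^\top\hat\UU=\bzero$ by construction of $\hat\UU$. The obstruction is the factor block $\sum_i\delta_i\,\bH_1^{-1}\hat\bF_i(\bH_2^\top)^{-1}$, which behaves like $n\bLambda_H^{-1}(\tfrac1n\hat\FF^\top\hat\FF)\vect(\bD)$ and hence has norm of order $n\|\bD\|_F$. Tracking the loading rates you quote (and $\|\vect(\bB^*)\|_1=\Op{\sqrt{p_1p_2}}$), $\|\bD\|_F\asymp\sqrt{(p_1\vee p_2)/n}$, so $n^{-1}\|\xbar\mG_n^*(\bdelta)\|_2\asymp\sqrt{(p_1\vee p_2)/n}$, which dominates $n^{-1}\|\xbar\mG_n^*(\bepsilon)\|_2\asymp\sqrt{\log(p_1+p_2)/n}$ rather than being negligible against it. A $\lambda_n$ calibrated only to $\bepsilon$ therefore cannot absorb this term, and your argument as stated does not close.

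The paper avoids the issue entirely by never introducing $\delta_i$: it shifts the comparison point to $\xbar\bTheta^*=\diag\big(\bH_1^{-1}\hat\bR^\top\bB^*\hat\bC(\bH_2^{-1})^\top,\ \bB^*\big)$, for which $y_i-\angles{\xbar\bG_i,\xbar\bTheta^*}=y_i-\angles{\bX_i,\bB^*}=\varepsilon_i$ exactly under $\bA^*=\bR^{*\top}\bB^*\bC^*$ and the decomposition $\bX_i=\hat\bR\hat\bF_i\hat\bC^\top/(\hat c_s\hat r_s)+\hat\bU_i$. Since the corollary only asserts a bound on $\|\hat\bB-\bB^*\|_F$, and both the cone $\mC$ and the adjusted-RSC condition depend only on the (unchanged) $\bB^*$ block of the target, replacing the rotated $\bA^*$ by this data-dependent $\bA$-block is harmless, and the proof of Theorem \ref{thm:consistency2} then runs verbatim with $\bepsilon$ in place of $\xbar\bepsilon$. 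If you want to salvage your route, you would have to make the same move: the large factor-block contribution of $\bdelta$ is exactly the discrepancy between the rotated $\bA^*$ and $\hat\bR^\top\bB^*\hat\bC$, and it must be absorbed into the target for $\bA$ rather than into $\lambda_n$.
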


\begin{proof}[Proof of Corollary \ref{cor:consistency_B}]
Consider the same optimization problem as \eqref{eqn:convex}:
\begin{equation}
\big(\xbar{\bA},\hat{\bB}\big)\in\argmin_{\bA\in\RR^{k_1\times k_2},\\ \bB\in\RR^{p_1\times p_2}} \left\{\frac{1}{2n}\sum_{i=1}^n \left( y_i-\angles{\bA,\bH_1^{-1}\hat{\bF}_i(\bH_2^{-1})^\top}-\angles{\bB,\hat{\bU}_i} \right)^2 + \lambda_n\|\bB\|_{*}\right\}.
\end{equation}
Let $\xbar{\bfm\Theta}:= \diag(\xbar{\bA},\hat{\bB})$.
Different from the proof of Theorem \ref{thm:consistency2}, we define $$\xbar\bTheta^* = \operatorname{diag}(\bH_1^{-1}\hat\bR^\top \bB^*\hat\bC(\bH_2^{-1})^\top, \bB^*).$$ Recall that $\xbar\bG_i = \operatorname{diag}((\bH_1^{-1})^\top \hat\bF_i \bH_2^{-1}, \hat\bU_i)$, and the operator $\xbar\mG_n: \RR^{(k_1+p_1)\times(k_2+p_2)} \rightarrow \RR$ is defined that $\xbar\mG_n(\bTheta) = \angles{\xbar\bG_i, \bTheta}$. By the decomposition $\bX_i = \hat\bR\hat\bF_i \hat\bC^\top + \hat\bU_i$, we have
\[
\varepsilon_i = y_i - \angles{\bX_i,\bB^*} = y_i - \angles{\xbar\bG_i, \xbar\bTheta^*}.
\]
Letting $\bfm\Delta:={\xbar\bTheta}^*- \xbar{\bfm\Theta}$, $\bfm\Delta_B:=\bB^*-\hat\bB$, we have $\|\bDelta_B\|_F \leq \|\bDelta\|_F$.
Optimality of $\xbar\bTheta$ gives \[
\frac{1}{2n}\big\|\by - \xbar\mG_n(\xbar{\bfm\Theta})\big\|^2 + \lambda_n\|\hat{\bB}\|_* \leq \frac{1}{2n}\big\|\by - \xbar\mG_n({\xbar\bTheta}^*)\big\|^2 + \lambda_n\|{\bB}^*\|_* = \frac{1}{2n} \|\bepsilon\|^2+ \lambda_n\|{\bB}^*\|_*.
\] 
By elementary algebra, we get
\begin{equation}
\frac{1}{2n}\|\xbar\mG_n(\bfm\Delta)\|^2 \leq \frac{1}{n} \angles{{\bfm\varepsilon}, \mG_n(\bfm\Delta)} + \lambda_n \big(\|\hat\bB + \bfm\Delta_B\|_* - \|\hat\bB\|_*\big).
\end{equation}
Similar to Theorem \ref{thm:consistency2}, to upper bound the error of $\hat\bB$, we only need to show that all the regularization conditions in Theorem \ref{thm:consistency1} hold for optimization problem \eqref{eqn:opt_estFU_modify} with respect to the error $\bfm\Delta$ and the residual ${\bfm\varepsilon}$.

Lemma \ref{lem:general} and the condition that $\lambda_n\geq 2\|\xbar\mG_n^*({{\bfm\varepsilon}})\|_2 /n$ shows that  $\bfm\Delta\in\xbar\mC$, where the set $\xbar\mC$ is defined following Definition \ref{def:rsc}, with the projections of $\bDelta_b$ defined based on the SVD of $\bB^*$, the lower-right part of $\xbar\bTheta^*$.  
It is straightforward to show that $\xbar\mC$ = $\mC$, where $\xbar\mC$ is defined based on $\xbar\bTheta^*$ while $\mC$ is defined based on $\bTheta^*$, for both $\xbar\bTheta^*$ and $\bTheta^*$ have the structure $\operatorname{diag}(*, \bB^*)$.
Thus, we have $\bDelta:= {\xbar\bTheta}^*- \xbar{\bfm\Theta} \in \mC$.

The rest of the proof is exactly the same as that for Theorem \ref{thm:consistency2} by replacing $\xbar\bepsilon$ with $\bepsilon$.

\end{proof}

\begin{proof}[Proof of Proposition \ref{prop:err_order}]
By taking $\lambda_n\asymp 2\big\|\xbar\mG_n^*(\bfm\varepsilon)\big\|_2/n$, Corollary \ref{cor:consistency_B} indicates that 
\[
\frac{\|\hat{\bB}-\bB^*\|_F}{\big\|\xbar\mG_n^*({\bfm\varepsilon})\big\|_2 /n} =\Op{1},
\]
when all other quantities are finite.
By triangular inequality,
\begin{equation}\label{eqn:errorder1}
\big\|\xbar{\mG}_n^*(\bfm\varepsilon)\big\|_2 
\leq \big\|{\mG}_n^*(\bfm\varepsilon)\big\|_2 + \big\|\xbar{\mG}_n^*(\bfm\varepsilon)-{\mG}_n^*(\bfm\varepsilon)\big\|_2 = \big\|\sum_{i=1}^n \varepsilon_i \bG_i\big\|_2 + \big\|\sum_{i=1}^n \varepsilon_i (\xbar{\bG}_i-\bG_i)\big\|_2.
\end{equation}

By inequality \eqref{eqn:errorder_Gepsilon} in the proof of Proposition \ref{prop:err_order0}, we already have
\begin{equation}\label{eqn:errorder2}
\PP\left(\left\|\frac{1}{n}\mG_n^*(\bfm\varepsilon)\right\| \geq C_1\sqrt{\frac{\log d}{n}}\right) \leq d^{-2}
\end{equation}
for $C_1$ large enough.

For the second term, by Cauchy-Schwartz inequality, we have
\[
\big\|\sum_{i=1}^n \varepsilon_i (\xbar{\bG}_i-\bG_i)\big\|_2 \leq \sum_{i=1}^n |\varepsilon_i|\big\|\xbar\bG_i-\bG_i\big\|_2 \leq \left(\sum_{i=1}^n \varepsilon_i^2\right)^{1/2} \left(\sum_{i=1}^n \big\|\xbar\bG_i-\bG_i\big\|_2^2\right)^{1/2}.
\]
Since $\varepsilon_i$ is $\sigma-$sub-Gaussian, we know $\varepsilon_i^2$ is sub-exponential with parameters $(4\sqrt{2}\sigma^2,\, 4\sigma^2)$, i.e.
\[
\EE\left[e^{\lambda \varepsilon_i^2}\right] \leq e^{16\sigma^4\lambda^2},\ \forall |\lambda|<\frac{1}{4\sigma^2}.
\]
Hence $\sum_{i=1}^n \varepsilon_i^2 /n$ is sub-Exponential with parameters $(4\sqrt{2}\sigma^2,\, 4\sqrt{n}\sigma^2)$, and therefore
\begin{equation}\label{eqn:errorder3}
\PP\left(\frac{1}{n}\sum_{i=1}^n \varepsilon_i^2 \geq C_2\sqrt{\frac{\log n}{n}}\right) \leq n^{-1}
\end{equation}
for large enough $C_2$.
In addition, by Corollary \ref{cor:F} and Theorem \ref{thm:U_element}, we have
\begin{equation}\label{eqn:errorder4}
\begin{aligned}
\sum_{i=1}^n \big\|\xbar\bG_i - \bG_i\big\|_2^2 & \leq \sum_{i=1}^n \big\|\xbar\bG_i - \bG_i\big\|_F^2 \leq \sum_{i=1}^n \big\|\xbar\bF_i - \bF_i\big\|_F^2 + \sum_{i=1}^n \big\|\xbar\bU_i - \bU_i\big\|_F^2 \\
& = \mathcal \mathcal O_p\left(p_1p_2\big(\sqrt{\frac{n(k_1\wedge k_2)k_1k_2}{p_1p_2}} + 1\big)^2\right).
\end{aligned}
\end{equation}

Plugging \eqref{eqn:errorder2}, \eqref{eqn:errorder3}, and \eqref{eqn:errorder4}  into \eqref{eqn:errorder1}, we get
\[
\frac{\big\|\xbar{\mG}_n^*(\bfm\varepsilon)\big\|_2}{n} = \mathcal \mathcal O_p\left(\sqrt{\frac{\log(k_1+k_2+p_1+p_2)}{n}} + \frac{(\log n)^{1/4}}{n^{1/4}} \sqrt{(k_1\wedge k_2)k_1k_2} + \frac{(\log n)^{1/4}}{n^{3/4}}\sqrt{p_1p_2} \right),
\]
where the second term is negligible.
The proof is completed by Corollary \ref{cor:consistency_B}.
\end{proof}

\section{FAMAR-Sparse with Observed $\bF_i$} 
\label{append:sparse-B-observed-F}

When the true $\bB^*$ is sparse, the results from \cite{fan2020factor} can be similarly applied by vectorizing the matrices and applying sparse regularization. 
Here, we take Lasso as example.
Let $\btheta:= (\vect (\bB)^\top, \vect (\bA)^\top)^\top\in\RR^{p_1p_2+K_1K_2}$, and $\bbeta:= \btheta_{[p_1p_2]} = \vect (\bB)$.
Denote $S:= \supp (\btheta^*)$, $S_1 = \supp (\bbeta^*)$, and $S_2 = [p_1p_2+K_1K_2]\setminus S$.
Further, let $\bw_i:= (\vect (\bU_i)^\top, \vect(\bF_i)^\top)^\top$, $\hat\bw_i:= (\vect(\hat\bU_i)^\top, \vect(\hat\bF_i)^\top)^\top$, and $\xbar\bw_i:= (\vect(\hat\bU_i)^\top, \vect(\bH_1^{-1}\hat\bF_i(\bH_2^\top)^{-1})^\top)^\top$,
Let $\bW := (\bw_1,\ldots,\bw_n)^\top$, $\hat\bW := (\hat\bw_1,\ldots,\hat\bw_n)^\top$, and $\xbar\bW := (\xbar\bw_1,\ldots,\xbar\bw_n)^\top$. 
By adding the subscript $S$, we only take the $j$-th column for all $j\in S$. 
Define the loss function 
\[
L_n(\by, \bW\btheta) =\frac{1}{2n} \|\by - \bW\btheta\|^2.
\]
We solve
\begin{equation*} 
\big(\hat{\bA}^*,\hat{\bB}^*\big)\in\argmin_{\bA,\bB} \left\{\frac{1}{2n}\sum_{i=1}^n \left( y_i-\angles{\bA,{\bF}_i}-\angles{\bB,{\bU}_i} \right)^2 + \lambda_n\|\vect (\bB)\|_{1}\right\}.
\end{equation*}	
After vectorization, this equals to
\[
\hat\btheta^*\in\argmin_{\btheta\in\RR^{p_1p_2+K_1K_2}} \left\{L_n(\by,\bW\btheta)+ \lambda_n\|\btheta_{[p_1p_2]}\|_{1}\right\},
\]	
where $\hat\btheta^*= (\vect (\hat\bB^*)^\top, \vect (\hat\bA^*)^\top)^\top$.

We make the following assumptions.
\begin{assumption}\label{asu:Baparse_W}
All elements of $\bW$ are sub-Gaussian.
\end{assumption}
\begin{proposition}\label{prop:Bsparse_Fknown}
\begin{itemize}
\item[(i)] Error bounds : Under Assumptions \ref{asu:Bsparse_Fknown}, if
$
\lambda_n>\frac{7}{\tau}\left\|\frac{1}{n} (\bW^\top \bW)\right\|_{\infty},
$
then $\operatorname{supp}(\widehat{\boldsymbol{\theta}^*}) \subseteq S$ and
$$
\begin{aligned}
& \left\|\widehat{\boldsymbol{\theta}}^*-\boldsymbol{\theta}^*\right\|_{\infty} \leq \frac{3}{5 \kappa_{\infty}}\left(\left\|\nabla_S L_n\left(\by, \bW\boldsymbol{\theta}^*\right)\right\|_{\infty}+\lambda_n\right), \\
& \left\|\widehat{\boldsymbol{\theta}}^*-\boldsymbol{\theta}^*\right\|_2 \leq \frac{2}{\kappa_2}\left(\left\|\nabla_S L_n\left(\by, \bW\boldsymbol{\theta}^*\right)\right\|_2+\lambda_n \sqrt{\left|S_1\right|}\right), \\
& \left\|\widehat{\boldsymbol{\theta}}^*-\boldsymbol{\theta}^*\right\|_1 \leq \min \left\{\frac{3}{5 \kappa_{\infty}}\left(\left\|\nabla_S L_n\left(\by, \bW\boldsymbol{\theta}^*\right)\right\|_1+\lambda_n\left|S_1\right|\right), \frac{2 \sqrt{|S|}}{\kappa_2}\left(\left\|\nabla_S L_n\left(\by, \bW\boldsymbol{\theta}^*\right)\right\|_2+\lambda_n \sqrt{\left|S_1\right|}\right)\right\} .
\end{aligned}
$$
\item[(ii)] Sign consistency : In addition, if the following condition
$$
\begin{aligned}
& \min \left\{\left|\boldsymbol{\beta}_j^*\right|: \boldsymbol{\beta}_j^* \neq 0, j \in[p]\right\}>\frac{C}{\kappa_{\infty} \tau}\left\|\nabla L_n\left(\boldsymbol{\theta}^*\right)\right\|_{\infty}
\end{aligned}
$$
holds for some $C \geq 5$, then by taking $\lambda \in\left(\frac{7}{\tau}\left\|\nabla L_n\left(\boldsymbol{\theta}^*\right)\right\|_{\infty}, \frac{1}{\tau}\left(\frac{5 C}{3}-1\right)\left\|\nabla L_n\left(\boldsymbol{\theta}^*\right)\right\|_{\infty}\right)$, the estimator achieves the sign consistency $\operatorname{sign}(\widehat{\boldsymbol{\beta}}^*)=\operatorname{sign}\left(\boldsymbol{\beta}^*\right)$.

\end{itemize}

\end{proposition}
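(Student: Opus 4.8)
The plan is to treat the vectorized program \eqref{eqn:sparse_opt} as a partially-penalized LASSO with design matrix $\bW=[\UU\;\FF]$, in which only the $p_1p_2$ coordinates of $\bbeta=\vect(\bB)$ are penalized while the $k_1k_2$ coordinates of $\vect(\bA)$ are left free, and to establish the result by the primal--dual witness (PDW) construction, mirroring the argument of \citet{fan2020factor}. Since $(\bF_i,\bU_i)$ are observed here, the model is exactly linear, $\by=\bW\btheta^*+\bepsilon$, so the loss gradient has the clean form $\nabla L_n(\by,\bW\btheta)=-\tfrac1n\bW^\top(\by-\bW\btheta)$ and in particular $\nabla L_n(\by,\bW\btheta^*)=-\tfrac1n\bW^\top\bepsilon$. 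The stationarity condition for a minimizer $\hat\btheta$ reads $\nabla L_n(\by,\bW\hat\btheta)+\lambda_n\hat{\bz}=\bzero$, where $\hat{\bz}$ is a subgradient of $\btheta\mapsto\|\btheta_{[p_1p_2]}\|_1$: on the penalized block its entries satisfy $|\hat z_j|\le1$ (equal to the sign of $\hat\beta_j$ when active), and on the free $\vect(\bA)$ block they vanish.

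First I would run the PDW. Solve the oracle problem restricted to $S=\supp(\btheta^*)$, which, with the unpenalized $\vect(\bA^*)$ coordinates treated as free, contains all the $\bA$ indices together with $S_1=\supp(\bbeta^*)$, producing $\hat\btheta_S$ with $\hat\btheta_{S_2}=\bzero$ and an associated subgradient $\hat{\bz}_S$. Because the loss is quadratic, on-support stationarity is exact,
\begin{equation*}
\tfrac1n\bW_S^\top\bW_S(\hat\btheta_S-\btheta^*_S)=-\nabla_S L_n(\by,\bW\btheta^*)-\lambda_n\hat{\bz}_S,
\end{equation*}
so by the invertibility in Assumption~\ref{asu:Bsparse_Fknown}(a),
\begin{equation*}
\hat\btheta_S-\btheta^*_S=-n(\bW_S^\top\bW_S)^{-1}\big[\nabla_S L_n(\by,\bW\btheta^*)+\lambda_n\hat{\bz}_S\big].
\end{equation*}
The off-support dual variable is then pinned down by $\hat z_{S_2}=-\lambda_n^{-1}\nabla_{S_2}L_n(\by,\bW\hat\btheta)$; substituting the displayed error and invoking the irrepresentable condition $\|\bW_{S_2}^\top\bW_S(\bW_S^\top\bW_S)^{-1}\|_\infty\le1-\tau$ of Assumption~\ref{asu:Bsparse_Fknown}(b) together with the lower bound on $\lambda_n$ (whose $7/\tau$ factor supplies exactly the slack needed to dominate $\|\nabla_{S_2}L_n(\by,\bW\btheta^*)\|_\infty$ and the propagated term) yields strict dual feasibility $\|\hat z_{S_2}\|_\infty<1$. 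This certifies that the oracle solution is the unique LASSO solution and that $\supp(\hat\btheta^*)\subseteq S$, i.e.\ no false inclusions.

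With support recovery in hand, the error bounds are read off from the same representation. Applying the operator-norm controls $\|n(\bW_S^\top\bW_S)^{-1}\|_\infty\le\tfrac1{2\kappa_\infty}$ and $\|n(\bW_S^\top\bW_S)^{-1}\|_2\le\tfrac1{2\kappa_2}$, together with $\|\hat{\bz}_S\|_\infty\le1$ and $\|\hat{\bz}_S\|_2\le\sqrt{|S_1|}$ (only the $S_1$ coordinates are penalized), delivers the $\ell_\infty$, $\ell_2$, and $\ell_1$ bounds of part~(i); the $\ell_1$ bound is the smaller of the two estimates obtained by pairing the $\ell_\infty$ bound with $|S_1|$ and the $\ell_2$ bound with $\sqrt{|S|}$. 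The stated constants $3/(5\kappa_\infty)$ and $2/\kappa_2$ are precisely those propagated from \citet{fan2020factor} once the $\lambda_n$ threshold is enforced.

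For part~(ii) I would combine the $\ell_\infty$ bound with the beta-min condition. Since $\supp(\hat\btheta^*)\subseteq S$, every coordinate of $\hat\bbeta^*$ outside $S_1$ is exactly zero, so sign recovery reduces to showing that each truly nonzero $\beta^*_j$ keeps its sign. The $\ell_\infty$ bound gives $\|\hat\bbeta^*-\bbeta^*\|_\infty\le\tfrac{3}{5\kappa_\infty}(\|\nabla L_n(\btheta^*)\|_\infty+\lambda_n)$, and choosing $\lambda$ in the stated interval makes this strictly smaller than $\min_j|\beta^*_j|$ whenever the beta-min threshold with $C\ge5$ holds, forcing $\operatorname{sign}(\hat\bbeta^*)=\operatorname{sign}(\bbeta^*)$. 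The main obstacle is the strict dual-feasibility step: one must track how $\hat\btheta_S-\btheta^*_S$ feeds back into $\nabla_{S_2}L_n(\by,\bW\hat\btheta)$ and verify that the irrepresentable slack $\tau$ and the $\lambda_n$ lower bound jointly keep $\|\hat z_{S_2}\|_\infty$ below one with the sharp constants; the remaining steps are routine matrix-norm bookkeeping.
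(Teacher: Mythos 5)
Your proof is correct, but it takes a more self-contained route than the paper's. The paper's own proof is a verification-and-citation: it notes that because the loss is quadratic the Hessian $\nabla^2 L_n=\frac{1}{n}\bW^\top\bW$ is constant in $\btheta$, so the local-smoothness condition (Assumption 4.1 of \cite{fan2020factor}) holds trivially with an infinite locality radius, observes that Assumption \ref{asu:Bsparse_Fknown} supplies the restricted-strong-convexity and irrepresentable conditions of that reference, and then reads off the conclusion --- constants $3/(5\kappa_\infty)$ and $2/\kappa_2$ included --- from Theorem 4.1 of \cite{fan2020factor}. You instead re-derive the underlying primal--dual witness argument from scratch. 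Your on-support identity $\hat\btheta_S-\btheta^*_S=-n(\bW_S^\top\bW_S)^{-1}\big[\nabla_S L_n(\by,\bW\btheta^*)+\lambda_n\hat\bz_S\big]$ is exact for the quadratic loss, and combined with $\|n(\bW_S^\top\bW_S)^{-1}\|_\infty\le 1/(2\kappa_\infty)$, $\|n(\bW_S^\top\bW_S)^{-1}\|_2\le 1/(2\kappa_2)$, $\|\hat\bz_S\|_\infty\le1$ and $\|\hat\bz_S\|_2\le\sqrt{|S_1|}$ it actually yields the sharper constants $1/(2\kappa_\infty)$ and $1/(2\kappa_2)$, which imply the stated bounds a fortiori (the looser constants in the statement are inherited from the general non-quadratic analysis of the cited theorem). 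The strict-dual-feasibility step you flag as the main obstacle does close: substituting the error representation into $\hat z_{S_2}=-\lambda_n^{-1}\nabla_{S_2}L_n(\by,\bW\hat\btheta)$ and applying the irrepresentable bound gives $\|\hat z_{S_2}\|_\infty\le(2-\tau)\lambda_n^{-1}\|\nabla L_n(\by,\bW\btheta^*)\|_\infty+1-\tau$, which is strictly below one as soon as $\lambda_n>(2/\tau)\|\nabla L_n(\by,\bW\btheta^*)\|_\infty$, so the $7/\tau$ threshold leaves ample slack. What the paper's route buys is brevity; what yours buys is transparency about where each constant, the role of the unpenalized $\vect(\bA)$ block in the subgradient, and the $7/\tau$ margin actually come from.
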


\begin{proof}[Proof of Proposition \ref{prop:Bsparse_Fknown}]

Taking gradient w.r.t. $\btheta_S$, we get
\[
\begin{aligned}
&\triangledown_S L_n(\by, \bW\btheta) = \frac{1}{n} \bW_S^\top(\bW\btheta-\by) \in \RR^{|S|},\\
&\triangledown_{SS}^2 L_n(\by, \bW\btheta) = \frac{1}{n} \bW_S^\top\bW_S\in \RR^{|S|\times |S|},\\
&\triangledown_{jS}^2 L_n(\by, \bW\btheta) = \frac{1}{n} \sum_{i=1}^n w_{ij} \bw_{iS}.
\end{aligned}
\]
Thus, for any $\btheta\in\RR^{p_1p_1+K_1K_2}$, $\|\triangledown_{\cdot S}^2 L_n(\by, \bW\btheta) - \triangledown_{\cdot S}^2 L_n(\by, \bW\btheta^*)\|_\infty =0.$
Assumption 4.1 in \cite{fan2020factor} holds with $M=1$ and $A=\infty$.
Assumption 4.2 and 4.3 in \cite{fan2020factor} are satisfied by Assumption \ref{asu:Bsparse_Fknown}.
Therefore, by Theorem 4.1 in \cite{fan2020factor}, we get the results.
\end{proof}

\begin{corollary}\label{cor:Bsparse_Fknown}
Suppose all the conditions in Proposition \ref{prop:Bsparse_Fknown} hold, and moreover, Assumption \ref{asu:Baparse_W} holds. Then, by taking $\lambda_n>\frac{7}{\tau}\left\|\frac{1}{n} (\bW^\top \bW)\right\|_{\infty}$ and $\lambda_n\asymp\frac{7}{\tau}\left\|\frac{1}{n} (\bW^\top \bW)\right\|_{\infty}$,
we have $\operatorname{supp}(\widehat{\boldsymbol{\theta}^*}) \subseteq S$ and
$$
\begin{aligned}
& \left\|\widehat{\boldsymbol{\theta}}^*-\boldsymbol{\theta}^*\right\|_{\infty} = \mathcal O_p\left(\sqrt{\frac{|S|}{ n}} + \frac{(\log(dn))^2}{n}\right) \\
& \left\|\widehat{\boldsymbol{\theta}}^*-\boldsymbol{\theta}^*\right\|_2 =\mathcal O_p\left(\sqrt{\frac{|S|}{n}} + \frac{\sqrt{|S|}(\log(dn))^2}{n}\right) \\
& \left\|\widehat{\boldsymbol{\theta}}^*-\boldsymbol{\theta}^*\right\|_1 =\mathcal O_p\left(\frac{|S|}{\sqrt n}+ \frac{{|S|}(\log(dn))^2}{n}\right),
\end{aligned}
$$
where $d= p_1p_2+K_1K_2$.
\end{corollary}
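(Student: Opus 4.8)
The plan is to read Corollary~\ref{cor:Bsparse_Fknown} as the stochastic instantiation of the deterministic guarantee in Proposition~\ref{prop:Bsparse_Fknown}: the latter already delivers the support recovery $\supp(\hat\btheta^*)\subseteq S$ together with the $\ell_\infty$, $\ell_2$, and $\ell_1$ error bounds expressed through the two random quantities $\|\nabla_S L_n(\by,\bW\btheta^*)\|_\bullet$ and the tuning level $\lambda_n$. Since $\bF_i$ and $\bU_i$ are observed in this section, the identity $\bw_i^\top\btheta^* = \angles{\bU_i,\bB^*} + \angles{\bF_i,\bA^*}$ gives $\by = \bW\btheta^* + \bepsilon$, whence $\nabla_S L_n(\by,\bW\btheta^*) = -\tfrac1n\bW_S^\top\bepsilon$. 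Thus the whole proof reduces to controlling, in probability, the three norms of $\tfrac1n\bW_S^\top\bepsilon$ and the level $\lambda_n\asymp \tfrac7\tau\|\tfrac1n\bW^\top\bW\|_\infty$, and then substituting these rates into the displayed bounds of Proposition~\ref{prop:Bsparse_Fknown}.

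First I would establish the concentration estimates. Under Assumption~\ref{asu:Baparse_W} the columns of $\bW$ are sub-Gaussian, and with $\varepsilon_i$ sub-Gaussian and exogenous ($\EE[\varepsilon_i\bw_i]=\bzero$), each coordinate of $\tfrac1n\bW^\top\bepsilon$ is a sample mean of i.i.d.\ mean-zero sub-exponential variables $w_{ij}\varepsilon_i$, being a product of two sub-Gaussians. A second-moment argument gives $\EE\|\tfrac1n\bW_S^\top\bepsilon\|_2^2 = \mathcal O(|S|/n)$, hence by Markov's inequality $\|\tfrac1n\bW_S^\top\bepsilon\|_2 = \Op{\sqrt{|S|/n}}$; this is the source of all the leading $\sqrt{|S|/n}$, $\sqrt{|S|}$, and $|S|/\sqrt n$ terms after the comparisons $\|\cdot\|_\infty\le\|\cdot\|_2$ and $\|\cdot\|_1\le\sqrt{|S|}\|\cdot\|_2$. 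The coordinatewise $\ell_\infty$ control, and the analogous bound for $\lambda_n$, requires a Bernstein inequality combined with truncation: because the maximum of the $\asymp dn$ products $|w_{ij}\varepsilon_i|$ is only $\Op{\log(dn)}$, truncating at this envelope and applying Bernstein with a union bound over the $d$ coordinates produces the remainder of order $(\log(dn))^2/n$.

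Finally I would assemble the pieces. Plugging $\|\nabla_S L_n(\by,\bW\btheta^*)\|_\infty$, $\|\cdot\|_2$, $\|\cdot\|_1$ and $\lambda_n$ — each of order $\sqrt{|S|/n}$ in its leading part and $(\log(dn))^2/n$ in its remainder — into the three inequalities of Proposition~\ref{prop:Bsparse_Fknown}(i), and using $|S_1|\le|S|$ together with $\kappa_\infty,\kappa_2,\tau$ bounded away from $0$, yields exactly the stated $\ell_\infty$, $\ell_2$, and $\ell_1$ rates, while $\supp(\hat\btheta^*)\subseteq S$ is inherited verbatim once $\lambda_n$ exceeds the prescribed threshold with high probability.

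The main obstacle is the heavy-tailed step: products of sub-Gaussian covariates and errors are only sub-exponential, so the naive Bernstein bound is inadequate in the regime where $\log d$ is not small relative to $n$, and the truncation must be calibrated precisely to the $\Op{\log(dn)}$ envelope in order to extract the $(\log(dn))^2/n$ remainder rather than a cruder rate. A secondary technical point is reconciling the normalization $\lambda_n\asymp\tfrac7\tau\|\tfrac1n\bW^\top\bW\|_\infty$ of the tuning parameter with the gradient threshold used inside Proposition~\ref{prop:Bsparse_Fknown}, which I would handle by applying the same sub-exponential concentration directly to the entries of $\tfrac1n\bW^\top\bW$.
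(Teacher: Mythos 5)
Your skeleton is the same as the paper's: reduce everything to stochastic bounds on $\nabla_S L_n(\by,\bW\btheta^*)=\tfrac1n\bW_S^\top(\bW\btheta^*-\by)=-\tfrac1n\bW_S^\top\bepsilon$ and on the tuning level $\lambda_n$, then substitute into the three displays of Proposition \ref{prop:Bsparse_Fknown}. Your treatment of the gradient also matches the paper's: it likewise takes $\|\tfrac1n\bW_S^\top\bepsilon\|_2=\Op{\sqrt{|S|/n}}$ from a second-moment bound and passes to the $\ell_\infty$ and $\ell_1$ norms by the crude comparisons you cite, which is exactly why the corollary's $\ell_\infty$ bound carries the non-sharp leading term $\sqrt{|S|/n}$. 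Your extra Bernstein-with-truncation machinery for the coordinatewise gradient is not used in the paper and is not needed to match the stated rates, though it is harmless.

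The genuine gap is in the $\lambda_n$ step. You propose to obtain $\lambda_n\asymp\tfrac{7}{\tau}\|\tfrac1n\bW^\top\bW\|_\infty=\Op{(\log(dn))^2/n}$ by ``applying the same sub-exponential concentration directly to the entries of $\tfrac1n\bW^\top\bW$.'' Concentration of those entries is concentration around their means, and the diagonal entries $\tfrac1n\sum_{i=1}^n w_{ij}^2$ converge to $\EE w_{1j}^2=\Theta(1)$; since $\|\tfrac1n\bW^\top\bW\|_\infty$ dominates every diagonal entry, no deviation-from-the-mean estimate can make it of order $(\log(dn))^2/n$. With $\lambda_n=\Theta(1)$, the three displays of Proposition \ref{prop:Bsparse_Fknown} only give $\Op{1}$, $\Op{\sqrt{|S|}}$ and $\Op{|S|}$, not the stated vanishing rates, so your plan as written does not reach the corollary. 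The paper's own derivation of the $(\log(dn))^2/n$ term is of a different nature: it is an absolute high-probability tail bound on $\max_{j\in[d]}\sum_{i=1}^n|w_{ij}|\,\|\bw_i\|$ at the threshold $t=(\log(dn))^2$, obtained from the tails of the individual products $|w_{ij}|\,\|\bw_i\|$ together with a union bound over $i$ and $j$ --- not a comparison to $\EE[w_{ij}w_{ik}]$. That step (delicate in its own right, since it controls an $n$-term nonnegative sum by the same threshold used for a single summand) is the sole source of the $(\log(dn))^2/n$ remainder, and you would need to reproduce it, or supply a valid substitute, rather than argue via entrywise concentration of $\tfrac1n\bW^\top\bW$.
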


\begin{proof}[Proof of Corollary \ref{cor:Bsparse_Fknown}]
Recall that $\bepsilon = \by- \bW\btheta^*$. We have
\[
\begin{aligned}
&\|\triangledown_S L_n(\by, \bW\btheta^*)\|_\infty = \frac{1}{n} \|\bW_S^\top \bepsilon\|_\infty =  \mathcal O_p\left(\sqrt{\frac{|S|}{n}}\right),\\
& \|\triangledown_S L_n(\by, \bW\btheta^*)\|_2 = \frac{1}{n} \|\bW_S^\top \bepsilon\|_2 = { \mathcal O_p\left(\sqrt{\frac{|S|}{n}}\right)},\\
& \|\triangledown_S L_n(\by, \bW\btheta^*)\|_1 = \frac{1}{n} \|\bW_S^\top \bepsilon\|_1 =  \mathcal O_p\left({\frac{|S|}{\sqrt n}}\right).
\end{aligned}
\]

We also have to determine the order of $\lambda_n$. 
By taking $d=p_1p_2+K_1K_2$ and by Assumption \ref{asu:Baparse_W}, we have 
\[
\begin{aligned}
\PP( |w_{ij}| \|\bw_{i}\| \leq t) 
&\geq \PP(|w_{ij}|\leq \sqrt{t},  \|\bw_{i}\| \leq \sqrt{t}) \\
&\geq \PP(\|\bw_{i}\| \leq \sqrt{t} \wedge C\sqrt d)- \PP(|w_{ij}|> \sqrt{t}) \\
&\geq 1-e^{-c_1t} - e^{-c_2t}.
\end{aligned}
\]
Thus, 
\[
\begin{aligned}
\PP(\sum_{i=1}^n |w_{ij}| \|\bw_{i}\| \leq t) &
= 1-\PP(\sum_{i=1}^n |w_{ij}| \|\bw_{i}\| > t)\\
&\geq 1- \sum_{i=1}^n \PP( |w_{ij}| \|\bw_{i}\| > t)\\
& \geq 1- n\cdot (e^{-c_{1i}t}+e^{-c_{2i}t}).
\end{aligned}
\]
Therefore,
\[
\begin{aligned}
\PP( \|\bW^\top\bW\|_\infty \leq t) 
& = \PP(\max_{j\in[d]} \| \sum_{i=1} w_{ij} \bw_i \| \leq t )\\
& \geq \PP(\max_{j\in[d]} \sum_{i=1} |w_{ij}| \|\bw_i \| \leq t )\\
& =  \PP(\forall j\in[d], \sum_{i=1} |w_{ij}| \|\bw_i \| \leq t )\\
& \geq \sum_{i=1}^d \PP(\sum_{i=1} |w_{ij}| \|\bw_i \| \leq t) - (d-1)\\
& \geq 1-dn(e^{-c_1t}+ e^{-c_2t}).
\end{aligned}
\]
By taking $t=(\log (dn))^2$, we have 
\[
\lambda_n = C\cdot \frac{7}{\tau}\left\|\frac{1}{n} (\bW^\top \bW)\right\|_{\infty} = \mathcal O_p\left(\frac{(\log (dn))^2}{n}\right).
\]
\end{proof}

\section{FAMAR-Sparse with Estimated $\bF_i$} \label{append:sparse-B-estimated-F}

Recall that we defined $\bW := (\bw_1,\ldots,\bw_n)^\top$, $\hat\bW := (\hat\bw_1,\ldots,\hat\bw_n)^\top$, and $\xbar\bW := (\xbar\bw_1,\ldots,\xbar\bw_n)^\top$, where $\xbar\bw_i:= (\vect(\hat\bU_i)^\top, \vect(\bH_1^{-1}\hat\bF_i(\bH_2^\top)^{-1})^\top)^\top$.
The estimators solves the following optimization problem: 
\[
\big(\hat{\bA},\hat{\bB}\big)\in\argmin_{\bA,\bB} \left\{\frac{1}{2n}\sum_{i=1}^n \left( y_i-\angles{\bA,\hat{\bF}_i}-\angles{\bB,\hat{\bU}_i} \right)^2 + \lambda_n\|\vect (\bB)\|_{1}\right\}.
\]	
The analysis on $\hat{\bA}$ is similar to that of Theorem \ref{thm:consistency2} based on the fact that $\hat{\FF}^\top \hat{\UU} = \bzero$, which leads to
\[
\vect(\hat\bA) =  \big(\hat{\FF}^\top \hat{\FF}\big)^{-1}\hat{\FF}^\top \by.
\]
The consistency of $\hat\bB$ is established in Theorem \ref{thm:Bsparse_Funknown}. Before proving Theorem \ref{thm:Bsparse_Funknown}, we first introduce a technical lemma and proposition.
\begin{lemma}\label{lem:inv_sub}
Suppose $\mathbf{A} \in \mathbb{R}^{q \times r}$ and $\mathbf{B}, \mathbf{C} \in \mathbb{R}^{r \times r}$ and $\left\|\mathbf{C B}^{-1}\right\|<1$, where $\|\cdot\|$ is an induced norm. Then $\left\|\mathbf{A}\left[(\mathbf{B}+\mathbf{C})^{-1}-\mathbf{B}^{-1}\right]\right\| \leq \frac{\left\|\mathbf{A} \mathbf{B}^{-1}\right\| \cdot\left\|\mathbf{C B}^{-1}\right\|}{1-\left\|\mathbf{C} \mathbf{B}^{-1}\right\|}$.
\end{lemma}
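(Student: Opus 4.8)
The plan is to express the difference of inverses $(\mathbf{B}+\mathbf{C})^{-1}-\mathbf{B}^{-1}$ as a single structured product in which the factors $\mathbf{A}\mathbf{B}^{-1}$, $\mathbf{C}\mathbf{B}^{-1}$, and a resolvent term each appear explicitly, and then to bound that product by submultiplicativity of the induced norm together with a Neumann-series estimate. The hypothesis $\|\mathbf{C}\mathbf{B}^{-1}\|<1$ (which in particular presupposes that $\mathbf{B}$ is invertible, so that $\mathbf{B}^{-1}$ and $\mathbf{C}\mathbf{B}^{-1}$ are well defined) is precisely what guarantees both that $\mathbf{B}+\mathbf{C}$ is invertible and that the relevant geometric series converges.

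First I would factor $\mathbf{B}+\mathbf{C}=(\mathbf{I}+\mathbf{C}\mathbf{B}^{-1})\mathbf{B}$, so that $(\mathbf{B}+\mathbf{C})^{-1}=\mathbf{B}^{-1}(\mathbf{I}+\mathbf{C}\mathbf{B}^{-1})^{-1}$, where $(\mathbf{I}+\mathbf{C}\mathbf{B}^{-1})^{-1}$ exists because $\|\mathbf{C}\mathbf{B}^{-1}\|<1$. Subtracting $\mathbf{B}^{-1}$ and applying the elementary identity $(\mathbf{I}+\mathbf{M})^{-1}-\mathbf{I}=-\mathbf{M}(\mathbf{I}+\mathbf{M})^{-1}$ with $\mathbf{M}=\mathbf{C}\mathbf{B}^{-1}$ yields
\begin{equation*}
(\mathbf{B}+\mathbf{C})^{-1}-\mathbf{B}^{-1}
= -\,\mathbf{B}^{-1}\mathbf{C}\mathbf{B}^{-1}(\mathbf{I}+\mathbf{C}\mathbf{B}^{-1})^{-1}.
\end{equation*}
Left-multiplying by $\mathbf{A}$ and regrouping the first two factors as $(\mathbf{A}\mathbf{B}^{-1})(\mathbf{C}\mathbf{B}^{-1})$ then exhibits the right-hand side as the product $-(\mathbf{A}\mathbf{B}^{-1})(\mathbf{C}\mathbf{B}^{-1})(\mathbf{I}+\mathbf{C}\mathbf{B}^{-1})^{-1}$, which is exactly the form the target bound calls for.

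Finally I would invoke submultiplicativity of the induced norm to obtain
\begin{equation*}
\left\|\mathbf{A}\bigl[(\mathbf{B}+\mathbf{C})^{-1}-\mathbf{B}^{-1}\bigr]\right\|
\leq \|\mathbf{A}\mathbf{B}^{-1}\|\,\|\mathbf{C}\mathbf{B}^{-1}\|\,\left\|(\mathbf{I}+\mathbf{C}\mathbf{B}^{-1})^{-1}\right\|,
\end{equation*}
and then estimate the resolvent factor by the Neumann series $\|(\mathbf{I}+\mathbf{C}\mathbf{B}^{-1})^{-1}\|\le\sum_{k\ge 0}\|\mathbf{C}\mathbf{B}^{-1}\|^{k}=(1-\|\mathbf{C}\mathbf{B}^{-1}\|)^{-1}$, which gives the stated inequality. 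There is no genuine obstacle in this argument: the only points requiring a word of justification are that $\|\mathbf{C}\mathbf{B}^{-1}\|<1$ legitimizes both the inversion of $\mathbf{I}+\mathbf{C}\mathbf{B}^{-1}$ and the geometric bound on its inverse, and that the induced-norm property $\|\mathbf{X}\mathbf{Y}\|\le\|\mathbf{X}\|\,\|\mathbf{Y}\|$ applies at each step — all standard facts.
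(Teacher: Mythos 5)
Your proof is correct. The paper states Lemma \ref{lem:inv_sub} without providing a proof (it is used as a standard perturbation fact), and your argument --- factoring $\mathbf{B}+\mathbf{C}=(\mathbf{I}+\mathbf{C}\mathbf{B}^{-1})\mathbf{B}$, applying the identity $(\mathbf{I}+\mathbf{M})^{-1}-\mathbf{I}=-\mathbf{M}(\mathbf{I}+\mathbf{M})^{-1}$ to get the exact product form $-(\mathbf{A}\mathbf{B}^{-1})(\mathbf{C}\mathbf{B}^{-1})(\mathbf{I}+\mathbf{C}\mathbf{B}^{-1})^{-1}$, and then using submultiplicativity with the Neumann-series bound on the resolvent --- is precisely the standard derivation that the stated inequality presupposes. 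Every step checks out, including the justification that $\|\mathbf{C}\mathbf{B}^{-1}\|<1$ guarantees invertibility of $\mathbf{I}+\mathbf{C}\mathbf{B}^{-1}$ and hence of $\mathbf{B}+\mathbf{C}$.
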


{
\begin{proposition}\label{prop:u-element}
For the element-wise error of $\hat\bU_i$'s, we have the following results.
\begin{itemize}
\item[(a)] $\max_{j\in[p_1],k\in[p_2]}\frac{1}{n} \sum_{i=1}^n |u_{i,jk} - \hat u_{i,jk}|  = \mathcal O_p\left(\frac{log p_1p_2}{n} + \sqrt{\frac{\log p_1p_2}{n}} \left(\frac{p_1}{p_2} \vee \frac{p_2}{p_1} \right)^{1/2} \right).$
\item[(b)] $\max_{j\in[p_1],k\in[p_2]}\frac{1}{n} \sum_{i=1}^n |u_{i,jk} - \hat u_{i,jk}|^2  = \mathcal O_p\left(\frac{log^2 p_1p_2}{n^2} + \frac{\log p_1p_2}{n} \left(\frac{p_1}{p_2} \vee \frac{p_2}{p_1} \right) \right)$;
\item[(c)] $\max_{j\in[p_1],k\in[p_2]} |u_{i,jk} - \hat u_{i,jk}|  = \mathcal O_p\left(\frac{log p_1p_2}{n} + \left[\frac{\log p_1p_2}{n} \left(\frac{p_1}{p_2} \vee \frac{p_2}{p_1} \right)\right]^{1/2} + \sqrt{\frac{\log np_1p_2}{p_2p_2}} \right)$.
\end{itemize}
\end{proposition}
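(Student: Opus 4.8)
The starting point is the exact decomposition of the reconstruction error already used in the proof of Theorem~\ref{thm:U_element}. Writing $\hat{u}_{i,jk}-u_{i,jk} = (\bR^*\bF_i\bC^{*\top})_{jk} - \tfrac{1}{\hat c_s\hat r_s}(\hat\bR\hat\bF_i\hat\bC^\top)_{jk}$ and substituting $\hat\bF_i = \bH_1\bF_i\bH_2^\top + \tfrac{1}{p_1p_2}\bW_1^\top\bU_i\bW_2$ from \eqref{eqn:FFhat} together with $\mathring\bR=\bR^*\bH_1^{-1}$, $\mathring\bC=\bC^*\bH_2^{-1}$, I would split the error into three additive pieces,
\[
\begin{aligned}
\hat u_{i,jk}-u_{i,jk} &= \underbrace{(\hat\bR_{j\cdot}-\mathring\bR_{j\cdot})^\top\bH_1\bF_i\bH_2^\top\mathring\bC_{k\cdot}}_{R\text{-error}} + \underbrace{\mathring\bR_{j\cdot}^\top\big(\tfrac{1}{p_1p_2}\bW_1^\top\bU_i\bW_2\big)\mathring\bC_{k\cdot}}_{F\text{-error}} \\
&\quad + \underbrace{\mathring\bR_{j\cdot}^\top\bH_1\bF_i\bH_2^\top(\hat\bC_{k\cdot}-\mathring\bC_{k\cdot})}_{C\text{-error}} + (\text{h.o.t.}),
\end{aligned}
\]
exactly as in Theorem~\ref{thm:U_element} but now retaining the dependence on $(i,j,k)$ so that averages over $i$ and maxima over $(j,k)$ (and over $i$ for part (c)) can be taken. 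I would condition throughout on the pre-training sample, so that $\bW_1,\bW_2$ (hence $\bH_1,\bH_2$) are fixed and the only randomness left is in $\{(\bF_i,\bU_i)\}_{i=1}^n$.

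For the $R$- and $C$-pieces the plan is to factor out the row errors $\|\hat\bR_{j\cdot}-\mathring\bR_{j\cdot}\|_2$ and $\|\hat\bC_{k\cdot}-\mathring\bC_{k\cdot}\|_2$ by Cauchy--Schwarz, bounding the remaining factor by $\|\bH_1\|_2\|\bH_2\|_2\|\mathring\bC_{k\cdot}\|_2\cdot\tfrac1n\sum_i\|\bF_i\|_F$ (resp. the symmetric quantity), which is $\Op{1}$ via a sub-exponential maximal inequality on $\tfrac1n\sum_i\|\bF_i\|_F^2$. Uniform control of the row errors over $j\in[p_1]$, $k\in[p_2]$ is then obtained from the rates $\|\hat\bR-\mathring\bR\|_2=\Op{\sqrt{p_1/(np_2)}}$ and $\|\hat\bC-\mathring\bC\|_2=\Op{\sqrt{p_2/(np_1)}+1/\sqrt n}$ established in the proof of Theorem~\ref{thm:consistency2}, combined with the element-wise asymptotic normality of Theorem~\ref{thm:MCMR}; a sub-Gaussian maximal inequality (union bound over the $p_1p_2$ pairs) supplies the $\log(p_1p_2)$ factor and yields the leading term $\sqrt{\log(p_1p_2)/n}\,(p_1/p_2\vee p_2/p_1)^{1/2}$. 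For the $F$-piece I would write it as the bilinear form $\tfrac{1}{p_1p_2}a_j^\top\bU_i b_k$ with $a_j=\bW_1\mathring\bR_{j\cdot}$, $b_k=\bW_2\mathring\bC_{k\cdot}$, note $\|a_j\|_2^2\le\lambda_{\max}(\bW_1^\top\bW_1)\|\mathring\bR_{j\cdot}\|_2^2=O(p_1)$ and similarly $\|b_k\|_2^2=O(p_2)$, and apply a Hanson--Wright bound; after averaging over $i$ this contributes at order $1/\sqrt{p_1p_2}$, which is dominated under the stated conditions.

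Parts (a)--(c) then follow by recombining. For (a) I combine the three pieces by the triangle inequality and use a sub-exponential concentration bound for the empirical average $\tfrac1n\sum_i|\cdot|$ around its mean. For (b) I use $|u_{i,jk}-\hat u_{i,jk}|^2\le 3(R^2+F^2+C^2)$ with the same factorizations and $\tfrac1n\sum_i\|\bF_i\|_F^2=\Op{1}$, so the bound is the square of the one in (a). For (c) I take the maximum over all $(i,j,k)$: the extra $\log n$ enters through $\max_i\|\bF_i\|_F=\Op{\sqrt{\log n}}$ in the $R$- and $C$-pieces, while the $F$-piece, maximized over $np_1p_2$ indices via a union bound and Hanson--Wright, produces the term $\sqrt{\log(np_1p_2)/(p_1p_2)}$.

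The main obstacle is the uniform control required for the maxima together with the statistical dependence among $\hat\bR$, $\hat\bC$, $\hat\bF_i$ and the common idiosyncratic noise $\UU$: the pre-trained projection, the OLS estimator $\tilde\bGamma$, and the block averages all depend on $\UU$, so the three error pieces are not independent, and the plug-in factors $\hat\bR_{j\cdot}$, $\hat\bC_{k\cdot}$ are themselves random functions of the same noise that drives the $F$-piece. I would handle this by replacing $\hat\bR_{j\cdot}$, $\hat\bC_{k\cdot}$ by their rotated targets in the leading-order terms (the difference being absorbed into the higher-order remainder via the linear-in-$\UU$ expansion of Proposition~\ref{thm:ols-pre-trained}) and by invoking Hanson--Wright-type concentration for the bilinear forms uniformly in $(j,k)$, which is the only place where controlling a genuinely quadratic function of $\UU$ is unavoidable.
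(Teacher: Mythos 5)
Your proposal follows essentially the same route as the paper's proof: the same triangle-inequality decomposition of $\hat u_{i,jk}-u_{i,jk}$ into loading-error and factor-error pieces, the same reliance on the linear-in-$\UU$ expansion of $\tilde\bGamma$ from Proposition~\ref{thm:ols-pre-trained} to get max-norm control $\|\hat\bR/\hat c_s-\mathring\bR\|_{\max}$ and $\|\hat\bC/\hat r_s-\mathring\bC\|_{\max}$ uniformly over rows, and the same union-bound mechanism producing the $\log(p_1p_2)$ (and, in part (c), $\log(np_1p_2)$) factors. The only differences are in the choice of concentration tool (the paper bounds $\|\FF^\top\UU\|_{\max}$ and $\|\hat\FF-\FF(\bH_2\otimes\bH_1)^\top\|_{\max}$ directly rather than invoking Hanson--Wright, and does not route the uniform row control through the asymptotic normality of Theorem~\ref{thm:MCMR}), which do not change the substance of the argument.
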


\begin{proof}[Proof of Proposition \ref{prop:u-element}]
To start with, By Definition \ref{def:W} (a), we have
\[
\left\|\hat\FF - \FF(\bH_2\otimes\bH_1)^\top\right\|_F^2 = \frac{1}{p_1^2p_2^2} \|\UU(\bW_2\otimes\bW_1)\|_F^2 \leq \frac{1}{p_1^2p_2^2} \|\UU\|_F^2 \|\bW_2\|_2^2 \|\bW_1\|_2^2 = \mathcal O_p\left( \frac{n}{p_1p_2}\right).
\]
Moreover, similar to \cite{fan2013large}, we have
\[
\frac{1}{n} \|\FF^\top\UU\|_{\max} = \max_{j\in[p_1p_2], k\in[k_1k_2]} \left|\frac{1}{n} \sum_{i=1}^n u_{ij}f_{ik}\right| = \mathcal O_p\left( \sqrt{\frac{\log p_1p_2}{n}}\right).
\]
Thus, by Proposition \ref{thm:ols-pre-trained}, we have
\[
\begin{aligned}
\left\|\frac{1}{\hat r_s}\hat{\bC} - \mathring{\bC}\right\|_{\max} &= \left\|\frac{1}{p_1k_1} \bE_{p_2p_1} \bS_{p_2p_1} \left[\big(\hat{\FF}^\top\hat{\FF}\big)^{-1}(\bH_2\otimes\bH_1)\FF^\top \UU \right]^\top \bS_{k_2k_1}^\top \bE_{k_2k_1}^\top\right\|_{\max} + \mathcal O_p\left(\sqrt{\frac{p_2}{np_1}} + \frac{1}{\sqrt{n}}\right)\\
& \leq \frac{C}{np_1} \sqrt{p_2}\|\bE_{p_2p_1}\|_2 \|\bS_{p_2p_1}\|_2 \dfrac{1}{\lambda_{\min}(\hat\FF^{\top}\hat\FF/n)} \|\bH_2\otimes\bH_1\|_2 \|\FF^\top\UU\|_{\max} \|\bS_{k_2k_1}\|_2 \|\bE_{k_2k_1}\|_2 \\
&\qquad \qquad \qquad\qquad \qquad \qquad\qquad \qquad \qquad\qquad \qquad \qquad + \mathcal O_p\left(\sqrt{\frac{p_1}{np_2}} + \frac{1}{\sqrt{n}}\right)\\
\vspace{-0.2in}
& =  \mathcal O_p\left(\sqrt{\frac{p_1\log p_1p_2}{np_2}} + \frac{1}{\sqrt{n}}\right).
\end{aligned}
\]
Similarly, we have $\|\frac{1}{\hat c_s}\hat\bR - \mathring \bR\|_{\max} = \mathcal O_p\left(\sqrt{\frac{p_2\log p_1p_2}{np_1}} + \frac{1}{\sqrt{n}}\right)$.
Therefore, by triangle inequality, we have
{\allowdisplaybreaks
\begin{align*}
&\max_{j\in[p_1], k\in[p_2]} \frac{1}{n}\sum_{i=1}^n|\hat{u}_{i,jk}-u_{i,jk}|\\
&\leq  \max_{j\in[p_1]} \left\|\frac{1}{\hat c_s}\hat{\bR}_{j\cdot} - \mathring{\bR}_{j\cdot} \right\| \|\bH_1\|_2 \|\bH_2\|_2 \frac{1}{n}\sum_{i=1}^n\|\bF_i\|_F \max_{k\in[p_2]} \left\|\frac{1}{\hat r_s}\hat{\bC}_{k\cdot} - \mathring{\bC}_{k\cdot} \right\|\\
&\qquad + \max_{j\in[p_1]} \left\|\frac{1}{\hat c_s}\hat{\bR}_{j\cdot} - \mathring{\bR}_{j\cdot} \right\| \frac{1}{n}\sum_{i=1}^n\|\hat\bF_i-\bH_1\bF_i\bH_2^\top\|_F \max_{k\in[p_2]} \left\|\frac{1}{\hat c_s}\hat{\bC}_{k\cdot} - \mathring{\bC}_{k\cdot} \right\|\\
&\qquad +\max_{j\in[p_1]}\left\|\mathring{\bR}_{j\cdot} \right\| \|\bH_1\|_2 \|\bH_2\|_2 \frac{1}{n}\sum_{i=1}^n\|\hat\bF_i\|_F \max_{k\in[p_2]} \left\|\frac{1}{\hat r_s}\hat{\bC}_{k\cdot} - \mathring{\bC}_{k\cdot} \right\|\\
&\qquad +\max_{j\in[p_1]}\left\|\mathring{\bR}_{j\cdot} \right\| \frac{1}{n}\sum_{i=1}^n\|\hat\bF_i-\bH_1\bF_i\bH_2^\top\|_F  \max_{k\in[p_2]} \left\|\frac{1}{\hat r_s}\hat{\bC}_{k\cdot} - \mathring{\bC}_{k\cdot} \right\|\\
&\qquad + \max_{j\in[p_1]} \left\|\frac{1}{\hat c_s}\hat{\bR}_{j\cdot} - \mathring{\bR}_{j\cdot} \right\| \frac{1}{n}\sum_{i=1}^n\|\hat\bF_i-\bH_1\bF_i\bH_2^\top\|_F \max_{k\in[p_2]}\left\|\mathring{\bC}_{k\cdot} \right\|\\
&\qquad +\max_{j\in[p_1]}\left\|\mathring{\bR}_{j\cdot} \right\| \frac{1}{n}\sum_{i=1}^n\|\hat\bF_i-\bH_1\bF_i\bH_2^\top\|_F  \max_{k\in[p_2]}\left\| \mathring{\bC}_{k\cdot} \right\|\\
&\qquad + \max_{j\in[p_1]} \left\|\frac{1}{\hat c_s}\hat{\bR}_{j\cdot} - \mathring{\bR}_{j\cdot} \right\| \|\bH_1\|_2 \|\bH_2\|_2 \frac{1}{n}\sum_{i=1}^n\|\bF_i\|_F \max_{k\in[p_2]}\left\|\mathring{\bC}_{k\cdot} \right\|\\
& = \mathcal O_p\left(\frac{log p_1p_2}{n} + \sqrt{\frac{\log p_1p_2}{n}} \left(\frac{p_1}{p_2} \vee \frac{p_2}{p_1} \right)^{1/2} \right).
\end{align*}
}
Similarly, by Cauchy-Schwartz inequality, we have
{\allowdisplaybreaks
\begin{align*}
&\max_{j\in[p_1], k\in[p_2]} \frac{1}{n}\sum_{i=1}^n|\hat{u}_{i,jk}-u_{i,jk}|^2\\
&\leq 7 \max_{j\in[p_1]} \left\|\frac{1}{\hat c_s}\hat{\bR}_{j\cdot} - \mathring{\bR}_{j\cdot} \right\|^2 \|\bH_1\|_2^2 \|\bH_2\|_2^2 \frac{1}{n}\sum_{i=1}^n\|\bF_i\|_F^2 \max_{k\in[p_2]} \left\|\frac{1}{\hat r_s}\hat{\bC}_{k\cdot} - \mathring{\bC}_{k\cdot} \right\|^2\\
&\qquad +7 \max_{j\in[p_1]} \left\|\frac{1}{\hat c_s}\hat{\bR}_{j\cdot} - \mathring{\bR}_{j\cdot} \right\|^2 \frac{1}{n}\sum_{i=1}^n\|\hat\bF_i-\bH_1\bF_i\bH_2^\top\|_F^2 \max_{k\in[p_2]} \left\|\frac{1}{\hat r_s}\hat{\bC}_{k\cdot} -  \mathring{\bC}_{k\cdot} \right\|^2\\
&\qquad +7\max_{j\in[p_1]}\left\| \mathring{\bR}_{j\cdot} \right\|^2 \|\bH_1\|_2^2 \|\bH_2\|_2^2 \frac{1}{n}\sum_{i=1}^n\|\hat\bF_i\|_F^2 \max_{k\in[p_2]} \left\|\frac{1}{\hat r_s}\hat{\bC}_{k\cdot} - \mathring{\bC}_{k\cdot} \right\|^2\\
&\qquad +7\max_{j\in[p_1]}\left\|\mathring{\bR}_{j\cdot} \right\|^2 \frac{1}{n}\sum_{i=1}^n\|\hat\bF_i-\bH_1\bF_i\bH_2^\top\|_F^2  \max_{k\in[p_2]} \left\|\frac{1}{\hat r_s}\hat{\bC}_{k\cdot} -  \mathring{\bC}_{k\cdot} \right\|^2\\
&\qquad +7 \max_{j\in[p_1]} \left\|\frac{1}{\hat c_s}\hat{\bR}_{j\cdot} - \mathring{\bR}_{j\cdot} \right\|^2 \frac{1}{n}\sum_{i=1}^n\|\hat\bF_i-\bH_1\bF_i\bH_2^\top\|_F^2 \max_{k\in[p_2]}\left\|\mathring{\bC}_{k\cdot} \right\|^2\\
&\qquad +7\max_{j\in[p_1]}\left\|\mathring{\bR}_{j\cdot} \right\|^2 \frac{1}{n}\sum_{i=1}^n\|\hat\bF_i-\bH_1\bF_i\bH_2^\top\|_F^2  \max_{k\in[p_2]}\left\|\mathring{\bC}_{k\cdot} \right\|^2\\
&\qquad +7 \max_{j\in[p_1]} \left\|\frac{1}{\hat c_s}\hat{\bR}_{j\cdot} - \mathring{\bR}_{j\cdot} \right\|^2 \|\bH_1\|_2^2 \|\bH_2\|_2^2 \frac{1}{n}\sum_{i=1}^n\|\bF_i\|_F^2 \max_{k\in[p_2]}\left\|\mathring{\bC}_{k\cdot} \right\|^2\\
& = \mathcal O_p\left(\frac{log^2 p_1p_2}{n^2} + \frac{\log p_1p_2}{n} \left(\frac{p_1}{p_2} \vee \frac{p_2}{p_1} \right) \right).
\end{align*}
}

Furthermore, replacing the average over $i$ in the above inequality with maximum over $i$ and $n^{-1}\sum_{i=1}^n\|\hat\bF_i - \bH_1\bF_i\bH_2^\top\|_F^2$ with $k_1k_2\|\hat\FF - \FF(\bH_2\otimes\bH_1)^\top\|_{\max}$, we can can also derive
\[
\max_{i\in[n], j\in[p_1], k\in[p_2]}|\hat{u}_{i,jk}-u_{i,jk}|^2 =  \mathcal O_p\left(\frac{log^2 p_1p_2}{n^2} + \frac{\log p_1p_2}{n} \left(\frac{p_1}{p_2} \vee \frac{p_2}{p_1} + \frac{\log np_1p_2}{p_1p_2} \right) \right).
\]
\end{proof}
}

\subsection{Proof of Theorem \ref{thm:Bsparse_Funknown}}

\begin{proof}[Proof of Theorem \ref{thm:Bsparse_Funknown}]
Replacing the true $(\bF_i,\bU_i)$ in MFM \eqref{eqn:famar} by the estimated $(\hat\bF_i. \hat{\bU}_i)$, we have
\[
\begin{aligned}
y_i &= \angles{\bF_i,\bA^*} +  \angles{\bU_i,\bB^*} + \varepsilon_i\\
& = \angles{(\bH_1^{-1})^\top\hat\bF_i\bH_2^{-1},\bA^*} +  \angles{\hat\bU_i,\bB^*} + \xbar\varepsilon_i 
= \xbar\bw_i^\top \btheta^* + \xbar\varepsilon_i,
\end{aligned}
\]
where $\xbar\varepsilon_i := y_i -  \xbar\bw_i^\top \btheta^*$.
Consider the optimization problem
\[
\xbar\btheta\in\argmin_{\btheta\in\RR^{p_1p_2+K_1K_2}} \left\{L_n(\by,\xbar\bW\btheta)+ \lambda_n\|\vect (\bB)\|_{1}\right\}.
\]	
We have $\|\vect(\hat\bB)-\vect(\bB^*)\|\leq\|\xbar\btheta-\btheta^*\|$ for any norm $\|\cdot\|$.

To apply Proposition \ref{prop:Bsparse_Fknown}, we need to show that the conditions on $\bW$ needed to hold for Proposition \ref{prop:Bsparse_Fknown} also hold for $\xbar\bW$.
We start with checking the restricted strong convexity. 
By Proposition \ref{prop:H}, \ref{lem:Ferr_nuc}, and \ref{prop:u-element} (c), we have
\[
\begin{aligned}
\max_{i\in[n], j\in[d]}	|\xbar{w}_{ij} - w_{ij}| =\mathcal O_p\left(\frac{log p_1p_2}{n} + \left[\frac{\log p_1p_2}{n} \left(\frac{p_1}{p_2} \vee \frac{p_2}{p_1} \right)\right]^{1/2} + \sqrt{\frac{\log np_1p_2}{p_2p_2}} \right).
\end{aligned}
\]
Moreover, by Assumption \ref{asu:Baparse_W} and maximal tail inequality, we have 
\[
\max_{j\in[p_1],k\in[p_2]}\frac{1}{n} \sum_{i=1}^n |\xbar w_{i,jk} - \hat w_{i,jk}|  = \mathcal O_p\left(\frac{log p_1p_2}{n} + \sqrt{\frac{\log p_1p_2}{n}} \left(\frac{p_1}{p_2} \vee \frac{p_2}{p_1} \right)^{1/2} \right).
\]
Also, by Assumption \ref{asu:Baparse_W} and maximal tail inequality, we have $\max_{i\in[n],j\in[d]}|w_{ik}| = \mathcal O_p(\sqrt{\log (np_1p_2)})$. 
Therefore, for $j\in[d]$, 
\begin{equation}\label{eqn:W_row_sub}
\begin{aligned}
&\max_{j\in[d]} \|\frac{1}{n} \sum_{i=1}^n \xbar w_{ij} \xbar\bw_{iS} - w_{ij} \bw_{iS}  \|_1 \\
&  \leq \max_{j\in[d]}\frac{1}{n} \sum_{i=1}^n \left[\|\xbar w_{ij} \xbar\bw_{iS} - \xbar w_{ij} \bw_{iS} \|_1 + \|\xbar w_{ij} \bw_{iS} - w_{ij} \bw_{iS} \|_1\right] \\
& \leq \max_{j\in[d]}\frac{1}{n} \sum_{i=1}^n \left[|\xbar w_{ij} | \sum_{k\in S}|\xbar w_{ik} - w_{ik}|+  |\xbar w_{ij} - w_{ij}| \sum_{k\in S}  |w_{ik}|\right]\\
&\leq \max_{i\in[n],j\in[d]} |\xbar w_{ij}| \cdot \sum_{k\in S} \left[\frac{1}{n}\sum_{i=1}^n |\xbar w_{ik}- w_{ik}| \right] + \max_{i\in[n],j\in[d]} |\xbar w_{ij} - w_{ij}| \cdot \sum_{k\in S} \left[\frac{1}{n} \sum_{i=1}^n |w_{ik}|\right]\\
& =\mathcal O_p\left\{ \sqrt{\log (np_1p_2)} \cdot |S| \left[\frac{log p_1p_2}{n} + \left[\frac{\log p_1p_2}{n} \left(\frac{p_1}{p_2} \vee \frac{p_2}{p_1} \right)\right]^{1/2} + \left(\frac{\log np_1p_2}{p_2p_2}\right)^{1/2} \right] \right\}\\
& = o_p(1).
\end{aligned}
\end{equation}

Let $\alpha_{\infty} = \|(\bW_S^\top\bW_S)^{-1}(\xbar\bW_S^\top\xbar\bW_S - \bW_S^\top\bW_S)\|_\infty$. With the assumption $ \|n(\bW_S^\top\bW_S)^{-1}\|_\infty \leq \kappa_{\infty}^{-1}$ and bound \eqref{eqn:W_row_sub}, we have
\begin{equation}\label{eqn:alpha}
\begin{aligned}
\alpha_{\infty}& \leq \|n(\bW_S^\top\bW_S)^{-1}\|_\infty \|\frac{1}{n} \xbar\bW_S^\top\xbar\bW_S - \frac{1}{n}\bW_S^\top\bW_S)\|_\infty \\
&=  \|n(\bW_S^\top\bW_S)^{-1}\|_\infty \max_{j\in S} \|\frac{1}{n} \sum_{i=1}^n \xbar w_{ij} \xbar\bw_{iS} - w_{ij} \bw_{iS} \|_1 \\
&\leq \frac{1}{\kappa_\infty} \cdot C\sqrt{\log (np_1p_2)} \cdot |S| \left[\frac{log p_1p_2}{n} + \left[\frac{\log p_1p_2}{n} \left(\frac{p_1}{p_2} \vee \frac{p_2}{p_1} \right)\right]^{1/2} + \left(\frac{\log np_1p_2}{p_2p_2}\right)^{1/2} \right]\\
&< \frac{1}{2}
\end{aligned}
\end{equation}
with high probability for some constant $C$ when $n, p_1,p_2$ are large enough.
Hence, by Lemma \ref{lem:inv_sub}, we have
\[
\begin{aligned}
\|n(\xbar\bW_S^\top \xbar\bW_S)^{-1} - n(\bW_S^\top \bW_S)^{-1} \|_\infty 
\leq \|n(\bW_S^\top \bW_S)^{-1} \|_\infty \cdot \frac{\alpha_{\infty}}{1-\alpha_{\infty}}
\leq \frac{2\alpha_{\infty}}{\kappa_{\infty}} 
\leq \frac{1}{2\kappa_{\infty}}.
\end{aligned}
\]
Together with Assumption \ref{asu:Bsparse_Fknown} and triangle inequality, we have
\[
\|n(\xbar\bW_S^\top \xbar\bW_S)^{-1} \|_\infty \leq 	\|n(\bW_S^\top \bW_S)^{-1} \|_\infty \leq \| + \frac{1}{2\kappa_\infty} \leq\frac{1}{\kappa_\infty},
\]
\[
\begin{aligned}
\left\|\frac{1}{n} \xbar\bW^\top\xbar\bW - \frac{1}{n}\bW^\top\bW\right\|_2 
& \leq \sqrt{|S|}   \|\frac{1}{n} \xbar\bW^\top\xbar\bW - \frac{1}{n}\bW^\top\bW\|_\infty.
\end{aligned}
\]
Meanwhile, for matrix 2-norm, let $\alpha_{2} = \|(\bW_S^\top\bW_S)^{-1}(\xbar\bW_S^\top\xbar\bW_S - \bW_S^\top\bW_S)\|_2$. Then
\[
\begin{aligned}
\alpha_{2}& \leq \|n(\bW_S^\top\bW_S)^{-1}\|_2 \|\frac{1}{n} \xbar\bW_S^\top\xbar\bW_S - \frac{1}{n}\bW_S^\top\bW_S\|_2 \\
&\leq  \|n(\bW_S^\top\bW_S)^{-1}\|_2  \sqrt{|S|}   \|\frac{1}{n} \xbar\bW_S^\top\xbar\bW_S - \frac{1}{n}\bW_S^\top\bW_S\|_\infty\\
&\leq \frac{1}{\kappa_2} \cdot C\sqrt{\log (p_1p_2)} |S|^{3/2} \left[\frac{log p_1p_2}{n} + \left[\frac{\log p_1p_2}{n} \left(\frac{p_1}{p_2} \vee \frac{p_2}{p_1} \right)\right]^{1/2} + \left(\frac{\log np_1p_2}{p_2p_2}\right)^{1/2} \right]\\
&< \frac{1}{2}
\end{aligned}
\]
By Lemma \ref{lem:inv_sub}, we have
\[
\begin{aligned}
\left\|n(\xbar\bW_S^\top \xbar\bW_S)^{-1} - n(\bW_S^\top \bW_S)^{-1} \right\|_2 
\leq \|n(\bW_S^\top \bW_S)^{-1} \|_2 \cdot \frac{\alpha_{2}}{1-\alpha_{2}}
\leq \frac{1}{2\kappa_{2}}.
\end{aligned}
\]
when $p_1,p_2$ are large enough. 
Together with Assumption \ref{asu:Bsparse_Fknown} and triangle inequality, we have
\[
\|n(\xbar\bW_S^\top \xbar\bW_S)^{-1} \|_2 \leq 	\|n(\bW_S^\top \bW_S)^{-1} \|_2 + \frac{1}{2\kappa_2} \leq\frac{1}{\kappa_2}.
\]
Therefore, the restricted strong convexity holds for $\xbar\bW$.

We also need to show that the irrepresentable condition holds for $\xbar\bW$. 
\[
\begin{aligned}
&\left\|\xbar\bW_{S_2}^\top \xbar\bW_S (\xbar\bW_S^\top\xbar\bW_S)^{-1}- \bW_{S_2}^\top \bW_S (\bW_S^\top\bW_S)^{-1}\right\|_\infty\\
\leq& \big\| \frac{1}{n}\xbar\bW_{S_2}^\top \xbar\bW_S - \frac{1}{n}\bW_{S_2}^\top \bW_S \big\|_\infty \|n(\xbar\bW_S^\top\xbar\bW_S)^{-1}\|_\infty  
+  \left\| \frac{1}{n} \bW_{S_2}^\top \bW_S\left[n(\xbar\bW_S^\top\xbar\bW_S)^{-1}-n(\bW_S^\top\bW_S)^{-1}\right]\right\|_\infty\\
=:& I+II.
\end{aligned}
\]
By \eqref{eqn:W_row_sub} and Assumption \ref{asu:Bsparse_Fknown} (a), we have
\[
\begin{aligned}
I &=   \|n(\bW_S^\top\bW_S)^{-1}\|_\infty \max_{j\in S_2} \|\frac{1}{n} \sum_{i=1}^n \xbar w_{ij} \xbar\bw_{iS} - w_{ij} \bw_{iS} \|_1\\
& \leq \frac{1}{\kappa_\infty} \cdot C\sqrt{\log (np_1p_2)} \cdot |S| \left[\frac{log p_1p_2}{n} + \left[\frac{\log p_1p_2}{n} \left(\frac{p_1}{p_2} \vee \frac{p_2}{p_1} \right)\right]^{1/2} + \left(\frac{\log np_1p_2}{p_2p_2}\right)^{1/2} \right].
\end{aligned}
\]
with high probability.
By Lemma \ref{lem:inv_sub} and Assumption \ref{asu:Bsparse_Fknown} (b), we have
\[
\begin{aligned}
II& \leq \| \bW_{S_2}^\top\bW_S(\bW_S^\top \bW_S)^{-1} \|_\infty \frac{\|(\frac{1}{n} \xbar\bW_S^\top\xbar\bW_S - \frac{1}{n} \bW_S^\top\bW_S) \cdot n(\bW_S^\top\bW_S)^{-1}\|_\infty}{1-\|(\frac{1}{n} \xbar\bW_S^\top\xbar\bW_S - \frac{1}{n} \bW_S^\top\bW_S) \cdot n(\bW_S^\top\bW_S)^{-1}\|_\infty}\\
& \leq\frac{\|(\frac{1}{n} \xbar\bW_S^\top\xbar\bW_S - \frac{1}{n} \bW_S^\top\bW_S)\|_\infty \|n(\bW_S^\top\bW_S)^{-1}\|_\infty}{1-\|(\frac{1}{n} \xbar\bW_S^\top\xbar\bW_S - \frac{1}{n} \bW_S^\top\bW_S)\|_\infty \| n(\bW_S^\top\bW_S)^{-1}\|_\infty}\\
&\leq \frac{\frac{1}{\kappa_\infty} \cdot C\sqrt{\log (np_1p_2)} \cdot |S| \left[\frac{log p_1p_2}{n} + \left[\frac{\log p_1p_2}{n} \left(\frac{p_1}{p_2} \vee \frac{p_2}{p_1} \right)\right]^{1/2} + \left(\frac{\log np_1p_2}{p_2p_2}\right)^{1/2} \right]}{1-1/2}\\
&\leq \frac{2C}{\kappa_\infty} \cdot C\sqrt{\log (np_1p_2)} \cdot |S| \left[\frac{log p_1p_2}{n} + \left[\frac{\log p_1p_2}{n} \left(\frac{p_1}{p_2} \vee \frac{p_2}{p_1} \right)\right]^{1/2} + \left(\frac{\log np_1p_2}{p_2p_2}\right)^{1/2} \right]
\end{aligned}
\]
with high probability, where the third inequality is by \eqref{eqn:W_row_sub}.

Combining these two bounds, we have
\[
\begin{aligned}
&\|\xbar\bW_{S_2}^\top \xbar\bW_S (\xbar\bW_S^\top\xbar\bW_S)^{-1}- \bW_{S_2}^\top \bW_S (\bW_S^\top\bW_S)^{-1}\|_\infty\\
\leq& C \sqrt{\log(p_1p_2)} |S| \left[\frac{log p_1p_2}{n} + \left[\frac{\log p_1p_2}{n} \left(\frac{p_1}{p_2} \vee \frac{p_2}{p_1} \right)\right]^{1/2} + \left(\frac{\log np_1p_2}{p_2p_2}\right)^{1/2} \right]
\leq  \tau/2
\end{aligned}
\]
with high probability when $n,p_1,p_2$ are large enough.
Therefore, by triangle inequality and Assumption \ref{asu:Bsparse_Fknown} (b), we have
\[
\|\xbar\bW_{S_2}^\top \xbar\bW_S (\xbar\bW_S^\top\xbar\bW_S)^{-1}\|_\infty \leq 1-\tau + \frac{\tau}{2} = 1-\frac{\tau}{2}.
\]

To sum up, we have verified that the conditions needed for Proposition \ref{prop:Bsparse_Fknown} also holds for $\xbar\bW$, and thus, by Proposition \ref{prop:Bsparse_Fknown}, we have
\begin{equation}\label{eqn:theta_temp}
\begin{aligned}
& \left\|\widehat{\boldsymbol{\theta}}^*-\boldsymbol{\theta}^*\right\|_{\infty} \leq \frac{6}{5 \kappa_{\infty}}\left(\left\|\nabla_S L_n\left(\by, \xbar\bW\boldsymbol{\theta}^*\right)\right\|_{\infty}+\lambda\right), \\
& \left\|\widehat{\boldsymbol{\theta}}^*-\boldsymbol{\theta}^*\right\|_2 \leq \frac{4}{\kappa_2}\left(\left\|\nabla_S L_n\left(\by, \xbar\bW\boldsymbol{\theta}^*\right)\right\|_2+\lambda \sqrt{\left|S_1\right|}\right), \\
& \left\|\widehat{\boldsymbol{\theta}}^*-\boldsymbol{\theta}^*\right\|_1 \leq \min \left\{\frac{6}{5 \kappa_{\infty}}\left(\left\|\nabla_S L_n\left(\by, \xbar\bW\boldsymbol{\theta}^*\right)\right\|_1+\lambda\left|S_1\right|\right), \frac{4 \sqrt{|S|}}{\kappa_2}\left(\left\|\nabla_S L_n\left(\by, \xbar\bW\boldsymbol{\theta}^*\right)\right\|_2+\lambda \sqrt{\left|S_1\right|}\right)\right\} .
\end{aligned}
\end{equation}
Notice that
\[
\nabla_S L_n\left(\by, \xbar\bW\boldsymbol{\theta}^*\right) =  \frac{1}{n} \xbar\bW_S^\top(\xbar\bW\btheta^*-\by).
\]
Thus 
\[
\left\|\nabla_S L_n\left(\by, \xbar\bW\boldsymbol{\theta}^*\right)\right\|_{\infty} \leq \xbar\delta,
\quad \left\|\nabla_S L_n\left(\by, \xbar\bW\boldsymbol{\theta}^*\right)\right\|_{2} \leq \xbar\delta\sqrt{|S|},
\quad \left\|\nabla_S L_n\left(\by, \xbar\bW\boldsymbol{\theta}^*\right)\right\|_{1} \leq \xbar\delta |S|.
\]
In addition, $\lambda > 14\delta/\tau \geq \xbar\delta$. We conclude the proof by substituting the above bounds to \eqref{eqn:theta_temp}.

\end{proof}

\subsection{Proof of Theorem \ref{thm:Bsparse_Funknown2}}

\begin{proof}[Proof of Theorem \ref{thm:Bsparse_Funknown2}]
Define $\xbar\btheta^* = (\vect(\bB^*)^\top, [(\bH_2\otimes\bH_1)^\top (\hat\bC\otimes\hat{\bR}) \vect(\bB^*)]^\top)^\top$. Then we still have $\|\vect(\hat\bB)-\vect(\bB^*)\|\leq\|\xbar\btheta-\xbar\btheta^*\|$ for any norm $\|\cdot\|$. Also, $\bw_i^\top \btheta^* = \xbar\bw_i^\top\xbar\btheta^*$.
Let 
$
\delta = \max_{j\in S} \delta_j, 
$
where
\[
\begin{aligned}
\delta_j &= \left|\frac{1}{n} \sum_{i=1}^n \xbar w_{ij} (\xbar\bw_i\xbar\btheta^* - y_i)\right| 
= \left|\frac{1}{n} \sum_{i=1}^n \xbar w_{ij} (\bw_i\btheta^* - y_i)\right|= \left|\frac{1}{n} \sum_{i=1}^n \xbar w_{ij} \varepsilon_i\right|\\
&= \left|\frac{1}{n} \sum_{i=1}^n  w_{ij} \varepsilon_i + \frac{1}{n} \sum_{i=1}^n (\xbar w_{ij} - w_{ij}) \varepsilon_i\right|\\
&\leq \left|\frac{1}{n} \sum_{i=1}^n  w_{ij} \varepsilon_i \right| +  \left( \frac{1}{n} \sum_{i=1}^n (\xbar w_{ij} - w_{ij})^2 \right)^{1/2} \left( \frac{1}{n}\sum_{i=1}^n\varepsilon_i^2 \right)^{1/2}\\
&= \mathcal O_p\left(\frac{1}{\sqrt{n}} + \frac{log p_1p_2}{n} + \sqrt{\frac{\log p_1p_2}{n}} \left(\frac{p_1}{p_2} \vee \frac{p_2}{p_1} \right) ^{1/2} \right),
\end{aligned}
\]
where the inequality is by triangle inequality, Cauchy-Schwartz inequality, and Proposition \ref{prop:u-element} (b).
By union bound, we have 
\[
\delta = \mathcal O_p\left(\frac{|S|\log p_1p_2}{n} + |S|\sqrt{\frac{\log p_1p_2}{n}} \left(\frac{p_1}{p_2} \vee \frac{p_2}{p_1} \right) ^{1/2} \right).
\]

\end{proof}

\section{Extra Experiment Results} \label{append:extra-experiments}

\subsection{FAMAR Coefficient Estimation and Prediction}
{
As an extension of the simulations in Section \ref{sec:simul}, we present three more settings with a different way of generating true $\bA^*$. Recall that previously in Section \ref{sec:simul}, $\bA^*$ is generated as $\bA^* = 0.5\cdot \bR^\top \bB* \bC$. As mentioned in Section \ref{sec:intro}, we can also deal with cases where $\bA^*$ is not related with $\bR^\top \bB^* \bC$ at all. To show this in simulations, Figure \ref{fig:Aarb2} to \ref{fig:Aarb3} present the result for matrix factor and idiosyncratic component estimation, FAMAR coefficient estimation, and dependent variable prediction. The detailed settings for each group of simulations are as follows.
\begin{itemize}
\item Figure \ref{fig:Aarb1}: Fix $n=1000$, increase $p_1=p_2$ from $20$ to $100$. All other settings are the same as that for simulations in Figure \ref{fig:pFAMAR_f2r2} except that $\bA^*$ is generated such that each elements are i.i.d from $\cN(1000,1000^2)$.
\item Figure \ref{fig:Aarb2}: All other settings are the same as that for simulations in Figure \ref{fig:Aarb1} except that the elements of $\bA^*$ are i.i.d. from $\cN(p_1p_2, (p_1p_2)^2)$, with mean and standard deviation growing with the dimension $p_1,p_2$.
\item Figure \ref{fig:Aarb3}: Fix $(p_1,p_2) = (70, 50)$, increase $n$ from $500$ to $4000$. All other settings are the same as that for simulations in Figure \ref{fig:nFAMAR_f2r2} except that $\bA^*$ is generated such that each elements are i.i.d from $\cN(1000,1000^2)$.
\end{itemize}

The results are mostly similar to that in Section \ref{sec:simul}, showing the efficiency of FAMAR.
Notice that the trend of relative errors on $y$ is different from what we see in Figure \ref{fig:pFAMAR_f2r2} as $p_1$ and $p_2$ increases. {
Since the dimension of factors is fixed as (2,2) and does not change with $p_1$ and $p_2$, the contribution of $\angles{\bA^*,\bF_i}$ decreases as the dimension increases. To verify that, we make the elements in $\bA^*$ to have mean and std grow with problem dimensions with order $O(\sqrt{p_1p_2})$.} The results are shown in Figure \ref{fig:Aarb2}, where the errors on $y$ present similar patterns as that in Figure \ref{fig:pFAMAR_f2r2} .
}

\begin{figure}[ht!]
\centering
\begin{subfigure}[b]{0.45\textwidth}
\centering
\includegraphics[width=\textwidth]{figs/Farb7U4_2_all_A_err_p_k22.pdf}
\end{subfigure}
\begin{subfigure}[b]{0.45\textwidth}
\centering
\includegraphics[width=\textwidth]{figs/Farb7U4_2_all_B_err_p_k22.pdf}
\end{subfigure}
\hfill
\begin{subfigure}[b]{0.45\textwidth}
\centering
\includegraphics[width=\textwidth]{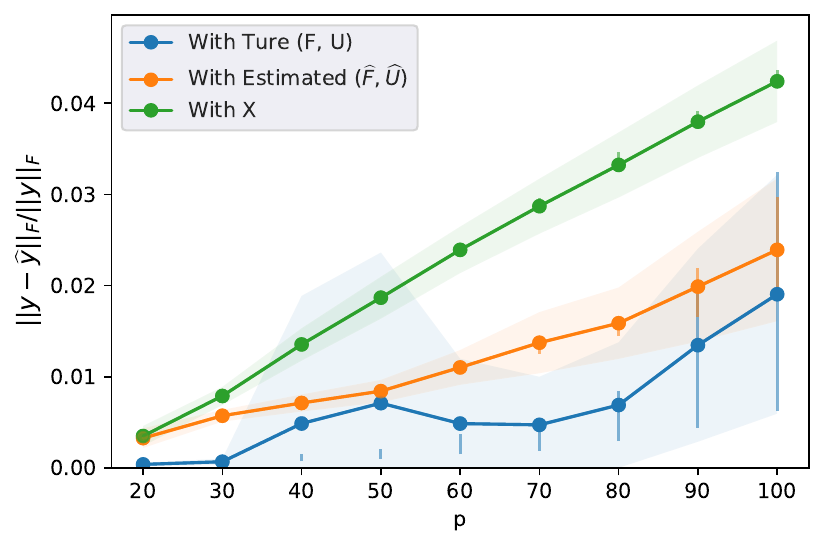}
\end{subfigure}
\begin{subfigure}[b]{0.45\textwidth}
\centering
\includegraphics[width=\textwidth]{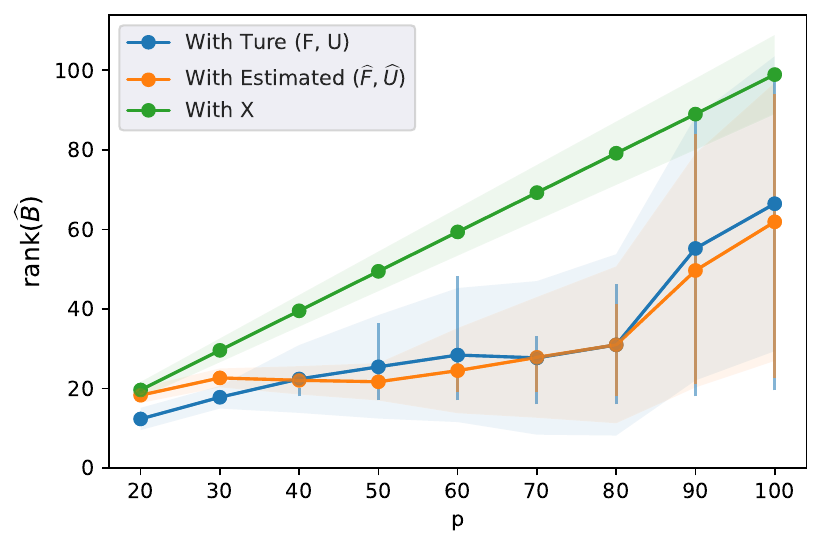}
\end{subfigure}
\hfill
\begin{subfigure}[b]{0.45\textwidth}
\centering
\includegraphics[width=\textwidth]{figs/Farb7U4_2_all_F_err_p_k22.pdf}
\end{subfigure}
\begin{subfigure}[b]{0.45\textwidth}
\centering
\includegraphics[width=\textwidth]{figs/Farb7U4_2_all_U_err_p_k22.pdf}
\end{subfigure}
\caption{Fix $n=1000$, increase $p_1=p_2$ from $20$ to $100$.  $\bA^*$ is generated such that each elements are i.i.d from $\cN(1000,1000^2)$. The blue, orange, and green lines correspond to doing the regression with true $(\bF,\bU)$, estimated $(\hat{\bF},\hat{\bU})$, and $\bX$, respectively. The experiments are repeated 100 times. The solid lines are the means, the shadows show the standard deviation, and the vertical lines present the quartiles.}
\label{fig:Aarb1}
\end{figure}

\begin{figure}[ht!]
\centering
\begin{subfigure}[b]{0.45\textwidth}
\centering
\includegraphics[width=\textwidth]{figs/Farb7U4_2_all_A_err_Ap_k22.pdf}
\end{subfigure}
\begin{subfigure}[b]{0.45\textwidth}
\centering
\includegraphics[width=\textwidth]{figs/Farb7U4_2_all_B_err_Ap_k22.pdf}
\end{subfigure}
\hfill
\begin{subfigure}[b]{0.45\textwidth}
\centering
\includegraphics[width=\textwidth]{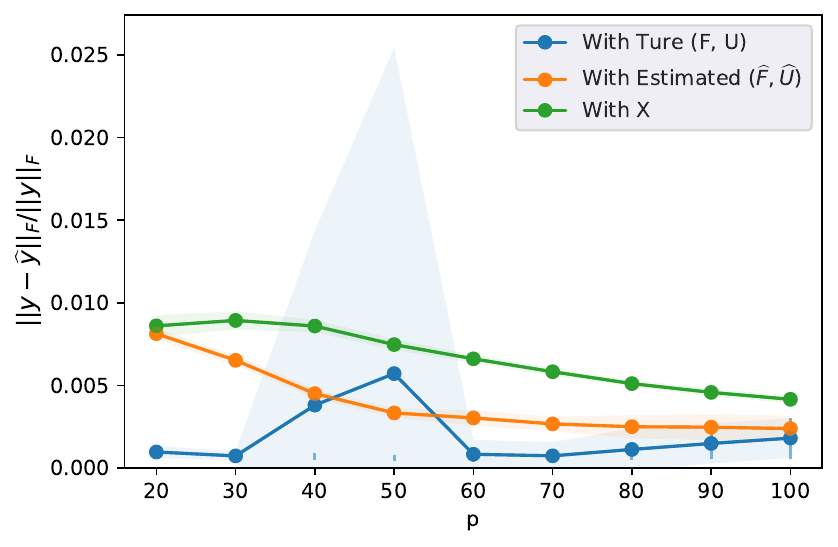}
\end{subfigure}
\begin{subfigure}[b]{0.45\textwidth}
\centering
\includegraphics[width=\textwidth]{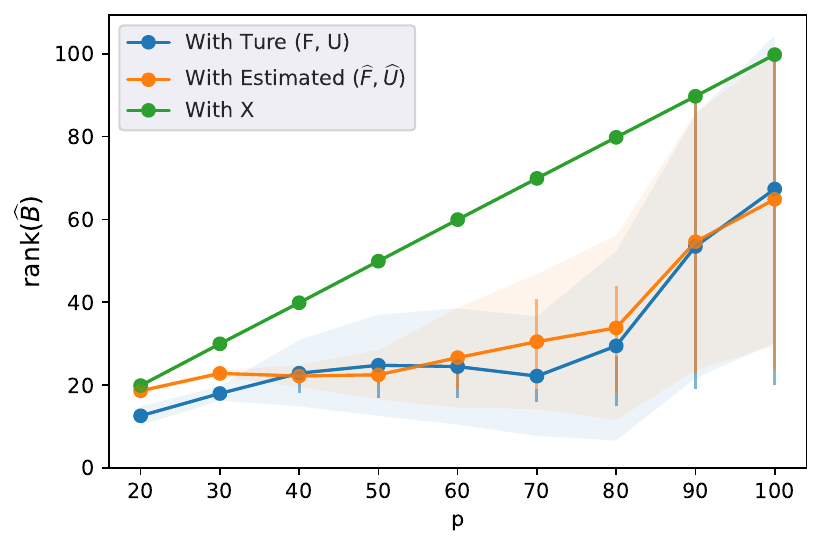}
\end{subfigure}
\hfill
\begin{subfigure}[b]{0.45\textwidth}
\centering
\includegraphics[width=\textwidth]{figs/Farb7U4_2_all_F_err_Ap_k22.pdf}
\end{subfigure}
\begin{subfigure}[b]{0.45\textwidth}
\centering
\includegraphics[width=\textwidth]{figs/Farb7U4_2_all_U_err_Ap_k22.pdf}
\end{subfigure}
\caption{Fix $n=1000$, increase $p_1=p_2$ from $20$ to $100$.  $\bA^*$ is generated such that each elements are i.i.d from $\cN(p_1p_2,(p_1p_2)^2)$. The blue, orange, and green lines correspond to doing the regression with true $(\bF,\bU)$, estimated $(\hat{\bF},\hat{\bU})$, and $\bX$, respectively. The experiments are repeated 100 times. The solid lines are the means, the shadows show the standard deviation, and the vertical lines present the quartiles.}
\label{fig:Aarb2}
\end{figure}

\begin{figure}[ht!]
\centering
\begin{subfigure}[b]{0.45\textwidth}
\centering
\includegraphics[width=\textwidth]{figs/Farb7U4_2_all_A_err_n_k24.pdf}
\end{subfigure}
\begin{subfigure}[b]{0.45\textwidth}
\centering
\includegraphics[width=\textwidth]{figs/Farb7U4_2_all_B_err_n_k24.pdf}
\end{subfigure}
\hfill
\begin{subfigure}[b]{0.45\textwidth}
\centering
\includegraphics[width=\textwidth]{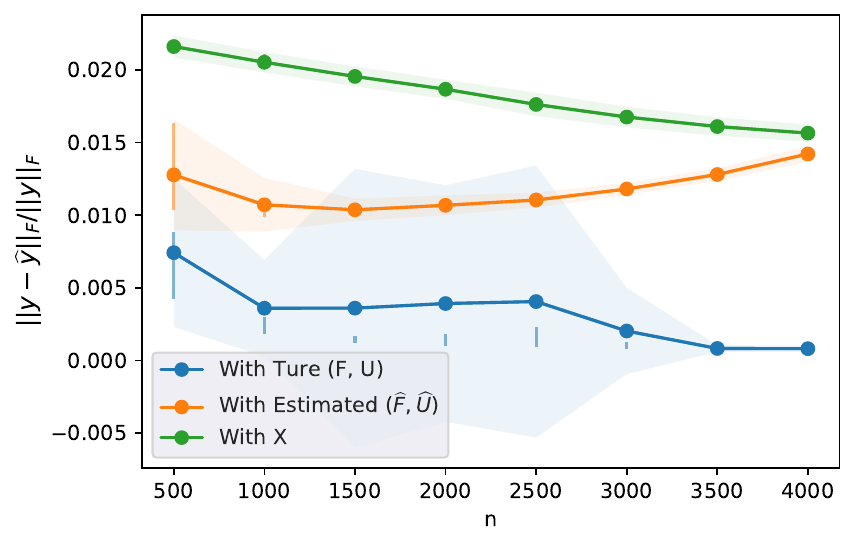}
\end{subfigure}
\begin{subfigure}[b]{0.45\textwidth}
\centering
\includegraphics[width=\textwidth]{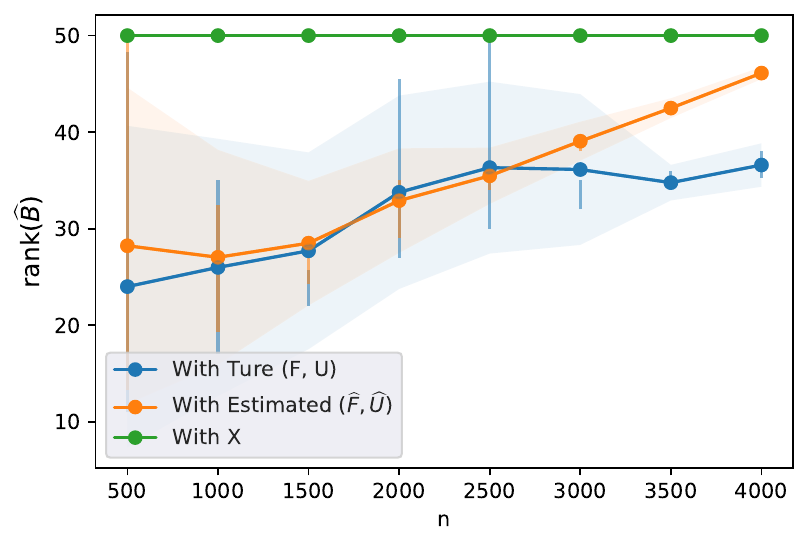}
\end{subfigure}
\hfill
\begin{subfigure}[b]{0.45\textwidth}
\centering
\includegraphics[width=\textwidth]{figs/Farb7U4_2_all_F_err_n_k24.pdf}
\end{subfigure}
\begin{subfigure}[b]{0.45\textwidth}
\centering
\includegraphics[width=\textwidth]{figs/Farb7U4_2_all_U_err_n_k24.pdf}
\end{subfigure}
\caption{Fix $(p_1,p_2) = (70,50)$, increase $n$ from $500$ to $4000$. $\bA^*$ is generated such that each elements are i.i.d from $\cN(1000,1000^2)$. The blue, orange, and green lines correspond to doing the regression with true $(\bF,\bU)$, estimated $(\hat{\bF},\hat{\bU})$, and $\bX$, respectively. The experiments are repeated 100 times. The solid lines are the means, the shadows show the standard deviation, and the vertical lines present the quartiles.}
\label{fig:Aarb3}
\end{figure}

\subsection{FAMAR Matrix Loading Asymptotically Normality.}\label{app:RC_normality}
We present the full graph of the asymptotic normality check of  the estimated loadings $\hat{\bR}$ and $\hat{\bC}$ according to Theorem  \ref{thm:MCMR} in Figure \ref{RC_normality}.
\begin{figure}[ht!]
\centering
\begin{subfigure}[b]{\textwidth}
\centering
\includegraphics[width=\textwidth]{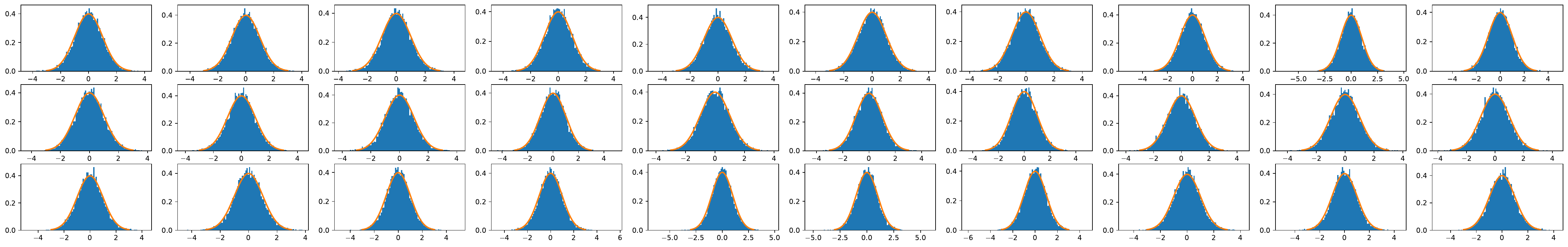}
\end{subfigure}
\begin{subfigure}[b]{\textwidth}
\centering
\includegraphics[width=\textwidth]{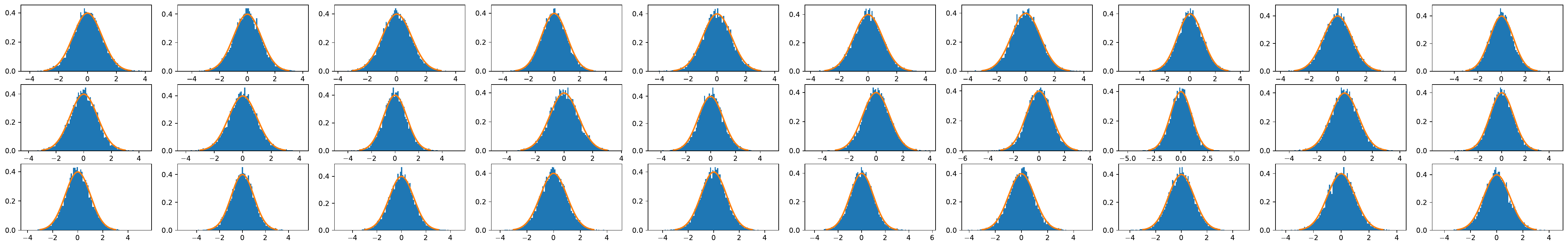}
\caption{Elements for $\sqrt{np_2k_2} \left(\hat{\bR} - \mathring{c}_s \mathring{\bR^*}\right)^\top/\hat{s}_1\in\RR^{3\times 20}$}
\end{subfigure}
\hfill
\begin{subfigure}[b]{\textwidth}
\centering
\includegraphics[width=\textwidth]{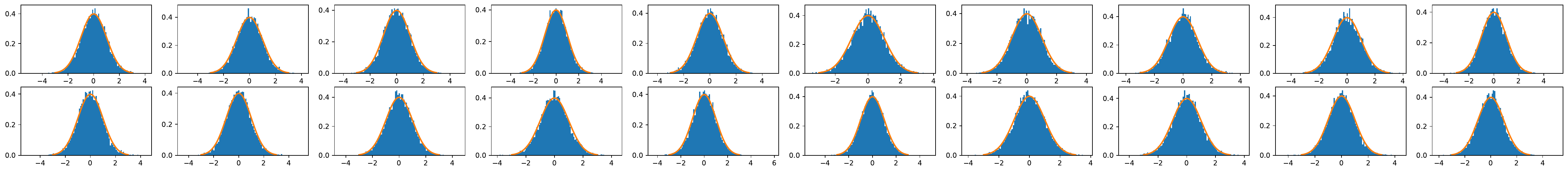}
\end{subfigure}
\begin{subfigure}[b]{\textwidth}
\centering
\includegraphics[width=\textwidth]{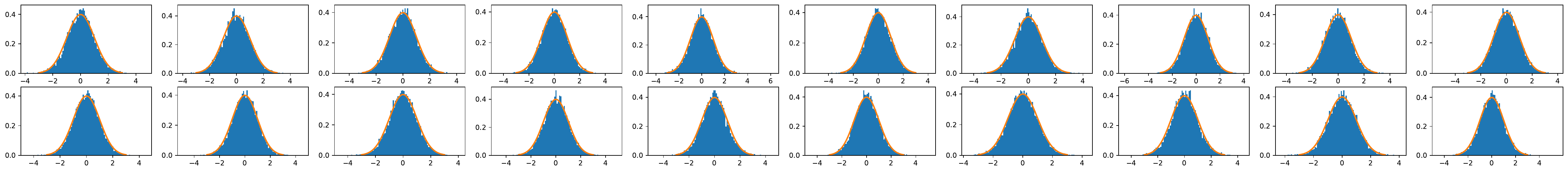}
\end{subfigure}
\begin{subfigure}[b]{\textwidth}
\centering
\includegraphics[width=\textwidth]{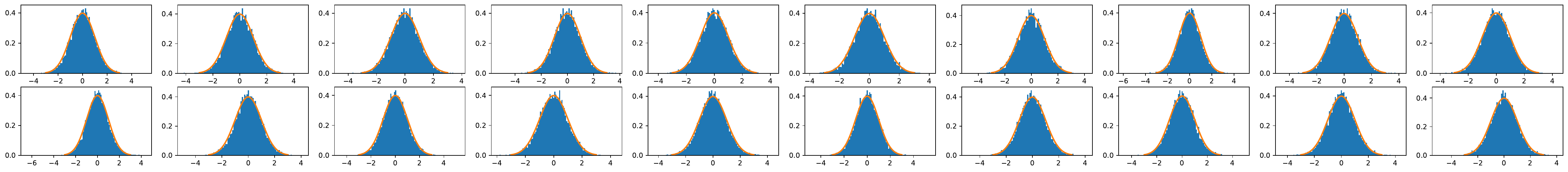}
\caption{Elements for $\sqrt{np_1k_1} \left(\hat{\bC} - \mathring{r}_s \mathring{\bC^*}\right)^\top/\hat{s}_2\in \RR^{2\times 30}$}
\end{subfigure}
\caption{{Elementwise normality check for the estimated loadings $\hat{\bR}$ and $\hat{\bC}$ according to Theorem \ref{thm:MCMR}. The blue histograms are empirical densities, and the orange lines are the standard Gaussian density functions.}}
\label{RC_normality}
\end{figure}

\end{appendices}

\end{document}